\newtheorem{theorem}{Theorem}[section]
\newtheorem{axiom}[theorem]{Axiom}
\newtheorem{conjecture}[theorem]{Conjecture}
\newtheorem{corollary}[theorem]{Corollary}
\newtheorem{definition}[theorem]{Definition}
\newtheorem{example}[theorem]{Example}
\newtheorem{exercise}[theorem]{Exercise}
\newtheorem{lemma}[theorem]{Lemma}
\newtheorem{notation}[theorem]{Notation}
\newtheorem{proposition}[theorem]{Proposition}
\newtheorem{remark}[theorem]{Remark}
\newenvironment{proof}[1][Proof]{\noindent\textbf{#1.} }{\ \rule{0.5em}{0.5em}}
\chardef\@x10\chardef\@xv60
\def\tcitime{
\def\@time{%
  \@minute\time\@hour\@minute\divide\@hour\@xv
  \ifnum\@hour<\@x 0\fi\the\@hour:%
  \multiply\@hour\@xv\advance\@minute-\@hour
  \ifnum\@minute<\@x 0\fi\the\@minute
  }}%
\def\QCTOpt[#1]#2{%
  \def\QCTOptB{#1}
  \def\QCTOptA{#2}
}
\def\QCTNOpt#1{%
  \def\QCTOptA{#1}
  \let\QCTOptB\empty
}
\def\Qct{%
  \@ifnextchar[{%
    \QCTOpt}{\QCTNOpt}
}
\def\QCBOpt[#1]#2{%
  \def\QCBOptB{#1}
  \def\QCBOptA{#2}
}
\def\QCBNOpt#1{%
  \def\QCBOptA{#1}
  \let\QCBOptB\empty
}
\def\Qcb{%
  \@ifnextchar[{%
    \QCBOpt}{\QCBNOpt}
}
\def\PrepCapArgs{%
  \ifx\QCBOptA\empty
    \ifx\QCTOptA\empty
      {}%
    \else
      \ifx\QCTOptB\empty
        {\QCTOptA}%
      \else
        [\QCTOptB]{\QCTOptA}%
      \fi
    \fi
  \else
    \ifx\QCBOptA\empty
      {}%
    \else
      \ifx\QCBOptB\empty
        {\QCBOptA}%
      \else
        [\QCBOptB]{\QCBOptA}%
      \fi
    \fi
  \fi
}
\def\GRAPHICSPS#1{%
 \ifcase\GRAPHICSTYPE
   \special{ps: #1}%
 \or
   \special{language "PS", include "#1"}%
 \fi
}%
\def\graffile#1#2#3#4{%
    \leavevmode
    \raise -#4 \BOXTHEFRAME{%
        \hbox to #2{\raise #3\hbox to #2{\null #1\hfil}}}%
}%
\def\draftbox#1#2#3#4{%
 \leavevmode\raise -#4 \hbox{%
  \frame{\rlap{\protect\tiny #1}\hbox to #2%
   {\vrule height#3 width\z@ depth\z@\hfil}%
  }%
 }%
}%
\newif\ifwasdraft
\def\GRAPHIC#1#2#3#4#5{%
 \ifnum\draft=\@ne\draftbox{#2}{#3}{#4}{#5}%
  \else\graffile{#1}{#3}{#4}{#5}%
  \fi
 }%
\def\addtoLaTeXparams#1{%
    \edef\LaTeXparams{\LaTeXparams #1}}%
\newif\ifBoxFrame \BoxFramefalse
\newif\ifOverFrame \OverFramefalse
\newif\ifUnderFrame \UnderFramefalse
\def\BOXTHEFRAME#1{%
   \hbox{%
      \ifBoxFrame
         \frame{#1}%
      \else
         {#1}%
      \fi
   }%
}
\def\doFRAMEparams#1{\BoxFramefalse\OverFramefalse\UnderFramefalse\readFRAMEparams#1\end}%
\def\readFRAMEparams#1{%
 \ifx#1\end%
  \let\next=\relax
  \else
  \ifx#1i\dispkind=\z@\fi
  \ifx#1d\dispkind=\@ne\fi
  \ifx#1f\dispkind=\tw@\fi
  \ifx#1t\addtoLaTeXparams{t}\fi
  \ifx#1b\addtoLaTeXparams{b}\fi
  \ifx#1p\addtoLaTeXparams{p}\fi
  \ifx#1h\addtoLaTeXparams{h}\fi
  \ifx#1X\BoxFrametrue\fi
  \ifx#1O\OverFrametrue\fi
  \ifx#1U\UnderFrametrue\fi
  \ifx#1w
    \ifnum\draft=1\wasdrafttrue\else\wasdraftfalse\fi
    \draft=\@ne
  \fi
  \let\next=\readFRAMEparams
  \fi
 \next
 }%
\def\IFRAME#1#2#3#4#5#6{%
      \bgroup
      \let\QCTOptA\empty
      \let\QCTOptB\empty
      \let\QCBOptA\empty
      \let\QCBOptB\empty
      #6%
      \parindent=0pt%
      \leftskip=0pt
      \rightskip=0pt
      \setbox0 = \hbox{\QCBOptA}%
      \@tempdima = #1\relax
      \ifOverFrame
          \typeout{This is not implemented yet}%
          \show\HELP
      \else
         \ifdim\wd0>\@tempdima
            \advance\@tempdima by \@tempdima
            \ifdim\wd0 >\@tempdima
               \textwidth=\@tempdima
               \setbox1 =\vbox{%
                  \noindent\hbox to \@tempdima{\hfill\GRAPHIC{#5}{#4}{#1}{#2}{#3}\hfill}\\%
                  \noindent\hbox to \@tempdima{\parbox[b]{\@tempdima}{\QCBOptA}}%
               }%
               \wd1=\@tempdima
            \else
               \textwidth=\wd0
               \setbox1 =\vbox{%
                 \noindent\hbox to \wd0{\hfill\GRAPHIC{#5}{#4}{#1}{#2}{#3}\hfill}\\%
                 \noindent\hbox{\QCBOptA}%
               }%
               \wd1=\wd0
            \fi
         \else
            \ifdim\wd0>0pt
              \hsize=\@tempdima
              \setbox1 =\vbox{%
                \unskip\GRAPHIC{#5}{#4}{#1}{#2}{0pt}%
                \break
                \unskip\hbox to \@tempdima{\hfill \QCBOptA\hfill}%
              }%
              \wd1=\@tempdima
           \else
              \hsize=\@tempdima
              \setbox1 =\vbox{%
                \unskip\GRAPHIC{#5}{#4}{#1}{#2}{0pt}%
              }%
              \wd1=\@tempdima
           \fi
         \fi
         \@tempdimb=\ht1
         \advance\@tempdimb by \dp1
         \advance\@tempdimb by -#2%
         \advance\@tempdimb by #3%
         \leavevmode
         \raise -\@tempdimb \hbox{\box1}%
      \fi
      \egroup%
}%
\def\DFRAME#1#2#3#4#5{%
 \begin{center}
     \let\QCTOptA\empty
     \let\QCTOptB\empty
     \let\QCBOptA\empty
     \let\QCBOptB\empty
     \ifOverFrame 
        #5\QCTOptA\par
     \fi
     \GRAPHIC{#4}{#3}{#1}{#2}{\z@}
     \ifUnderFrame 
        \nobreak\par #5\QCBOptA
     \fi
 \end{center}%
 }%
\def\FFRAME#1#2#3#4#5#6#7{%
 \begin{figure}[#1]%
  \let\QCTOptA\empty
  \let\QCTOptB\empty
  \let\QCBOptA\empty
  \let\QCBOptB\empty
  \ifOverFrame
    #4
    \ifx\QCTOptA\empty
    \else
      \ifx\QCTOptB\empty
        \caption{\QCTOptA}%
      \else
        \caption[\QCTOptB]{\QCTOptA}%
      \fi
    \fi
    \ifUnderFrame\else
      \label{#5}%
    \fi
  \else
    \UnderFrametrue%
  \fi
  \begin{center}\GRAPHIC{#7}{#6}{#2}{#3}{\z@}\end{center}%
  \ifUnderFrame
    #4
    \ifx\QCBOptA\empty
      \caption{}%
    \else
      \ifx\QCBOptB\empty
        \caption{\QCBOptA}%
      \else
        \caption[\QCBOptB]{\QCBOptA}%
      \fi
    \fi
    \label{#5}%
  \fi
  \end{figure}%
 }%
\def\makeactives{
  \catcode`\"=\active
  \catcode`\;=\active
  \catcode`\:=\active
  \catcode`\'=\active
  \catcode`\~=\active
}
   \gdef\activesoff{%
      \def"{\string"}
      \def;{\string;}
      \def:{\string:}
      \def'{\string'}
      \def~{\string~}
    }
\def\FRAME#1#2#3#4#5#6#7#8{%
 \bgroup
 \@ifundefined{bbl@deactivate}{}{\activesoff}
 \ifnum\draft=\@ne
   \wasdrafttrue
 \else
   \wasdraftfalse%
 \fi
 \def\LaTeXparams{}%
 \dispkind=\z@
 \def\LaTeXparams{}%
 \doFRAMEparams{#1}%
 \ifnum\dispkind=\z@\IFRAME{#2}{#3}{#4}{#7}{#8}{#5}\else
  \ifnum\dispkind=\@ne\DFRAME{#2}{#3}{#7}{#8}{#5}\else
   \ifnum\dispkind=\tw@
    \edef\@tempa{\noexpand\FFRAME{\LaTeXparams}}%
    \@tempa{#2}{#3}{#5}{#6}{#7}{#8}%
    \fi
   \fi
  \fi
  \ifwasdraft\draft=1\else\draft=0\fi{}%
  \egroup
 }%
\def\TEXUX#1{"texux"}
\long\def\QQQ#1#2{%
     \long\expandafter\def\csname#1\endcsname{#2}}%
\long\def\QQA#1#2{}%
\def\QTR#1#2{{\csname#1\endcsname #2}}
\def\EXPAND#1[#2]#3{}%
\def\NOEXPAND#1[#2]#3{}%
\def\LaTeXparent#1{}%
\def\ChildStyles#1{}%
\def\ChildDefaults#1{}%
\def\QTagDef#1#2#3{}%
\def\QQfnmark#1{\footnotemark}
\def\makeatletter\input gnuindex.sty\makeatother\makeindex{\makeatletter\input gnuindex.sty\makeatother\makeindex}%
\def\initial#1{\bigbreak{\raggedright\large\bf #1}\kern 2\p@\penalty3000}}%
 \def\abstract{%
  \if@twocolumn
   \section*{Abstract (Not appropriate in this style!)}%
   \else \small 
   \begin{center}{\bf Abstract\vspace{-.5em}\vspace{\z@}}\end{center}%
   \quotation 
   \fi
  }%
   \def\registered{\relax\ifmmode{}\r@gistered
                    \else$\m@th\r@gistered$\fi}%
 \def\r@gistered{^{\ooalign
  {\hfil\raise.07ex\hbox{$\scriptstyle\rm\text{R}$}\hfil\crcr
  \mathhexbox20D}}}}{}%
\newdimen\theight
\def\Column{%
 \vadjust{\setbox\z@=\hbox{\scriptsize\quad\quad tcol}%
  \theight=\ht\z@\advance\theight by \dp\z@\advance\theight by \lineskip
  \kern -\theight \vbox to \theight{%
   \rightline{\rlap{\box\z@}}%
   \vss
   }%
  }%
 }%
\def\qed{%
 \ifhmode\unskip\nobreak\fi\ifmmode\ifinner\else\hskip5\p@\fi\fi
 \hbox{\hskip5\p@\vrule width4\p@ height6\p@ depth1.5\p@\hskip\p@}%
 }%
\def\miss{\hbox{\vrule height2\p@ width 2\p@ depth\z@}}%
\def\tcol#1{{\baselineskip=6\p@ \vcenter{#1}} \Column}  %
\def\newfmtname{LaTeX2e}
\def\chkcompat{%
   \if@compatibility
   \else
     \usepackage{latexsym}
   \fi
}
  \DeclareOldFontCommand{\rm}{\normalfont\rmfamily}{\mathrm}
  \DeclareOldFontCommand{\sf}{\normalfont\sffamily}{\mathsf}
  \DeclareOldFontCommand{\tt}{\normalfont\ttfamily}{\mathtt}
  \DeclareOldFontCommand{\bf}{\normalfont\bfseries}{\mathbf}
  \DeclareOldFontCommand{\it}{\normalfont\itshape}{\mathit}
  \DeclareOldFontCommand{\sl}{\normalfont\slshape}{\@nomath\sl}
  \DeclareOldFontCommand{\sc}{\normalfont\scshape}{\@nomath\sc}
\def\alpha{{\Greekmath 010B}}%
\def\beta{{\Greekmath 010C}}%
\def\gamma{{\Greekmath 010D}}%
\def\delta{{\Greekmath 010E}}%
\def\epsilon{{\Greekmath 010F}}%
\def\zeta{{\Greekmath 0110}}%
\def\eta{{\Greekmath 0111}}%
\def\theta{{\Greekmath 0112}}%
\def\iota{{\Greekmath 0113}}%
\def\kappa{{\Greekmath 0114}}%
\def\lambda{{\Greekmath 0115}}%
\def\mu{{\Greekmath 0116}}%
\def\nu{{\Greekmath 0117}}%
\def\xi{{\Greekmath 0118}}%
\def\pi{{\Greekmath 0119}}%
\def\rho{{\Greekmath 011A}}%
\def\sigma{{\Greekmath 011B}}%
\def\tau{{\Greekmath 011C}}%
\def\upsilon{{\Greekmath 011D}}%
\def\phi{{\Greekmath 011E}}%
\def\chi{{\Greekmath 011F}}%
\def\psi{{\Greekmath 0120}}%
\def\omega{{\Greekmath 0121}}%
\def\varepsilon{{\Greekmath 0122}}%
\def\vartheta{{\Greekmath 0123}}%
\def\varpi{{\Greekmath 0124}}%
\def\varrho{{\Greekmath 0125}}%
\def\varsigma{{\Greekmath 0126}}%
\def\varphi{{\Greekmath 0127}}%
\def\nabla{{\Greekmath 0272}}
\def\FindBoldGroup{%
   {\setbox0=\hbox{$\mathbf{x\global\edef\theboldgroup{\the\mathgroup}}$}}%
}
\def\Greekmath#1#2#3#4{%
    \if@compatibility
        \ifnum\mathgroup=\symbold
           \mathchoice{\mbox{\boldmath$\displaystyle\mathchar"#1#2#3#4$}}%
                      {\mbox{\boldmath$\textstyle\mathchar"#1#2#3#4$}}%
                      {\mbox{\boldmath$\scriptstyle\mathchar"#1#2#3#4$}}%
                      {\mbox{\boldmath$\scriptscriptstyle\mathchar"#1#2#3#4$}}%
        \else
           \mathchar"#1#2#3#4%
        \fi 
    \else 
        \FindBoldGroup
        \ifnum\mathgroup=\theboldgroup 
           \mathchoice{\mbox{\boldmath$\displaystyle\mathchar"#1#2#3#4$}}%
                      {\mbox{\boldmath$\textstyle\mathchar"#1#2#3#4$}}%
                      {\mbox{\boldmath$\scriptstyle\mathchar"#1#2#3#4$}}%
                      {\mbox{\boldmath$\scriptscriptstyle\mathchar"#1#2#3#4$}}%
        \else
           \mathchar"#1#2#3#4%
        \fi     	    
	  \fi}
\newif\ifGreekBold  \GreekBoldfalse
\let\SAVEPBF=\pbf
\def\pbf{\GreekBoldtrue\SAVEPBF}%
  \newcounter{equationnumber}  
  \def\mathletters{%
     \addtocounter{equation}{1}
     \edef\@currentlabel{\theequation}%
     \setcounter{equationnumber}{\c@equation}
     \setcounter{equation}{0}%
     \edef\theequation{\@currentlabel\noexpand\alph{equation}}%
  }
    \def\BibTeX{{\rm B\kern-.05em{\sc i\kern-.025em b}\kern-.08em
                 T\kern-.1667em\lower.7ex\hbox{E}\kern-.125emX}}}{}%
\def\AmS{{\protect\usefont{OMS}{cmsy}{m}{n}%
                A\kern-.1667em\lower.5ex\hbox{M}\kern-.125emS}}}{}%
\let\DOTSI\relax
\def\RIfM@{\relax\ifmmode}%
\def\FN@{\futurelet\next}%
\def\iint{\DOTSI\intno@\tw@\FN@\ints@}%
\def\iiint{\DOTSI\intno@\thr@@\FN@\ints@}%
\def\iiiint{\DOTSI\intno@4 \FN@\ints@}%
\def\idotsint{\DOTSI\intno@\z@\FN@\ints@}%
\def\ints@{\findlimits@\ints@@}%
\newif\iflimtoken@
\newif\iflimits@
\def\findlimits@{\limtoken@true\ifx\next\limits\limits@true
 \else\ifx\next\nolimits\limits@false\else
 \limtoken@false\ifx\ilimits@\nolimits\limits@false\else
 \ifinner\limits@false\else\limits@true\fi\fi\fi\fi}%
\def\multint@{\int\ifnum\intno@=\z@\intdots@                          
 \else\intkern@\fi                                                    
 \ifnum\intno@>\tw@\int\intkern@\fi                                   
 \ifnum\intno@>\thr@@\int\intkern@\fi                                 
 \int}
\def\multintlimits@{\intop\ifnum\intno@=\z@\intdots@\else\intkern@\fi
 \ifnum\intno@>\tw@\intop\intkern@\fi
 \ifnum\intno@>\thr@@\intop\intkern@\fi\intop}%
\def\intic@{%
    \mathchoice{\hskip.5em}{\hskip.4em}{\hskip.4em}{\hskip.4em}}%
\def\negintic@{\mathchoice
 {\hskip-.5em}{\hskip-.4em}{\hskip-.4em}{\hskip-.4em}}%
\def\ints@@{\iflimtoken@                                              
 \def\ints@@@{\iflimits@\negintic@
   \mathop{\intic@\multintlimits@}\limits                             
  \else\multint@\nolimits\fi                                          
  \eat@}
 \else                                                                
 \def\ints@@@{\iflimits@\negintic@
  \mathop{\intic@\multintlimits@}\limits\else
  \multint@\nolimits\fi}\fi\ints@@@}%
\def\intkern@{\mathchoice{\!\!\!}{\!\!}{\!\!}{\!\!}}%
\def\plaincdots@{\mathinner{\cdotp\cdotp\cdotp}}%
\def\intdots@{\mathchoice{\plaincdots@}%
 {{\cdotp}\mkern1.5mu{\cdotp}\mkern1.5mu{\cdotp}}%
 {{\cdotp}\mkern1mu{\cdotp}\mkern1mu{\cdotp}}%
 {{\cdotp}\mkern1mu{\cdotp}\mkern1mu{\cdotp}}}%
\def\RIfM@{\relax\protect\ifmmode}
\def\text{\RIfM@\expandafter\text@\else\expandafter\mbox\fi}
\let\nfss@text\text
\def\text@#1{\mathchoice
   {\textdef@\displaystyle\f@size{#1}}%
   {\textdef@\textstyle\tf@size{\firstchoice@false #1}}%
   {\textdef@\textstyle\sf@size{\firstchoice@false #1}}%
   {\textdef@\textstyle \ssf@size{\firstchoice@false #1}}%
   \glb@settings}
\def\textdef@#1#2#3{\hbox{{%
                    \everymath{#1}%
                    \let\f@size#2\selectfont
                    #3}}}
\newif\iffirstchoice@
\def\Let@{\relax\iffalse{\fi\let\\=\cr\iffalse}\fi}%
\def\vspace@{\def\vspace##1{\crcr\noalign{\vskip##1\relax}}}%
\def\multilimits@{\bgroup\vspace@\Let@
 \baselineskip\fontdimen10 \scriptfont\tw@
 \advance\baselineskip\fontdimen12 \scriptfont\tw@
 \lineskip\thr@@\fontdimen8 \scriptfont\thr@@
 \lineskiplimit\lineskip
 \vbox\bgroup\ialign\bgroup\hfil$\m@th\scriptstyle{##}$\hfil\crcr}%
\def\Sb{_\multilimits@}%
\def\endSb{\crcr\egroup\egroup\egroup}%
\def\Sp{^\multilimits@}%
\newdimen\ex@
\def\rightarrowfill@#1{$#1\m@th\mathord-\mkern-6mu\cleaders
 \hbox{$#1\mkern-2mu\mathord-\mkern-2mu$}\hfill
 \mkern-6mu\mathord\rightarrow$}%
\def\leftarrowfill@#1{$#1\m@th\mathord\leftarrow\mkern-6mu\cleaders
 \hbox{$#1\mkern-2mu\mathord-\mkern-2mu$}\hfill\mkern-6mu\mathord-$}%
\def\leftrightarrowfill@#1{$#1\m@th\mathord\leftarrow
\mkern-6mu\cleaders
 \hbox{$#1\mkern-2mu\mathord-\mkern-2mu$}\hfill
 \mkern-6mu\mathord\rightarrow$}%
\def\overrightarrow{\mathpalette\overrightarrow@}%
\def\overrightarrow@#1#2{\vbox{\ialign{##\crcr\rightarrowfill@#1\crcr
 \noalign{\kern-\ex@\nointerlineskip}$\m@th\hfil#1#2\hfil$\crcr}}}%
\def\overleftarrow{\mathpalette\overleftarrow@}%
\def\overleftarrow@#1#2{\vbox{\ialign{##\crcr\leftarrowfill@#1\crcr
 \noalign{\kern-\ex@\nointerlineskip}$\m@th\hfil#1#2\hfil$\crcr}}}%
\def\overleftrightarrow{\mathpalette\overleftrightarrow@}%
\def\overleftrightarrow@#1#2{\vbox{\ialign{##\crcr
   \leftrightarrowfill@#1\crcr
 \noalign{\kern-\ex@\nointerlineskip}$\m@th\hfil#1#2\hfil$\crcr}}}%
\def\underrightarrow{\mathpalette\underrightarrow@}%
\def\underrightarrow@#1#2{\vtop{\ialign{##\crcr$\m@th\hfil#1#2\hfil
  $\crcr\noalign{\nointerlineskip}\rightarrowfill@#1\crcr}}}%
\def\underleftarrow{\mathpalette\underleftarrow@}%
\def\underleftarrow@#1#2{\vtop{\ialign{##\crcr$\m@th\hfil#1#2\hfil
  $\crcr\noalign{\nointerlineskip}\leftarrowfill@#1\crcr}}}%
\def\underleftrightarrow{\mathpalette\underleftrightarrow@}%
\def\underleftrightarrow@#1#2{\vtop{\ialign{##\crcr$\m@th
  \hfil#1#2\hfil$\crcr
 \noalign{\nointerlineskip}\leftrightarrowfill@#1\crcr}}}%
\def\qopnamewl@#1{\mathop{\operator@font#1}\nlimits@}
\let\nlimits@\displaylimits
\def\setboxz@h{\setbox\z@\hbox}
\def\varlim@#1#2{\mathop{\vtop{\ialign{##\crcr
 \hfil$#1\m@th\operator@font lim$\hfil\crcr
 \noalign{\nointerlineskip}#2#1\crcr
 \noalign{\nointerlineskip\kern-\ex@}\crcr}}}}
 \def\rightarrowfill@#1{\m@th\setboxz@h{$#1-$}\ht\z@\z@
  $#1\copy\z@\mkern-6mu\cleaders
  \hbox{$#1\mkern-2mu\box\z@\mkern-2mu$}\hfill
  \mkern-6mu\mathord\rightarrow$}
\def\leftarrowfill@#1{\m@th\setboxz@h{$#1-$}\ht\z@\z@
  $#1\mathord\leftarrow\mkern-6mu\cleaders
  \hbox{$#1\mkern-2mu\copy\z@\mkern-2mu$}\hfill
  \mkern-6mu\box\z@$}
\def\projlim{\qopnamewl@{proj\,lim}}
\def\injlim{\qopnamewl@{inj\,lim}}
\def\varinjlim{\mathpalette\varlim@\rightarrowfill@}
\def\varprojlim{\mathpalette\varlim@\leftarrowfill@}
\def\varliminf{\mathpalette\varliminf@{}}
\def\varliminf@#1{\mathop{\underline{\vrule\@depth.2\ex@\@width\z@
   \hbox{$#1\m@th\operator@font lim$}}}}
\def\varlimsup{\mathpalette\varlimsup@{}}
\def\varlimsup@#1{\mathop{\overline
  {\hbox{$#1\m@th\operator@font lim$}}}}
\def\align{\@verbatim \frenchspacing\@vobeyspaces \@alignverbatim
You are using the "align" environment in a style in which it is not defined.}
\let\csname endalign*\endcsname =\endtrivlist
\def\alignat{\@verbatim \frenchspacing\@vobeyspaces \@alignatverbatim
You are using the "alignat" environment in a style in which it is not defined.}
\let\csname endalignat*\endcsname =\endtrivlist
\def\xalignat{\@verbatim \frenchspacing\@vobeyspaces \@xalignatverbatim
You are using the "xalignat" environment in a style in which it is not defined.}
\let\csname endxalignat*\endcsname =\endtrivlist
\def\gather{\@verbatim \frenchspacing\@vobeyspaces \@gatherverbatim
You are using the "gather" environment in a style in which it is not defined.}
\let\csname endgather*\endcsname =\endtrivlist
\def\multiline{\@verbatim \frenchspacing\@vobeyspaces \@multilineverbatim
You are using the "multiline" environment in a style in which it is not defined.}
\let\csname endmultiline*\endcsname =\endtrivlist
\def\arrax{\@verbatim \frenchspacing\@vobeyspaces \@arraxverbatim
You are using a type of "array" construct that is only allowed in AmS-LaTeX.}
\def\tabulax{\@verbatim \frenchspacing\@vobeyspaces \@tabulaxverbatim
You are using a type of "tabular" construct that is only allowed in AmS-LaTeX.}
\let\csname endarrax*\endcsname =\endtrivlist
\let\csname endtabulax*\endcsname =\endtrivlist
\def\@@eqncr{\let\@tempa\relax
    \ifcase\@eqcnt \def\@tempa{& & &}\or \def\@tempa{& &}%
      \else \def\@tempa{&}\fi
     \@tempa
     \if@eqnsw
        \iftag@
           \@taggnum
        \else
           \@eqnnum\stepcounter{equation}%
        \fi
     \fi
     \global\tag@false
     \global\@eqnswtrue
     \global\@eqcnt\z@\cr}
 \def\endequation{%
     \ifmmode\ifinner 
      \iftag@
        \addtocounter{equation}{-1} 
        $\hfil
           \displaywidth\linewidth\@taggnum\egroup \endtrivlist
        \global\tag@false
        \global\@ignoretrue   
      \else
        $\hfil
           \displaywidth\linewidth\@eqnnum\egroup \endtrivlist
        \global\tag@false
        \global\@ignoretrue 
      \fi
     \else   
      \iftag@
        \addtocounter{equation}{-1} 
        \eqno \hbox{\@taggnum}
        \global\tag@false%
        $$\global\@ignoretrue
      \else
        \eqno \hbox{\@eqnnum}
        $$\global\@ignoretrue
      \fi
     \fi\fi
 } 
 \newif\iftag@ \tag@false
 \def\tag{\@ifnextchar*{\@tagstar}{\@tag}}
 \def\@tag#1{%
     \global\tag@true
     \global\def\@taggnum{(#1)}}
 \def\@tagstar*#1{%
     \global\tag@true
     \global\def\@taggnum{#1}%
}
\begin{document}

\title{Accuracy of Classical Conductivity Theory at Atomic Scales for Free
Fermions in Disordered Media}
\author{N. J. B. Aza \and J.-B. Bru \and W. de Siqueira Pedra \and A.
Ratsimanetrimanana}
\date{\today }
\maketitle

\begin{abstract}
The growing need for smaller electronic components has recently sparked the
interest in the breakdown of the classical conductivity theory near the
atomic scale, at which quantum effects should dominate. In 2012,
experimental measurements of electric resistance of nanowires in Si doped
with phosphorus atoms demonstrate that quantum effects on charge transport
almost disappear for nanowires of lengths larger than a few nanometers, even
at very low temperature ($4.2~\mathrm{K}$). We mathematically prove, for
non-interacting lattice fermions with disorder, that quantum uncertainty of
microscopic electric current density around their (classical) macroscopic
values is suppressed, exponentially fast with respect to the volume of the
region of the lattice where an external electric field is applied. This is
in accordance with the above experimental observation. Disorder is modeled
by a random external potential along with random, complex-valued, hopping
amplitudes. The celebrated tight-binding Anderson model is one particular
example of the general case considered here. Our mathematical analysis is
based on Combes-Thomas estimates, the Akcoglu-Krengel ergodic theorem, and
the large deviation formalism, in particular the G\"{a}rtner-Ellis
theorem.
\end{abstract}

\noindent \textbf{Keywords:} Fermionic Charge transport, disordered media, Combes-Thomas estimates, large deviations.\bigskip

\noindent \textbf{MSC codes:} 82C70, 32A70, 60F10

\tableofcontents%

\section{Introduction}

The classical conductivity theory of materials, based on the existence of a
well-defined bulk resistivity, was expected to break down as atomic scales
and low temperatures are reached, because quantum effects would dominate. In
particular, the linear dependence of the resistance as a function of the
length of conducting nanowires should be violated at atomic lengths, as
explained in \cite{Ohm-exp2}.

The growing need for smaller electronic components has recently sparked the
interest in such a question. For instance, in 2006, the validity of the
classical theory was experimentally verified, at room temperature, for
nanowires in InAs with lengths down to $\sim 200~\mathrm{nm}$ \cite{ZDAW}.
Indeed, the measured resistivity for the nanowires is $23~\mathrm{\Omega /nm}
$, which is very near to the resistivity deduced from bulk properties of the
material ($24~\mathrm{\Omega /nm}$). See \cite[discussions after Eq. (2)]%
{ZDAW}. A few years later, in 2012, the same property was observed \cite%
{Ohm-exp}, even at very low temperature ($4.2~\mathrm{K}$) and lengths down
to $20~\mathrm{nm}$ (atomic scale), in experiments on nanowires in Si doped
with phosphorus atoms. The breakdown of the classical description of these
nanowires is expected \cite{Ohm-exp2} to be around $\sim 10~\mathrm{nm}$ (at
similar temperature) since other experimental studies \cite{16exp,15exp} on
similar doped Si wires show strong deviations from bulk values of the
resistivity around this length scale.

These experimental results demonstrate that quantum effects on charge
transport can very rapidly disappear with respect to (w.r.t.) growing
space-scales. We mathematically prove this fact by studying the suppression
rate of the probability of finding microscopic current densities that differ
from the macroscopic one. Observe that \cite{OhmIII,OhmVI} already proved
the convergence of the expectation values of microscopic current densities,
but no information about the suppression of quantum uncertainty was obtained
in the macroscopic limit.

There is a large mathematical literature on charged transport properties of
fermions in disordered media, for instance by Bellissard and Schulz-Baldes
in the nineties \cite{[SBB98],BvESB94} or, more recently, by Klein, M\"{u}%
ller and coauthors \cite{jfa,klm,JMP-autre,JMP-autre2,germinet}. See also
\cite{Pro13,Cornean} and references therein, etc. However, it is not the
purpose of this introduction to go into the details of the history of this
specific research field. For a (non-exhaustive) historical perspective on
linear conductivity (Ohm's law), see, e.g., \cite{brupedrahistoire} or our
previous papers \cite{OhmI,OhmII,OhmIII,OhmIV,OhmV,OhmVI,brupedraLR}.

In spite of that large mathematical literature on quantum charged transport,
the study performed in the current paper covers a completely new theoretical
aspect of this problem, not exploited in the available literature, yet.
Observe that although we were able in \cite{OhmVI} to deal with interacting
fermions, in the present paper we restrict ourselves to the non-interacting
case, similar to \cite{OhmIII}. Within the class of non-interacting
particles the considered Hamiltonians are however completely general, since
disorder is defined via random potentials and random, complex valued,
hopping amplitudes, which are only assumed to have ergodic distributions.
The celebrated tight-binding Anderson model is one particular example of the
general case analyzed here and models with random vector potentials are also
included within the present study.

We prove that quantum uncertainty of microscopic electric current densities
(around their classical, macroscopic values) is suppressed, \emph{%
exponentially fast} w.r.t. the volume $\left\vert \Lambda _{L}\right\vert =%
\mathcal{O}(L^{d})$ (in lattice units (l.u.), $d\in \mathbb{N}$ being the
space dimension) of the region of the lattice where an external electric
field is applied. In order to achieve this, we use the large deviation
formalism \cite{DS89,dembo1998large}, which has been adopted in quantum
statistical mechanics since the eighties \cite[Section 7]{ABPM1}. Other
mathematical results which are pivotal in our analysis are the Combes-Thomas
estimates \cite{CT73,AizenmanWarzel}, the Akcoglu-Krengel ergodic theorem
\cite{birkoff} and the (Arzel\`{a}-) Ascoli theorem \cite[Theorem A5]{Rudin}%
. Indeed, combined with the celebrated G\"{a}rtner-Ellis theorem (Theorem %
\ref{prop Gartner--Ellis}), they allow us to prove a large deviation
principle (LDP) for the current density distributions, which quantify the
probability of deviations, due to quantum uncertainty, from the expected
value.

The interacting case, as studied in \cite{OhmV,OhmVI}, is technically much
more involved. The mathematical techniques allowing to tackle such questions
for interacting fermions are partially developed in \cite{ABPM1,ABPM2}, and
use Grassmann integrals and Brydges-Kennedy tree expansions to construct G%
\"{a}rtner-Ellis generating functions. For the non-interacting case, in
order to study properties of G\"{a}rtner-Ellis generating functions, one can
use the Bogoliubov-type inequality
\begin{equation*}
\left\vert \ln \mathrm{tr}\left( C\mathrm{e}^{H_{1}}\right) -\ln \mathrm{tr}%
\left( C\mathrm{e}^{H_{0}}\right) \right\vert \leq \sup_{\alpha \in \left[
0,1\right] }\sup_{u\in \left[ -1/2,1/2\right] }\left\Vert \mathrm{e}%
^{u\left( \alpha H_{1}+\left( 1-\alpha \right) H_{0}\right) }\left(
H_{1}-H_{0}\right) \mathrm{e}^{-u\left( \alpha H_{1}+\left( 1-\alpha \right)
H_{0}\right) }\right\Vert _{\mathcal{B}\left( \mathbb{C}^{n}\right) },
\end{equation*}%
where $H_{0},H_{1}$ are arbitrary self-adjoint matrices, $C$ is any positive
matrix and $\mathrm{tr}$ denotes the normalized trace. See \cite[Lemma 3.6]%
{lenci2005large} or Lemma \ref{lemma:suppor_non_int1} below. The above bound
turns out to be useful for fermionic systems that are quasi-free (i.e. $%
H_{0},H_{1}$ are polynomials of degree two in the fermionic creation and
annihilation operators). In this special case, the right-hand side of the
inequality can be efficiently bounded by $\left\Vert H_{1}-H_{0}\right\Vert
_{\mathcal{B}\left( \mathbb{C}^{n}\right) }$, using Combes-Thomas estimates.
In contrast, for interacting fermions, explicit examples for which the
right-hand side is arbitrarily bigger than $\left\Vert
H_{1}-H_{0}\right\Vert _{\mathcal{B}\left( \mathbb{C}^{n}\right) }$ at large
volumes are known \cite{Bouch}.

Our main results are Theorems \ref{Theorem main result}, \ref{Theorem main
result copy(1)} and Corollaries \ref{LDP copy(1)}, \ref{LDP}. From the
technical point of view, Theorem \ref{Theorem main result} is the pivotal
statement of the paper, the other assertions, basically the LDP for currents
with a good rate function (Theorem \ref{Theorem main result copy(1)} and
Corollaries \ref{LDP copy(1)}, \ref{LDP}), being all deduced from Theorem %
\ref{Theorem main result} by relatively standard methods of large
deviations. Theorem \ref{Theorem main result} refers to the existence,
continuity and differentiability of the (infinite volume) deterministic
generating function for currents, which appears in the G\"{a}rtner-Ellis
theorem (Theorem \ref{prop Gartner--Ellis}). Besides the Bogoliubov-type
inequality, as discussed above, its proof requires the Akcoglu-Krengel
ergodic theorem \cite{birkoff} as an important argument, for one has to
control the thermodynamical limit of (finite volume) generating functions
that are random. To make possible the use of this important result from
ergodic theory, various technical preliminaries are needed and the proof of
Theorem \ref{prop Gartner--Ellis} is highly non-trivial, as a whole: We
perform a rather complicated box decomposition of these random functions,
which can be justified with the help of the Bogoliubov-type inequality and
the \textquotedblleft locality\textquotedblright\ (or space decay) of both
the quasi-free dynamics and space correlations of KMS states, as a
consequence of Combes-Thomas estimates (Appendix \ref{Section Combes-Thomas
Estimate}).

To conclude, this paper is organized as follows:

\begin{itemize}
\item In Section \ref{sec:SetPro}, the mathematical setting is described in
detail. It refers to quasi-free fermions on the lattice in disordered media.
We also discuss the physical motivations of the model, which are
supplemented by Appendix \ref{linear_response} to reduce the length of this
section.

\item In Section \ref{sec:main}, the main results are stated and the large
deviation (LD) formalism is shortly defined, being supplemented by Appendix %
\ref{sec:LDP}. More precisely, we present the mathematical statements
related to the existence of generating functions of the LD formalism, an LD
principle (LDP) for currents, as well as the behavior of the corresponding
rate function. We finally combine them to state and discuss the
exponentially fast suppression of quantum uncertainty of currents around the
classical value of the current.

\item Section \ref{sec:proofs} gathers all technical proofs. In particular,
Bogoliubov-type inequalities discussed above are stated and proven in
Section \ref{Sectino tech1}. Section \ref{Sect Bilinear Elements} collects
some useful, albeit elementary, properties of bilinear elements, which are
basically quadratic elements in the CAR algebra resulting from the
second-quantization of one-particle operators. Then, in Section \ref{Sectino
tech3}, we show that current observable are bilinear elements associated
with explicit one-particle operators that satisfy several explicit
estimates. These upper bounds are pivotal for the proof of our main theorem,
i.e., Theorem \ref{Theorem main result}, which, effectively, only starts in
Section \ref{Sectino tech4} and is finished in Section \ref{Sectino tech5}
with the use of the Akcoglu-Krengel ergodic theorem \cite{birkoff} and the
(Arzel\`{a}-) Ascoli theorem \cite[Theorem A5]{Rudin}.

\item We finally include Appendices \ref{Section Combes-Thomas Estimate}, %
\ref{sec:LDP} and \ref{linear_response}, stating general results used
throughout the current paper, in a way well-adapted to our proofs. Appendix %
\ref{Section Combes-Thomas Estimate} is about the Combes-Thomas estimates
while Appendix \ref{sec:LDP} explains the large deviation formalism, in
particular\ the G\"{a}rtner-Ellis theorem. Appendix \ref{linear_response}
contains supplementary information on the mathematical framework and
relevant physical concepts, in order to make unnecessary the use of further
references for a clear understanding of the subject of the current paper.
More precisely, Appendix \ref{linear_response} summaries some important
results on linear response current of our papers \cite%
{OhmI,OhmII,OhmIII,OhmIV,OhmV,OhmVI,brupedraLR}. Appendix \ref{One-particle
formulation2} explains the origin of current observables in relation with
the discrete continuity equation within the CAR\ algebra. Finally, Appendix %
\ref{One-particle formulation} makes explicit the link between the algebraic
formulation we use here and the (more popular) one-particle Hilbert space
formulation of non-interacting fermion systems.
\end{itemize}

\begin{notation}
\label{remark constant}\mbox{
}\newline
A norm on a generic vector space $\mathcal{X}$ is denoted by $\Vert \cdot
\Vert _{\mathcal{X}}$. The space of all bounded linear operators on $(%
\mathcal{X},\Vert \cdot \Vert _{\mathcal{X}}\mathcal{)}$ is denoted by $%
\mathcal{B}(\mathcal{X})$. The scalar product of any Hilbert space $\mathcal{%
X}$ is denoted by $\langle \cdot ,\cdot \rangle _{\mathcal{X}}$. Note that $%
\mathbb{R}^{+}\doteq \left\{ x\in \mathbb{R}:x>0\right\} $\ while $\mathbb{R}%
_{0}^{+}\doteq \mathbb{R}^{+}\cup \{0\}$.
\end{notation}

\section{Setup of the Problem\label{sec:SetPro}}

We use the mathematical framework of \cite{OhmVI,brupedraLR} to study
fermions on the lattice. For simplicity we take a cubic lattice $\mathbb{Z}%
^{d}$, even if other types of lattices can certainly be considered with the
same, albeit adapted, methods. Disorder within the conductive material, due
to impurities, crystal lattice defects, etc., is modeled by (a) a random
external potential, like in the celebrated Anderson model, and (b) a random
Laplacian, i.e., a self-adjoint operator defined by a next-nearest neighbor
hopping term with random complex-valued amplitudes. In particular, random
vector potentials can also be implemented.

Altogether, this yields the random tight-binding model mathematically
described in Section \ref{Section impurities}: The underlying probability
space is defined in Part (ii) of that subsection, while the one-particle
Hamiltonian driven the non-interacting (or quasi-free) lattice-fermion
system is explained in Part (iii), see in particular Equation (\ref%
{eq:Ham_lap_pot}). Then, we apply on the quasi-free fermion system in
disordered media some time-dependent electromagnetic fields and look at the
linear response current density in the thermodynamic limit of macroscopic
electromagnetic fields. This study is already done in great generality in
\cite{OhmV,OhmVI,brupedraLR} and we shortly explain it in Section \ref%
{Current Densities}, with complementary explanations postponed to Appendix %
\ref{linear_response}. Then, we will be in a position to state the main
results of the paper about the exponential rate of convergence of current
densities in the limit of macroscopic electromagnetic fields.

Observe that \emph{no} interaction between fermions are considered in the
sequel and one can do all our study on the one-particle Hilbert space, as
illustrated in Appendix \ref{One-particle formulation}. Despite this, our
approach is based on the algebraic formulation of fermion systems on
lattices explained in Section \ref{subsec:algebraic} because it makes the
role played by many-fermion correlations due to the Pauli exclusion
principle, i.e., the antisymmetry of the many-body wave function, more
transparent. For instance, the conductivity is naturally defined from
current-current correlations, that is,\ four-point correlation functions, in
this framework. The algebraic formulation also allows a clear link between
transport properties of fermion systems and the CCR algebra of current
fluctuations \cite{OhmIV}. The latter is related to non-commutative central
limit theorems (see, e.g., \cite{CCR fluctuations}). On top of this, the
approach ensures a continuity with our previous results while making much
clearer its extension to a study of \emph{interacting} fermions for which
the algebraic formulation is very advantageous. This paper can thus be seen
as a preparation to do a similar study for interacting fermions. Such an
analysis has already started with \cite{ABPM1,ABPM2} via (highly technical)
constructive methods used in quantum field theory, which will allow us to
obtain convergent expansion schemes around the quasi-free case for generic
generating functions.

\subsection{Random Tight-Binding Model\label{Section impurities}}

\noindent \underline{(i):} The host material for conducting fermions is
assumed to be a cubic crystal represented by the $d$-dimensional cubic
lattice $\mathbb{Z}^{d}$ ($d\in \mathbb{N}$). Below, $\mathcal{P}_{\text{f}}(%
\mathbb{Z}^{d})\subset 2^{\mathbb{Z}^{d}}$ is the set of all non-empty \emph{%
finite} subsets of $\mathbb{Z}^{d}$. Further,
\begin{equation*}
\mathbb{D}\doteq \{z\in \mathbb{C}\colon \left\vert z\right\vert \leq 1\}%
\text{\quad and\quad }\mathfrak{b}\doteq \left\{ \{x,x^{\prime }\}\subset
\mathbb{Z}^{d}\colon |x-x^{\prime }|=1\right\}
\end{equation*}
is the set of (non-oriented) edges of the cubic lattice $\mathbb{Z}^{d}$.

\noindent \underline{(ii):} Disorder in the crystal is modeled by a random
variable taking values in the measurable space $(\Omega ,\mathfrak{A}%
_{\Omega })$, with distribution $\mathfrak{a}_{\Omega }$:

\begin{itemize}
\item[$\Omega$:] Elements of $\Omega $ are pairs $\omega =\left( \omega
_{1},\omega _{2}\right) \in \Omega $, where $\omega _{1}$ is a function on
lattice sites with values in the interval $[-1,1]$ and $\omega _{2}$ is a
function on edges with values in the complex closed unit disc $\mathbb{D}$.
I.e.,%
\begin{equation*}
\Omega \doteq \lbrack -1,1]^{\mathbb{Z}^{d}}\times \mathbb{D}^{\mathfrak{b}}.
\end{equation*}

\item[$\mathfrak{A}_{\Omega }$:] Let $\Omega _{x}^{(1)}$, $x\in \mathbb{Z}%
^{d}$, be an arbitrary element of the Borel $\sigma $-algebra $\mathfrak{A}%
_{x}^{(1)}$ of the interval $[-1,1]$ w.r.t. the usual metric topology.
Define
\begin{equation*}
\mathfrak{A}_{[-1,1]^{\mathbb{Z}^{d}}}\doteq \bigotimes\limits_{x\in \mathbb{%
Z}^{d}}\mathfrak{A}_{x}^{(1)},
\end{equation*}%
i.e., $\mathfrak{A}_{[-1,1]^{\mathbb{Z}^{d}}}$ is the $\sigma $-algebra
generated by the cylinder sets $\prod\limits_{x\in \mathbb{Z}^{d}}\Omega
_{x}^{(1)}$, where $\Omega _{x}^{(1)}=[-1,1]$ for all but finitely many $%
x\in \mathbb{Z}^{d}$. In the same way, let
\begin{equation*}
\mathfrak{A}_{\mathbb{D}^{\mathfrak{b}}}\doteq \bigotimes\limits_{\mathbf{x}%
\in \mathfrak{b}}\mathfrak{A}_{\mathbf{x}}^{(2)}\ ,
\end{equation*}%
where $\mathfrak{A}_{\mathbf{x}}^{(2)}$, $\mathbf{x}\in \mathfrak{b}$, is
the Borel $\sigma $-algebra of the complex closed unit disc $\mathbb{D}$
w.r.t. the usual metric topology. Then
\begin{equation*}
\mathfrak{A}_{\Omega }\doteq \mathfrak{A}_{[-1,1]^{\mathbb{Z}^{d}}}\otimes
\mathfrak{A}_{\mathbb{D}^{\mathfrak{b}}}\ .
\end{equation*}

\item[$\mathfrak{a}_{\Omega }$:] The distribution $\mathfrak{a}_{\Omega }$
is an arbitrary \emph{ergodic} probability measure on the measurable space $%
(\Omega ,\mathfrak{A}_{\Omega })$. I.e., it is invariant under the action
\begin{equation}
\left( \omega _{1},\omega _{2}\right) \longmapsto \chi _{x}^{(\Omega
)}\left( \omega _{1},\omega _{2}\right) \doteq \left( \chi _{x}^{(\mathbb{Z}%
^{d})}\left( \omega _{1}\right) ,\chi _{x}^{(\mathfrak{b})}\left( \omega
_{2}\right) \right) \ ,\qquad x\in \mathbb{Z}^{d}\ ,
\label{translation omega}
\end{equation}%
of the group $(\mathbb{Z}^{d},+)$ of translations on $\Omega $ and $%
\mathfrak{a}_{\Omega }(\mathcal{X})\in \{0,1\}$ whenever $\mathcal{X}\in
\mathfrak{A}_{\Omega }$ satisfies $\chi _{x}^{(\Omega )}\left( \mathcal{X}%
\right) =\mathcal{X}$ for all $x\in \mathbb{Z}^{d}$. Here, for any $\omega
=\left( \omega _{1},\omega _{2}\right) \in \Omega $, $x\in \mathbb{Z}^{d}$
and $y,y^{\prime }\in \mathbb{Z}^{d}$ with $|y-y^{\prime }|=1$,%
\begin{equation}
\chi _{x}^{(\mathbb{Z}^{d})}\left( \omega _{1}\right) \left( y\right) \doteq
\omega _{1}\left( y+x\right) ,\ \chi _{x}^{(\mathfrak{b})}\left( \omega
_{2}\right) \left( \{y,y^{\prime }\}\right) \doteq \omega _{2}\left(
\{y+x,y^{\prime }+x\}\right) \ .  \label{translation omegabis}
\end{equation}%
As is usual, $\mathbb{E}\left[ \cdot \right] $ denotes the expectation value
associated with $\mathfrak{a}_{\Omega }$.\medskip
\end{itemize}

\noindent \underline{(iii):} The one-particle Hilbert space is $\mathfrak{h}%
\doteq \ell ^{2}(\mathbb{Z}^{d};\mathbb{C})$ with scalar product $\langle
\cdot ,\cdot \rangle _{\mathfrak{h}}$. Its canonical orthonormal basis is
denoted by $\left\{ \mathfrak{e}_{x}\right\} _{x\in \mathbb{Z}^{d}}$, which
is defined by $\mathfrak{e}_{x}(y)\doteq \delta _{x,y}$ for all $x,y\in
\mathbb{Z}^{d}$. ($\delta _{x,y}$ is the Kronecker delta.) To any $\omega
\in \Omega $ and strength $\vartheta \in \mathbb{R}_{0}^{+}$ of hopping
disorder, we associate a self-adjoint operator $\Delta _{\omega ,\vartheta
}\in \mathcal{B}(\ell ^{2}(\mathbb{Z}^{d}))$ describing the hoppings of a
single particle in the lattice:
\begin{eqnarray}
\lbrack \Delta _{\omega ,\vartheta }(\psi )](x) &\doteq &2d\psi
(x)-\sum_{j=1}^{d}\Big((1+\vartheta \overline{\omega _{2}(\{x,x-e_{j}\})})\
\psi (x-e_{j})  \notag \\
&&+\psi (x+e_{j})(1+\vartheta \omega _{2}(\{x,x+e_{j}\}))\Big)
\label{equation sup}
\end{eqnarray}%
for any $x\in \mathbb{Z}^{d}$ and $\psi \in \ell ^{2}(\mathbb{Z}^{d})$, with
$\{e_{k}\}_{k=1}^{d}$ being the canonical orthonormal basis of the Euclidian
space $\mathbb{R}^{d}$. In the case of vanishing hopping disorder $\vartheta
=0$, (up to a minus sign) $\Delta _{\omega ,0}$ is the usual $d$-dimensional
discrete Laplacian. Since the hopping amplitudes are complex-valued ($\omega
_{2}$ takes values in $\mathbb{D}$), note additionally that random vector
potentials can be implemented in our model. Then, the random tight-binding
model is the one-particle Hamiltonian defined by
\begin{equation}
h^{(\omega )}\doteq \Delta _{\omega ,\vartheta }+\lambda \omega _{1}\ ,\text{%
\qquad }\omega =\left( \omega _{1},\omega _{2}\right) \in \Omega ,\ \lambda
,\vartheta \in \mathbb{R}_{0}^{+},  \label{eq:Ham_lap_pot}
\end{equation}%
where the function $\omega _{1}\colon \mathbb{Z}^{d}\rightarrow \lbrack
-1,1] $ is identified with the corresponding (self-adjoint) multiplication
operator. We use this operator to define a (infinite volume) dynamics, by
the unitary group $\{\mathrm{e}^{ith^{(\omega )}}\}_{t\in \mathbb{R}}$, in
the one-particle Hilbert space $\mathfrak{h}$. Note that the tight-binding
Anderson model corresponds to the special case $\vartheta =0$. \medskip

\noindent \underline{(iv):} Let
\begin{eqnarray*}
\mathfrak{Z} &\doteq &\left\{ \mathcal{Z}\subset 2^{\mathbb{Z}^{d}}\colon
\left( \forall Z_{1},Z_{2}\in \mathcal{Z}\right) \ Z_{1}\neq
Z_{2}\Longrightarrow Z_{1}\cap Z_{2}=\emptyset \text{ }\right\} , \\
\mathfrak{Z}_{\text{f}} &\doteq &\left\{ \mathcal{Z}\in \mathfrak{Z}\colon
\left\vert \mathcal{Z}\right\vert <\infty \text{ and }\left( \forall Z\in
\mathcal{Z}\right) \ 0<\left\vert Z\right\vert <\infty \right\} .
\end{eqnarray*}%
One can restrict the dynamics to collections $\mathcal{Z}\in \mathfrak{Z}$
of disjoint subsets of the lattice by using the orthogonal projections $%
P_{\Lambda }$, $\Lambda \subset \mathbb{Z}^{d}$, defined on $\mathfrak{h}$
by
\begin{equation}
\lbrack P_{\Lambda }(\varphi )](x)\doteq \left\{
\begin{array}{lll}
\varphi (x) & , & \text{if }x\in \Lambda . \\
0 & , & \text{else.}%
\end{array}%
\right.  \label{orthogonal projection}
\end{equation}%
Then, the one-particle Hamiltonian within $\mathcal{Z}\in \mathfrak{Z}$ is%
\begin{equation}
h_{\mathcal{Z}}^{(\omega )}\doteq \sum_{Z\in \mathcal{Z}}P_{Z}h^{(\omega
)}P_{Z},  \label{h finite f}
\end{equation}%
leading to the unitary group $\{\mathrm{e}^{ith_{\mathcal{Z}}^{(\omega
)}}\}_{t\in \mathbb{R}}$. This kind of decomposition over collections of
disjoint subsets of the lattice is important in the technical proofs.
\medskip

\noindent \underline{(v):} By the Combes-Thomas estimate (Appendix \ref%
{Section Combes-Thomas Estimate}),
\begin{equation}
\left\vert \left\langle \mathfrak{e}_{x},\mathrm{e}^{ith_{\mathcal{Z}%
}^{(\omega )}}\mathfrak{e}_{y}\right\rangle _{\mathfrak{h}}\right\vert \leq
36\mathrm{e}^{\left\vert t\eta \right\vert -2\mu _{\eta }|x-y|}
\label{Combes-ThomasCombes-Thomas}
\end{equation}%
for any $\eta ,\mu \in \mathbb{R}^{+}$, $x,y\in \mathbb{Z}^{d}$, $\mathcal{Z}%
\in \mathfrak{Z}$, $\omega \in \Omega $, and $\lambda ,\vartheta \in \mathbb{%
R}_{0}^{+}$, where
\begin{equation}
\mu _{\eta }\doteq \mu \min \left\{ \frac{1}{2},\frac{\eta }{8d\left(
1+\vartheta \right) \mathrm{e}^{\mu }}\right\} .
\label{Combes-ThomasCombes-Thomasbis}
\end{equation}%
See Corollary \ref{Lemma fermi1}, by observing that the parameter $\mathbf{S}
$ defined by (\ref{S}) is bounded in this case by $\mathbf{S}(h_{\mathcal{Z}%
}^{(\omega )},\mu )\leq 2d(1+\vartheta )\mathrm{e}^{\mu }$.

\subsection{Algebraic Setting\label{subsec:algebraic}}

Although all the problem can be formulated, in a mathematically equivalent
way, in the one-particle (or Hilbert space) setting (Appendix \ref%
{One-particle formulation}), since the underlying physical system is a
many-body one, it is conceptually more appropriate to state the large
deviation principle (LDP) related to microscopic current densities within
the algebraic formulation for lattice fermion systems:\medskip

\noindent \underline{(i):} We denote by $\mathcal{U\equiv U}_{\mathfrak{h}}$
the CAR $C^{\ast }$-algebra generated by the identity $\mathfrak{1}$ and
elements $\{a(\psi )\}_{\psi \in \mathfrak{h}}$ satisfying the canonical
anticommutation relations (CAR): For all $\psi ,\varphi \in \mathfrak{h}$,
\begin{equation}
a(\psi )a(\varphi )=-a(\varphi )a(\psi ),\quad a(\psi )a(\varphi )^{\ast
}+a(\varphi )^{\ast }a(\psi )=\left\langle \psi ,\varphi \right\rangle _{%
\mathfrak{h}}\mathfrak{1}.  \label{eq:CAR}
\end{equation}%
Note that CAR imply that, for all $\psi \in \mathfrak{h}$,
\begin{equation}
\left\Vert a(\psi )\right\Vert _{\mathcal{U}}\leq \left\Vert \psi
\right\Vert _{\mathfrak{h}},  \label{eq:bound_CAR}
\end{equation}%
and the map $\psi \mapsto a(\psi )^{\ast }$ from $\mathfrak{h}$ to $\mathcal{%
U}$ is linear. As is usual, $a(\psi )$ and $a(\psi )^{\ast }$ are called,
respectively, annihilation and creation operators.

\noindent \underline{(ii):} For all $\omega \in \Omega $ and $\lambda
,\vartheta \in \mathbb{R}_{0}^{+}$, the dynamics on the CAR $C^{\ast }$%
-algebra $\mathcal{U}$ is defined by a strongly continuous group $\tau
^{(\omega )}\doteq \{\tau _{t}^{(\omega )}\}_{t\in {\mathbb{R}}}$ of
(Bogoliubov) $\ast $-automorphisms of $\mathcal{U}$ satisfying%
\begin{equation}
\tau _{t}^{(\omega )}(a(\psi ))=a(\mathrm{e}^{ith^{(\omega )}}\psi )\ ,\text{%
\qquad }t\in \mathbb{R},\ \psi \in \mathfrak{h}.  \label{rescaledbis}
\end{equation}%
See (\ref{eq:Ham_lap_pot}) as well as \cite[Theorem 5.2.5]%
{bratteli2003operator2} for more details on Bogoliubov automorphisms.
Similarly, for any $\mathcal{Z}\in \mathfrak{Z}$, we define the strongly
continuous group $\tau ^{(\omega ,\mathcal{Z})}$ by replacing $h^{(\omega )}$
in (\ref{rescaledbis}) with $h_{\mathcal{Z}}^{(\omega )}$ (see (\ref{h
finite f})). In order to define the thermodynamic limit, we introduce the
increasing family%
\begin{equation}
\Lambda _{\ell }\doteq \{(x_{1},\ldots ,x_{d})\in \mathbb{Z}^{d}\colon
|x_{1}|,\ldots ,|x_{d}|\leq \ell \},\qquad \ell \in \mathbb{R}_{0}^{+},
\label{eq:boxesl1}
\end{equation}%
in $\mathcal{P}_{\text{f}}(\mathbb{Z}^{d})$. Observe that, for any $t\in {%
\mathbb{R}}$, $\tau _{t}^{(\omega ,\{\Lambda _{\ell }\})}$ converges
strongly to $\tau _{t}^{(\omega )}\equiv \tau _{t}^{(\omega ,\{\mathbb{Z}%
^{d}\})}$, as $\ell \rightarrow \infty $.\medskip

\noindent \underline{(iii):} For any realization $\omega \in \Omega $ and
disorder strengths $\lambda ,\vartheta \in \mathbb{R}_{0}^{+}$, the thermal
equilibrium state of the system at inverse temperature $\beta \in \mathbb{R}%
^{+}$ (i.e., $\beta >0$) is by definition the unique $(\tau ^{(\omega
)},\beta )$-KMS state $\varrho ^{(\omega )}$, see \cite[Example 5.3.2.]%
{bratteli2003operator2} or \cite[Theorem 5.9]{AttalJoyePillet2006a}. It is
well-known that such a state is stationary w.r.t. the dynamics $\tau
^{(\omega )}$, that is,
\begin{equation}
\varrho ^{(\omega )}\circ \tau _{t}^{(\omega )}=\varrho ^{(\omega )}\
,\qquad \omega \in \Omega ,\ t\in \mathbb{R}.  \label{stationary}
\end{equation}%
The state $\varrho ^{(\omega )}$ is also \emph{gauge-invariant} and \emph{%
quasi-free}, and it satisfies%
\begin{equation}
\varrho ^{(\omega )}(a^{\ast }\left( \varphi \right) a\left( \psi \right)
)=\left\langle \psi ,\frac{1}{1+\mathrm{e}^{\beta h^{(\omega )}}}\varphi
\right\rangle _{\mathfrak{h}},\qquad \varphi ,\psi \in \mathfrak{h}.
\label{2-point correlation function}
\end{equation}%
For $\beta =0$, one gets the tracial state (or chaotic state), denoted by $%
\text{tr}\in \mathcal{U}^{\ast }$.

Recall that gauge-invariant quasi--free states are positive linear
functionals $\rho \in \mathcal{U}^{\ast }$ such that $\rho (\mathfrak{1})=1$
and, for all $N_{1},N_{2}\in \mathbb{N}$ and $\psi _{1},\ldots ,\psi
_{N_{1}+N_{2}}\in \mathfrak{h}$,
\begin{equation}
\rho \left( a^{\ast }(\psi _{1})\cdots a^{\ast }(\psi _{N_{1}})a(\psi
_{N_{1}+N_{2}})\cdots a(\psi _{N_{1}+1})\right) =0  \label{ass O0-00}
\end{equation}%
if $N_{1}\neq N_{2}$, while in the case $N_{1}=N_{2}\equiv N$,
\begin{equation}
\rho \left( a^{\ast }(\psi _{1})\cdots a^{\ast }(\psi _{N})a(\psi
_{2N})\cdots a(\psi _{N+1})\right) =\mathrm{det}\left[ \rho \left(
a^{+}(\psi _{k})a(\psi _{N+l})\right) \right] _{k,l=1}^{N}.
\label{ass O0-00bis}
\end{equation}%
See, e.g., \cite[Definition 3.1]{Araki}, which refers to a more general
notion of quasi-free states. The gauge-invariant property corresponds to
Equation (\ref{ass O0-00}) whereas \cite[Definition 3.1, Condition (3.1)]%
{Araki} only imposes the quasi--free state to be even, which is a strictly
weaker property than being gauge-invariant.

Similarly, for any $\mathcal{Z}\in \mathfrak{Z}$, we define the quasi-free
state $\varrho _{\mathcal{Z}}^{(\omega )}$ by replacing $h^{(\omega )}$ in (%
\ref{2-point correlation function}) with $h_{\mathcal{Z}}^{(\omega )}$ (see (%
\ref{h finite f})). In the thermodynamic limit $\ell \rightarrow \infty $, $%
\varrho _{\{\Lambda _{\ell }\}}^{(\omega )}$ converges in the weak$^{\ast }$
topology to $\varrho ^{(\omega )}\equiv \varrho _{\{\mathbb{Z}%
^{d}\}}^{(\omega )}$.

\subsection{Current Densities\label{Current Densities}}

\noindent \underline{(i) Currents:} Fix $\omega \in \Omega $ and $\vartheta
\in \mathbb{R}_{0}^{+}$. For any oriented edge $(x,y)\in \left( \mathbb{Z}%
^{d}\right) ^{2}$, we define the paramagnetic current observable by
\begin{equation}
I_{(x,y)}^{(\omega )}\doteq -2\Im \mathrm{m}\left( \langle \mathfrak{e}%
_{x},\Delta _{\omega ,\vartheta }\mathfrak{e}_{y}\rangle _{\mathfrak{h}}a(%
\mathfrak{e}_{x})^{\ast }a(\mathfrak{e}_{y})\right) .
\label{current observable}
\end{equation}%
It is seen as a current because it satisfies a discrete continuity equation,
as explained in Appendix \ref{One-particle formulation2}. Here, the
self-adjoint operators $\Im \mathrm{m}(A)\in \mathcal{U}$ and $\Re \mathrm{e}%
(A)\in \mathcal{U}$ are the \emph{imaginary} and \emph{real parts} of $A\in
\mathcal{U}$, that are, respectively,
\begin{equation}
\Im \mathrm{m}\left( A\right) \doteq \frac{1}{2i}\left( A-A^{\ast }\right)
\quad \text{and}\quad \Re \mathrm{e}\left( A\right) \doteq \frac{1}{2}\left(
A+A^{\ast }\right) \ .  \label{im and real part}
\end{equation}%
This \textquotedblleft second-quantized\textquotedblright\ definition of
current observable and the usual one in the one-particle setting, like in
\cite{[SBB98],jfa,klm}, are perfectly equivalent, in the case of
non-interacting fermions. See for instance Equation (\ref{equation refereee}%
).

Note that electric fields accelerate charged particles and induce so-called
diamagnetic currents, which correspond to the ballistic movement of
particles. In fact, as explained in \cite[Sections III and IV]{OhmII}, this
component of the total current creates a kind of \textquotedblleft wave
front\textquotedblright\ that destabilizes the whole system by changing its
state. The presence of diamagnetic currents leads then to the progressive
appearance of paramagnetic currents which are responsible for heat
production and the in-phase AC-conductivity of the system. For more details,
see \cite{OhmII,OhmV,OhmVI} as well as Appendix \ref{linear_response} on
linear response currents. \medskip

\noindent \underline{(ii) Conductivity:} As is usual, $\left[ A,B\right]
\doteq AB-BA\in \mathcal{U}$ denotes the commutator between the elements $%
A\in \mathcal{U}$ and $B\in \mathcal{U}$. For any finite subset $\Lambda \in
\mathcal{P}_{\text{f}}(\mathbb{Z}^{d})$, we define the space-averaged
transport coefficient observable $\mathcal{C}_{\Lambda }^{(\omega )}\in
C^{1}(\mathbb{R};\mathcal{B}(\mathbb{R}^{d};\mathcal{U}^{d}))$, w.r.t. the
canonical orthonormal basis $\{e_{q}\}_{q=1}^{d}$ of the Euclidian space $%
\mathbb{R}^{d}$, by the corresponding matrix entries%
\begin{eqnarray}
\left\{ \mathcal{C}_{\Lambda }^{(\omega )}\left( t\right) \right\} _{k,q}
&\doteq &\frac{1}{\left\vert \Lambda \right\vert }\underset{%
x,y,x+e_{k},y+e_{q}\in \Lambda }{\sum }\int\nolimits_{0}^{t}i[\tau _{-\alpha
}^{(\omega )}(I_{\left( y+e_{q},y\right) }^{(\omega )}),I_{\left(
x+e_{k},x\right) }^{(\omega )}]\mathrm{d}\alpha   \notag \\
&&+\frac{2\delta _{k,q}}{\left\vert \Lambda \right\vert }\underset{x\in
\Lambda }{\sum }\Re \mathrm{e}\left( \langle \mathfrak{e}_{x+e_{k}},\Delta
_{\omega ,\vartheta }\mathfrak{e}_{x}\rangle a(\mathfrak{e}_{x+e_{k}})^{\ast
}a(\mathfrak{e}_{x})\right)   \label{defininion para coeff observable}
\end{eqnarray}%
for any $\omega \in \Omega $, $t\in \mathbb{R}$, $\lambda ,\vartheta \in
\mathbb{R}_{0}^{+}$ and $k,q\in \{1,\ldots ,d\}$. This object is the
conductivity observable matrix associated with the lattice region $\Lambda $
and time $t$. See Appendix \ref{linear_response}, in particular Equations (%
\ref{linear response current1})-(\ref{linear response current1bis}). In
fact, the first term in the right-hand side of (\ref{defininion para coeff
observable}) corresponds to the paramagnetic coefficient, whereas the second
one is the diamagnetic component. For more details, see \cite[Theorem 3.7]%
{OhmV}. \medskip

\noindent \underline{(iii) Linear response current density:} Fix a direction
$\vec{w}\in {\mathbb{R}}^{d}$ with $\left\Vert \vec{w}\right\Vert _{\mathbb{R%
}^{d}}=1$ and a (time-dependent) continuous, compactly supported, electric
field $\mathcal{E}\in C_{0}^{0}(\mathbb{R};\mathbb{R}^{d})$, i.e., the
external electric field is a continuous function $t\mapsto \mathcal{E}(t)\in
\mathbb{R}^{d}$ of time $t\in \mathbb{R}$ with compact support. Then, as it
is explained in Appendix \ref{linear_response}, \cite{OhmV,OhmVI}\footnote{%
Strictly speaking, these papers use smooth electric fields, but the
extension to the continuous case is straightforward.} shows that the
space-averaged linear response current observable in the lattice region $%
\Lambda $ and at time $t=0$ in the direction $\vec{w}$ is equal to%
\begin{equation}
\mathbb{I}_{\Lambda }^{(\omega ,\mathcal{E})}\doteq \underset{k,q=1}{\sum^{d}%
}w_{k}\int_{-\infty }^{0}\left\{ \mathcal{E}\left( \alpha \right) \right\}
_{q}\left\{ \mathcal{C}_{\Lambda }^{(\omega )}\left( -\alpha \right)
\right\} _{k,q}\mathrm{d}\alpha \ .  \label{current}
\end{equation}%
To obtain the current density at any time $t\in \mathbb{R}$, it suffices to
replace $\mathcal{E}\in C_{0}^{0}(\mathbb{R};\mathbb{R}^{d})$ in this
equation with
\begin{equation}
\mathcal{E}_{t}(\alpha )\doteq \mathcal{E}\left( \alpha +t\right) ,\qquad
\alpha \in \mathbb{R}.  \label{new field}
\end{equation}%
Compare with Equations (\ref{linear response current1})-(\ref{linear
response current1bis}).

\section{Main Results\label{sec:main}}

We study large deviations (LD) for the microscopic current density produced
by any fixed, time-dependent electric field $\mathcal{E}$. Via the G\"{a}%
rtner-Ellis theorem (see, e.g., \cite[Corollary 4.5.27]{dembo1998large}),
this is a consequence of the following result:

\begin{theorem}[Generating functions for currents]
\label{Theorem main result}\mbox{ }\newline
There is a measurable subset $\tilde{\Omega}\subset \Omega $ of full measure
such that, for all $\beta \in \mathbb{R}^{+}$, $\vartheta ,\lambda \in
\mathbb{R}_{0}^{+}$, $\omega \in \tilde{\Omega}$, $\mathcal{E}\in C_{0}^{0}(%
\mathbb{R};\mathbb{R}^{d})$ and $\vec{w}\in {\mathbb{R}}^{d}$ with $%
\left\Vert \vec{w}\right\Vert _{\mathbb{R}^{d}}=1$, the limit
\begin{equation*}
\lim_{L\to \infty }\frac{1}{\left\vert \Lambda _{L}\right\vert }\ln \varrho
^{(\omega )}\left( \mathrm{e}^{\left\vert \Lambda _{L}\right\vert \mathbb{I}%
_{\Lambda _{L}}^{(\omega ,\mathcal{E})}}\right)
\end{equation*}%
exist and equals%
\begin{equation*}
\mathrm{J}^{(\mathcal{E})}\doteq \lim_{L\to \infty }\frac{1}{\left\vert
\Lambda _{L}\right\vert }\mathbb{E}\left[ \ln \varrho ^{(\cdot )}\left(
\mathrm{e}^{\left\vert \Lambda _{L}\right\vert \mathbb{I}_{\Lambda
_{L}}^{(\cdot ,\mathcal{E})}}\right) \right] .
\end{equation*}%
Moreover, for any $\mathcal{E}\in C_{0}^{0}(\mathbb{R};\mathbb{R}^{d})$, the
map $s\mapsto \mathrm{J}^{(s\mathcal{E})}$ from ${\mathbb{R}}$ to itself is
continuously differentiable and convex.
\end{theorem}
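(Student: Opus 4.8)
Throughout write $G^{(\omega)}_{L}(s)\doteq|\Lambda_L|^{-1}\ln\varrho^{(\omega)}(\mathrm{e}^{s|\Lambda_L|\mathbb{I}^{(\omega,\mathcal{E})}_{\Lambda_L}})$, and observe that, since $\mathcal{E}\mapsto\mathbb{I}^{(\omega,\mathcal{E})}_{\Lambda}$ is linear by \eqref{current}, the map $s\mapsto\mathrm{J}^{(s\mathcal{E})}$ is precisely the pointwise limit of $G^{(\omega)}_L(s)$, so it suffices to treat this one-parameter family. First I would record two structural facts. Convexity in $s$ comes for free: restricting the state $\varrho^{(\omega)}$ to the abelian algebra generated by the self-adjoint observable $\mathbb{I}^{(\omega,\mathcal{E})}_{\Lambda_L}$ realizes $G^{(\omega)}_L$ as a classical scaled cumulant generating function $|\Lambda_L|^{-1}\ln\int\mathrm{e}^{s|\Lambda_L|\lambda}\,\mathrm{d}\mu(\lambda)$, which is convex by H\"older's inequality, and convexity passes to any pointwise limit. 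Second, by the identification of currents as bilinear elements, $|\Lambda_L|\mathbb{I}^{(\omega,\mathcal{E})}_{\Lambda_L}$ is the second quantization of an explicit one-particle operator that is an extensive sum of terms localized near the edges of $\Lambda_L$, the only delocalization coming from the dynamics $\tau^{(\omega)}_{-\alpha}$ inside \eqref{defininion para coeff observable}; as $\mathcal{E}$ has compact support the time integral runs over a finite window, and \eqref{Combes-ThomasCombes-Thomas} controls the resulting spatial spread. This quantitative locality drives the entire argument.

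The heart of the proof is the existence of the limit, which I would extract from the Akcoglu--Krengel ergodic theorem \cite{birkoff} after a mesoscopic box decomposition. Fix a scale $\ell$, tile $\mathbb{Z}^d$ by translates of $\Lambda_\ell$, let $\mathcal{Z}_\ell\in\mathfrak{Z}$ be the resulting partition, and replace $h^{(\omega)}$ by the confined Hamiltonian $h^{(\omega)}_{\mathcal{Z}_\ell}$ of \eqref{h finite f} in both the dynamics and the KMS state \eqref{2-point correlation function}. The confined objects factorize across blocks, so the confined functional $\Lambda\mapsto\mathcal{F}^{(\omega,\mathcal{Z}_\ell)}_{\Lambda}(s)\doteq\ln\varrho^{(\omega)}_{\mathcal{Z}_\ell}(\mathrm{e}^{s|\Lambda|\mathbb{I}^{(\omega,\mathcal{Z}_\ell,\mathcal{E})}_{\Lambda}})$ is additive over unions of blocks and, by \eqref{translation omega}--\eqref{translation omegabis}, covariant under the ergodic $\mathbb{Z}^d$-action; Akcoglu--Krengel then gives, for each fixed $\ell$, almost-sure and $L^1$ convergence of $|\Lambda_L|^{-1}\mathcal{F}^{(\omega,\mathcal{Z}_\ell)}_{\Lambda_L}(s)$ to the deterministic constant $\mathrm{J}^{(s\mathcal{E})}_\ell\doteq|\Lambda_\ell|^{-1}\mathbb{E}[\mathcal{F}^{(\cdot,\{\Lambda_\ell\})}_{\Lambda_\ell}(s)]$. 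To show the confinement is harmless in the bulk I would apply the Bogoliubov-type inequality quoted in the introduction, with the normalized trace as positive reference and the two quadratic Hamiltonians $-\beta\,\mathrm{d}\Gamma(h^{(\omega)})$ and $-\beta\,\mathrm{d}\Gamma(h^{(\omega)}_{\mathcal{Z}_\ell})$ (each tilted by the bilinear current): this bounds $|\mathcal{F}^{(\omega)}_{\Lambda_L}(s)-\mathcal{F}^{(\omega,\mathcal{Z}_\ell)}_{\Lambda_L}(s)|$ by the norm of the difference of one-particle symbols, which is supported on the block boundaries. Summing the $\mathcal{O}(\ell^{d-1})$ boundary contributions over the $\mathcal{O}((L/\ell)^d)$ blocks and using \eqref{Combes-ThomasCombes-Thomas} to control conjugation by the quasi-free dynamics yields a bound $|\Lambda_L|\,\varepsilon(\ell)$ with $\varepsilon(\ell)\to0$. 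Dividing by $|\Lambda_L|$ shows the true per-volume generating function lies within $\varepsilon(\ell)$ of $\mathrm{J}^{(s\mathcal{E})}_\ell$ for all large $L$; hence $(\mathrm{J}^{(s\mathcal{E})}_\ell)_\ell$ is Cauchy, the limit $\mathrm{J}^{(s\mathcal{E})}$ exists, is deterministic, and coincides with the $\mathbb{E}$-limit of the statement.

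To build the full-measure set $\tilde\Omega$ uniformly in the data $(\beta,\lambda,\vartheta,\mathcal{E},\vec w,s)$, I would run the above for $s$ in a countable dense set and for a countable dense family of parameters, intersect the countably many almost-sure convergence events, and extend to all values using the equi-Lipschitz continuity supplied by the bounds below. For the regularity claim the plan is to control the first two $s$-derivatives of $G^{(\omega)}_L$ uniformly: up to the $|\Lambda_L|^{-1}$ normalization these are the first and second cumulants of the extensive observable $|\Lambda_L|\mathbb{I}^{(\omega,\mathcal{E})}_{\Lambda_L}$ in the tilted quasi-free state, and since that observable is a sum of local terms whose truncated two-point correlations decay by \eqref{Combes-ThomasCombes-Thomas}, the variance is of order $|\Lambda_L|$. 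Thus $\partial_s G^{(\omega)}_L$ and $\partial_s^2 G^{(\omega)}_L$ are bounded uniformly in $L$ and (almost surely) in $\omega$ on each compact $s$-interval. The uniform second-derivative bound makes $\{\partial_s G^{(\omega)}_L\}_L$ equi-Lipschitz, hence precompact by the Arzel\`a--Ascoli theorem \cite{Rudin}; combined with the established pointwise convergence and convexity of $G^{(\omega)}_L$, the derivatives converge to $\partial_s\mathrm{J}^{(s\mathcal{E})}$, which is therefore continuous, so $s\mapsto\mathrm{J}^{(s\mathcal{E})}$ is convex and $C^1$.

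I expect the main obstacle to be making the decoupling error genuinely $o(|\Lambda_L|)$ uniformly in $L$. The current in \eqref{defininion para coeff observable} contains the Heisenberg dynamics $\tau^{(\omega)}_{-\alpha}$, which spreads operators spatially, and \eqref{Combes-ThomasCombes-Thomas} trades this decay against the unfavorable time factor $\mathrm{e}^{|t\eta|}$. Balancing these — keeping the boundary layer where confinement matters of finite width (set by the support of $\mathcal{E}$ and an optimized choice of $\eta,\mu$) while the Bogoliubov-type inequality converts the operator-difference bound into a genuinely surface-order estimate — is the delicate quantitative step, and it is exactly where the interplay between the Bogoliubov inequality, the bilinear-element estimates, and the Combes--Thomas decay must be executed with care.
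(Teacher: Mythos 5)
Your overall strategy is the paper's own: finite-volume generating functions, a mesoscopic box decomposition controlled by the Bogoliubov-type inequality (Lemma \ref{lemma:suppor_non_int1}) and Combes--Thomas estimates, the Akcoglu--Krengel theorem over a countable dense family of parameters, and uniform bounds on the first two $s$-derivatives plus Arzel\`a--Ascoli for the $C^{1}$ claim. There is, however, a concrete gap in the decoupling step as you set it up. You compare the true generating function with the one in which $h^{(\omega )}$ is replaced by $h_{\mathcal{Z}_{\ell }}^{(\omega )}$ for a tiling $\mathcal{Z}_{\ell }$ of \emph{all} of $\mathbb{Z}^{d}$, by applying the Bogoliubov-type inequality to the ``Hamiltonians'' $-\beta \,\mathrm{d}\Gamma (h^{(\omega )})$ and $-\beta \,\mathrm{d}\Gamma (h_{\mathcal{Z}_{\ell }}^{(\omega )})$. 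This is not legitimate as stated: bilinear elements $\langle \mathrm{A},C\mathrm{A}\rangle $ are only defined for finite-rank $C$ (Definition \ref{def bilineqr}), and the infinite-volume KMS state $\varrho ^{(\omega )}$ is not of the Gibbs form $\mathrm{tr}(\,\cdot \;\mathrm{e}^{H})/\mathrm{tr}(\mathrm{e}^{H})$ for any $H\in \mathcal{U}$. This is exactly why the paper first rewrites the generating function as the iterated limit (\ref{limit cool}) of finite-volume functionals with three independent scales $L_{\tau }\geq L_{\varrho }\geq L$ for the dynamics, the state and the current, and only then applies Lemma \ref{lemma:suppor_non_int1}, with all error bounds uniform in the two larger scales.

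Second, and more seriously, even after such a finite-volume regularization your error count fails. The right-hand side of Lemma \ref{lemma:suppor_non_int1} (ii) is an operator \emph{norm} of a conjugate of $H_{1}-H_{0}$; it cannot see that the tilting observable $s\left\vert \Lambda _{L}\right\vert \mathbb{I}_{\Lambda _{L}}^{(\omega ,\mathcal{E})}$ lives only in $\Lambda _{L}$. Taking $H_{1}-H_{0}=-\beta \langle \mathrm{A},(h_{\{\Lambda _{L_{\varrho }}\}}^{(\omega )}-h_{\mathcal{Z}_{\ell }}^{(\omega )})\mathrm{A}\rangle $, the one-particle difference is supported on \emph{all} block boundaries inside $\Lambda _{L_{\varrho }}$, and the norm of such a bilinear element is extensive in the number of boundary sites, i.e., of order $\beta L_{\varrho }^{d}/\ell $. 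This is not $o(\left\vert \Lambda _{L}\right\vert )$ uniformly in $L_{\varrho }\geq L$: your count of $\mathcal{O}((L/\ell )^{d})$ blocks silently discards the blocks outside $\Lambda _{L}$, whose contribution the norm bound does not suppress (for the \emph{dynamics} decoupling, by contrast, the locality of the current operator and Combes--Thomas do rescue the estimate, as in Lemma \ref{def bilineqr copy(2)}). The paper's proof is organized precisely to avoid this: it first decouples the state Hamiltonian only along $\partial \Lambda _{L}$, keeping $\Lambda _{L_{\varrho }}\backslash \Lambda _{L}$ as a single piece, with a surface-order error (\ref{idiot5}); it then tiles only the interior $\Lambda _{L}$ into $l$-blocks while leaving the exterior untouched (Lemmata \ref{prop:equic copy(4)}--\ref{prop:equic copy(3)}, Proposition \ref{Main proposition}); and finally the untouched exterior cancels in the normalized quantity $\mathrm{J}=g^{(\mathcal{E})}-g^{(0)}$ because the tracial state is a product state, cf. (\ref{punchline0}). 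With this reorganization of the decoupling, the rest of your plan (Akcoglu--Krengel at fixed $l$, the Cauchy-in-$l$ argument, countable dense parameter sets with the equicontinuity of Proposition \ref{prop:equic}, and the derivative bounds of Proposition \ref{prop:equic copy(5)} combined with Arzel\`a--Ascoli) is carried out exactly as in Corollaries \ref{Ackoglu-Krengel ergodic theorem II copy(2)} and \ref{final corrollary}.
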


\begin{proof}
The assertions directly follow from Corollaries \ref{Ackoglu-Krengel ergodic
theorem II copy(2)} and \ref{final corrollary}. Note that the map $s\mapsto
\mathrm{J}^{(s\mathcal{E})}$ is a limit of convex functions, and hence, it
is also convex.
\end{proof}

In probability theory, the law of large numbers refers to the convergence
(at least in probability), as $n\rightarrow \infty $, of the average or
empirical mean of $n$ independent identically distributed (i.i.d.) random
variables towards their expected value (assuming it exists). The large
deviation formalism quantitatively describes, for large $n\gg 1$, the
probability of finding an empirical mean that differs from the expected
value. These are \emph{rare} events, by the law of large numbers, and an LD
principle (LDP) gives their probability as exponentially small (w.r.t. some
speed) in the limit $n\rightarrow \infty $.

In the context of the algebraic formulation of quantum mechanics,
observables (i.e., self-adjoint elements of some $C^{\ast }$-algebra, here $%
\mathcal{U}$) generalize the notion of random variables of classical
probability theory. The link between both notions is given via the
Riesz-Markov theorem and functional calculus: The commutative $C^{\ast }$%
-subalgebra of $\mathcal{U}$ generated by any self-adjoint element $A^{\ast
}=A\in \mathcal{U}$ is isomorphic to the algebra of continuous functions on
the compact set $\mathrm{spec}(A)\subset \mathbb{R}$. Hence, by the
Riesz-Markov theorem, for any state $\rho \in \mathcal{U}^{\ast }$, there is
a unique probability measure $\mathfrak{m}_{\rho ,A}$ on $\mathbb{R}$ such
that
\begin{equation}
\mathfrak{m}_{\rho ,A}(\mathrm{spec}(A))=1\qquad \text{and}\qquad \rho
\left( f(A)\right) =\int_{\mathbb{R}}f(x)\mathfrak{m}_{\rho ,A}(\mathrm{d}x)
\label{fluctuation measure}
\end{equation}%
for all complex-valued continuous functions $f\in C(\mathbb{R};\mathbb{C})$.
$\mathfrak{m}_{\rho ,A}$ is called the \emph{distribution} of the observable
$A$ in the state $\rho $. The LD formalism naturally arises also in this
more general framework: A \emph{rate} function is a lower semi-continuous
function $\mathrm{I}:\mathbb{R}\rightarrow \lbrack 0,\infty ]$. If $\mathrm{I%
}$ is not the $\infty $ constant function and has compact level sets, i.e.,
if $\mathrm{I}^{-1}([0,m])=\{x\in \mathbb{R}\colon \mathrm{I}(x)\leq m\}$ is
compact for any $m\geq 0$, then one says that $\mathrm{I}$ is a \emph{good}
rate function. A sequence $(A_{L})_{L\in \mathbb{N}}\subset \mathcal{U}$ of
observables satisfies an LDP, in a state $\rho \in \mathcal{U}^{\ast }$,
with speed $(\mathfrak{n}_{L})_{L\in \mathbb{N}}\subset \mathbb{R}^{+}$ (a
positive, increasing and divergent sequence) and rate function $\mathrm{I}$
if, for any borel subset $\mathcal{G}$ of $\mathbb{R}$,
\begin{equation*}
-\inf_{x\in \mathcal{G}^{\circ }}\mathrm{I}\left( x\right) \leq
\liminf_{L\rightarrow \infty }\frac{1}{\mathfrak{n}_{L}}\ln \mathfrak{m}%
_{\rho ,A_{L}}\left( \mathcal{G}\right) \leq \limsup_{L\rightarrow \infty }%
\frac{1}{\mathfrak{n}_{L}}\ln \mathfrak{m}_{\rho ,A_{L}}\left( \mathcal{G}%
\right) \leq -\inf_{x\in \mathcal{\bar{G}}}\mathrm{I}\left( x\right) .
\end{equation*}%
Here, $\mathcal{G}^{\circ }$ is the interior of $\mathcal{G}$, while $%
\mathcal{\bar{G}}$ is its closure. Compare with Equations (\ref{LDP upp})-(%
\ref{LDP lower}) in Appendix \ref{sec:LDP}.

A sufficient condition to ensure that a sequence of observables satisfies an
LDP is given by the G\"{a}rtner-Ellis theorem. In particular, Theorem \ref%
{Theorem main result} combined with Theorem \ref{prop Gartner--Ellis} yields
the following corollary:

\begin{corollary}[Large deviation principle for currents]
\label{LDP copy(1)}\mbox{ }\newline
Let $\tilde{\Omega}\subset \Omega $ be the measurable subset of full measure
of Theorem \ref{Theorem main result}. Then, for all $\beta \in \mathbb{R}^{+}
$, $\vartheta ,\lambda \in \mathbb{R}_{0}^{+}$, $\omega \in \tilde{\Omega}$,
$l\in \mathbb{N}$, $\mathcal{E}\in C_{0}^{0}(\mathbb{R};\mathbb{R}^{d})$ and
$\vec{w}\in {\mathbb{R}}^{d}$ with $\left\Vert \vec{w}\right\Vert _{\mathbb{R%
}^{d}}=1$, the sequence $(\mathbb{I}_{\Lambda _{L}}^{(\omega ,\mathcal{E}%
)})_{L\in \mathbb{N}}$ of microscopic current densities satisfies an LDP, in
the KMS state $\varrho ^{(\omega )}$, with speed $\left\vert \Lambda
_{L}\right\vert $ and good rate function $\mathrm{I}^{(\mathcal{E)}}$
defined on $\mathbb{R}$ by
\begin{equation*}
\mathrm{I}^{(\mathcal{E)}}(x)\doteq \sup\limits_{s\in \mathbb{R}}\left\{ sx-%
\mathrm{J}^{(s\mathcal{E})}\right\} \geq 0.
\end{equation*}
\end{corollary}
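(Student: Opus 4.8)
The plan is to obtain the corollary as an essentially immediate application of the G\"{a}rtner-Ellis theorem (Theorem \ref{prop Gartner--Ellis}), with Theorem \ref{Theorem main result} supplying precisely the analytic input it requires. The first observation I would record is that the microscopic current density is \emph{linear} in the applied electric field: in its defining expression (\ref{current}) the transport coefficient observable $\mathcal{C}_{\Lambda_{L}}^{(\omega)}$ does not depend on $\mathcal{E}$, which enters only through the linear time integral against $\{\mathcal{E}(\alpha)\}_{q}$. Hence $\mathbb{I}_{\Lambda_{L}}^{(\omega,s\mathcal{E})}=s\,\mathbb{I}_{\Lambda_{L}}^{(\omega,\mathcal{E})}$ for every $s\in\mathbb{R}$. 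I would also note that each $\mathbb{I}_{\Lambda_{L}}^{(\omega,\mathcal{E})}$ is a bounded self-adjoint element of $\mathcal{U}$ (the current observables and their commutators with $i$ in front are self-adjoint, and $\mathcal{E},\vec{w}$ are real), so its distribution $\mathfrak{m}_{\varrho^{(\omega)},\mathbb{I}_{\Lambda_{L}}^{(\omega,\mathcal{E})}}$ in the state $\varrho^{(\omega)}$ is a well-defined, compactly supported probability measure on $\mathbb{R}$ in the sense of (\ref{fluctuation measure}).

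With the linearity in hand, the scaled logarithmic moment generating function of the sequence $(\mathbb{I}_{\Lambda_{L}}^{(\omega,\mathcal{E})})_{L\in\mathbb{N}}$, at parameter $s\in\mathbb{R}$ and with speed $\left\vert \Lambda_{L}\right\vert$, is exactly
\begin{equation*}
\lim_{L\to\infty}\frac{1}{\left\vert \Lambda_{L}\right\vert}\ln\varrho^{(\omega)}\big(\mathrm{e}^{s\left\vert \Lambda_{L}\right\vert\mathbb{I}_{\Lambda_{L}}^{(\omega,\mathcal{E})}}\big)=\lim_{L\to\infty}\frac{1}{\left\vert \Lambda_{L}\right\vert}\ln\varrho^{(\omega)}\big(\mathrm{e}^{\left\vert \Lambda_{L}\right\vert\mathbb{I}_{\Lambda_{L}}^{(\omega,s\mathcal{E})}}\big)=\mathrm{J}^{(s\mathcal{E})},
\end{equation*}
the last equality being Theorem \ref{Theorem main result} applied to the field $s\mathcal{E}\in C_{0}^{0}(\mathbb{R};\mathbb{R}^{d})$ for the fixed $\omega\in\tilde{\Omega}$. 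Theorem \ref{Theorem main result} thus guarantees that this limit exists and is finite for every $s\in\mathbb{R}$, and that the map $s\mapsto\mathrm{J}^{(s\mathcal{E})}$ is convex and continuously differentiable on all of $\mathbb{R}$. These are exactly the hypotheses demanded by Theorem \ref{prop Gartner--Ellis}: the limiting generating function is finite on the whole real line (so the origin lies in the interior of its effective domain) and, being $C^{1}$ with full domain, it is lower semicontinuous and essentially smooth (steepness is vacuous as there is no boundary).

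Invoking Theorem \ref{prop Gartner--Ellis} then yields that $(\mathbb{I}_{\Lambda_{L}}^{(\omega,\mathcal{E})})_{L\in\mathbb{N}}$ satisfies an LDP in $\varrho^{(\omega)}$ with speed $\left\vert \Lambda_{L}\right\vert$ and rate function the Legendre-Fenchel transform $\mathrm{I}^{(\mathcal{E})}(x)=\sup_{s\in\mathbb{R}}\{sx-\mathrm{J}^{(s\mathcal{E})}\}$. It remains to check that $\mathrm{I}^{(\mathcal{E})}$ is a good rate function. Lower semicontinuity is automatic, being a supremum of affine maps; nonnegativity follows by taking $s=0$ in the supremum, since $\mathrm{J}^{(0)}=0$ because $\varrho^{(\omega)}(\mathfrak{1})=1$; and compactness of the level sets follows by convex duality from the finiteness of $s\mapsto\mathrm{J}^{(s\mathcal{E})}$ on all of $\mathbb{R}$, which forces $\mathrm{I}^{(\mathcal{E})}$ to be superlinear (indeed, for $\mathrm{I}^{(\mathcal{E})}(x)\leq m$ one gets $\pm x\leq m+\mathrm{J}^{(\pm\mathcal{E})}$ upon choosing $s=\pm1$).

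I do not expect a genuine obstacle: all the analytic difficulty has already been absorbed into Theorem \ref{Theorem main result}, and the corollary is a clean consequence of the abstract large-deviation machinery. The only point demanding mild care is matching the precise hypotheses of the algebraic version of G\"{a}rtner-Ellis in Theorem \ref{prop Gartner--Ellis} to the present setting, namely recognizing through (\ref{fluctuation measure}) that $\varrho^{(\omega)}(\mathrm{e}^{\left\vert \Lambda_{L}\right\vert\mathbb{I}_{\Lambda_{L}}^{(\omega,\mathcal{E})}})$ equals the integral of $\mathrm{e}^{\left\vert \Lambda_{L}\right\vert x}$ against the distribution $\mathfrak{m}_{\varrho^{(\omega)},\mathbb{I}_{\Lambda_{L}}^{(\omega,\mathcal{E})}}$, so that the hypotheses on the scaled cumulant generating function are literally those verified above; this identification is routine given the boundedness and self-adjointness of the current observables.
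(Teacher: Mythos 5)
Your route is the paper's own: Corollary \ref{LDP copy(1)} is obtained there precisely by combining Theorem \ref{Theorem main result} (applied to the fields $s\mathcal{E}$, using that $\mathbb{I}_{\Lambda _{L}}^{(\omega ,s\mathcal{E})}=s\,\mathbb{I}_{\Lambda _{L}}^{(\omega ,\mathcal{E})}$) with Theorem \ref{prop Gartner--Ellis}, and your identification of the scaled cumulant generating function with $\mathrm{J}^{(s\mathcal{E})}$ and of the rate function with the Legendre--Fenchel transform is exactly the intended argument.

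There is, however, one genuine (if easily repaired) gap: Theorem \ref{prop Gartner--Ellis} as stated in the paper applies only to \emph{exponentially tight} sequences, cf. (\ref{eq:exp_tight}), and you never verify this hypothesis. Your summary of \textquotedblleft exactly the hypotheses demanded\textquotedblright\ --- finiteness of the limit on all of $\mathbb{R}$ plus essential smoothness/steepness --- describes the textbook G\"{a}rtner--Ellis theorem, not the version stated in the paper, where exponential tightness is a standing assumption. Compact support of each individual distribution $\mathfrak{m}_{\varrho ^{(\omega )},\mathbb{I}_{\Lambda _{L}}^{(\omega ,\mathcal{E})}}$, which is all you record, does not suffice as stated: (\ref{eq:exp_tight}) needs compact sets working uniformly in $L$, and a priori the supports could grow with $L$. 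Two quick ways to close this. Either note that $\sup_{L}\Vert \mathbb{I}_{\Lambda _{L}}^{(\omega ,\mathcal{E})}\Vert _{\mathcal{U}}<\infty $ --- this follows from Lemmata \ref{lemma:norm_estimate} and \ref{def bilineqr copy(1)}, whose bounds rest only on the Combes--Thomas estimate (\ref{Combes-ThomasCombes-Thomas}) --- so that all the measures are supported in one fixed compact interval and (\ref{eq:exp_tight}) holds trivially (this is what the paper's remark following the corollary alludes to); or argue via a Chernoff bound, which uses only what you already proved: for $s,M>0$,
\begin{equation*}
\frac{1}{\left\vert \Lambda _{L}\right\vert }\ln \mathfrak{m}_{\varrho
^{(\omega )},\mathbb{I}_{\Lambda _{L}}^{(\omega ,\mathcal{E})}}\left(
[M,\infty )\right) \leq -sM+\frac{1}{\left\vert \Lambda _{L}\right\vert }\ln
\varrho ^{(\omega )}\left( \mathrm{e}^{s\left\vert \Lambda _{L}\right\vert
\mathbb{I}_{\Lambda _{L}}^{(\omega ,\mathcal{E})}}\right) \longrightarrow
-sM+\mathrm{J}^{(s\mathcal{E})},
\end{equation*}
and similarly on $(-\infty ,-M]$ with $-s$ in place of $s$, so finiteness of $\mathrm{J}^{(\pm s\mathcal{E})}$ forces (\ref{eq:exp_tight}) once $M$ is large. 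With exponential tightness supplied, the rest of your argument --- including goodness of $\mathrm{I}^{(\mathcal{E})}$ via the superlinearity estimate $\pm x\leq m+\mathrm{J}^{(\pm \mathcal{E})}$ on level sets --- is correct and matches the paper.
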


\begin{remark}
\mbox{ }\newline
By direct estimates, one verifies that, for any fixed state $\rho $, $(%
\mathbb{I}_{\Lambda _{L}}^{(\omega ,\mathcal{E})})_{L\in \mathbb{N}}$ yields
an exponentially tight family of probability measures, defined by (\ref%
{fluctuation measure}) for $A=\mathbb{I}_{\Lambda _{L}}^{(\omega ,\mathcal{E}%
)}$. Therefore, by \cite[Lemma 4.1.23]{dembo1998large}, $(\mathbb{I}%
_{\Lambda _{L}}^{(\omega ,\mathcal{E})})_{L\in \mathbb{N}}$ satisfies, along
some subsequence, an LDP, \emph{in any state }$\rho $, with speed $%
\left\vert \Lambda _{L}\right\vert $ and a good rate function. However, it
is not clear whether this rate function depends on the choice of
subsequences and $\omega \in \Omega $. Moreover, no information on
minimizers of the rate function, like in Theorem \ref{Theorem main result
copy(1)}, can be deduced from \cite[Lemma 4.1.23]{dembo1998large}.
\end{remark}

Observe that, if an LDP holds true, then the law of large numbers follows
\cite[Theorem II.6.4]{E85} from the Borel-Cantelli lemma \cite[Lemma A.5.2]%
{E85}. Therefore, by \cite{OhmIII,OhmVI} and Corollary \ref{LDP copy(1)},
the distributions of the microscopic current density observables, in the
state $\varrho ^{(\omega )}$, weak$^{\ast }$ converges, for $\omega \in
\Omega $ almost surely, to the delta distribution at the (classical value of
the) macroscopic current density. Using Theorem \ref{Theorem main result},
we sharpen this result by proving that the microscopic current density
converges \emph{exponentially fast} to the macroscopic one, w.r.t. the
volume $\left\vert \Lambda _{L}\right\vert $ of the region of the lattice
where an external electric field is applied.

To this end, we remark from Corollary \ref{final corrollary} (see (\ref%
{current utiles})) that, for any $\beta \in \mathbb{R}^{+}$, $\vartheta
,\lambda \in \mathbb{R}_{0}^{+}$, $\vec{w}\in {\mathbb{R}}^{d}$ with $%
\left\Vert \vec{w}\right\Vert _{\mathbb{R}^{d}}=1$, the macroscopic current
density is equal to
\begin{equation}
x^{(\mathcal{E})}\doteq \partial _{s}\mathrm{J}^{(s\mathcal{E}%
)}|_{s=0},\qquad \mathcal{E}\in C_{0}^{0}(\mathbb{R};\mathbb{R}^{d}).
\label{def currents}
\end{equation}%
See also (\ref{limit current}). Define%
\begin{equation*}
x_{-}\doteq \inf \left\{ x\leq x^{(\mathcal{E})}\colon \mathrm{I}^{(\mathcal{%
E)}}\left( x\right) <\infty \right\} ,\text{ }x_{+}\doteq \sup \left\{ x\geq
x^{(\mathcal{E})}\colon \mathrm{I}^{(\mathcal{E)}}\left( x\right) <\infty
\right\} .
\end{equation*}%
Obviously, $\mathrm{I}^{(\mathcal{E)}}\left( x\right) =\infty $ for $x\in {%
\mathbb{R}}\backslash \lbrack x_{-},x_{+}]$. We start by giving important
properties of the rate function $\mathrm{I}^{(\mathcal{E)}}$:

\begin{theorem}[Properties of the rate function]
\label{Theorem main result copy(1)}\mbox{ }\newline
Fix $\beta \in \mathbb{R}^{+}$, $\vartheta ,\lambda \in \mathbb{R}_{0}^{+}$,
$\vec{w}\in {\mathbb{R}}^{d}$ with $\left\Vert \vec{w}\right\Vert _{\mathbb{R%
}^{d}}=1$ and $\mathcal{E}\in C_{0}^{0}(\mathbb{R};\mathbb{R}^{d})$. The
rate function $\mathrm{I}^{(\mathcal{E)}}$ is a lower-semicontinuous convex
function satisfying: (i) $\mathrm{I}^{(\mathcal{E)}}(x^{(\mathcal{E})})=0$;
(ii) $\mathrm{I}^{(\mathcal{E)}}(x)>0$ if $x\neq x^{(\mathcal{E})}$; (iii) $%
\mathrm{I}^{(\mathcal{E)}}\left( x\right) <\infty $ for $x\in (x_{-},x_{+})$
with $\mathrm{I}^{(\mathcal{E)}}\left( x\right) \leq \mathrm{I}^{(\mathcal{E)%
}}\left( x_{-}\right) $ for $x\in (x_{-},x^{(\mathcal{E})}]$ and $\mathrm{I}%
^{(\mathcal{E)}}\left( x\right) \leq \mathrm{I}^{(\mathcal{E)}}\left(
x_{+}\right) $ for $x\in \lbrack x^{(\mathcal{E})},x_{+})$; (iv) $\mathrm{I}%
^{(\mathcal{E)}}$ restricted to the interior of its domain, i.e., the
(possibly empty) open interval $(x_{-},x_{+})$, is continuous.
\end{theorem}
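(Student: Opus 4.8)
The plan is to recognise $\mathrm{I}^{(\mathcal{E})}$ as the Legendre--Fenchel transform (convex conjugate) of the function $J(s)\doteq \mathrm{J}^{(s\mathcal{E})}$, $s\in\mathbb{R}$, and to read off all assertions from standard convex analysis, using only the two properties granted by Theorem \ref{Theorem main result}: that $J$ is convex and continuously differentiable on $\mathbb{R}$. First I would record that $J(0)=0$. Since the current observable $\mathbb{I}_{\Lambda_L}^{(\omega,\mathcal{E})}$ is linear in $\mathcal{E}$ (see \eqref{current}), one has $\mathbb{I}_{\Lambda_L}^{(\omega,0\cdot\mathcal{E})}=0$, hence $\varrho^{(\omega)}(\mathrm{e}^{0})=\varrho^{(\omega)}(\mathfrak{1})=1$, so every finite-volume generating function vanishes at $s=0$ and therefore so does the limit $J(0)$. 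Because $\mathrm{I}^{(\mathcal{E})}(x)=\sup_{s\in\mathbb{R}}\{sx-J(s)\}$ is a pointwise supremum of the affine maps $x\mapsto sx-J(s)$, it is automatically lower semicontinuous and convex; taking $s=0$ in the supremum and using $J(0)=0$ gives $\mathrm{I}^{(\mathcal{E})}(x)\ge 0$.

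For assertion (i), at $x=x^{(\mathcal{E})}=J'(0)$ the map $s\mapsto sx-J(s)$ is concave with $s$-derivative $x-J'(s)$ vanishing at $s=0$; thus $s=0$ is the global maximiser and $\mathrm{I}^{(\mathcal{E})}(x^{(\mathcal{E})})=-J(0)=0$. For assertion (ii) I would characterise the zero set: since $\mathrm{I}^{(\mathcal{E})}\ge 0$ with value $0$ realised at $s=0$, one has $\mathrm{I}^{(\mathcal{E})}(x)=0$ if and only if $sx\le J(s)$ for all $s\in\mathbb{R}$, that is, if and only if $J(s)\ge J(0)+x\,s$ for all $s$. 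This says precisely that $x$ lies in the subdifferential $\partial J(0)$. Because $J$ is differentiable at $0$, this subdifferential is the singleton $\{J'(0)\}=\{x^{(\mathcal{E})}\}$, so $\mathrm{I}^{(\mathcal{E})}(x)>0$ whenever $x\neq x^{(\mathcal{E})}$.

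For (iii) and (iv), I would use that the effective domain $\{x\in\mathbb{R}:\mathrm{I}^{(\mathcal{E})}(x)<\infty\}$ of the convex function $\mathrm{I}^{(\mathcal{E})}$ is an interval containing $x^{(\mathcal{E})}$, whose endpoints are, by their very definition, $x_-$ and $x_+$; consequently $(x_-,x_+)$ lies in the domain and $\mathrm{I}^{(\mathcal{E})}(x)<\infty$ there (uniform boundedness of $\mathbb{I}_{\Lambda_L}^{(\omega,\mathcal{E})}$ coming from the Combes--Thomas estimates keeps $x_\pm$ finite). Continuity of $\mathrm{I}^{(\mathcal{E})}$ on the open interval $(x_-,x_+)$ is then the classical continuity of a finite convex function on the interior of its domain, which is (iv). The monotonicity bounds in (iii) follow from the fact that a convex function with unique minimiser at $x^{(\mathcal{E})}$ (value $0$, by (i)--(ii)) is non-increasing on $(-\infty,x^{(\mathcal{E})}]$ and non-decreasing on $[x^{(\mathcal{E})},\infty)$: comparing with the boundary points gives $\mathrm{I}^{(\mathcal{E})}(x)\le\mathrm{I}^{(\mathcal{E})}(x_-)$ on $(x_-,x^{(\mathcal{E})}]$ and $\mathrm{I}^{(\mathcal{E})}(x)\le\mathrm{I}^{(\mathcal{E})}(x_+)$ on $[x^{(\mathcal{E})},x_+)$, the inequalities being trivial whenever the relevant endpoint lies outside the domain.

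Since the entire analytic burden is carried by Theorem \ref{Theorem main result}, which supplies the convexity and $C^1$-regularity of $J$, the present argument is essentially a self-contained exercise in the calculus of convex conjugates. The only step requiring genuine care is assertion (ii): strict positivity away from $x^{(\mathcal{E})}$ is \emph{not} automatic for conjugates of arbitrary convex functions, since an affine piece of $J$ would force $\mathrm{I}^{(\mathcal{E})}$ to vanish on a whole interval, and it is precisely the differentiability of $J$ at the origin --- so that $\partial J(0)$ is a single point --- that rules this out. I therefore expect (ii) to be the only place where the hypotheses of Theorem \ref{Theorem main result} enter in an essential, non-cosmetic way.
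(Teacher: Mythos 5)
Your proposal is correct and follows essentially the same route as the paper: both treat $\mathrm{I}^{(\mathcal{E})}$ as the Legendre-Fenchel conjugate of $s\mapsto \mathrm{J}^{(s\mathcal{E})}$ and read off (i)--(iv) from the convexity and $C^{1}$-regularity supplied by Theorem \ref{Theorem main result}, with differentiability at $s=0$ being precisely what forces the zero set of $\mathrm{I}^{(\mathcal{E})}$ to be the singleton $\{x^{(\mathcal{E})}\}$. The only difference is presentational: where the paper invokes the biconjugation identity and cites \cite[Corollaries 5.3.3 and 2.1.3]{Lu} for uniqueness of the minimizer and for continuity on the interior of the domain, you prove these facts directly (zero set $=\partial J(0)$, a singleton by differentiability; a finite convex function on an open interval is continuous), which makes the argument self-contained but not different in substance.
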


\begin{proof}
Fix all parameters of the theorem. Note that $\mathrm{I}^{(\mathcal{E)}}$ is
clearly a lower-semicontinuous convex function, by construction. As the map $%
s\mapsto \mathrm{J}^{(s\mathcal{E})}$ is differentiable and convex (Theorem %
\ref{Theorem main result}), the map $s\mapsto \mathrm{J}^{(s\mathcal{E})}$
is the Legendre-Fenchel transform of $\mathrm{I}^{(\mathcal{E)}}$, i.e.,
\begin{equation*}
\mathrm{J}^{(s\mathcal{E})}=\sup\limits_{x\in \mathbb{R}}\left\{ sx-\mathrm{I%
}^{(\mathcal{E)}}(x)\right\} ,\qquad s\in \mathbb{R},
\end{equation*}%
and $s_{0}$ is a solution of the variational problem
\begin{equation*}
\mathrm{I}^{(\mathcal{E)}}(x)\doteq \sup\limits_{s\in \mathbb{R}}\left\{ sx-%
\mathrm{J}^{(s\mathcal{E})}\right\}
\end{equation*}%
if and only if $s_{0}$ solves $x=\partial _{s}\mathrm{J}^{(s\mathcal{E}%
)}|_{s=s_{0}}$. By (\ref{def currents}), it follows that%
\begin{equation*}
0=\mathrm{J}^{(0)}=\inf_{x\in \mathbb{R}}\mathrm{I}^{(\mathcal{E)}}(x)=%
\mathrm{I}^{(\mathcal{E)}}(x^{(\mathcal{E})}).
\end{equation*}%
This proves Assertion (i). To prove (ii), it suffices to show that $x^{(%
\mathcal{E})}$ is the only minimizer of $\mathrm{I}^{(\mathcal{E)}}$. Note
that $x_{0}$ is a minimizer of $\mathrm{I}^{(\mathcal{E)}}$ if and only if $%
0 $ is a subdifferential of $\mathrm{I}^{(\mathcal{E)}}$ at $x_{0}$
(Fermat's principle). By \cite[Corollary 5.3.3]{Lu} and the
differentiability of the Legendre transform of $\mathrm{I}^{(\mathcal{E)}}$,
which is the map $s\mapsto \mathrm{J}^{(s\mathcal{E})}$, it follows that the
minimizer of $\mathrm{I}^{(\mathcal{E)}}$ is unique and Assertion (ii)
follows. Assertion (iii) is a direct consequence of the fact that $\mathrm{I}%
^{(\mathcal{E)}}$ is a convex function with $x^{(\mathcal{E})}$ as unique
minimizer. Assertion (iv) is deduced from \cite[Corollary 2.1.3]{Lu}.
\end{proof}

\begin{corollary}[Exponentially fast suppression of quantum uncertainty of
currents]
\label{LDP}\mbox{ }\newline
Let $\tilde{\Omega}\subset \Omega $ be the measurable subset of full measure
of Theorem \ref{Theorem main result}. Then, for all $\beta \in \mathbb{R}%
^{+} $, $\vartheta ,\lambda \in \mathbb{R}_{0}^{+}$, $\omega \in \tilde{%
\Omega}$, $l\in \mathbb{N}$, $\mathcal{E}\in C_{0}^{0}(\mathbb{R};\mathbb{R}%
^{d})$, $\vec{w}\in {\mathbb{R}}^{d}$ with $\left\Vert \vec{w}\right\Vert _{%
\mathbb{R}^{d}}=1$, and any open subset $\mathcal{O}\subset \mathbb{R}$ with
$x^{(\mathcal{E})}\notin \mathcal{\bar{O}}$,
\begin{equation*}
\limsup_{L\to \infty }\frac{1}{\left\vert \Lambda _{L}\right\vert }\ln
\mathfrak{m}_{\varrho ^{(\omega )},\mathbb{I}_{\Lambda _{L}}^{(\omega ,%
\mathcal{E})}}\left( \mathcal{O}\right) <0.
\end{equation*}%
The above limit does not depend on the particular realization of $\omega \in
\tilde{\Omega}$. If, additionally, $\mathcal{O\cap }(x_{-},x_{+})\neq
\emptyset $, then
\begin{equation*}
\lim_{L\to \infty }\frac{1}{\left\vert \Lambda _{L}\right\vert }\ln
\mathfrak{m}_{\varrho ^{(\omega )},\mathbb{I}_{\Lambda _{L}}^{(\omega ,%
\mathcal{E})}}\left( \mathcal{O}\right) =-\inf_{x\in \mathcal{O}}\mathrm{I}%
^{(\mathcal{E)}}\left( x\right) <0.
\end{equation*}%
See (\ref{fluctuation measure}) for the definition of the distribution of $%
\mathbb{I}_{\Lambda _{L}}^{(\omega ,\mathcal{E})}$, in the KMS state $%
\varrho ^{(\omega )}$.
\end{corollary}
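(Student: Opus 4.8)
The plan is to read off the result from the large deviation principle of Corollary \ref{LDP copy(1)} together with the structural properties of the rate function established in Theorem \ref{Theorem main result copy(1)}. Recall that the LDP asserts, for every Borel set $\mathcal{G}\subset\mathbb{R}$,
\begin{equation*}
-\inf_{x\in\mathcal{G}^{\circ}}\mathrm{I}^{(\mathcal{E})}(x)\leq\liminf_{L\to\infty}\frac{1}{\left\vert\Lambda_{L}\right\vert}\ln\mathfrak{m}_{\varrho^{(\omega)},\mathbb{I}_{\Lambda_{L}}^{(\omega,\mathcal{E})}}(\mathcal{G})\leq\limsup_{L\to\infty}\frac{1}{\left\vert\Lambda_{L}\right\vert}\ln\mathfrak{m}_{\varrho^{(\omega)},\mathbb{I}_{\Lambda_{L}}^{(\omega,\mathcal{E})}}(\mathcal{G})\leq-\inf_{x\in\bar{\mathcal{G}}}\mathrm{I}^{(\mathcal{E})}(x).
\end{equation*}
Since $\mathcal{O}$ is open, $\mathcal{O}^{\circ}=\mathcal{O}$, so the upper bound applied to $\mathcal{G}=\mathcal{O}$ reduces the first claim to showing $\inf_{x\in\bar{\mathcal{O}}}\mathrm{I}^{(\mathcal{E})}(x)>0$ whenever $x^{(\mathcal{E})}\notin\bar{\mathcal{O}}$.

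First I would prove this strict positivity by contradiction. If the infimum vanished, a minimizing sequence in the closed set $\bar{\mathcal{O}}$ would eventually lie in a fixed sublevel set $\{\mathrm{I}^{(\mathcal{E})}\leq 1\}$, which is compact because $\mathrm{I}^{(\mathcal{E})}$ is a \emph{good} rate function; extracting a convergent subsequence and using that $\bar{\mathcal{O}}$ is closed and $\mathrm{I}^{(\mathcal{E})}$ is lower semicontinuous produces a point $x_{\ast}\in\bar{\mathcal{O}}$ with $\mathrm{I}^{(\mathcal{E})}(x_{\ast})=0$. But Assertions (i)--(ii) of Theorem \ref{Theorem main result copy(1)} force $x_{\ast}=x^{(\mathcal{E})}$, contradicting $x^{(\mathcal{E})}\notin\bar{\mathcal{O}}$. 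Hence $\inf_{\bar{\mathcal{O}}}\mathrm{I}^{(\mathcal{E})}>0$ and the $\limsup$ is strictly negative. Independence of the bound from $\omega\in\tilde{\Omega}$ is automatic, since $\mathrm{I}^{(\mathcal{E})}$ is the Legendre--Fenchel transform of the \emph{deterministic} function $s\mapsto\mathrm{J}^{(s\mathcal{E})}$ supplied by Theorem \ref{Theorem main result}, and so carries no dependence on the realization.

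For the second assertion I would combine the lower bound $-\inf_{\mathcal{O}}\mathrm{I}^{(\mathcal{E})}\leq\liminf$ with the upper bound $\limsup\leq-\inf_{\bar{\mathcal{O}}}\mathrm{I}^{(\mathcal{E})}$. The asserted limit then exists and equals $-\inf_{\mathcal{O}}\mathrm{I}^{(\mathcal{E})}$ precisely when $\inf_{\mathcal{O}}\mathrm{I}^{(\mathcal{E})}=\inf_{\bar{\mathcal{O}}}\mathrm{I}^{(\mathcal{E})}$, the strict negativity being inherited from the first part. The inequality $\inf_{\bar{\mathcal{O}}}\leq\inf_{\mathcal{O}}$ is trivial; for the reverse, the hypothesis $\mathcal{O}\cap(x_{-},x_{+})\neq\emptyset$ makes $\inf_{\bar{\mathcal{O}}}\mathrm{I}^{(\mathcal{E})}$ finite, so the minimizer $b\in\bar{\mathcal{O}}$ (attained by compactness of the sublevel sets) lies in $[x_{-},x_{+}]$. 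In the principal case one recovers $\mathrm{I}^{(\mathcal{E})}(b)$ as a limit of values $\mathrm{I}^{(\mathcal{E})}(y_{n})$ along points $y_{n}\in\mathcal{O}\cap(x_{-},x_{+})$ with $y_{n}\to b$: for $b\in(x_{-},x_{+})$ this is immediate from the continuity of $\mathrm{I}^{(\mathcal{E})}$ on the open interval (Assertion (iv)), while for a boundary minimizer $b\in\{x_{-},x_{+}\}$ one uses convexity together with the monotonicity bounds of Assertion (iii) to deduce one-sided continuity of $\mathrm{I}^{(\mathcal{E})}$ at the endpoint. This yields $\inf_{\mathcal{O}}\mathrm{I}^{(\mathcal{E})}\leq\mathrm{I}^{(\mathcal{E})}(b)=\inf_{\bar{\mathcal{O}}}\mathrm{I}^{(\mathcal{E})}$, hence equality.

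The main obstacle is exactly this last matching of the two infima. The interior case is routine, but a minimizer sitting at an endpoint of the domain of finiteness must be approached from \emph{within} $\mathcal{O}\cap(x_{-},x_{+})$, and genuine care is needed when $\mathcal{O}$ is disconnected or accumulates at $x_{\pm}$ only from the region where $\mathrm{I}^{(\mathcal{E})}=+\infty$ (so that the crude estimate $\limsup\leq-\inf_{\bar{\mathcal{O}}}\mathrm{I}^{(\mathcal{E})}$ is too lossy); controlling this requires showing that the mass placed by $\mathfrak{m}_{\varrho^{(\omega)},\mathbb{I}_{\Lambda_{L}}^{(\omega,\mathcal{E})}}$ outside $[x_{-},x_{+}]$ decays faster than exponentially. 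It is precisely here that the convexity and the fine continuity/monotonicity properties (iii)--(iv) of the rate function, rather than the bare LDP bounds, are indispensable.
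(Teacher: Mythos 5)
Your proposal takes exactly the paper's route: the paper's entire proof of this corollary is the single sentence that it ``is a direct consequence of Corollary \ref{LDP copy(1)} and Theorem \ref{Theorem main result copy(1)}'', and you supply the details that sentence suppresses. Your proof of the first assertion is complete and correct: goodness of the rate function, lower semicontinuity and uniqueness of the minimizer (Theorem \ref{Theorem main result copy(1)} (i)--(ii)) give $\inf_{\bar{\mathcal{O}}}\mathrm{I}^{(\mathcal{E})}>0$, and $\omega$-independence is indeed inherited from the deterministic function $s\mapsto\mathrm{J}^{(s\mathcal{E})}$ of Theorem \ref{Theorem main result}. Your reduction of the second assertion to the identity $\inf_{\mathcal{O}}\mathrm{I}^{(\mathcal{E})}=\inf_{\bar{\mathcal{O}}}\mathrm{I}^{(\mathcal{E})}$ is also the right move, and your handling of an interior minimizer (via continuity (iv)) and of an endpoint minimizer approached from inside $(x_{-},x_{+})$ (via convexity and lower semicontinuity) is sound.

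The gap is your closing claim that properties (iii)--(iv) also settle the remaining case, namely when $\bar{\mathcal{O}}$ touches $x_{+}$ (say) but $\mathcal{O}$ accumulates there only from the side where $\mathrm{I}^{(\mathcal{E})}=+\infty$, with $\mathrm{I}^{(\mathcal{E})}(x_{+})<\inf_{\mathcal{O}}\mathrm{I}^{(\mathcal{E})}$. No property of the rate function can close this, because an LDP is blind to features of the measures below the exponential scale it quantifies: a sequence of probability measures carrying an extra atom of mass $\mathrm{e}^{-\left\vert \Lambda _{L}\right\vert \mathrm{I}^{(\mathcal{E})}(x_{+})}$ at the point $x_{+}+1/L$ satisfies the very same LDP with the very same good rate function, yet for $\mathcal{O}=(x_{-},x^{(\mathcal{E})}-\epsilon )\cup (x_{+},x_{+}+1)$, with $\mathrm{I}^{(\mathcal{E})}(x_{+})<\mathrm{I}^{(\mathcal{E})}(x^{(\mathcal{E})}-\epsilon )$ (a configuration compatible with every property listed in Theorem \ref{Theorem main result copy(1)}, e.g. a quadratic rate function truncated to a short interval on the right of $x^{(\mathcal{E})}$), it gives $\liminf_{L\rightarrow \infty }\left\vert \Lambda _{L}\right\vert ^{-1}\ln \mathfrak{m}(\mathcal{O})\geq -\mathrm{I}^{(\mathcal{E})}(x_{+})>-\inf_{\mathcal{O}}\mathrm{I}^{(\mathcal{E})}$, contradicting the claimed identity. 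So in that corner case the conclusion cannot follow from Corollary \ref{LDP copy(1)} and Theorem \ref{Theorem main result copy(1)} alone; one would need additional control of the measures $\mathfrak{m}_{\varrho ^{(\omega )},\mathbb{I}_{\Lambda _{L}}^{(\omega ,\mathcal{E})}}$ themselves (Chernoff bounds built from the finite-volume generating functions control half-lines $[b,\infty )$ with $b>x_{+}$, but not the slivers $(x_{+},b)$). To be fair, the paper's own one-sentence proof glosses over exactly this same point, so your write-up is more candid about where the difficulty sits; but as written, your final step, asserting that (iii)--(iv) are sufficient there, would not survive scrutiny.
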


\begin{proof}
It is a direct consequence of Corollary \ref{LDP copy(1)} and Theorem \ref%
{Theorem main result copy(1)}.
\end{proof}

Corollary \ref{LDP} shows that the microscopic current density converges
\emph{exponentially fast} to the macroscopic one, w.r.t. the volume $%
\left\vert \Lambda _{L}\right\vert $ (in lattice units (l.u.)) of the region
of the lattice where the electric field is applied. As discussed in the
introduction, this is in accordance with the low temperature ($4.2~\mathrm{K}
$) experiment \cite{Ohm-exp} on the resistance of nanowires with lengths
down to approximately $40\ \mathrm{l.u.}$ ($L\simeq 20$). The breakdown of
the classical description of these nanowires is expected \cite%
{Ohm-exp2,16exp,15exp} to be around $20\ \mathrm{l.u.}$ ($L\simeq 10$).

To conclude, note that, in the experimental setting of \cite{ZDAW,Ohm-exp},
contacts are used to impose an electric potential difference to the
nanowires. These contacts yield supplementary resistances to the systems
that are well-described by Landauer's formalism \cite{Lan} when a \emph{%
ballistic} charge transport takes place in the nanowires. In our model, the
purely ballistic charge transport is reached when $\vartheta =0$ and $%
\lambda \rightarrow 0^{+}$, as proven in \cite[Theorem 4.6]{OhmIV}. When the
nanowire resistance becomes relatively small as compared to the contact
resistances, then the charge transport in the nanowire is well-described by
a ballistic approximation and Landauer's formalism applies, as also
experimentally verified in \cite{ZDAW}. This is the reason why \cite{Ohm-exp}
reaches much smaller length scales than \cite{ZDAW}: the material used in
\cite{Ohm-exp} has a much larger linear resistivity (between $112~\mathrm{%
\Omega /nm}$ and $855~\mathrm{\Omega /nm}$, see \cite[Table 1]{Ohm-exp})
than the one of \cite{ZDAW} ($23~\mathrm{\Omega /nm}$, see \cite[discussions
after Eq. (2)]{ZDAW}).

\section{Technical Proofs\label{sec:proofs}}

\subsection{Preliminary Estimates\label{Sectino tech1}}

We start by giving two general estimates which will be used many times
afterwards. The first one is an elementary observation:

\begin{lemma}[Operator norm estimate]
\label{lemma:norm_estimate}\mbox{ }\newline
For any operator $C\in \mathcal{B}(\mathfrak{h})$,%
\begin{equation*}
\left\Vert C\right\Vert _{\mathcal{B}(\mathfrak{h})}\leq \sup_{x\in \mathbb{Z%
}^{d}}\sum_{y\in \mathbb{Z}^{d}}\left\vert \left\langle \mathfrak{e}_{x},C%
\mathfrak{e}_{y}\right\rangle _{\mathfrak{h}}\right\vert .
\end{equation*}
\end{lemma}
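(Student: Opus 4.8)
The plan is to prove this by the Schur test, i.e.\ a weighted Cauchy--Schwarz estimate applied to the matrix elements $C_{x,y}\doteq\langle\mathfrak{e}_{x},C\mathfrak{e}_{y}\rangle_{\mathfrak{h}}$ of $C$ in the canonical basis. First I would fix $\psi\in\mathfrak{h}$, write $\psi=\sum_{y\in\mathbb{Z}^{d}}\psi_{y}\mathfrak{e}_{y}$ with $\psi_{y}\doteq\langle\mathfrak{e}_{y},\psi\rangle_{\mathfrak{h}}$, so that $\Vert\psi\Vert_{\mathfrak{h}}^{2}=\sum_{y}|\psi_{y}|^{2}$ and $\langle\mathfrak{e}_{x},C\psi\rangle_{\mathfrak{h}}=\sum_{y}C_{x,y}\psi_{y}$, and then bound each coordinate by splitting the weight $|C_{x,y}|=|C_{x,y}|^{1/2}|C_{x,y}|^{1/2}$ and applying Cauchy--Schwarz:
\[
\Big|\sum_{y}C_{x,y}\psi_{y}\Big|^{2}\leq\Big(\sum_{y}|C_{x,y}|\Big)\Big(\sum_{y}|C_{x,y}|\,|\psi_{y}|^{2}\Big)\leq R\sum_{y}|C_{x,y}|\,|\psi_{y}|^{2},
\]
where $R\doteq\sup_{x}\sum_{y}|C_{x,y}|$ is precisely the right-hand side of the claimed inequality.

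Next I would sum over $x\in\mathbb{Z}^{d}$ and interchange the (nonnegative) sums by Tonelli:
\[
\Vert C\psi\Vert_{\mathfrak{h}}^{2}=\sum_{x}\Big|\sum_{y}C_{x,y}\psi_{y}\Big|^{2}\leq R\sum_{y}|\psi_{y}|^{2}\sum_{x}|C_{x,y}|\leq R\,K\,\Vert\psi\Vert_{\mathfrak{h}}^{2},
\]
with $K\doteq\sup_{y}\sum_{x}|C_{x,y}|$ the corresponding column sum. Taking the supremum over $\psi$ with $\Vert\psi\Vert_{\mathfrak{h}}=1$ gives the symmetric Schur bound $\Vert C\Vert_{\mathcal{B}(\mathfrak{h})}\leq(RK)^{1/2}$.

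The hard part --- and the step I would flag explicitly, since the statement is phrased for an arbitrary $C\in\mathcal{B}(\mathfrak{h})$ --- is passing from $(RK)^{1/2}$ to the single row sum $R$. This reduction uses $K\leq R$, which holds as soon as the kernel magnitude is symmetric, $|C_{x,y}|=|C_{y,x}|$, for then $K=R$; for a general non-normal $C$ only the geometric mean $(RK)^{1/2}$ is available. In every application below the symmetry is automatic: $C$ is either self-adjoint (such as $\Delta_{\omega,\vartheta}$, $h^{(\omega)}$, or functions thereof) or is controlled through the Combes--Thomas estimate (\ref{Combes-ThomasCombes-Thomas}), whose bound $36\,\mathrm{e}^{|t\eta|-2\mu_{\eta}|x-y|}$ is symmetric in $x,y$, so that $\sum_{y}|C_{x,y}|$ and $\sum_{x}|C_{x,y}|$ share the same upper bound and $K$ may be replaced by $R$. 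For self-adjoint $C$ one may also bypass the column sum entirely, using $\Vert C\Vert_{\mathcal{B}(\mathfrak{h})}=\sup_{\Vert\psi\Vert_{\mathfrak{h}}=1}|\langle\psi,C\psi\rangle_{\mathfrak{h}}|$ together with $|\psi_{x}||\psi_{y}|\leq\frac{1}{2}(|\psi_{x}|^{2}+|\psi_{y}|^{2})$ and $|C_{x,y}|=|C_{y,x}|$ to reach $|\langle\psi,C\psi\rangle_{\mathfrak{h}}|\leq R\Vert\psi\Vert_{\mathfrak{h}}^{2}$ directly. Apart from this symmetry reduction, the argument is entirely routine.
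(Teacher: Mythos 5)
Your argument is correct and is, at its core, the same Schur-test estimate as the paper's proof: split $|\langle\mathfrak{e}_{x},C\mathfrak{e}_{y}\rangle_{\mathfrak{h}}|=|\cdot|^{1/2}|\cdot|^{1/2}$ and apply Cauchy--Schwarz (the paper organizes it through the bilinear form $\langle\varphi,C\psi\rangle_{\mathfrak{h}}$, you through the coordinates of $C\psi$; the resulting bounds are identical). More importantly, the reservation you flag is genuine, and it is in fact a gap in the paper's own proof, not only a caution in yours: in the paper's chain of inequalities the second square-root factor equals $\bigl(\sum_{y}|\psi(y)|^{2}\sum_{x}|\langle\mathfrak{e}_{x},C\mathfrak{e}_{y}\rangle_{\mathfrak{h}}|\bigr)^{1/2}$, i.e., it involves \emph{column} sums, yet in the final step it is silently bounded by $\Vert\psi\Vert_{\mathfrak{h}}$ times the \emph{row}-sum supremum --- precisely the unjustified replacement $K\leq R$ you describe. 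For arbitrary $C\in\mathcal{B}(\mathfrak{h})$ the lemma as stated is actually false: define
\begin{equation*}
C\psi\doteq\langle\mathfrak{e}_{0},\psi\rangle_{\mathfrak{h}}\sum_{x\in\Lambda_{L}}\mathfrak{e}_{x},\qquad\text{so that}\qquad\langle\mathfrak{e}_{x},C\mathfrak{e}_{y}\rangle_{\mathfrak{h}}=\delta_{y,0}\ \text{for }x\in\Lambda_{L}\text{ and }0\text{ otherwise}.
\end{equation*}
Then every row sum is at most $1$, so the right-hand side of the lemma equals $1$, whereas $\Vert C\Vert_{\mathcal{B}(\mathfrak{h})}\geq\Vert C\mathfrak{e}_{0}\Vert_{\mathfrak{h}}=|\Lambda_{L}|^{1/2}$, which is arbitrarily large. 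The correct general statement is $\Vert C\Vert_{\mathcal{B}(\mathfrak{h})}\leq(RK)^{1/2}$, and the version in the lemma holds under the additional hypothesis $|\langle\mathfrak{e}_{x},C\mathfrak{e}_{y}\rangle_{\mathfrak{h}}|=|\langle\mathfrak{e}_{y},C\mathfrak{e}_{x}\rangle_{\mathfrak{h}}|$, in particular for self-adjoint $C$. As you observe, this is harmless for the paper: every operator to which the lemma is applied (the operators $K_{\mathcal{Z},\mathcal{Z}^{(\tau)}}^{(\omega,\mathcal{E})}$ of (\ref{definition K}) and convex combinations thereof, in Proposition \ref{prop:equic} and Lemmata \ref{prop:equic copy(1)}, \ref{prop:equic copy(3)}) is self-adjoint, and the Combes--Thomas-type kernel bounds used to estimate the row sums are symmetric in $x,y$ in any case. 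So your proposal is not merely correct; it repairs a step the paper glosses over.
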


\begin{proof}
By the Cauchy-Schwarz inequality, for all $\varphi ,\psi \in \mathfrak{h}$,%
\begin{align*}
\left\vert \left\langle \varphi ,C\psi \right\rangle _{\mathfrak{h}%
}\right\vert &\leq \sum_{x,y\in \mathbb{Z}^{d}}\left\vert \varphi (x)\psi
(y)\left\langle \mathfrak{e}_{x},C\mathfrak{e}_{y}\right\rangle _{\mathfrak{h%
}}\right\vert \\
&=\sum_{x,y\in \mathbb{Z}^{d}}\left( \left\vert \varphi (x)\right\vert
\left\vert \left\langle \mathfrak{e}_{x},C\mathfrak{e}_{y}\right\rangle _{%
\mathfrak{h}}\right\vert ^{1/2}\right) \left( \left\vert \psi (y)\right\vert
\left\vert \left\langle \mathfrak{e}_{x},C\mathfrak{e}_{y}\right\rangle _{%
\mathfrak{h}}\right\vert ^{1/2}\right) \\
&\leq \sqrt{\sum_{x,y\in \mathbb{Z}^{d}}\left( \left\vert \varphi
(x)\right\vert ^{2}\left\vert \left\langle \mathfrak{e}_{x},C\mathfrak{e}%
_{y}\right\rangle _{\mathfrak{h}}\right\vert \right) }\sqrt{\sum_{x,y\in
\mathbb{Z}^{d}}\left\vert \psi (y)\right\vert ^{2}\left\vert \left\langle
\mathfrak{e}_{x},C\mathfrak{e}_{y}\right\rangle _{\mathfrak{h}}\right\vert }
\\
&\leq \left\Vert \varphi \right\Vert _{\mathfrak{h}}\left\Vert \psi
\right\Vert _{\mathfrak{h}}\sup_{x\in \mathbb{Z}^{d}}\sum_{y\in \mathbb{Z}%
^{d}}\left\vert \left\langle \mathfrak{e}_{x},C\mathfrak{e}_{y}\right\rangle
_{\mathfrak{h}}\right\vert.
\end{align*}
\end{proof}

The second one is a version of the Bogoliubov inequality. Recall that the
tracial state $\mathrm{tr}\in \mathcal{U}^{\ast }$ is the quasi-free state
satisfying (\ref{2-point correlation function}) at $\beta =0$.

\begin{lemma}[Bogoliubov-type inequalities]
\label{lemma:suppor_non_int1}\mbox{ }\newline
Let $C\in \mathcal{U}$ be any strictly positive element.\newline
\emph{(i)} For any continuously differentiable family $\left\{ H_{\alpha
}\right\} _{\alpha \in \mathbb{R}}\subset \mathcal{U}$ of self-adjoint
elements,
\begin{equation*}
\left\vert \partial _{\alpha }\ln \mathrm{tr}\left( C\mathrm{e}^{H_{\alpha
}}\right) \right\vert \leq \sup_{u\in \left[ -1/2,1/2\right] }\left\Vert
\mathrm{e}^{uH_{\alpha }}\left\{ \partial _{\alpha }H_{\alpha }\right\}
\mathrm{e}^{-uH_{\alpha }}\right\Vert _{\mathcal{U}}.
\end{equation*}%
\emph{(ii)} Similarly, for any self-adjoint $H_{0},H_{1}\in \mathcal{U}$,
\begin{equation*}
\left\vert \ln \mathrm{tr}\left( C\mathrm{e}^{H_{1}}\right) -\ln \mathrm{tr}%
\left( C\mathrm{e}^{H_{0}}\right) \right\vert \leq \sup_{\alpha \in \left[
0,1\right] }\sup_{u\in \left[ -1/2,1/2\right] }\left\Vert \mathrm{e}%
^{u\left( \alpha H_{1}+\left( 1-\alpha \right) H_{0}\right) }\left(
H_{1}-H_{0}\right) \mathrm{e}^{-u\left( \alpha H_{1}+\left( 1-\alpha \right)
H_{0}\right) }\right\Vert _{\mathcal{U}}.
\end{equation*}
\end{lemma}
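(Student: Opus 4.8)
The plan is to prove (i) by combining Duhamel's formula with a symmetrization that exploits the tracial (cyclic) property of $\tr$, and then to deduce (ii) by integrating (i) along the affine path joining $H_0$ and $H_1$. Throughout, recall that on $\mathcal{U}$ the normalized trace $\tr$ is a faithful tracial state, and ``strictly positive'' means $C\geq \epsilon \mathfrak{1}$ for some $\epsilon>0$.

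First I would fix the $C^{1}$ family, abbreviate $H\doteq H_\alpha$ and $K\doteq \partial_\alpha H_\alpha$, and use that $\alpha\mapsto \mathrm{e}^{H_\alpha}$ is continuously differentiable in $\mathcal{U}$ with $\partial_\alpha \mathrm{e}^{H_\alpha}=\int_0^1 \mathrm{e}^{sH}K\mathrm{e}^{(1-s)H}\,\mathrm{d}s$ (Duhamel's formula for bounded elements). Since $X\mapsto\tr(CX)$ is linear and continuous, this gives $\partial_\alpha \tr(C\mathrm{e}^{H_\alpha})=\int_0^1 \tr(C\mathrm{e}^{sH}K\mathrm{e}^{(1-s)H})\,\mathrm{d}s$. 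The crucial step is the substitution $u=s-\tfrac12\in[-1/2,1/2]$ together with the factorization $\mathrm{e}^{sH}K\mathrm{e}^{(1-s)H}=\mathrm{e}^{H/2}(\mathrm{e}^{uH}K\mathrm{e}^{-uH})\mathrm{e}^{H/2}$; applying cyclicity of $\tr$ then shifts the two half-exponentials onto $C$, producing the element $\tilde{C}\doteq \mathrm{e}^{H/2}C\mathrm{e}^{H/2}$. One reads off simultaneously that $\tr(C\mathrm{e}^{H})=\tr(\tilde{C})>0$ (so $\ln$ is differentiable at this point and $\partial_\alpha\ln\tr(C\mathrm{e}^{H_\alpha})=\partial_\alpha\tr(C\mathrm{e}^{H_\alpha})/\tr(C\mathrm{e}^{H_\alpha})$) and that
\[
\partial_\alpha \ln \tr(C\mathrm{e}^{H_\alpha}) = \int_{-1/2}^{1/2} \omega_{\tilde{C}}(\mathrm{e}^{uH}K\mathrm{e}^{-uH})\,\mathrm{d}u,\qquad \omega_{\tilde{C}}(X)\doteq \frac{\tr(\tilde{C}X)}{\tr(\tilde{C})}.
\]

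The whole point of the symmetrization is that $\omega_{\tilde{C}}$ is a genuine state: since $C\geq\epsilon\mathfrak{1}$, the element $\tilde{C}=\mathrm{e}^{H/2}C\mathrm{e}^{H/2}\geq \epsilon\,\mathrm{e}^{H}$ is strictly positive, so $\tr(\tilde{C})>0$, and $\tr(\tilde{C}X)\geq 0$ for $X=Y^{\ast}Y\geq 0$ by cyclicity, because $\tr(\tilde{C}Y^{\ast}Y)=\tr((Y\tilde{C}^{1/2})^{\ast}(Y\tilde{C}^{1/2}))\geq 0$. Hence $|\omega_{\tilde{C}}(Z)|\leq \|Z\|_{\mathcal{U}}$ for every $Z\in\mathcal{U}$, and because the integration interval has length one,
\[
\left|\partial_\alpha \ln\tr(C\mathrm{e}^{H_\alpha})\right| \leq \int_{-1/2}^{1/2}\left\|\mathrm{e}^{uH}K\mathrm{e}^{-uH}\right\|_{\mathcal{U}}\,\mathrm{d}u \leq \sup_{u\in[-1/2,1/2]}\left\|\mathrm{e}^{uH_\alpha}\{\partial_\alpha H_\alpha\}\mathrm{e}^{-uH_\alpha}\right\|_{\mathcal{U}},
\]
which is exactly (i). For (ii) I would apply (i) to the path $H_\alpha\doteq \alpha H_1+(1-\alpha)H_0$, $\alpha\in[0,1]$, for which $\partial_\alpha H_\alpha=H_1-H_0$; the fundamental theorem of calculus then gives $\ln\tr(C\mathrm{e}^{H_1})-\ln\tr(C\mathrm{e}^{H_0})=\int_0^1 \partial_\alpha \ln\tr(C\mathrm{e}^{H_\alpha})\,\mathrm{d}\alpha$, and bounding the integrand by the estimate just obtained in (i) yields the claimed double supremum.

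The only genuinely delicate point is that $C$ need not commute with $H_\alpha$, so $\tr(C\mathrm{e}^{H_\alpha})$ is not a priori a partition function and $X\mapsto\tr(C\mathrm{e}^{H_\alpha}X)/\tr(C\mathrm{e}^{H_\alpha})$ is not obviously a state. The symmetrization $\tilde{C}=\mathrm{e}^{H/2}C\mathrm{e}^{H/2}$, legitimate precisely because $\tr$ is tracial, is exactly what converts the Duhamel derivative into an average of $\mathrm{e}^{uH}K\mathrm{e}^{-uH}$ against the bona fide state $\omega_{\tilde{C}}$. Everything else—the Duhamel formula, differentiation under the continuous functional $\tr(C\,\cdot\,)$, and the length-one interval absorbing the integral into a supremum—is routine for bounded elements of $\mathcal{U}$.
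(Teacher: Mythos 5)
Your proposal is correct and follows essentially the same route as the paper: Duhamel's formula, the cyclicity-based symmetrization producing $\mathrm{e}^{H_{\alpha }/2}C\mathrm{e}^{H_{\alpha }/2}$ together with the change of variables onto $[-1/2,1/2]$, and then part (ii) via the affine interpolation $H_{\alpha }=H_{0}+\alpha (H_{1}-H_{0})$. The only difference is that you spell out explicitly that the symmetrized functional is a state (positive and normalized), a point the paper leaves implicit in its final ``which yields (i)''; this is a welcome, but not substantively different, elaboration.
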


\begin{proof}
(i) By Duhamel's formula, for any continuously differentiable family $%
\left\{ H_{\alpha }\right\} _{\alpha \in \mathbb{R}}\subset \mathcal{U}$ of
self-adjoint elements,
\begin{equation*}
\partial _{\alpha }\left\{ \mathrm{e}^{H_{\alpha }}\right\} =\int_{0}^{1}%
\mathrm{e}^{uH_{\alpha }}\left\{ \partial _{\alpha }H_{\alpha }\right\}
\mathrm{e}^{(1-u)H_{\alpha }}\mathrm{d}u,
\end{equation*}%
which implies that
\begin{equation*}
\partial _{\alpha }\ln \mathrm{tr}\left( C\mathrm{e}^{H_{\alpha }}\right)
=\int_{0}^{1}\frac{\mathrm{tr}\left( C\mathrm{e}^{uH_{\alpha }}\left\{
\partial _{\alpha }H_{\alpha }\right\} \mathrm{e}^{(1-u)H_{\alpha }}\right)
}{\mathrm{tr}\left( C\mathrm{e}^{H_{\alpha }}\right) }\mathrm{d}u.
\end{equation*}%
Using the cyclicity of the trace, we then get
\begin{eqnarray*}
\partial _{\alpha }\ln \mathrm{tr}\left( C\mathrm{e}^{H_{\alpha }}\right)
&=&\int_{0}^{1}\frac{\mathrm{tr}\left( \mathrm{e}^{\frac{H_{\alpha }}{2}}C%
\mathrm{e}^{\frac{H_{\alpha }}{2}}\mathrm{e}^{\left( u-\frac{1}{2}\right)
H_{\alpha }}\left\{ \partial _{\alpha }H_{\alpha }\right\} \mathrm{e}^{(%
\frac{1}{2}-u)H_{\alpha }}\right) }{\mathrm{tr}\left( \mathrm{e}^{\frac{%
H_{\alpha }}{2}}C\mathrm{e}^{\frac{H_{\alpha }}{2}}\right) }\mathrm{d}u \\
&=&\int_{-\frac{1}{2}}^{\frac{1}{2}}\frac{\mathrm{tr}\left( \mathrm{e}^{%
\frac{H_{\alpha }}{2}}C\mathrm{e}^{\frac{H_{\alpha }}{2}}\mathrm{e}%
^{uH_{\alpha }}\left\{ \partial _{\alpha }H_{\alpha }\right\} \mathrm{e}%
^{-uH_{\alpha }}\right) }{\mathrm{tr}\left( \mathrm{e}^{\frac{H_{\alpha }}{2}%
}C\mathrm{e}^{\frac{H_{\alpha }}{2}}\right) }\mathrm{d}u,
\end{eqnarray*}%
which yields (i).\newline
(ii) To prove the second assertion, it suffices to apply Assertion (i) to
the family defined by
\begin{equation*}
H_{\alpha }=H_{0}+\alpha \left( H_{1}-H_{0}\right) ,\qquad \alpha \in \left[
0,1\right] .
\end{equation*}
\end{proof}

Observe that Lemma \ref{lemma:suppor_non_int1} (ii) is proven in \cite[Lemma
3.6]{lenci2005large}. Here, we give a proof of this estimate for
completeness. These Bogoliubov-type inequalities are useful because we deal
with quasi-free dynamics. In this case, we have a very good control on the
norm of
\begin{equation*}
\mathrm{e}^{uH_{\alpha }}\left\{ \partial _{\alpha }H_{\alpha }\right\}
\mathrm{e}^{-uH_{\alpha }},
\end{equation*}%
because $H_{\alpha }$ is a bilinear element, as explained in the next
subsection.

\subsection{Bilinear Elements of CAR Algebra\label{Sect Bilinear Elements}}

Similar to \cite{A68}, bilinear elements are defined as follows:

\begin{definition}[Bilinear elements]
\label{def bilineqr}\mbox{
}\newline
Fix an operator $C\in \mathcal{B}(\mathfrak{h})$ whose range $\mathrm{ran}%
(C) $ is finite dimensional. Given any finite-dimensional subspace $\mathcal{%
H\subset \mathfrak{h}}$, with orthonormal basis $\{\psi _{i}\}_{i\in I}$,
such that $\mathcal{H}\supseteq \mathrm{ran}(C)$ and $\mathcal{H}\supseteq
\mathrm{ran}(C^{\ast })$, we define the bilinear element associated with $C$
to be
\begin{equation*}
\langle \mathrm{A},C\mathrm{A}\rangle \doteq \sum\limits_{i,j\in
I}\left\langle \psi _{i},C\psi _{j}\right\rangle _{\mathfrak{h}}a\left( \psi
_{i}\right) ^{\ast }a\left( \psi _{j}\right) .
\end{equation*}
\end{definition}

\noindent Note that such a finite dimensional $\mathcal{H}$ in this
definition always exists, because
\begin{equation*}
\dim \left( \mathrm{ran}(C)\right) =\dim \left( \mathrm{ran}(C^{\ast
})\right) <\infty ,
\end{equation*}%
and is an invariant space of $C$ containing $\left( \ker (C)\right) ^{\bot }$%
. Hence, $\langle \mathrm{A},C\mathrm{A}\rangle $ does not depend on the
particular choice of $\mathcal{H}$ and its orthonormal basis.

Bilinear elements of $\mathcal{U}$ have adjoints equal to%
\begin{equation}
\langle \mathrm{A},C\mathrm{A}\rangle ^{\ast }=\langle \mathrm{A},C^{\ast }%
\mathrm{A}\rangle ,  \label{selfadjoint bilinear}
\end{equation}%
for any $C\in \mathcal{B}(\mathfrak{h})$ whose range is finite dimensional.
In particular,%
\begin{equation}
\Im \mathrm{m}\left\{ \langle \mathrm{A},C\mathrm{A}\rangle \right\}
=\langle \mathrm{A},\Im \mathrm{m}\left\{ C\right\} \mathrm{A}\rangle ,
\label{Im}
\end{equation}%
where we recall that $\Im \mathrm{m}(A)\in \mathcal{U}$ is the imaginary
part of $A\in \mathcal{U}$, see (\ref{im and real part}). For any $C\in
\mathcal{B}(\mathfrak{h})$ whose range is finite dimensional and any $%
\varphi \in \mathfrak{h}$, note that%
\begin{equation*}
\left[ \left\langle \mathrm{A},C\mathrm{A}\right\rangle ,a\left( \varphi
\right) \right] =-a\left( C^{\ast }\varphi \right) \qquad \text{and}\qquad %
\left[ \left\langle \mathrm{A},C\mathrm{A}\right\rangle ,a\left( \varphi
\right) ^{\ast }\right] =a\left( C\varphi \right) ^{\ast }.
\end{equation*}%
In particular, for any $C_{1},C_{2}\in \mathcal{B}(\mathfrak{h})$ whose
ranges are finite dimensional,%
\begin{equation}
\left[ \left\langle \mathrm{A},C_{1}\mathrm{A}\right\rangle ,\left\langle
\mathrm{A},C_{2}\mathrm{A}\right\rangle \right] =\left\langle \mathrm{A},%
\left[ C_{1},C_{2}\right] \mathrm{A}\right\rangle .  \label{commutator al}
\end{equation}%
Moreover, by (\ref{rescaledbis}), for any $\varphi \in \mathfrak{h}$ and $%
C\in \mathcal{B}(\mathfrak{h})$, whose range is finite dimensional,
\begin{equation}
\mathrm{e}^{\langle \mathrm{A},C\mathrm{A}\rangle }a\left( \varphi \right)
\mathrm{e}^{-\langle \mathrm{A},C\mathrm{A}\rangle }=a\left( \mathrm{e}%
^{-C^{\ast }}\varphi \right) \qquad \text{and}\qquad \mathrm{e}^{\langle
\mathrm{A},C\mathrm{A}\rangle }a\left( \varphi \right) ^{\ast }\mathrm{e}%
^{-\langle \mathrm{A},C\mathrm{A}\rangle }=a\left( \mathrm{e}^{C}\varphi
\right) ^{\ast }.  \label{dynamic bilinear}
\end{equation}

Because of the identities (\ref{dynamic bilinear}), bilinear elements can be
used to represent the dynamics $\{\tau _{t}^{(\omega ,\mathcal{Z})}\}_{t\in
\mathbb{R}}$ for any $\omega \in \Omega $ and $\mathcal{Z}\in \mathfrak{Z}_{%
\text{f}}$. See (\ref{rescaledbis}), replacing $h^{(\omega )}$ with $h_{%
\mathcal{Z}}^{(\omega )}$ (cf. (\ref{h finite f})), and observe that the
range of $h_{\mathcal{Z}}^{(\omega )}\in \mathcal{B}(\mathfrak{h})$ is
finite dimensional whenever $\mathcal{Z}\in \mathfrak{Z}_{\text{f}}$.
Additionally, by using the tracial state $\mathrm{tr}\in \mathcal{U}^{\ast }$%
, i.e., the quasi-free state satisfying (\ref{2-point correlation function})
for $\beta =0$, the corresponding KMS\ state defined by (\ref{2-point
correlation function}) by replacing $h^{(\omega )}$ in this equation with $%
h_{\mathcal{Z}}^{(\omega )}$ (see (\ref{h finite f})) is explicitly given by%
\begin{equation}
\varrho _{\mathcal{Z}}^{(\omega )}(B)=\frac{\mathrm{tr}\left( B\mathrm{e}%
^{-\beta \langle \mathrm{A},h_{\mathcal{Z}}^{(\omega )}\mathrm{A}\rangle
}\right) }{\mathrm{tr}\left( \mathrm{e}^{-\beta \langle \mathrm{A},h_{%
\mathcal{Z}}^{(\omega )}\mathrm{A}\rangle }\right) }\ ,\qquad B\in \mathcal{U%
},  \label{quasi free state}
\end{equation}%
for any $\omega \in \Omega $, $\lambda ,\vartheta \in \mathbb{R}_{0}^{+}$, $%
\beta \in \mathbb{R}^{+}$ and $\mathcal{Z}\in \mathfrak{Z}_{\text{f}}$.

We conclude now by an additional observation used later to control quantum
fluctuations:

\begin{lemma}
\label{lemma cool}\mbox{ }\newline
For any self-adjoint operators $C_{1},C_{2}\in \mathcal{B}(\mathfrak{h})$
whose ranges are finite dimensional, let $C\doteq \ln \left( \mathrm{e}%
^{C_{2}}\mathrm{e}^{C_{1}}\mathrm{e}^{C_{2}}\right) $. Then,%
\begin{equation*}
\mathrm{ran}(C)\subset \mathrm{lin}\left\{ \mathrm{ran}(C_{1})\cup \mathrm{%
ran}(C_{2})\right\}
\end{equation*}%
and there is a constant $D\in \mathbb{R}$ such that%
\begin{equation*}
\mathrm{e}^{\langle \mathrm{A},C_{2}\mathrm{A}\rangle }\mathrm{e}^{\langle
\mathrm{A},C_{1}\mathrm{A}\rangle }\mathrm{e}^{\langle \mathrm{A},C_{2}%
\mathrm{A}\rangle }=\mathrm{e}^{\langle \mathrm{A},C\mathrm{A}\rangle +D%
\mathfrak{1}}.
\end{equation*}
\end{lemma}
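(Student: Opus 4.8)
The plan is to prove the two assertions separately, treating the range inclusion (together with the well-definedness of $C$) first, and then reducing the operator identity to a comparison of the $\ast$-automorphisms implemented by its two sides.

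For the range statement, I would set $V\doteq \mathrm{lin}\{\mathrm{ran}(C_1)\cup \mathrm{ran}(C_2)\}$, a finite-dimensional subspace of $\mathfrak{h}$. Since $C_1,C_2$ are self-adjoint, $V^{\bot}=\ker C_1\cap \ker C_2$, and both $V$ and $V^{\bot}$ are invariant under $C_1$ and $C_2$; in particular $\mathrm{e}^{C_1}$ and $\mathrm{e}^{C_2}$ act as the identity on $V^{\bot}$. Writing $\mathrm{e}^{C_2}\mathrm{e}^{C_1}\mathrm{e}^{C_2}=(\mathrm{e}^{C_1/2}\mathrm{e}^{C_2})^{\ast}(\mathrm{e}^{C_1/2}\mathrm{e}^{C_2})$ shows that this operator is strictly positive and invertible, so $C=\ln(\mathrm{e}^{C_2}\mathrm{e}^{C_1}\mathrm{e}^{C_2})$ is a well-defined bounded self-adjoint operator. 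It is block-diagonal with respect to $\mathfrak{h}=V\oplus V^{\bot}$ and reduces to $\ln(\mathrm{Id})=0$ on $V^{\bot}$; hence $\ker C\supseteq V^{\bot}$ and, $C$ being self-adjoint with finite-dimensional range, $\mathrm{ran}(C)=(\ker C)^{\bot}\subseteq (V^{\bot})^{\bot}=V$, which is the claimed inclusion. In particular $\mathrm{ran}(C)$ is finite-dimensional, so the bilinear element $\langle \mathrm{A},C\mathrm{A}\rangle$ is well-defined.

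For the identity, put $U_1\doteq \mathrm{e}^{\langle \mathrm{A},C_2\mathrm{A}\rangle}\mathrm{e}^{\langle \mathrm{A},C_1\mathrm{A}\rangle}\mathrm{e}^{\langle \mathrm{A},C_2\mathrm{A}\rangle}$ and $U_2\doteq \mathrm{e}^{\langle \mathrm{A},C\mathrm{A}\rangle}$. Applying (\ref{dynamic bilinear}) three times and using $C_i^{\ast}=C_i$, I would compute $U_1\,a(\varphi)\,U_1^{-1}=a(\mathrm{e}^{-C_2}\mathrm{e}^{-C_1}\mathrm{e}^{-C_2}\varphi)$ for every $\varphi\in \mathfrak{h}$; since $\mathrm{e}^{-C_2}\mathrm{e}^{-C_1}\mathrm{e}^{-C_2}=(\mathrm{e}^{C_2}\mathrm{e}^{C_1}\mathrm{e}^{C_2})^{-1}=\mathrm{e}^{-C}$, this equals $a(\mathrm{e}^{-C}\varphi)=U_2\,a(\varphi)\,U_2^{-1}$, the last equality being (\ref{dynamic bilinear}) for the self-adjoint $C$. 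The same computation applies verbatim to the creation operators $a(\varphi)^{\ast}$. Thus $\mathrm{Ad}(U_1)$ and $\mathrm{Ad}(U_2)$ agree on the generators of $\mathcal{U}$, hence on all of $\mathcal{U}$, so $U_2^{-1}U_1$ lies in the center of $\mathcal{U}$.

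The remaining point, which I expect to be the crux, is to promote this to $U_1=\mathrm{e}^{D}U_2$ with $D\in \mathbb{R}$. Since the CAR algebra $\mathcal{U}$ is simple, its center is $\mathbb{C}\mathfrak{1}$, and hence $U_1=c\,U_2$ for some $c\in \mathbb{C}$. Both $\langle \mathrm{A},C_i\mathrm{A}\rangle$ and $\langle \mathrm{A},C\mathrm{A}\rangle$ are self-adjoint by (\ref{selfadjoint bilinear}), so $U_2>0$, while $U_1=(\mathrm{e}^{\langle \mathrm{A},C_1\mathrm{A}\rangle/2}\mathrm{e}^{\langle \mathrm{A},C_2\mathrm{A}\rangle})^{\ast}(\mathrm{e}^{\langle \mathrm{A},C_1\mathrm{A}\rangle/2}\mathrm{e}^{\langle \mathrm{A},C_2\mathrm{A}\rangle})>0$ is also strictly positive and invertible. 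Taking adjoints in $U_1=c\,U_2$ gives $c=\bar c$, and the strict positivity of $U_1$ and $U_2$ then forces $c>0$; setting $D\doteq \ln c\in \mathbb{R}$ and using that $D\mathfrak{1}$ commutes with $\langle \mathrm{A},C\mathrm{A}\rangle$ yields $U_1=\mathrm{e}^{D}\mathrm{e}^{\langle \mathrm{A},C\mathrm{A}\rangle}=\mathrm{e}^{\langle \mathrm{A},C\mathrm{A}\rangle+D\mathfrak{1}}$, as required.
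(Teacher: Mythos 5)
Your proof is correct, and its skeleton coincides with the paper's: both sides of the identity implement the same Bogoliubov automorphism by (\ref{dynamic bilinear}), hence differ by a central element, which self-adjointness and strict positivity force to be a positive scalar $\mathrm{e}^{D}$. The genuine difference lies in where the centrality argument is run. You work in the full CAR algebra $\mathcal{U}$: agreement of $\mathrm{Ad}(U_{1})$ and $\mathrm{Ad}(U_{2})$ on all generators $a(\varphi )$, $\varphi \in \mathfrak{h}$, extends to all of $\mathcal{U}$ by a closed-subalgebra argument, and you then invoke simplicity of $\mathcal{U}$ (a UHF algebra) to get that its center is $\mathbb{C}\mathfrak{1}$. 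The paper instead localizes everything to the finite-dimensional CAR subalgebra $\mathcal{U}_{\mathfrak{h}_{0}}$ with $\mathfrak{h}_{0}=\mathrm{lin}\left\{ \mathrm{ran}(C_{1})\cup \mathrm{ran}(C_{2})\right\} $ (its Step 1): there every element is a polynomial in finitely many creation/annihilation operators, so agreement on generators extends purely algebraically (no continuity or density needed), and triviality of the center is elementary, $\mathcal{U}_{\mathfrak{h}_{0}}$ being a full matrix algebra; this is precisely why both $U_{1}$ and $U_{2}$ are observed to lie in $\mathcal{U}_{\mathfrak{h}_{0}}$ (using $\mathrm{ran}(C)\subset \mathfrak{h}_{0}$) before the comparison is made. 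What your route buys is brevity and a one-step global argument; what the paper's buys is self-containedness, avoiding the standard but non-elementary structural fact that the infinite-dimensional CAR algebra is simple. Your treatment of the range inclusion, via the decomposition $\mathfrak{h}=V\oplus V^{\bot }$ with $\mathrm{e}^{C_{2}}\mathrm{e}^{C_{1}}\mathrm{e}^{C_{2}}$ reducing to the identity on $V^{\bot }$, is also somewhat more explicit than the paper's, which merely asserts $\mathrm{ran}(C)\subset \mathfrak{h}_{0}$; both are valid.
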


\begin{proof}
Fix all parameters of the lemma. We give the proof in two steps:\medskip

\noindent \underline{Step 1:} Let
\begin{equation*}
\mathfrak{h}_{0}\doteq \mathrm{lin}\left\{ \mathrm{ran}(C_{1})\cup \mathrm{%
ran}(C_{2})\right\}
\end{equation*}%
and $\mathcal{U}_{\mathfrak{h}_{0}}\subset \mathcal{U}\equiv \mathcal{U}_{%
\mathfrak{h}}$ be the (finite dimensional) CAR $C^{\ast }$-subalgebra
generated by the identity $\mathfrak{1}$ and $\{a(\varphi )\}_{\varphi \in
\mathfrak{h}_{0}}$. Take two strictly positive elements $M_{1},M_{2}$ of $%
\mathcal{U}_{\mathfrak{h}_{0}}$ satisfying the conditions
\begin{equation*}
M_{1}a(\varphi )M_{1}^{-1}=M_{2}a(\varphi )M_{2}^{-1}\qquad \text{and}\qquad
M_{1}a(\varphi )^{\ast }M_{1}^{-1}=M_{2}a(\varphi )^{\ast }M_{2}^{-1}
\end{equation*}%
for any $\varphi \in \mathfrak{h}_{0}$. From this we conclude that
\begin{equation*}
M_{1}AM_{1}^{-1}=M_{2}AM_{2}^{-1},\qquad A\in \mathcal{U}_{\mathfrak{h}_{0}},
\end{equation*}%
because all elements of $\mathcal{U}_{\mathfrak{h}_{0}}$ are polynomials in $%
\{a(\varphi ),a(\varphi )^{\ast }\}_{\varphi \in \mathfrak{h}_{0}}$, by
definition of $\mathcal{U}_{\mathfrak{h}_{0}}$ and finite dimensionality of $%
\mathfrak{h}_{0}$. In particular, by choosing, respectively, $A=M_{2}^{-1}$
and $A=M_{2}^{-1}BM_{2}$ for $B\in \mathcal{U}_{\mathfrak{h}_{0}}$, it
follows that%
\begin{equation*}
M_{1}M_{2}^{-1}=M_{2}^{-1}M_{1}\qquad \text{and}\qquad
M_{1}M_{2}^{-1}B=BM_{1}M_{2}^{-1}.
\end{equation*}%
Hence, since any element of $\mathcal{U}_{\mathfrak{h}_{0}}$ commuting with
all elements of $\mathcal{U}_{\mathfrak{h}_{0}}$ is a multiple of the
identity, there is $D\in \mathbb{C}$ such that
\begin{equation*}
M_{1}M_{2}^{-1}=M_{2}^{-1}M_{1}=D\mathfrak{1}.
\end{equation*}%
The constant $D$ is non-zero because $M_{1},M_{2}$ are assumed to be
invertible. In fact, $M_{1}=DM_{2}$ with $D>0$ because $M_{1},M_{2}>0$%
.\medskip

\noindent \underline{Step 2:} Observe that $\mathrm{e}^{C_{2}}\mathrm{e}%
^{C_{1}}\mathrm{e}^{C_{2}}>0$ because $C_{1},C_{2}$ are both self-adjoint
operators. In particular, $C\doteq \ln \left( \mathrm{e}^{C_{2}}\mathrm{e}%
^{C_{1}}\mathrm{e}^{C_{2}}\right) $ is well-defined as a bounded
self-adjoint operator acting on $\mathfrak{h}$ with $\mathrm{ran}(C)\subset
\mathfrak{h}_{0}$. Using (\ref{dynamic bilinear}), we obtain that
\begin{equation*}
\mathrm{e}^{\langle \mathrm{A},C\mathrm{A}\rangle }a(\varphi )\mathrm{e}%
^{-\langle \mathrm{A},C\mathrm{A}\rangle }=\mathrm{e}^{\langle \mathrm{A}%
,C_{2}\mathrm{A}\rangle }\mathrm{e}^{\langle \mathrm{A},C_{1}\mathrm{A}%
\rangle }\mathrm{e}^{\langle \mathrm{A},C_{2}\mathrm{A}\rangle }a(\varphi )%
\mathrm{e}^{-\langle \mathrm{A},C_{2}\mathrm{A}\rangle }\mathrm{e}^{-\langle
\mathrm{A},C_{1}\mathrm{A}\rangle }\mathrm{e}^{-\langle \mathrm{A},C_{2}%
\mathrm{A}\rangle }
\end{equation*}%
and
\begin{equation*}
\mathrm{e}^{\langle \mathrm{A},C\mathrm{A}\rangle }a(\varphi )^{\ast }%
\mathrm{e}^{-\langle \mathrm{A},C\mathrm{A}\rangle }=\mathrm{e}^{\langle
\mathrm{A},C_{2}\mathrm{A}\rangle }\mathrm{e}^{\langle \mathrm{A},C_{1}%
\mathrm{A}\rangle }\mathrm{e}^{\langle \mathrm{A},C_{2}\mathrm{A}\rangle
}a(\varphi )^{\ast }\mathrm{e}^{-\langle \mathrm{A},C_{2}\mathrm{A}\rangle }%
\mathrm{e}^{-\langle \mathrm{A},C_{1}\mathrm{A}\rangle }\mathrm{e}^{-\langle
\mathrm{A},C_{2}\mathrm{A}\rangle }.
\end{equation*}%
By Step 1, the assertion follows.
\end{proof}

\subsection{Bilinear Elements Associated with Currents\label{Sectino tech3}}

For simplicity, below we fix $\vec{w}\in {\mathbb{R}}^{d}$ with $\left\Vert
\vec{w}\right\Vert _{\mathbb{R}^{d}}=1$, and $\eta ,\mu \in \mathbb{R}^{+}$
once for all. For any $\mathcal{E}\in C_{0}^{0}(\mathbb{R};\mathbb{R}^{d})$,
any collection $\mathcal{Z}^{(\tau )}\in \mathfrak{Z}$, $\mathcal{Z}\in
\mathfrak{Z}_{\text{f}}$, and $\lambda ,\vartheta \in \mathbb{R}_{0}^{+}$, $%
\omega \in \Omega $, we define the observables%
\begin{eqnarray}
\mathfrak{K}_{\mathcal{Z},\mathcal{Z}^{(\tau )}}^{(\omega ,\mathcal{E})}
&\doteq &\underset{k,q=1}{\sum^{d}}w_{k}\sum_{Z\in \mathcal{Z}}\underset{%
x,y,x+e_{k},y+e_{q}\in Z}{\sum }\int_{-\infty }^{0}\left\{ \mathcal{E}\left(
\alpha \right) \right\} _{q}\mathrm{d}\alpha \int\nolimits_{0}^{-\alpha }%
\mathrm{d}s\ i[\tau _{-s}^{(\omega ,\mathcal{Z}^{(\tau )})}(I_{\left(
y+e_{q},y\right) }^{(\omega )}),I_{\left( x+e_{k},x\right) }^{(\omega )}]
\notag \\
&&+2\underset{k=1}{\sum^{d}}w_{k}\sum_{Z\in \mathcal{Z}}\underset{%
x,x+e_{k}\in Z}{\sum }\left( \int_{-\infty }^{0}\left\{ \mathcal{E}\left(
\alpha \right) \right\} _{q}\mathrm{d}\alpha \right) \Re \mathrm{e}\left(
\langle \mathfrak{e}_{x+e_{k}},\Delta _{\omega ,\vartheta }\mathfrak{e}%
_{x}\rangle a(\mathfrak{e}_{x+e_{k}})^{\ast }a(\mathfrak{e}_{x})\right) ,
\notag \\
&&  \label{eq:par-curr-sum-gen}
\end{eqnarray}%
where we recall that $\Re \mathrm{e}(A)\in \mathcal{U}$ is the real part of $%
A\in \mathcal{U}$, see (\ref{im and real part}). Note that
\begin{equation*}
\mathfrak{K}_{\{\Lambda \},\{\mathbb{Z}^{d}\}}^{(\omega ,\mathcal{E}%
)}=\left\vert \Lambda \right\vert \mathbb{I}_{\Lambda }^{(\omega ,\mathcal{E}%
)},\qquad \Lambda \in \mathcal{P}_{\text{f}}(\mathbb{Z}^{d}),
\end{equation*}%
is a current observable (cf. (\ref{current})). These observables are
bilinear elements (Definition \ref{def bilineqr}):\medskip

\noindent \underline{(i) Single-hopping operators:} For any $x\in \mathbb{Z}%
^{d}$, the shift operator $s_{x}\in \mathcal{B}(\mathfrak{h})$ is defined by
\begin{equation}
\left( s_{x}\psi \right) \left( y\right) \doteq \psi \left( x+y\right)
,\qquad y\in \mathbb{Z}^{d},  \label{shift}
\end{equation}%
Note that $s_{x}^{\ast }=s_{-x}=s_{x}^{-1}$ for any $x\in \mathbb{Z}^{d}$.
Then, for any $\omega \in \Omega $ and $\vartheta \in \mathbb{R}_{0}^{+}$,
the single-hopping operators are
\begin{equation}
S_{x,y}^{(\omega )}\doteq \langle \mathfrak{e}_{x},\Delta _{\omega
,\vartheta }\mathfrak{e}_{y}\rangle _{\mathfrak{h}}P_{\left\{ x\right\}
}s_{x-y}P_{\left\{ y\right\} },\qquad x,y\in \mathbb{Z}^{d},  \label{shift2}
\end{equation}%
where $P_{\left\{ x\right\} }$ is the orthogonal projection defined by (\ref%
{orthogonal projection}) for $\Lambda =\left\{ x\right\} $. Observe that
\begin{equation*}
\left\langle \mathrm{A},S_{x,y}^{(\omega )}\mathrm{A}\right\rangle =\langle
\mathfrak{e}_{x},\Delta _{\omega ,\vartheta }\mathfrak{e}_{y}\rangle _{%
\mathfrak{h}}a(\mathfrak{e}_{x})^{\ast }a(\mathfrak{e}_{y}),\qquad x,y\in
\mathbb{Z}^{d}.
\end{equation*}%
Similarly, the paramagnetic current observables defined by (\ref{current
observable}) equal
\begin{equation}
I_{(x,y)}^{(\omega )}=-2\langle \mathrm{A},\Im \mathrm{m}\{S_{x,y}^{(\omega
)}\}\mathrm{A}\rangle ,\qquad x,y\in \mathbb{Z}^{d},  \label{dekldjfskldjf}
\end{equation}%
for any $\omega \in \Omega $ and $\vartheta \in \mathbb{R}_{0}^{+}$. Compare
with (\ref{Im}). \medskip

\noindent \underline{(ii) Local current observables:} By (\ref{commutator al}%
), for any $\mathcal{E}\in C_{0}^{0}(\mathbb{R};\mathbb{R}^{d})$, any
collection $\mathcal{Z}^{(\tau )}\in \mathfrak{Z}$, $\mathcal{Z}\in
\mathfrak{Z}_{\text{f}}$, and $\lambda ,\vartheta \in \mathbb{R}_{0}^{+}$, $%
\omega \in \Omega $,%
\begin{equation}
\mathfrak{K}_{\mathcal{Z},\mathcal{Z}^{(\tau )}}^{(\omega ,\mathcal{E}%
)}=\left\langle \mathrm{A},K_{\mathcal{Z},\mathcal{Z}^{(\tau )}}^{(\omega ,%
\mathcal{E})}\mathrm{A}\right\rangle ,  \label{bilinear idiot}
\end{equation}%
where%
\begin{eqnarray}
K_{\mathcal{Z},\mathcal{Z}^{(\tau )}}^{(\omega ,\mathcal{E})} &\doteq &4%
\underset{k,q=1}{\sum^{d}}w_{k}\sum_{Z\in \mathcal{Z}}\underset{%
x,y,x+e_{k},y+e_{q}\in Z}{\sum }\int_{-\infty }^{0}\left\{ \mathcal{E}\left(
\alpha \right) \right\} _{q}\mathrm{d}\alpha  \notag \\
&&\qquad \qquad \qquad \int\nolimits_{0}^{-\alpha }\mathrm{d}s\ i\left[
\mathrm{e}^{-ish_{\mathcal{Z}^{(\tau )}}^{(\omega )}}\Im \mathrm{m}%
\{S_{y+e_{q},y}^{(\omega )}\}\mathrm{e}^{ish_{\mathcal{Z}^{(\tau
)}}^{(\omega )}},\Im \mathrm{m}\{S_{x+e_{k},x}^{(\omega )}\}\right]  \notag
\\
&&+2\underset{k=1}{\sum^{d}}w_{k}\sum_{Z\in \mathcal{Z}}\underset{%
x,x+e_{k}\in Z}{\sum }\left( \int_{-\infty }^{0}\left\{ \mathcal{E}\left(
\alpha \right) \right\} _{q}\mathrm{d}\alpha \right) \Re \mathrm{e}%
\{S_{x+e_{k},x}^{(\omega )}\}  \label{definition K}
\end{eqnarray}%
is an operator acting on $\mathfrak{h}$ whose range is finite dimensional.
This one-particle operator satisfies the following decay bounds:

\begin{lemma}[Decay of local currents]
\label{def bilineqr copy(1)}\mbox{
}\newline
For any $\mathcal{E}\in C_{0}^{0}(\mathbb{R};\mathbb{R}^{d})$, $\lambda
,\vartheta \in \mathbb{R}_{0}^{+}$, $\omega \in \Omega $, $x,y\in \mathbb{Z}%
^{d}$, and two collections $\mathcal{Z}\in \mathfrak{Z}_{\text{f}}$ and $%
\mathcal{Z}^{(\tau )}\in \mathfrak{Z}_{\text{f}}$,%
\begin{eqnarray*}
\left\vert \left\langle \mathfrak{e}_{x},K_{\mathcal{Z},\mathcal{Z}^{(\tau
)}}^{(\omega ,\mathcal{E})}\mathfrak{e}_{y}\right\rangle _{\mathfrak{h}%
}\right\vert &\leq &D_{\ref{def bilineqr copy(1)}}\left( \int_{\mathbb{R}%
}\left\Vert \mathcal{E}\left( \alpha \right) \right\Vert _{\mathbb{R}^{d}}%
\mathrm{e}^{2\left\vert \alpha \eta \right\vert }\mathrm{d}\alpha \right)
\left( \mathrm{e}^{-\mu _{\eta }|x-y|}+\eta \delta _{1,\left\vert
x-y\right\vert }\right) , \\
\frac{1}{|\cup \mathcal{Z}|}\sum\limits_{x,y\in \mathbb{Z}^{d}}\left\vert
\left\langle \mathfrak{e}_{x},K_{\mathcal{Z},\mathcal{Z}^{(\tau )}}^{(\omega
,\mathcal{E})}\mathfrak{e}_{y}\right\rangle _{\mathfrak{h}}\right\vert &\leq
&D_{\ref{def bilineqr copy(1)}}\left( \int_{\mathbb{R}}\left\Vert \mathcal{E}%
\left( \alpha \right) \right\Vert _{\mathbb{R}^{d}}\mathrm{e}^{2\left\vert
\alpha \eta \right\vert }\mathrm{d}\alpha \right) \sum_{z\in \mathbb{Z}^{d}}%
\mathrm{e}^{-\mu _{\eta }|z|}\left( 1+\eta \right) ,
\end{eqnarray*}%
where%
\begin{equation*}
D_{\ref{def bilineqr copy(1)}}\doteq 4d\eta ^{-1}\times 36^{2}\left(
1+\vartheta \right) ^{2}\sum_{z\in \mathbb{Z}^{d}}\mathrm{e}^{2\mu _{\eta
}\left( 1-|z|\right) }<\infty .
\end{equation*}%
Recall that $\mu _{\eta }$ is defined by (\ref{Combes-ThomasCombes-Thomasbis}%
).
\end{lemma}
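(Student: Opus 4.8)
The plan is to bound the two constituents of $K_{\mathcal Z,\mathcal Z^{(\tau)}}^{(\omega,\mathcal E)}$ in (\ref{definition K}) separately: the paramagnetic part (the double time integral of the commutator) and the diamagnetic part (the single real-part term). To keep notation clean I relabel the lattice summation variables in (\ref{definition K}) as $u,v$, reserving $x,y$ for the matrix-element indices of the statement. The starting observation is that each single-hopping operator $S_{u',u}^{(\omega)}$ of (\ref{shift2}) is rank one, its only nonzero entry being $\langle\mathfrak e_{u'},S_{u',u}^{(\omega)}\mathfrak e_u\rangle=\langle\mathfrak e_{u'},\Delta_{\omega,\vartheta}\mathfrak e_u\rangle$, of modulus at most $1+\vartheta$ because the off-diagonal entries of $\Delta_{\omega,\vartheta}$ are bounded by $1+\vartheta$ (see (\ref{equation sup}), with $\omega_2$ valued in $\mathbb D$). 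Consequently $\Im\mathrm m\{S_{u+e_q,u}^{(\omega)}\}$ and $\Re\mathrm e\{S_{u+e_k,u}^{(\omega)}\}$ are supported on the single pair $\{u,u+e_\bullet\}$, with entries of modulus at most $(1+\vartheta)/2$. The diamagnetic term is then immediate: its $(x,y)$-entry is nonzero only when $\{x,y\}=\{u,u+e_k\}$ for some $u\in Z$, $Z\in\mathcal Z$, which forces $|x-y|=1$ and pins $u$, giving a contribution absorbed into the $\eta\,\delta_{1,|x-y|}$ term of the first bound (the $\eta^{-1}$ in $D_{\ref{def bilineqr copy(1)}}$ compensates the missing $\eta$, and $\int_{\mathbb R}\|\mathcal E(\alpha)\|_{\mathbb R^d}\mathrm{e}^{2|\alpha\eta|}\,\mathrm d\alpha$ dominates the bare field integral).

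The core is the paramagnetic term. Writing the commutator as $i[A,B]$ with $A=\mathrm{e}^{-ish_{\mathcal Z^{(\tau)}}^{(\omega)}}\Im\mathrm m\{S_{v+e_q,v}^{(\omega)}\}\mathrm{e}^{ish_{\mathcal Z^{(\tau)}}^{(\omega)}}$ and $B=\Im\mathrm m\{S_{u+e_k,u}^{(\omega)}\}$, I would insert resolutions of the identity between the factors. In the term $AB$ the bare operator $B$ forces $y\in\{u,u+e_k\}$, collapsing the sum over $u$ to two terms and pinning a neighbour $c$ of $y$ with $|c-y|=1$, while the two unitaries in $A$ are controlled by the Combes-Thomas bound (\ref{Combes-ThomasCombes-Thomas}): each factor $|\langle\mathfrak e_\bullet,\mathrm{e}^{\pm ish_{\mathcal Z^{(\tau)}}^{(\omega)}}\mathfrak e_\bullet\rangle|\le 36\,\mathrm{e}^{|s\eta|-2\mu_\eta|\cdot-\cdot|}$ supplies a constant $36$, a factor $\mathrm{e}^{|s\eta|}$, and spatial decay $\mathrm{e}^{-2\mu_\eta|\cdot-\cdot|}$; the term $BA$ is symmetric, with $B$ now pinning $x$. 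For each remaining summation site $v$ this leaves, after using the inner entry bound $(1+\vartheta)/2$, a product of two Combes-Thomas decays of the form $\mathrm{e}^{-2\mu_\eta|x-p|}\mathrm{e}^{-2\mu_\eta|p'-c|}$ with $p,p'\in\{v,v+e_q\}$.

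The decisive step converts this convolution into $\mathrm{e}^{-\mu_\eta|x-y|}$ decay times a summable lattice factor. Since $|x-p|+|p'-c|\ge|x-c|-1$, for every $v$ the larger of the two distances is at least $\tfrac12(|x-c|-1)$, so the corresponding factor is bounded by $\mathrm{e}^{-\mu_\eta(|x-c|-1)}$, while the other factor retains its full rate $2\mu_\eta$ and is summed over $v$ (equivalently over the relative coordinate $z$) to give $\sum_{z}\mathrm{e}^{2\mu_\eta(1-|z|)}$; using $|x-c|\ge|x-y|-1$ finally produces $\mathrm{e}^{-\mu_\eta|x-y|}$. Collecting the prefactor $4$ of (\ref{definition K}), the factor $d$ from the sum over $q$, the two constants $36^2$, the two amplitude bounds $(1+\vartheta)^2$, and the lattice sum yields exactly $D_{\ref{def bilineqr copy(1)}}$. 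The time integrals come last: for $\alpha\le 0$ and $s\in[0,-\alpha]$ one has $\int_0^{-\alpha}\mathrm{e}^{2|s\eta|}\,\mathrm ds\le\frac1{2\eta}\mathrm{e}^{2|\alpha\eta|}$, which furnishes the $\eta^{-1}$, and integrating against the field and extending to $\mathbb R$ gives $\int_{\mathbb R}\|\mathcal E(\alpha)\|_{\mathbb R^d}\mathrm{e}^{2|\alpha\eta|}\,\mathrm d\alpha$; this establishes the pointwise bound. For the averaged bound I would sum the pointwise estimate over the free index: in each of $AB$ and $BA$ the pinned index ranges over $\cup\mathcal Z$ (up to distance one), contributing a factor $|\cup\mathcal Z|$ cancelled by $1/|\cup\mathcal Z|$, the free index is summed by $\sum_z\mathrm{e}^{-\mu_\eta|z|}$, and the paramagnetic and diamagnetic pieces combine into the factor $(1+\eta)$.

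I expect the main obstacle to be the geometric and combinatorial bookkeeping of the nested sums (over $Z$, $u$, $v$, $k$, $q$) together with the two time integrals, and in particular the convolution estimate at its heart. One must recognize that the bare operator in each commutator term pins one index into $\cup\mathcal Z$, and then spend the Combes-Thomas decay asymmetrically — bounding the larger distance to produce $\mathrm{e}^{-\mu_\eta|x-y|}$ and keeping the full rate $2\mu_\eta$ on the smaller distance so that the lattice sum $\sum_z\mathrm{e}^{2\mu_\eta(1-|z|)}$ converges. The Combes-Thomas estimate (\ref{Combes-ThomasCombes-Thomas}), together with the value of $\mu_\eta$ in (\ref{Combes-ThomasCombes-Thomasbis}), is used throughout as a black box.
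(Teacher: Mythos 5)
Your proposal is correct and is essentially the paper's own argument: rank-one pinning of one lattice index by the bare single-hopping factor, Combes--Thomas control of the conjugated factor (cf. (\ref{idiot1})), a convolution estimate converting the product of two decays into $\mathrm{e}^{-\mu_{\eta}|x-y|}$ times the summable factor $\sum_{z}\mathrm{e}^{2\mu_{\eta}(1-|z|)}$, the time-integral bound furnishing the $\eta^{-1}\mathrm{e}^{2|\alpha\eta|}$ weight, the diamagnetic part absorbed into the $\eta\,\delta_{1,|x-y|}$ term, and the averaged bound obtained by summing the pointwise estimate over the free index. The only difference is cosmetic: in the convolution step you split according to the larger/smaller of the two distances, whereas the paper applies the triangle inequality to half the exponent and Cauchy--Schwarz to the remainder (cf. (\ref{boudn trivial})); both are valid, though your version is marginally lossier in the constant, so the claim that it yields \emph{exactly} $D_{\ref{def bilineqr copy(1)}}$ should be softened to a bound of the same form.
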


\begin{proof}
Fix the parameters of the lemma. By (\ref{Combes-ThomasCombes-Thomas}), note
that for any $z_{1},z_{2},x,y\in \mathbb{Z}^{d}$, $\omega \in \Omega $, $%
\vartheta \in \mathbb{R}_{0}^{+}$ and $s\in \mathbb{R}$,
\begin{equation}
\left\vert \left\langle \mathfrak{e}_{x},\mathrm{e}^{-ish_{\mathcal{Z}%
^{(\tau )}}^{(\omega )}}S_{z_{2}+e_{q},z_{2}}^{(\omega )}\mathrm{e}^{ish_{%
\mathcal{Z}^{(\tau )}}^{(\omega )}}S_{z_{1}+e_{k},z_{1}}^{(\omega )}%
\mathfrak{e}_{y}\right\rangle _{\mathfrak{h}}\right\vert \leq 36^{2}\left(
1+\vartheta \right) ^{2}\mathrm{e}^{2\left\vert s\eta \right\vert -2\mu
_{\eta }\left( |x-z_{2}-e_{q}|+|y-z_{2}+e_{k}|\right) }\delta _{y,z_{1}}.
\label{idiot1}
\end{equation}%
By the Cauchy-Schwarz and triangle inequalities, observe also that%
\begin{equation}
\sum_{z\in \mathbb{Z}^{d}}\mathrm{e}^{-2\mu _{\eta }\left(
|x-z|+|y-z|\right) }\leq \mathrm{e}^{-\mu _{\eta }|x-y|}\sum_{z\in \mathbb{Z}%
^{d}}\mathrm{e}^{-\mu _{\eta }\left( |x-z|+|y-z|\right) }\leq \mathrm{e}%
^{-\mu _{\eta }|x-y|}\left( \sum_{z\in \mathbb{Z}^{d}}\mathrm{e}^{-2\mu
_{\eta }|z|}\right) .  \label{boudn trivial}
\end{equation}%
From (\ref{idiot1})-(\ref{boudn trivial}), we obtain the bound
\begin{eqnarray}
&&\sum_{Z\in \mathcal{Z}}\underset{z_{1},z_{2},z_{1}+e_{k},z_{2}+e_{q}\in Z}{%
\sum }\left\vert \left\langle \mathfrak{e}_{x},\mathrm{e}^{-ish_{\mathcal{Z}%
^{(\tau )}}^{(\omega )}}S_{z_{2}+e_{q},z_{2}}^{(\omega )}\mathrm{e}^{ish_{%
\mathcal{Z}^{(\tau )}}^{(\omega )}}S_{z_{1}+e_{k},z_{1}}^{(\omega )}%
\mathfrak{e}_{y}\right\rangle _{\mathfrak{h}}\right\vert   \notag \\
&\leq &36^{2}\left( 1+\vartheta \right) ^{2}\mathrm{e}^{2\left\vert s\eta
\right\vert -\mu _{\eta }|x-y|}\left( \sum_{z\in \mathbb{Z}^{d}}\mathrm{e}%
^{2\mu _{\eta }\left( 1-|z|\right) }\right) ,  \label{idiot2}
\end{eqnarray}%
using that $|z-e_{k}|\geq |z|-1$ for any $z\in \mathbb{Z}^{d}$ and $k\in
\{1,\ldots ,d\}$. The other terms computed from (\ref{definition K}) are
estimated in the same way. We omit the details. This yields the first bound
of the lemma. The second estimate is also proven in the same way.
\end{proof}

It is convenient to introduce at this point the notation
\begin{equation}
\mathcal{\partial }_{\Lambda }(\tilde{\Lambda})\doteq \left\{ \left\{
x,y\right\} \subset \Lambda \colon \left\vert x-y\right\vert =1,\ \left\{
x,y\right\} \cap \tilde{\Lambda}\neq \emptyset \text{ and }\left\{
x,y\right\} \cap \tilde{\Lambda}^{c}\neq \emptyset \right\}
\label{notation1}
\end{equation}%
for any set $\tilde{\Lambda}\subset \Lambda \subset \mathbb{Z}^{d}$ with
complement $\tilde{\Lambda}^{c}\doteq \mathbb{Z}^{d}\backslash \tilde{\Lambda%
}$, while, for any $\mathcal{Z}\in \mathfrak{Z}$ such that $\cup \mathcal{Z}%
\subset \Lambda $,
\begin{equation*}
\mathcal{\partial }_{\Lambda }(\mathcal{Z})\doteq \left\{ \mathcal{\partial }%
_{\Lambda }(Z):Z\in \mathcal{Z}\right\} .
\end{equation*}%
Then, the one-particle operators (\ref{definition K}) also satisfy the
following bounds:

\begin{lemma}[Box decomposition of local currents - I]
\label{def bilineqr copy(2)}\mbox{
}\newline
For any $\mathcal{E}\in C_{0}^{0}(\mathbb{R};\mathbb{R}^{d})$, $\Lambda ,%
\tilde{\Lambda}\in \mathcal{P}_{\text{f}}(\mathbb{Z}^{d})$, $\lambda
,\vartheta \in \mathbb{R}_{0}^{+}$, $\omega \in \Omega $, and $\mathcal{Z}%
\in \mathfrak{Z}_{\text{f}}$ with $\cup \mathcal{Z}\subset \tilde{\Lambda}$,
\begin{eqnarray*}
&&\sum\limits_{x,y\in \mathbb{Z}^{d}}\left\vert \left\langle \mathfrak{e}%
_{x},\left( K_{\{\Lambda \},\{\tilde{\Lambda}\}}^{(\omega ,\mathcal{E}%
)}-K_{\{\Lambda \},\mathcal{Z}}^{(\omega ,\mathcal{E})}\right) \mathfrak{e}%
_{y}\right\rangle _{\mathfrak{h}}\right\vert \\
&\leq &D_{\ref{def bilineqr copy(2)}}\left( \int_{\mathbb{R}}\left\Vert
\mathcal{E}\left( \alpha \right) \right\Vert _{\mathbb{R}^{d}}\alpha ^{2}%
\mathrm{e}^{2\left\vert \alpha \eta \right\vert }\mathrm{d}\alpha \right)
\left( \sum\limits_{x\in \Lambda }\sum_{z\in \tilde{\Lambda}\backslash \cup
\mathcal{Z}}\mathrm{e}^{-\mu _{\eta }|x-z|}+\sum_{z\in \mathbb{Z}^{d}}%
\mathrm{e}^{-\mu _{\eta }|z|}\sum_{x\in \cup \mathcal{\partial }_{\tilde{%
\Lambda}}(\mathcal{Z})}1\right) ,
\end{eqnarray*}%
where%
\begin{equation*}
D_{\ref{def bilineqr copy(2)}}\doteq 8\times 36^{4}\left( 1+\vartheta
\right) ^{3}\left( 4d+\lambda \right) \mathrm{e}^{3\mu _{\eta }}\left(
\sum_{z\in \mathbb{Z}^{d}}\mathrm{e}^{-\mu _{\eta }|z|}\right) ^{3}<\infty .
\end{equation*}
\end{lemma}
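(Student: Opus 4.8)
My plan rests on the observation that $K_{\{\Lambda \},\{\tilde{\Lambda}\}}^{(\omega ,\mathcal{E})}$ and $K_{\{\Lambda \},\mathcal{Z}}^{(\omega ,\mathcal{E})}$ share the \emph{same} diamagnetic (real-part) term in (\ref{definition K}), which does not involve the dynamics; hence it cancels in the difference and only the paramagnetic term survives. Writing $h_{1}\doteq h_{\{\tilde{\Lambda}\}}^{(\omega )}$, $h_{2}\doteq h_{\mathcal{Z}}^{(\omega )}$ and $\gamma _{s}^{(j)}(\cdot )\doteq \mathrm{e}^{-ish_{j}}(\cdot )\mathrm{e}^{ish_{j}}$, their difference reduces to the single contribution
\[
4\sum_{k,q=1}^{d}w_{k}\sum_{x,y,x+e_{k},y+e_{q}\in \Lambda }\int_{-\infty }^{0}\{\mathcal{E}(\alpha )\}_{q}\,\mathrm{d}\alpha \int_{0}^{-\alpha }\mathrm{d}s\,i\big[\gamma _{s}^{(1)}(\Im \mathrm{m}\{S_{y+e_{q},y}^{(\omega )}\})-\gamma _{s}^{(2)}(\Im \mathrm{m}\{S_{y+e_{q},y}^{(\omega )}\}),\,\Im \mathrm{m}\{S_{x+e_{k},x}^{(\omega )}\}\big].
\]
First I would compare the two Heisenberg evolutions through the Duhamel interpolation
\[
\gamma _{s}^{(1)}(O)-\gamma _{s}^{(2)}(O)=-i\int_{0}^{s}\gamma _{s-r}^{(1)}\big([V,\gamma _{r}^{(2)}(O)]\big)\,\mathrm{d}r,\qquad V\doteq h_{1}-h_{2},
\]
which introduces the inner time integral $\int_{0}^{s}\mathrm{d}r$ responsible for the extra power of $\alpha $ in the claimed bound.

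The decisive observation is the localization of $V=P_{\tilde{\Lambda}}h^{(\omega )}P_{\tilde{\Lambda}}-\sum_{Z\in \mathcal{Z}}P_{Z}h^{(\omega )}P_{Z}$. Since $h^{(\omega )}$ has only diagonal and nearest-neighbor entries and $\cup \mathcal{Z}\subset \tilde{\Lambda}$, the entry $\langle \mathfrak{e}_{u},V\mathfrak{e}_{v}\rangle _{\mathfrak{h}}$ vanishes unless either (a) at least one of $u,v$ lies in $\tilde{\Lambda}\setminus \cup \mathcal{Z}$ --- the diagonal entries there (of modulus $\leq 2d+\lambda $) and the hoppings adjacent to them (of modulus $\leq 1+\vartheta $) --- or (b) $\{u,v\}$ is a nearest-neighbor pair whose endpoints lie in two distinct blocks of $\mathcal{Z}$, i.e.\ a boundary edge of $\cup \partial _{\tilde{\Lambda}}(\mathcal{Z})$ (of modulus $\leq 1+\vartheta $). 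Case (a) will generate the first spatial sum $\sum_{x\in \Lambda }\sum_{z\in \tilde{\Lambda}\setminus \cup \mathcal{Z}}\mathrm{e}^{-\mu _{\eta }|x-z|}$, where $z$ is the position of the $V$-insertion and $x$ indexes the outer edge; case (b) will generate the boundary sum $\sum_{z\in \mathbb{Z}^{d}}\mathrm{e}^{-\mu _{\eta }|z|}\sum_{x\in \cup \partial _{\tilde{\Lambda}}(\mathcal{Z})}1$.

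Then I would expand every matrix element $\langle \mathfrak{e}_{a},\,\cdot \,\mathfrak{e}_{b}\rangle _{\mathfrak{h}}$ in the canonical basis, exactly as in the proof of Lemma \ref{def bilineqr copy(1)} (cf.\ (\ref{idiot1})--(\ref{idiot2})): after Duhamel the operator is a product of four evolution kernels (two at time $s-r$ under $h_{1}$, two at time $r$ under $h_{2}$) together with the insertions $V$, $\Im \mathrm{m}\{S_{y+e_{q},y}^{(\omega )}\}$ and $\Im \mathrm{m}\{S_{x+e_{k},x}^{(\omega )}\}$ (the last entering through the outer commutator). Bounding each evolution kernel by the Combes-Thomas estimate (\ref{Combes-ThomasCombes-Thomas}) yields the factor $36^{4}$ and the exponent $\mathrm{e}^{2|(s-r)\eta |+2|r\eta |}=\mathrm{e}^{2s\eta }$, while each single-hopping amplitude is controlled by $1+\vartheta $ and the $V$-entry by $2d+\lambda $ or $1+\vartheta $, producing the factors $(4d+\lambda )$ and $(1+\vartheta )^{3}$. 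Summing the decaying exponentials over the three intermediate-site indices via the triangle inequality as in (\ref{boudn trivial}) gives the geometric sums $\big(\sum_{z}\mathrm{e}^{-\mu _{\eta }|z|}\big)^{3}$ and the shift factor $\mathrm{e}^{3\mu _{\eta }}$ (from the three bounds $|z-e_{j}|\geq |z|-1$). Finally, bounding the evolution exponential crudely by $\mathrm{e}^{2s\eta }\leq \mathrm{e}^{2|\alpha \eta |}$ (as $0\leq r\leq s\leq |\alpha |$) and integrating $\int_{0}^{-\alpha }\mathrm{d}s\int_{0}^{s}\mathrm{d}r=\tfrac{1}{2}\alpha ^{2}$ produces the weight $\int_{\mathbb{R}}\Vert \mathcal{E}(\alpha )\Vert _{\mathbb{R}^{d}}\alpha ^{2}\mathrm{e}^{2|\alpha \eta |}\mathrm{d}\alpha $ and, after absorbing the $O(1)$ combinatorial constants from the two commutators and the imaginary parts, the constant $D_{\ref{def bilineqr copy(2)}}$. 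I expect the main obstacle to be the bookkeeping of this triple spatial summation around the double commutator after the Duhamel step --- in particular, correctly matching the two localizations of $V$ to the two sums on the right-hand side while keeping every Combes-Thomas factor summable.
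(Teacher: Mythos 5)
Your proposal is correct and is essentially the paper's own argument: the paper likewise exploits that the diamagnetic (real-part) term is common to both operators and so cancels, compares the two Heisenberg evolutions via Duhamel's formula, decomposes $h_{\{\tilde{\Lambda}\}}^{(\omega )}-h_{\mathcal{Z}}^{(\omega )}$ into hopping/potential terms supported on $\tilde{\Lambda}\backslash \cup \mathcal{Z}$ plus hoppings across the boundary edges $\mathcal{\partial }_{\tilde{\Lambda}}(\mathcal{Z})$, expands the commutators into sixteen terms, and bounds each matrix element with four Combes-Thomas factors ($36^{4}$, time weight $\mathrm{e}^{2\left\vert s\eta \right\vert }$) before performing the geometric spatial sums and the double time integral giving $\alpha ^{2}/2$. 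The only (immaterial) difference is the orientation of your Duhamel interpolation, i.e., which of the two dynamics sits inside versus outside the integral.
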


\begin{proof}
Fix all parameters of the lemma. Let%
\begin{equation*}
C_{\mathcal{Z}}^{(\omega )}(z_{1},z_{2},k,q)=\int\nolimits_{0}^{-\alpha }%
\mathrm{d}s\ i\left[ \mathrm{e}^{-ish_{\mathcal{Z}}^{(\omega )}}\Im \mathrm{m%
}\{S_{z_{2}+e_{q},z_{2}}^{(\omega )}\}\mathrm{e}^{ish_{\mathcal{Z}}^{(\omega
)}},\Im \mathrm{m}\{S_{z_{1}+e_{k},z_{1}}^{(\omega )}\}\right]
\end{equation*}%
for any $z_{1},z_{2}\in \mathbb{Z}^{d}$ and $k,q\in \{1,\ldots ,d\}$. By
Duhamel's formula,
\begin{eqnarray*}
&&\mathrm{e}^{-ish_{\mathcal{\{}\tilde{\Lambda}\mathcal{\}}}^{(\omega )}}A%
\mathrm{e}^{ish_{\mathcal{\{}\tilde{\Lambda}\mathcal{\}}}^{(\omega )}}-%
\mathrm{e}^{-ish_{\mathcal{Z}}^{(\omega )}}A\mathrm{e}^{ish_{\mathcal{Z}%
}^{(\omega )}} \\
&=&-i\int_{0}^{s}\mathrm{e}^{-i\left( s-u\right) h_{\mathcal{Z}}^{(\omega )}}%
\left[ h_{\mathcal{\{}\tilde{\Lambda}\mathcal{\}}}^{(\omega )}-h_{\mathcal{Z}%
}^{(\omega )},\mathrm{e}^{-iuh_{\mathcal{\{}\tilde{\Lambda}\mathcal{\}}%
}^{(\omega )}}A\mathrm{e}^{iuh_{\mathcal{\{}\tilde{\Lambda}\mathcal{\}}%
}^{(\omega )}}\right] \mathrm{e}^{i\left( s-u\right) h_{\mathcal{Z}%
}^{(\omega )}}\mathrm{d}u
\end{eqnarray*}%
and hence, for any $z_{1},z_{2}\in \mathbb{Z}^{d}$ and $k,q\in \{1,\ldots
,d\}$,
\begin{multline*}
C_{\mathcal{\{}\tilde{\Lambda}\mathcal{\}}}^{(\omega )}(z_{1},z_{2},k,q)-C_{%
\mathcal{Z}}^{(\omega )}(z_{1},z_{2},k,q)=4\int\nolimits_{0}^{\alpha }%
\mathrm{d}s\int_{0}^{s}\mathrm{d}u \\
\left[ \mathrm{e}^{-i\left( s-u\right) h_{\mathcal{Z}}^{(\omega )}}\left[ h_{%
\mathcal{\{}\tilde{\Lambda}\mathcal{\}}}^{(\omega )}-h_{\mathcal{Z}%
}^{(\omega )},\mathrm{e}^{-iuh_{\mathcal{\{}\tilde{\Lambda}\mathcal{\}}%
}^{(\omega )}}\Im \mathrm{m}\{S_{z_{2}+e_{q},z_{2}}^{(\omega )}\}\mathrm{e}%
^{iuh_{\mathcal{\{}\tilde{\Lambda}\mathcal{\}}}^{(\omega )}}\right] \mathrm{e%
}^{i\left( s-u\right) h_{\mathcal{Z}}^{(\omega )}},\Im \mathrm{m}%
\{S_{z_{1}+e_{k},z_{1}}^{(\omega )}\}\right] .
\end{multline*}%
By developing the commutators and $\Im \mathrm{m}\{\cdot \}$ we get sixteen
terms:%
\begin{equation}
C_{\mathcal{\{}\tilde{\Lambda}\mathcal{\}}}^{(\omega )}(z_{1},z_{2},k,q)-C_{%
\mathcal{Z}}^{(\omega )}(z_{1},z_{2},k,q)=\int\nolimits_{0}^{\alpha }\mathrm{%
d}s\int_{0}^{s}\mathrm{d}u\sum\limits_{j=1}^{16}\mathbf{X}_{j}\left(
s,u,z_{1},z_{2}\right) ,  \label{term other}
\end{equation}%
where, for instance,
\begin{equation}
\mathbf{X}_{1}\left( s,u,z_{1},z_{2}\right) \doteq \mathrm{e}^{-i\left(
s-u\right) h_{\mathcal{Z}}^{(\omega )}}\left( h_{\mathcal{\{}\tilde{\Lambda}%
\mathcal{\}}}^{(\omega )}-h_{\mathcal{Z}}^{(\omega )}\right) \mathrm{e}%
^{-iuh_{\mathcal{\{}\tilde{\Lambda}\mathcal{\}}}^{(\omega
)}}S_{z_{2}+e_{q},z_{2}}\mathrm{e}^{iuh_{\mathcal{\{}\tilde{\Lambda}\mathcal{%
\}}}^{(\omega )}}\mathrm{e}^{i\left( s-u\right) h_{\mathcal{Z}}^{(\omega
)}}S_{z_{1}+e_{k},z_{1}}.  \label{X1}
\end{equation}%
Since $\cup \mathcal{Z}\subset \tilde{\Lambda}$, note that%
\begin{eqnarray}
h_{\{\tilde{\Lambda}\}}^{(\omega )}-h_{\mathcal{Z}}^{(\omega )}
&=&\sum_{z_{3},z_{4}\in \tilde{\Lambda}\backslash \cup \mathcal{Z}\colon \
\left\vert z_{3}-z_{4}\right\vert =1}S_{z_{3},z_{4}}^{(\omega )}+\sum_{Z\in
\mathcal{Z}}\sum_{\left\{ z_{3},z_{4}\right\} \in \mathcal{\partial }_{%
\tilde{\Lambda}}(Z)}\left( S_{z_{3},z_{4}}^{(\omega
)}+S_{z_{4},z_{3}}^{(\omega )}\right)   \notag \\
&&+\sum_{z_{3}\in \tilde{\Lambda}\backslash \cup \mathcal{Z}}\lambda \omega
_{1}\left( z_{3}\right) S_{z_{3},z_{3}}^{(\omega )}.  \label{X2}
\end{eqnarray}%
Meanwhile, for any $z_{1},z_{2},z_{3},z_{4},x,y\in \mathbb{Z}^{d}$ with $%
\left\vert z_{3}-z_{4}\right\vert \leq 1$, and real numbers $s\geq u\geq 0$,
we infer from (\ref{Combes-ThomasCombes-Thomas}) and (\ref{boudn trivial})
that%
\begin{multline*}
\left\vert \left\langle \mathfrak{e}_{x},\mathrm{e}^{-i\left( s-u\right) h_{%
\mathcal{Z}}^{(\omega )}}S_{z_{3},z_{4}}^{\left( \omega \right) }\mathrm{e}%
^{-iuh_{\mathcal{\{}\tilde{\Lambda}\mathcal{\}}}^{(\omega
)}}S_{z_{2}+e_{q},z_{2}}^{(\omega )}\mathrm{e}^{iuh_{\mathcal{\{}\tilde{%
\Lambda}\mathcal{\}}}^{(\omega )}}\mathrm{e}^{i\left( s-u\right) h_{\mathcal{%
Z}}^{(\omega )}}S_{z_{1}+e_{k},z_{1}}^{(\omega )}\mathfrak{e}%
_{y}\right\rangle _{\mathfrak{h}}\right\vert  \\
\leq 36^{4}\left( 1+\vartheta \right) ^{3}\mathrm{e}^{2\left\vert s\eta
\right\vert +3\mu _{\eta }}\left( \sum_{z\in \mathbb{Z}^{d}}\mathrm{e}%
^{-2\mu _{\eta }|z|}\right) \delta _{z_{1},y}\mathrm{e}^{-\mu _{\eta }\left(
|z_{2}-y|+|x-z_{3}|+|z_{3}-z_{2}|\right) }.
\end{multline*}%
By (\ref{X1})-(\ref{X2}), for any $\alpha \geq 0$, it follows that%
\begin{eqnarray*}
&&\sum\limits_{x,y\in \mathbb{Z}^{d}}\underset{%
z_{1},z_{2},z_{1}+e_{k},z_{2}+e_{q}\in \Lambda }{\sum }\int\nolimits_{0}^{%
\alpha }\mathrm{d}s\int_{0}^{s}\mathrm{d}u\left\vert \left\langle \mathfrak{e%
}_{x},\mathbf{X}_{1}\left( s,u,z_{1},z_{2}\right) \mathfrak{e}%
_{y}\right\rangle _{\mathfrak{h}}\right\vert  \\
&\leq &\frac{36^{4}}{2}\left( 1+\vartheta \right) ^{3}\left( 4d+\lambda
\right) \alpha ^{2}\mathrm{e}^{2\left\vert \alpha \eta \right\vert +3\mu
_{\eta }}\left( \sum_{z\in \mathbb{Z}^{d}}\mathrm{e}^{-\mu _{\eta
}|z|}\right) ^{3} \\
&&\times \left( \sum\limits_{x\in \Lambda }\sum_{z\in \tilde{\Lambda}%
\backslash \cup \mathcal{Z}}\mathrm{e}^{-\mu _{\eta }|x-z|}+\sum_{z\in
\mathbb{Z}^{d}}\mathrm{e}^{-\mu _{\eta }|z|}\sum_{x\in \cup \mathcal{%
\partial }_{\tilde{\Lambda}}(\mathcal{Z})}1\right) .
\end{eqnarray*}%
The fifteen other terms $\mathbf{X}_{j}$ in (\ref{term other}) satisfy the
same bound. By (\ref{definition K}), the assertion follows for any $\mathcal{%
E}\in C_{0}^{0}(\mathbb{R};\mathbb{R}^{d})$.
\end{proof}

\begin{lemma}[Box decomposition of local currents - II]
\label{def bilineqr copy(3)}\mbox{
}\newline
For any $\mathcal{E}\in C_{0}^{0}(\mathbb{R};\mathbb{R}^{d})$, $\Lambda \in
\mathcal{P}_{\text{f}}(\mathbb{Z}^{d})$, $\lambda ,\vartheta \in \mathbb{R}%
_{0}^{+}$, $\omega \in \Omega $, $\mathcal{Z}_{\tau }\in \mathfrak{Z}$, and $%
\mathcal{Z}\in \mathfrak{Z}_{\text{f}}$ with $\cup \mathcal{Z}\subset
\Lambda $,
\begin{equation*}
\sum\limits_{x,y\in \mathbb{Z}^{d}}\left\vert \left\langle \mathfrak{e}%
_{x},\left( K_{\{\Lambda \},\mathcal{Z}_{\tau }}^{(\omega ,\mathcal{E})}-K_{%
\mathcal{Z},\mathcal{Z}_{\tau }}^{(\omega ,\mathcal{E})}\right) \mathfrak{e}%
_{y}\right\rangle _{\mathfrak{h}}\right\vert \leq D_{\ref{def bilineqr
copy(3)}}\left( \int_{\mathbb{R}}\left\Vert \mathcal{E}\left( \alpha \right)
\right\Vert _{\mathbb{R}^{d}}^{2}\left\vert \alpha \right\vert \mathrm{e}%
^{2\left\vert \alpha \eta \right\vert }\mathrm{d}\alpha \right) \underset{%
z\in \left( \Lambda \backslash \cup \mathcal{Z}\right) \cup \left( \cup
\mathcal{\partial }_{\Lambda }(\mathcal{Z})\right) }{\sum }1,
\end{equation*}%
where%
\begin{equation*}
D_{\ref{def bilineqr copy(3)}}\doteq 16\times 36^{2}\left( 1+\vartheta
\right) ^{2}d\mathrm{e}^{4\mu _{\eta }}\left( \sum\limits_{z\in \mathbb{Z}%
^{d}}\mathrm{e}^{-2\mu _{\eta }|z|}\right) ^{2}+d\left( 1+\vartheta \right)
<\infty .
\end{equation*}
\end{lemma}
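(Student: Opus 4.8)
The plan is to exploit that $K_{\{\Lambda\},\mathcal{Z}_{\tau}}^{(\omega,\mathcal{E})}$ and $K_{\mathcal{Z},\mathcal{Z}_{\tau}}^{(\omega,\mathcal{E})}$ are built in (\ref{definition K}) from the \emph{same} dynamics $h_{\mathcal{Z}_{\tau}}^{(\omega)}$ and differ only in the \emph{spatial} range of the sums over lattice sites. Writing $R\doteq(\Lambda\backslash\cup\mathcal{Z})\cup(\cup\partial_{\Lambda}(\mathcal{Z}))$ for the boundary set on the right-hand side, I would first observe that, since $\cup\mathcal{Z}\subset\Lambda$ and the boxes of $\mathcal{Z}$ are pairwise disjoint, the difference $K_{\{\Lambda\},\mathcal{Z}_{\tau}}^{(\omega,\mathcal{E})}-K_{\mathcal{Z},\mathcal{Z}_{\tau}}^{(\omega,\mathcal{E})}$ is precisely the sum in (\ref{definition K}) restricted to the \textbf{bad tuples} $(z_{1},z_{2},k,q)$ for which $z_{1},z_{2},z_{1}+e_{k},z_{2}+e_{q}\in\Lambda$ but these four sites do \emph{not} all lie in a single $Z\in\mathcal{Z}$ (paramagnetic part), together with the analogous restriction, for the diamagnetic part, to edges $\{z_{1},z_{1}+e_{k}\}\subset\Lambda$ not contained in any single box.

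For a fixed bad tuple I would estimate $\sum_{x,y}|\langle\mathfrak{e}_{x},[\,\cdot\,]\mathfrak{e}_{y}\rangle_{\mathfrak{h}}|$ exactly as in the proof of Lemma \ref{def bilineqr copy(1)}: expanding the commutator and the imaginary parts, then inserting the Combes--Thomas bound (\ref{Combes-ThomasCombes-Thomas}) for $\mathrm{e}^{\mp ish_{\mathcal{Z}_{\tau}}^{(\omega)}}$ together with the $(1+\vartheta)$-bound on the hopping amplitudes in $S_{\cdot,\cdot}^{(\omega)}$. Since $\Im\mathrm{m}\{S_{z_{1}+e_{k},z_{1}}^{(\omega)}\}$ is supported on $\{z_{1},z_{1}+e_{k}\}$, one external index gets pinned to the edge at $z_{1}$ (finitely many choices), the remaining free external index is summed against a factor $\sum_{z}\mathrm{e}^{-2\mu_{\eta}|z|}$, and one is left with a residual decay $\mathrm{e}^{-2\mu_{\eta}|z_{1}-z_{2}|}$ linking the two edges; the $s$- and $\alpha$-integrations then yield the integral factor appearing in the statement, while the shifts by $e_{k},e_{q}$ in the triangle inequalities produce the bounded $\mathrm{e}^{4\mu_{\eta}}$ loss in $D_{\ref{def bilineqr copy(3)}}$.

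The heart of the argument is the combinatorial resummation over bad tuples, and this is where I expect the main difficulty. The key geometric observation is that a bad tuple cannot sit deep inside one box: if $z_{1}$ were at lattice distance $>|z_{1}-z_{2}|+1$ from both $R$ and $\Lambda\backslash\cup\mathcal{Z}$, then $z_{2},z_{2}+e_{q},z_{1}+e_{k}$ would all lie in the same box as $z_{1}$, contradicting badness. Hence every bad tuple admits a \emph{witness} $b\in R$ with $|z_{1}-b|\leq|z_{1}-z_{2}|+1$ (found by following a monotone lattice path from $z_{1}$ towards $z_{2}$ until it crosses out of the box of $z_{1}$, which for the cubes relevant to the applications stays inside $\Lambda$ by convexity). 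Assigning such a witness and changing variables to $u\doteq z_{1}-b$, $v\doteq z_{2}-z_{1}$, the constraint becomes $|u|\leq|v|+1$, so that
\[
\sum_{\text{bad tuples}}\mathrm{e}^{-2\mu_{\eta}|z_{1}-z_{2}|}\;\leq\;d^{2}\sum_{b\in R}\;\sum_{v\in\mathbb{Z}^{d}}\;\sum_{u\,:\,|u|\leq|v|+1}\mathrm{e}^{-2\mu_{\eta}|v|}\;\leq\;d^{2}\,|R|\sum_{v\in\mathbb{Z}^{d}}(2|v|+3)^{d}\,\mathrm{e}^{-2\mu_{\eta}|v|},
\]
which is finite; together with the free-index factor $\sum_{z}\mathrm{e}^{-2\mu_{\eta}|z|}$ from the previous step this gives the $(\sum_{z}\mathrm{e}^{-2\mu_{\eta}|z|})^{2}$ and the prefactor $|R|=\sum_{z\in R}1$ matching the paramagnetic part of $D_{\ref{def bilineqr copy(3)}}$.

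Finally, the diamagnetic difference is handled along the same lines but is much easier: $\Re\mathrm{e}\{S_{z_{1}+e_{k},z_{1}}^{(\omega)}\}$ has only two matrix entries, each bounded by $\tfrac{1}{2}(1+\vartheta)$, so $\sum_{x,y}|\langle\mathfrak{e}_{x},\Re\mathrm{e}\{S_{z_{1}+e_{k},z_{1}}^{(\omega)}\}\mathfrak{e}_{y}\rangle_{\mathfrak{h}}|\leq(1+\vartheta)$, and the relevant edges are exactly those of $\Lambda$ lying in $R$, of which there are at most $d\,|R|$; this accounts for the additive term $d(1+\vartheta)$ in the constant. Adding the two contributions yields the claimed bound with $D_{\ref{def bilineqr copy(3)}}$ as stated. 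As indicated, the single genuinely delicate point is the witness/counting step that converts the qualitative statement ``not contained in a single box'' into a quantitative bound by the boundary size $|R|$.
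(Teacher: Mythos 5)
Your proposal is correct and takes essentially the same route as the paper's own proof: you identify the difference $K_{\{\Lambda \},\mathcal{Z}_{\tau }}^{(\omega ,\mathcal{E})}-K_{\mathcal{Z},\mathcal{Z}_{\tau }}^{(\omega ,\mathcal{E})}$ with the sum over ``bad'' tuples whose two edges do not lie in a common box of $\mathcal{Z}$, bound each such term through the Combes--Thomas estimate (\ref{idiot1}) exactly as in Lemma \ref{def bilineqr copy(1)} (with $y$ pinned to $z_{1}$ and the residual decay $\mathrm{e}^{-2\mu _{\eta }|z_{1}-z_{2}+e_{k}|}$ linking the two edges), convert the count of bad tuples into the size of $(\Lambda \backslash \cup \mathcal{Z})\cup (\cup \mathcal{\partial }_{\Lambda }(\mathcal{Z}))$, and treat the diamagnetic part separately to produce the additive constant $d(1+\vartheta )$ --- which is precisely the paper's argument, where the commutator expansion gives nine terms, eight estimated like your paramagnetic case and one diamagnetic. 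The only divergence is explicitness: the paper compresses your witness/counting step (the point you rightly single out as the heart of the matter) into the words ``direct estimates'', and your convexity caveat there is genuinely needed rather than cosmetic, since for non-convex $\Lambda $ the stated inequality can actually fail (take $\Lambda $ the union of two distant boxes $A,B$ with $\mathcal{Z}=\{A,B\}$ and $\mathcal{Z}_{\tau }=\{\mathbb{Z}^{d}\}$: then $\Lambda \backslash \cup \mathcal{Z}$ and all $\mathcal{\partial }_{\Lambda }(Z)$ are empty, so the right-hand side vanishes, while the cross-commutators under the common dynamics $h_{\mathcal{Z}_{\tau }}^{(\omega )}$ are generically non-zero), so the lemma implicitly relies on the geometry $\Lambda =\Lambda _{L}$, $\mathcal{Z}=\mathcal{Z}^{(\Lambda _{L},l)}$ of its application --- exactly the case your path argument covers; your slightly different constant (a polynomially weighted sum $\sum_{v}(2|v|+3)^{d}\mathrm{e}^{-2\mu _{\eta }|v|}$ in place of one factor of $\sum_{z}\mathrm{e}^{-2\mu _{\eta }|z|}$) is immaterial, since only finiteness of $D_{\ref{def bilineqr copy(3)}}$ is ever used.
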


\begin{proof}
Fix all parameters of the lemma. By combining (\ref{idiot1}) with direct
estimates we observe that%
\begin{eqnarray}
&&\sum\limits_{x,y\in \mathbb{Z}^{d}}\underset{%
z_{1},z_{2},z_{1}+e_{k},z_{2}+e_{q}\in \Lambda }{\sum }\left\vert
\left\langle \mathfrak{e}_{x},\mathrm{e}^{-ish_{\mathcal{Z}^{(\tau
)}}^{(\omega )}}S_{z_{2}+e_{q},z_{2}}^{(\omega )}\mathrm{e}^{ish_{\mathcal{Z}%
^{(\tau )}}^{(\omega )}}S_{z_{1}+e_{k},z_{1}}^{(\omega )}\mathfrak{e}%
_{y}\right\rangle _{\mathfrak{h}}\right\vert  \notag \\
&&-\sum\limits_{x,y\in \mathbb{Z}^{d}}\sum_{Z\in \mathcal{Z}}\underset{%
z_{1},z_{2},z_{1}+e_{k},z_{2}+e_{q}\in Z}{\sum }\left\vert \left\langle
\mathfrak{e}_{x},\mathrm{e}^{-ish_{\mathcal{Z}^{(\tau )}}^{(\omega
)}}S_{z_{2}+e_{q},z_{2}}^{(\omega )}\mathrm{e}^{ish_{\mathcal{Z}^{(\tau
)}}^{(\omega )}}S_{z_{1}+e_{k},z_{1}}^{(\omega )}\mathfrak{e}%
_{y}\right\rangle _{\mathfrak{h}}\right\vert  \notag \\
&\leq &2\times 36^{2}\left( 1+\vartheta \right) ^{2}\mathrm{e}^{2\left\vert
s\eta \right\vert +4\mu _{\eta }}\left( \sum\limits_{x\in \mathbb{Z}^{d}}%
\mathrm{e}^{-2\mu _{\eta }|x|}\right) ^{2}\underset{z\in \left( \Lambda
\backslash \cup \mathcal{Z}\right) \cup \left( \cup \mathcal{\partial }%
_{\Lambda }(\mathcal{Z})\right) }{\sum }1  \label{first term}
\end{eqnarray}%
for any $s\in \mathbb{R}$. Similar to (\ref{term other}), the quantity%
\begin{equation*}
\sum\limits_{x,y\in \mathbb{Z}^{d}}\left\vert \left\langle \mathfrak{e}%
_{x},\left( K_{\{\Lambda \},\mathcal{Z}_{\tau }}^{(\omega ,\mathcal{E})}-K_{%
\mathcal{Z},\mathcal{Z}_{\tau }}^{(\omega ,\mathcal{E})}\right) \mathfrak{e}%
_{y}\right\rangle _{\mathfrak{h}}\right\vert
\end{equation*}%
is a sum of nine terms. The first one is (\ref{first term}), the last one is
related to $\Re \mathrm{e}\{S_{x+e_{k},x}^{(\omega )}\}$ and gives the
constant $d\left( 1+\vartheta \right) $ in $D_{\ref{def bilineqr copy(3)}}$.
The seven remaining ones satisfy the same bound as the first one.
\end{proof}

\subsection{Finite-volume Generating Functions\label{Sectino tech4}}

Fix $\beta \in \mathbb{R}^{+}$ and $\lambda ,\vartheta \in \mathbb{R}%
_{0}^{+} $. Given $\mathcal{E}\in C_{0}^{0}(\mathbb{R};\mathbb{R}^{d})$, $%
\omega \in \Omega $ and three finite collections $\mathcal{Z},\mathcal{Z}%
^{(\varrho )},\mathcal{Z}^{(\tau )}\in \mathfrak{Z}_{\text{f}}$, we define
the finite-volume generating function
\begin{equation}
\mathrm{J}_{\mathcal{Z},\mathcal{Z}^{(\varrho )},\mathcal{Z}^{(\tau
)}}^{(\omega ,\mathcal{E})}\doteq g_{\mathcal{Z},\mathcal{Z}^{(\varrho )},%
\mathcal{Z}^{(\tau )}}^{(\omega ,\mathcal{E})}-g_{\mathcal{Z},\mathcal{Z}%
^{(\varrho )},\mathcal{Z}^{(\tau )}}^{(\omega ,0)},
\label{generating functions0}
\end{equation}%
where%
\begin{equation}
g_{\mathcal{Z},\mathcal{Z}^{(\varrho )},\mathcal{Z}^{(\tau )}}^{(\omega ,%
\mathcal{E})}\doteq \frac{1}{|\cup \mathcal{Z}|}\ln \mathrm{tr}\left( \exp
(-\beta \langle \mathrm{A},h_{\mathcal{Z}^{(\varrho )}}^{(\omega )}\mathrm{A}%
\rangle )\exp (\mathfrak{K}_{\mathcal{Z},\mathcal{Z}^{(\tau )}}^{(\omega ,%
\mathcal{E})})\right) .  \label{generating functions}
\end{equation}%
Recall that the tracial state $\mathrm{tr}\in \mathcal{U}^{\ast }$ is the
quasi-free state satisfying (\ref{2-point correlation function}) at $\beta
=0 $, and $h_{\mathcal{Z}^{(\varrho )}}^{(\omega )}$ is the one-particle
Hamiltonian defined by (\ref{h finite f}). See also Definition \ref{def
bilineqr} and (\ref{eq:par-curr-sum-gen}). By construction, note that
\begin{equation}
\frac{1}{\left\vert \Lambda _{L}\right\vert }\ln \varrho ^{(\omega )}\left(
\mathrm{e}^{\left\vert \Lambda _{L}\right\vert \mathbb{I}_{\Lambda
_{L}}^{(\omega ,\mathcal{E})}}\right) =\lim_{L_{\varrho }\to \infty
}\lim_{L_{\tau }\to \infty }\mathrm{J}_{\mathcal{\{}\Lambda _{L}\mathcal{\}},%
\mathcal{\{}\Lambda _{L_{\varrho }}\mathcal{\}},\mathcal{\{}\Lambda
_{L_{\tau }}\mathcal{\}}}^{(\omega ,\mathcal{E})}.  \label{limit cool}
\end{equation}%
The family of functions $\mathcal{E\mapsto }\mathrm{J}_{\mathcal{Z},\mathcal{%
Z}^{(\varrho )},\mathcal{Z}^{(\tau )}}^{(\omega ,\mathcal{E})}$ is
equicontinuous with uniformly bounded second derivative:

\begin{proposition}[Equicontinuity of generating functions]
\label{prop:equic}\mbox{
}\newline
Fix $n\in \mathbb{N}$. The family of maps $\mathcal{E\mapsto }\mathrm{J}_{%
\mathcal{Z},\mathcal{Z}^{(\varrho )},\mathcal{Z}^{(\tau )}}^{(\omega ,%
\mathcal{E})}$ from $C_{0}^{0}([-n,n];\mathbb{R}^{d})\subset C_{0}^{0}(%
\mathbb{R};\mathbb{R}^{d})$ to $\mathbb{R}$, for $\beta \in \mathbb{R}^{+}$,
$\lambda \in \mathbb{R}_{0}^{+}$, $\omega \in \Omega $, $\mathcal{Z},%
\mathcal{Z}^{(\varrho )},\mathcal{Z}^{(\tau )}\in \mathfrak{Z}_{\text{f}}$, $%
\vec{w}\in {\mathbb{R}}^{d}$ with $\left\Vert \vec{w}\right\Vert _{\mathbb{R}%
^{d}}=1$, and $\vartheta $ in a compact set of $\mathbb{R}_{0}^{+}$, is
equicontinuous w.r.t. the sup norm for $\mathcal{E}$ in any bounded set of $%
C_{0}^{0}([-n,n];\mathbb{R}^{d})$.
\end{proposition}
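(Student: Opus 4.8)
The plan is to reduce the statement to a single Lipschitz-type bound for the functions $g_{\mathcal{Z},\mathcal{Z}^{(\varrho)},\mathcal{Z}^{(\tau)}}^{(\omega,\mathcal{E})}$ of (\ref{generating functions}), abbreviated below as $g^{(\omega,\mathcal{E})}$ with the subscripts held fixed. Since $\mathrm{J}_{\mathcal{Z},\mathcal{Z}^{(\varrho)},\mathcal{Z}^{(\tau)}}^{(\omega,\mathcal{E})}=g^{(\omega,\mathcal{E})}-g^{(\omega,0)}$ by (\ref{generating functions0}) and the subtracted term is independent of $\mathcal{E}$, it suffices to prove that $\mathcal{E}\mapsto g^{(\omega,\mathcal{E})}$ is uniformly Lipschitz on any bounded set $B\subset C_{0}^{0}([-n,n];\mathbb{R}^{d})$, with constant independent of $\beta,\lambda,\omega,\vec{w}$, of $\vartheta$ in a compact set, and of $\mathcal{Z},\mathcal{Z}^{(\varrho)},\mathcal{Z}^{(\tau)}\in\mathfrak{Z}_{\text{f}}$. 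Fix $\mathcal{E}_{1},\mathcal{E}_{2}\in B$ and set $C\doteq\exp(-\beta\langle\mathrm{A},h_{\mathcal{Z}^{(\varrho)}}^{(\omega)}\mathrm{A}\rangle)$, which is strictly positive, together with $H_{i}\doteq\mathfrak{K}_{\mathcal{Z},\mathcal{Z}^{(\tau)}}^{(\omega,\mathcal{E}_{i})}$. Applying Lemma \ref{lemma:suppor_non_int1}(ii) and dividing by $|\cup\mathcal{Z}|$ gives
\[
\bigl|g^{(\omega,\mathcal{E}_{1})}-g^{(\omega,\mathcal{E}_{2})}\bigr|\leq\frac{1}{|\cup\mathcal{Z}|}\sup_{\alpha\in[0,1]}\ \sup_{u\in[-1/2,1/2]}\bigl\Vert\mathrm{e}^{uH_{\alpha}}(H_{1}-H_{0})\mathrm{e}^{-uH_{\alpha}}\bigr\Vert_{\mathcal{U}},\qquad H_{\alpha}\doteq\alpha H_{1}+(1-\alpha)H_{0}.
\]
The decisive structural input is that $\mathcal{E}\mapsto\mathfrak{K}_{\mathcal{Z},\mathcal{Z}^{(\tau)}}^{(\omega,\mathcal{E})}$ is linear, by (\ref{eq:par-curr-sum-gen}); hence $H_{\alpha}=\langle\mathrm{A},G_{\alpha}\mathrm{A}\rangle$ with the self-adjoint symbol $G_{\alpha}\doteq K_{\mathcal{Z},\mathcal{Z}^{(\tau)}}^{(\omega,\mathcal{E}_{\alpha})}$, $\mathcal{E}_{\alpha}\doteq\alpha\mathcal{E}_{1}+(1-\alpha)\mathcal{E}_{2}\in B$, and $H_{1}-H_{0}=\langle\mathrm{A},M\mathrm{A}\rangle$ with $M\doteq K_{\mathcal{Z},\mathcal{Z}^{(\tau)}}^{(\omega,\mathcal{E}_{1}-\mathcal{E}_{2})}$, both via (\ref{bilinear idiot}). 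The conjugation identity (\ref{dynamic bilinear}) then yields $\mathrm{e}^{uH_{\alpha}}(H_{1}-H_{0})\mathrm{e}^{-uH_{\alpha}}=\langle\mathrm{A},\mathrm{e}^{uG_{\alpha}}M\mathrm{e}^{-uG_{\alpha}}\mathrm{A}\rangle$, since $G_{\alpha}$ is self-adjoint.

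Next I would control the norm of this bilinear element by the trace norm of its symbol. Writing $N\doteq\mathrm{e}^{uG_{\alpha}}M\mathrm{e}^{-uG_{\alpha}}$ in its singular value decomposition and using $\Vert a(\psi)\Vert_{\mathcal{U}}\leq\Vert\psi\Vert_{\mathfrak{h}}$ from (\ref{eq:bound_CAR}), one gets the elementary estimate $\Vert\langle\mathrm{A},N\mathrm{A}\rangle\Vert_{\mathcal{U}}\leq\Vert N\Vert_{1}$, the trace norm. By submultiplicativity of the trace norm, $\Vert N\Vert_{1}\leq\Vert\mathrm{e}^{uG_{\alpha}}\Vert_{\mathcal{B}(\mathfrak{h})}\,\Vert\mathrm{e}^{-uG_{\alpha}}\Vert_{\mathcal{B}(\mathfrak{h})}\,\Vert M\Vert_{1}$, and $\Vert M\Vert_{1}\leq\sum_{x,y\in\mathbb{Z}^{d}}|\langle\mathfrak{e}_{x},M\mathfrak{e}_{y}\rangle_{\mathfrak{h}}|$. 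The crucial point that makes the volume factor cancel is that $G_{\alpha}$ has operator norm bounded \emph{uniformly} in $\cup\mathcal{Z}$: by Lemma \ref{lemma:norm_estimate} together with the first (pointwise) bound of Lemma \ref{def bilineqr copy(1)}, $\Vert G_{\alpha}\Vert_{\mathcal{B}(\mathfrak{h})}\leq\sup_{x}\sum_{y}|\langle\mathfrak{e}_{x},G_{\alpha}\mathfrak{e}_{y}\rangle_{\mathfrak{h}}|$ is controlled by a constant depending only on $n$, $\eta$, $\sup_{\mathcal{E}\in B}\Vert\mathcal{E}\Vert_{\infty}$ and $\vartheta$, with no dependence on the collections. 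Consequently $\Vert\mathrm{e}^{\pm uG_{\alpha}}\Vert_{\mathcal{B}(\mathfrak{h})}\leq\mathrm{e}^{\Vert G_{\alpha}\Vert_{\mathcal{B}(\mathfrak{h})}/2}$ is likewise uniformly bounded for $u\in[-1/2,1/2]$.

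Combining these facts, $\tfrac{1}{|\cup\mathcal{Z}|}\Vert N\Vert_{1}\leq D\,\tfrac{1}{|\cup\mathcal{Z}|}\sum_{x,y}|\langle\mathfrak{e}_{x},M\mathfrak{e}_{y}\rangle_{\mathfrak{h}}|$ for a uniform constant $D$, and the second (averaged) bound of Lemma \ref{def bilineqr copy(1)} dominates the right-hand side by a uniform constant times $\int_{\mathbb{R}}\Vert(\mathcal{E}_{1}-\mathcal{E}_{2})(\alpha)\Vert_{\mathbb{R}^{d}}\mathrm{e}^{2|\alpha\eta|}\,\mathrm{d}\alpha$. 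Since $\mathcal{E}_{1}-\mathcal{E}_{2}$ is supported in $[-n,n]$, this integral is at most $2n\,\mathrm{e}^{2n\eta}\Vert\mathcal{E}_{1}-\mathcal{E}_{2}\Vert_{\infty}$. Collecting the estimates gives $|g^{(\omega,\mathcal{E}_{1})}-g^{(\omega,\mathcal{E}_{2})}|\leq L\,\Vert\mathcal{E}_{1}-\mathcal{E}_{2}\Vert_{\infty}$ with $L=L(n,\eta,\vartheta,B)$ independent of $\beta,\lambda,\omega,\vec{w},\mathcal{Z},\mathcal{Z}^{(\varrho)},\mathcal{Z}^{(\tau)}$, which is the asserted equicontinuity (indeed uniform Lipschitz continuity). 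Uniformity in $\vartheta$ on a compact subset of $\mathbb{R}_{0}^{+}$ holds because $\mu_{\eta}$ of (\ref{Combes-ThomasCombes-Thomasbis}) stays bounded away from zero and $(1+\vartheta)$ stays bounded there, so all geometric sums $\sum_{z}\mathrm{e}^{-\mu_{\eta}|z|}$ converge uniformly.

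The main obstacle, and the only genuinely non-trivial point, is that $\mathrm{e}^{uH_{\alpha}}$ is \emph{not} unitary (as $H_{\alpha}$ is self-adjoint rather than skew-adjoint), so the inner conjugation does not preserve norms and could a priori inflate them by a factor growing with the volume $|\cup\mathcal{Z}|$. The resolution is precisely the uniform operator-norm bound on $G_{\alpha}$ above, which keeps $\Vert\mathrm{e}^{\pm uG_{\alpha}}\Vert_{\mathcal{B}(\mathfrak{h})}$ bounded independently of $|\cup\mathcal{Z}|$; this is what lets the $|\cup\mathcal{Z}|^{-1}$ prefactor cancel against the averaged bound of Lemma \ref{def bilineqr copy(1)}. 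The companion claim of a uniformly bounded second derivative follows from the same ingredients: applying Lemma \ref{lemma:suppor_non_int1}(i) to the differentiable family $s\mapsto\mathfrak{K}_{\mathcal{Z},\mathcal{Z}^{(\tau)}}^{(\omega,s\mathcal{E})}$ and differentiating once more through a second-order Duhamel expansion, the conjugations arising there are again tamed by the uniform bound on $\Vert\mathrm{e}^{\pm uG_{\alpha}}\Vert_{\mathcal{B}(\mathfrak{h})}$.
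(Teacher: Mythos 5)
Your proof is correct and follows essentially the same route as the paper's: the Bogoliubov-type inequality of Lemma \ref{lemma:suppor_non_int1} (ii), the bilinear structure (\ref{bilinear idiot}) together with the conjugation identity (\ref{dynamic bilinear}), and the combination of Lemmata \ref{lemma:norm_estimate} and \ref{def bilineqr copy(1)} to produce a volume-independent Lipschitz constant. The only cosmetic difference is that you bound the conjugated bilinear element by the trace norm of its symbol $\mathrm{e}^{uG_{\alpha}}M\mathrm{e}^{-uG_{\alpha}}$, whereas the paper conjugates the monomials $a(\mathfrak{e}_{x})^{\ast}a(\mathfrak{e}_{y})$ term by term via (\ref{idiot4}); note also that your closing remark about second derivatives lies outside the statement at hand (that is Proposition \ref{prop:equic copy(5)}, which the paper proves through the quasi-free factorization of the deformed state $\varpi_{s}$ rather than a Duhamel expansion, since a naive expansion of the variance term would not stay bounded after division by the volume).
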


\begin{proof}
Fix $n\in \mathbb{N}$, $\beta \in \mathbb{R}^{+}$, $\lambda ,\vartheta \in
\mathbb{R}_{0}^{+}$, $\omega \in \Omega $, $\mathcal{Z},\mathcal{Z}%
^{(\varrho )},\mathcal{Z}^{(\tau )}\in \mathfrak{Z}_{\text{f}}$. By using
Lemma \ref{lemma:suppor_non_int1} (ii), for any $\mathcal{E}_{0},\mathcal{E}%
_{1}\in C_{0}^{0}([-n,n];\mathbb{R}^{d})$,
\begin{eqnarray}
&&\left\vert g_{\mathcal{Z},\mathcal{Z}^{(\varrho )},\mathcal{Z}^{(\tau
)}}^{(\omega ,\mathcal{E}_{1})}-g_{\mathcal{Z},\mathcal{Z}^{(\varrho )},%
\mathcal{Z}^{(\tau )}}^{(\omega ,\mathcal{E}_{0})}\right\vert
\label{iodiot0} \\
&\leq &\frac{1}{|\cup \mathcal{Z}|}\sup_{\alpha \in \left[ 0,\,1\right]
}\sup_{u\in \left[ -1/2,\,1/2\right] }\left\Vert \mathrm{e}^{u\mathfrak{K}_{%
\mathcal{Z},\mathcal{Z}^{(\tau )}}^{(\omega ,\alpha \mathcal{E}_{1}+\left(
1-\alpha \right) \mathcal{E}_{0})}}\mathfrak{K}_{\mathcal{Z},\mathcal{Z}%
^{(\tau )}}^{(\omega ,\mathcal{E}_{1}\mathcal{-E}_{0})}\mathrm{e}^{-u%
\mathfrak{K}_{\mathcal{Z},\mathcal{Z}^{(\tau )}}^{(\omega ,\alpha \mathcal{E}%
_{1}+\left( 1-\alpha \right) \mathcal{E}_{0})}}\right\Vert _{\mathcal{U}}.
\notag
\end{eqnarray}%
Recall that, for any $\mathcal{E}\in C_{0}^{0}(\mathbb{R};\mathbb{R}^{d})$, $%
\mathfrak{K}_{\mathcal{Z},\mathcal{Z}^{(\tau )}}^{(\omega ,\mathcal{E})}$ is
the bilinear element associated with the operator $K_{\mathcal{Z},\mathcal{Z}%
^{(\tau )}}^{(\omega ,\mathcal{E})}$. See (\ref{bilinear idiot}) and (\ref%
{definition K}). In particular, from (\ref{dynamic bilinear}), we deduce the
inequality%
\begin{equation}
\sup_{u\in \left[ -1/2,\,1/2\right] }\sup_{x,y\in \mathbb{Z}^{d}}\left\Vert
\mathrm{e}^{u\mathfrak{K}_{\mathcal{Z},\mathcal{Z}^{(\tau )}}^{(\omega ,%
\mathcal{E})}}a\left( \mathfrak{e}_{x}\right) ^{\ast }a\left( \mathfrak{e}%
_{y}\right) \mathrm{e}^{-u\mathfrak{K}_{\mathcal{Z},\mathcal{Z}^{(\tau
)}}^{(\omega ,\mathcal{E})}}\right\Vert _{\mathcal{U}}\leq \mathrm{e}^{\Vert
K_{\mathcal{Z},\mathcal{Z}^{(\tau )}}^{(\omega ,\mathcal{E})}\Vert _{%
\mathcal{B}(\mathfrak{h})}}.  \label{idiot4}
\end{equation}%
The assertion then follows by combining (\ref{bilinear idiot}), (\ref%
{iodiot0}) and Definition \ref{def bilineqr} with (\ref{idiot4}) and Lemmata %
\ref{lemma:norm_estimate}, \ref{def bilineqr copy(1)}.
\end{proof}

\begin{proposition}[Uniform boundedness of second derivatives]
\label{prop:equic copy(5)}\mbox{
}\newline
Fix $\mathcal{E}\in C_{0}^{0}\left( \mathbb{R};\mathbb{R}^{d}\right) $ and $%
\beta _{1},s_{1},\vartheta _{1},\lambda _{1}\in \mathbb{R}^{+}$. Then,%
\begin{equation*}
\sup_{\substack{ \beta \in \left( 0,\beta _{1}\right] ,\ \vartheta \in \left[
0,\vartheta _{1}\right] ,\ \lambda \in \left[ 0,\lambda _{1}\right]  \\ %
\omega \in \Omega ,\ s\in \left[ -s_{1},s_{1}\right] ,\ \mathcal{Z},\mathcal{%
Z}^{(\varrho )},\mathcal{Z}^{(\tau )}\in \mathfrak{Z}_{\text{f}}}}\left\{
\left\vert \partial _{s}\mathrm{J}_{\mathcal{Z},\mathcal{Z}^{(\varrho )},%
\mathcal{Z}^{(\tau )}}^{(\omega ,s\mathcal{E})}\right\vert +\left\vert
\partial _{s}^{2}\mathrm{J}_{\mathcal{Z},\mathcal{Z}^{(\varrho )},\mathcal{Z}%
^{(\tau )}}^{(\omega ,s\mathcal{E})}\right\vert \right\} <\infty .
\end{equation*}
\end{proposition}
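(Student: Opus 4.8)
The plan is to reduce both derivatives to \emph{one-particle} traces and then bound those by trace-norm/operator-norm estimates already established. First I would use that $\mathfrak{K}_{\mathcal{Z},\mathcal{Z}^{(\tau)}}^{(\omega,\mathcal{E})}$ is linear in $\mathcal{E}$, so that with $A\doteq\exp(-\beta\langle\mathrm{A},h_{\mathcal{Z}^{(\varrho)}}^{(\omega)}\mathrm{A}\rangle)$ (strictly positive) and $\mathfrak{K}\doteq\mathfrak{K}_{\mathcal{Z},\mathcal{Z}^{(\tau)}}^{(\omega,\mathcal{E})}=\langle\mathrm{A},K\mathrm{A}\rangle$, where $K=K_{\mathcal{Z},\mathcal{Z}^{(\tau)}}^{(\omega,\mathcal{E})}$ is the finite-rank operator of (\ref{definition K}), one has from (\ref{generating functions0})--(\ref{generating functions}) that $\mathrm{J}_{\mathcal{Z},\mathcal{Z}^{(\varrho)},\mathcal{Z}^{(\tau)}}^{(\omega,s\mathcal{E})}=|\cup\mathcal{Z}|^{-1}(\ln\mathrm{tr}(A\mathrm{e}^{s\mathfrak{K}})-\ln\mathrm{tr}(A))$, the last term being $s$-independent. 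Writing $\tilde{A}_s\doteq\mathrm{e}^{s\mathfrak{K}/2}A\mathrm{e}^{s\mathfrak{K}/2}>0$ and $\langle\,\cdot\,\rangle_s\doteq\mathrm{tr}(\tilde{A}_s\,\cdot\,)/\mathrm{tr}(\tilde{A}_s)$, Duhamel's formula and cyclicity of the trace (exactly as in the proof of Lemma \ref{lemma:suppor_non_int1}, noting that $\mathfrak{K}$ commutes with $\mathrm{e}^{us\mathfrak{K}}$) give $\partial_s\ln\mathrm{tr}(A\mathrm{e}^{s\mathfrak{K}})=\langle\mathfrak{K}\rangle_s$ and, differentiating once more, $\partial_s^2\ln\mathrm{tr}(A\mathrm{e}^{s\mathfrak{K}})=\langle\mathfrak{K}^2\rangle_s-\langle\mathfrak{K}\rangle_s^2$.

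The crucial structural step is to recognise $\langle\,\cdot\,\rangle_s$ as a gauge-invariant quasi-free state. Since $K$ is self-adjoint (it is associated with the self-adjoint $\mathfrak{K}$, cf.\ (\ref{selfadjoint bilinear})) and finite-rank, and $-\beta h_{\mathcal{Z}^{(\varrho)}}^{(\omega)}$ is self-adjoint and finite-rank, Lemma \ref{lemma cool} applied with $C_2=\tfrac{s}{2}K$ and $C_1=-\beta h_{\mathcal{Z}^{(\varrho)}}^{(\omega)}$ yields $\tilde{A}_s=\mathrm{e}^{\langle\mathrm{A},\tilde{C}\mathrm{A}\rangle+D\mathfrak{1}}$ with $\tilde{C}\doteq\ln(\mathrm{e}^{sK/2}\mathrm{e}^{-\beta h_{\mathcal{Z}^{(\varrho)}}^{(\omega)}}\mathrm{e}^{sK/2})$ self-adjoint and finite-rank. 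Hence $\langle\,\cdot\,\rangle_s$ is the state (\ref{quasi free state}) with $-\beta h_{\mathcal{Z}}^{(\omega)}$ replaced by $\tilde{C}$, i.e.\ the gauge-invariant quasi-free state with two-point function $\langle a(\varphi)^{\ast}a(\psi)\rangle_s=\langle\psi,T_s\varphi\rangle_{\mathfrak{h}}$, where $T_s\doteq(\mathfrak{1}+\mathrm{e}^{-\tilde{C}})^{-1}=T_s^{\ast}$ has spectrum in $[0,1]$ (cf.\ (\ref{2-point correlation function})). Expanding $\mathfrak{K}=\sum_{x,y}\langle\mathfrak{e}_x,K\mathfrak{e}_y\rangle_{\mathfrak{h}}\,a(\mathfrak{e}_x)^{\ast}a(\mathfrak{e}_y)$ and applying the determinant/Wick identity (\ref{ass O0-00bis}) for $N=2$ to the truncated four-point function, the sums telescope into one-particle traces: $\langle\mathfrak{K}\rangle_s=\mathrm{Tr}_{\mathfrak{h}}(KT_s)$ and $\langle\mathfrak{K}^2\rangle_s-\langle\mathfrak{K}\rangle_s^2=\mathrm{Tr}_{\mathfrak{h}}(K(\mathfrak{1}-T_s)KT_s)$.

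Finally I would bound these traces. From $|\mathrm{Tr}_{\mathfrak{h}}(M)|\leq\Vert M\Vert_1$ and $\Vert MN\Vert_1\leq\Vert M\Vert_1\Vert N\Vert_{\mathcal{B}(\mathfrak{h})}$ together with $\Vert T_s\Vert_{\mathcal{B}(\mathfrak{h})},\Vert\mathfrak{1}-T_s\Vert_{\mathcal{B}(\mathfrak{h})}\leq1$, one gets $|\mathrm{Tr}_{\mathfrak{h}}(KT_s)|\leq\Vert K\Vert_1$ and $|\mathrm{Tr}_{\mathfrak{h}}(K(\mathfrak{1}-T_s)KT_s)|\leq\Vert K\Vert_1\Vert K\Vert_{\mathcal{B}(\mathfrak{h})}$, where $\Vert\cdot\Vert_1$ is the trace norm. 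Since $K$ is finite-rank, $\Vert K\Vert_1\leq\sum_{x,y\in\mathbb{Z}^d}|\langle\mathfrak{e}_x,K\mathfrak{e}_y\rangle_{\mathfrak{h}}|$, which by the second estimate of Lemma \ref{def bilineqr copy(1)} is at most $|\cup\mathcal{Z}|$ times a constant depending only on $\mathcal{E},\eta,\mu,\vartheta,d$; by Lemma \ref{lemma:norm_estimate} and the first estimate of Lemma \ref{def bilineqr copy(1)}, $\Vert K\Vert_{\mathcal{B}(\mathfrak{h})}\leq\sup_x\sum_y|\langle\mathfrak{e}_x,K\mathfrak{e}_y\rangle_{\mathfrak{h}}|$ is bounded by another such constant. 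Dividing by $|\cup\mathcal{Z}|$ yields $|\partial_s\mathrm{J}_{\mathcal{Z},\mathcal{Z}^{(\varrho)},\mathcal{Z}^{(\tau)}}^{(\omega,s\mathcal{E})}|+|\partial_s^2\mathrm{J}_{\mathcal{Z},\mathcal{Z}^{(\varrho)},\mathcal{Z}^{(\tau)}}^{(\omega,s\mathcal{E})}|\leq C$, with $C$ independent of $\beta,\omega,s$ and of the collections; uniformity over $\lambda\in[0,\lambda_1]$ and $\vartheta\in[0,\vartheta_1]$ follows because the Combes-Thomas bound (\ref{Combes-ThomasCombes-Thomas}) is uniform in $\lambda$ and $\omega$, while $\mu_\eta$ stays bounded away from $0$ and $D_{\ref{def bilineqr copy(1)}}$ stays bounded on these compact ranges.

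The step I expect to be the real obstacle is the second derivative: the naive estimate $|\partial_s^2\ln\mathrm{tr}(A\mathrm{e}^{s\mathfrak{K}})|\leq2\Vert\mathfrak{K}\Vert_{\mathcal{U}}^2$ is useless, because $\Vert\mathfrak{K}\Vert_{\mathcal{U}}$ is extensive, i.e.\ $\mathcal{O}(|\cup\mathcal{Z}|)$, while one divides by only a single volume factor, so it would diverge. The point is that $\partial_s^2\ln\mathrm{tr}(A\mathrm{e}^{s\mathfrak{K}})$ is a variance that, thanks to quasi-freeness, collapses to the \emph{single} one-particle trace $\mathrm{Tr}_{\mathfrak{h}}(K(\mathfrak{1}-T_s)KT_s)$, carrying only one extensive factor $\Vert K\Vert_1$; identifying $\langle\,\cdot\,\rangle_s$ as quasi-free through Lemma \ref{lemma cool}, so that Wick's theorem applies, is therefore the linchpin. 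Pleasantly, no decay estimate on $T_s$ beyond $0\leq T_s\leq\mathfrak{1}$ is needed—the trace-norm/operator-norm splitting already suffices.
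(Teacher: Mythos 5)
Your proposal is correct, and its structural skeleton coincides with the paper's proof: both reduce, by cyclicity of the trace, $\partial_{s}\mathrm{J}$ and $\partial_{s}^{2}\mathrm{J}$ to the mean and variance of $\mathfrak{K}_{\mathcal{Z},\mathcal{Z}^{(\tau )}}^{(\omega ,\mathcal{E})}$ in the perturbed state (your $\langle \,\cdot \,\rangle _{s}$, the paper's $\varpi _{s}$), both identify that state as gauge-invariant quasi-free via Lemma \ref{lemma cool}, and both invoke the Wick identity (\ref{ass O0-00bis}). Where you genuinely depart from the paper is the decisive estimate on the variance. The paper keeps the quadruple sum $\sum_{x,y,u,v}\langle \mathfrak{e}_{x},K\mathfrak{e}_{y}\rangle _{\mathfrak{h}}\langle \mathfrak{e}_{u},K\mathfrak{e}_{v}\rangle _{\mathfrak{h}}\,\varpi _{s}(a(\mathfrak{e}_{y})a(\mathfrak{e}_{u})^{\ast })\,\varpi _{s}(a(\mathfrak{e}_{x})^{\ast }a(\mathfrak{e}_{v}))$ and bounds it by a product involving the row sums $\sup_{y}\sum_{u}\left\vert \varpi _{s}(a(\mathfrak{e}_{y})a(\mathfrak{e}_{u})^{\ast })\right\vert$ and $\sup_{x}\sum_{v}\left\vert \varpi _{s}(a(\mathfrak{e}_{x})^{\ast }a(\mathfrak{e}_{v}))\right\vert$; controlling these requires uniform \emph{exponential decay} of the perturbed two-point function, which the paper obtains from Corollary \ref{Lemma fermi1 copy(1)} (a Combes-Thomas-type bound for $(1+\mathrm{e}^{h_{2}}\mathrm{e}^{h_{1}}\mathrm{e}^{h_{2}})^{-1}$) together with the $\mathbf{S}_{0}$ estimates — this is precisely where the restrictions $\beta \leq \beta _{1}$, $\left\vert s\right\vert \leq s_{1}$ enter the paper's argument. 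You instead contract the sums into the one-particle traces $\mathrm{Tr}_{\mathfrak{h}}(KT_{s})$ and $\mathrm{Tr}_{\mathfrak{h}}(K(\mathfrak{1}-T_{s})KT_{s})$ (legitimate, since $\sum_{x,y}\left\vert \langle \mathfrak{e}_{x},K\mathfrak{e}_{y}\rangle _{\mathfrak{h}}\right\vert <\infty$ by Lemma \ref{def bilineqr copy(1)} makes everything absolutely convergent) and apply H\"{o}lder-type inequalities for the trace norm, using nothing about $T_{s}$ beyond $0\leq T_{s}\leq \mathfrak{1}$. This is more elementary — Corollary \ref{Lemma fermi1 copy(1)}, and with it a whole Combes-Thomas input, becomes unnecessary for this proposition — and it yields a bound that is in fact uniform in $\beta \in \mathbb{R}^{+}$ and $s\in \mathbb{R}$, stronger than what is stated. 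What the paper's route buys in exchange is the decay estimate on the perturbed symbol itself, which is of independent interest but not needed here. Your closing diagnosis is also accurate: the naive bound $2\Vert \mathfrak{K}\Vert _{\mathcal{U}}^{2}$ carries two extensive factors against a single volume division, and the quasi-free structure is exactly what collapses the variance to a single extensive factor $\Vert K\Vert _{1}$ times the intensive factor $\Vert K\Vert _{\mathcal{B}(\mathfrak{h})}$.
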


\begin{proof}
Fix the parameters of the proposition. Then, by cyclicity of the tracial
state,
\begin{equation*}
\partial _{s}\mathrm{J}_{\mathcal{Z},\mathcal{Z}^{(\varrho )},\mathcal{Z}%
^{(\tau )}}^{(\omega ,s\mathcal{E})}=\frac{1}{|\cup \mathcal{Z}|}\varpi
_{s}\left( \mathfrak{K}_{\mathcal{Z},\mathcal{Z}^{(\tau )}}^{(\omega ,%
\mathcal{E})}\right)
\end{equation*}%
and
\begin{equation*}
\partial _{s}^{2}\mathrm{J}_{\mathcal{Z},\mathcal{Z}^{(\varrho )},\mathcal{Z}%
^{(\tau )}}^{(\omega ,s\mathcal{E})}=\frac{1}{|\cup \mathcal{Z}|}\left(
\varpi _{s}\left( \left( \mathfrak{K}_{\mathcal{Z},\mathcal{Z}^{(\tau
)}}^{(\omega ,\mathcal{E})}\right) ^{2}\right) -\varpi _{s}\left( \mathfrak{K%
}_{\mathcal{Z},\mathcal{Z}^{(\tau )}}^{(\omega ,\mathcal{E})}\right)
^{2}\right) ,
\end{equation*}%
where $\varpi _{s}$ is the state defined, for any $B\in \mathcal{U}$, by
\begin{equation*}
\varpi _{s}\left( B\right) =\frac{\mathrm{tr}\left( B\mathrm{e}^{\frac{s}{2}%
\mathfrak{K}_{\mathcal{Z},\mathcal{Z}^{(\tau )}}^{(\omega ,\mathcal{E})}}%
\mathrm{e}^{-\beta \langle \mathrm{A},h_{\mathcal{Z}^{(\varrho )}}^{(\omega
)}\mathrm{A}\rangle }\mathrm{e}^{\frac{s}{2}\mathfrak{K}_{\mathcal{Z},%
\mathcal{Z}^{(\tau )}}^{(\omega ,\mathcal{E})}}\right) }{\mathrm{tr}\left(
\mathrm{e}^{\frac{s}{2}\mathfrak{K}_{\mathcal{Z},\mathcal{Z}^{(\tau
)}}^{(\omega ,\mathcal{E})}}\mathrm{e}^{-\beta \langle \mathrm{A},h_{%
\mathcal{Z}^{(\varrho )}}^{(\omega )}\mathrm{A}\rangle }\mathrm{e}^{\frac{s}{%
2}\mathfrak{K}_{\mathcal{Z},\mathcal{Z}^{(\tau )}}^{(\omega ,\mathcal{E}%
)}}\right) }.
\end{equation*}%
By Lemma \ref{lemma cool} and (\ref{bilinear idiot}), observe that $\varpi
_{s}$ is the quasi-free state satisfying
\begin{equation}
\varpi _{s}(a^{\ast }\left( \varphi \right) a\left( \psi \right)
)=\left\langle \psi ,\frac{1}{1+\mathrm{e}^{-\frac{s}{2}K_{\mathcal{Z},%
\mathcal{Z}^{(\tau )}}^{(\omega ,\mathcal{E})}}\mathrm{e}^{\beta h_{\mathcal{%
Z}^{(\varrho )}}^{(\omega )}}\mathrm{e}^{-\frac{s}{2}K_{\mathcal{Z},\mathcal{%
Z}^{(\tau )}}^{(\omega ,\mathcal{E})}}}\varphi \right\rangle _{\mathfrak{h}%
},\qquad \varphi ,\psi \in \mathfrak{h}.  \label{cool2}
\end{equation}%
Therefore, by (\ref{bilinear idiot}) and Definition \ref{def bilineqr}, we
directly compute that
\begin{equation*}
\partial _{s}\mathrm{J}_{\mathcal{Z},\mathcal{Z}^{(\varrho )},\mathcal{Z}%
^{(\tau )}}^{(\omega ,s\mathcal{E})}=\frac{1}{|\cup \mathcal{Z}|}%
\sum\limits_{x,y\in \mathbb{Z}^{d}}\left\langle \mathfrak{e}_{x},K_{\mathcal{%
Z},\mathcal{Z}^{(\tau )}}^{(\omega ,\mathcal{E})}\mathfrak{e}%
_{y}\right\rangle _{\mathfrak{h}}\varpi _{s}\left( a\left( \mathfrak{e}%
_{x}\right) ^{\ast }a\left( \mathfrak{e}_{y}\right) \right)
\end{equation*}%
and
\begin{multline*}
\partial _{s}^{2}\mathrm{J}_{\mathcal{Z},\mathcal{Z}^{(\varrho )},\mathcal{Z}%
^{(\tau )}}^{(\omega ,s\mathcal{E})}=\frac{1}{|\cup \mathcal{Z}|}%
\sum\limits_{x,y,u,v\in \mathbb{Z}^{d}}\left\langle \mathfrak{e}_{x},K_{%
\mathcal{Z},\mathcal{Z}^{(\tau )}}^{(\omega ,\mathcal{E})}\mathfrak{e}%
_{y}\right\rangle _{\mathfrak{h}}\left\langle \mathfrak{e}_{u},K_{\mathcal{Z}%
,\mathcal{Z}^{(\tau )}}^{(\omega ,\mathcal{E})}\mathfrak{e}_{v}\right\rangle
_{\mathfrak{h}} \\
\times \varpi _{s}\left( a\left( \mathfrak{e}_{y}\right) a\left( \mathfrak{e}%
_{u}\right) ^{\ast }\right) \varpi _{s}\left( a\left( \mathfrak{e}%
_{x}\right) ^{\ast }a\left( \mathfrak{e}_{v}\right) \right) ,
\end{multline*}%
because of the identity
\begin{equation*}
\varpi _{s}\left( a(\mathfrak{e}_{x})^{\ast }a(\mathfrak{e}_{y})a(\mathfrak{e%
}_{u})^{\ast }a(\mathfrak{e}_{v})\right) =\varpi _{s}\left( a(\mathfrak{e}%
_{x})^{\ast }a(\mathfrak{e}_{y})\right) \varpi _{s}\left( a(\mathfrak{e}%
_{u})^{\ast }a(\mathfrak{e}_{v})\right) +\varpi _{s}\left( a(\mathfrak{e}%
_{y})a(\mathfrak{e}_{u})^{\ast }\right) \varpi _{s}\left( a(\mathfrak{e}%
_{x})^{\ast }a(\mathfrak{e}_{v})\right) ,
\end{equation*}%
for $x,y,u,v\in \mathbb{Z}^{d}$, by (\ref{ass O0-00bis}) for $\rho =\varpi
_{s}$. As a consequence,
\begin{equation*}
\left\vert \partial _{s}\mathrm{J}_{\mathcal{Z},\mathcal{Z}^{(\varrho )},%
\mathcal{Z}^{(\tau )}}^{(\omega ,s\mathcal{E})}\right\vert \leq \frac{1}{%
|\cup \mathcal{Z}|}\sum\limits_{x,y\in \mathbb{Z}^{d}}\left\vert
\left\langle \mathfrak{e}_{x},K_{\mathcal{Z},\mathcal{Z}^{(\tau )}}^{(\omega
,\mathcal{E})}\mathfrak{e}_{y}\right\rangle _{\mathfrak{h}}\right\vert
\end{equation*}%
and%
\begin{eqnarray*}
\left\vert \partial _{s}^{2}\mathrm{J}_{\mathcal{Z},\mathcal{Z}^{(\varrho )},%
\mathcal{Z}^{(\tau )}}^{(\omega ,s\mathcal{E})}\right\vert  &\leq
&\sup_{u,v\in \mathbb{Z}^{d}}\left\vert \left\langle \mathfrak{e}_{u},K_{%
\mathcal{Z},\mathcal{Z}^{(\tau )}}^{(\omega ,\mathcal{E})}\mathfrak{e}%
_{v}\right\rangle _{\mathfrak{h}}\right\vert \left( \frac{1}{|\cup \mathcal{Z%
}|}\sum\limits_{x,y\in \mathbb{Z}^{d}}\left\vert \left\langle \mathfrak{e}%
_{x},K_{\mathcal{Z},\mathcal{Z}^{(\tau )}}^{(\omega ,\mathcal{E})}\mathfrak{e%
}_{y}\right\rangle _{\mathfrak{h}}\right\vert \right)  \\
&&\times \sup_{y\in \mathbb{Z}^{d}}\sum\limits_{u\in \mathbb{Z}%
^{d}}\left\vert \varpi _{s}\left( a\left( \mathfrak{e}_{y}\right) a\left(
\mathfrak{e}_{u}\right) ^{\ast }\right) \right\vert \sup_{x\in \mathbb{Z}%
^{d}}\sum\limits_{v\in \mathbb{Z}^{d}}\left\vert \varpi _{s}\left( a\left(
\mathfrak{e}_{x}\right) ^{\ast }a\left( \mathfrak{e}_{v}\right) \right)
\right\vert ,
\end{eqnarray*}%
which, by Lemma \ref{def bilineqr copy(1)}, implies that
\begin{equation}
\left\vert \partial _{s}\mathrm{J}_{\mathcal{Z},\mathcal{Z}^{(\varrho )},%
\mathcal{Z}^{(\tau )}}^{(\omega ,s\mathcal{E})}\right\vert \leq D_{\ref{def
bilineqr copy(1)}}\left( \int_{\mathbb{R}}\left\Vert \mathcal{E}\left(
\alpha \right) \right\Vert _{\mathbb{R}^{d}}\mathrm{e}^{2\left\vert \alpha
\eta \right\vert }\mathrm{d}\alpha \right) \underset{z\in \mathbb{Z}^{d}}{%
\sum }\mathrm{e}^{-\mu _{\eta }|z|}\left( 1+\eta \right)
\label{punchinle00}
\end{equation}%
as well as%
\begin{eqnarray}
\left\vert \partial _{s}^{2}\mathrm{J}_{\mathcal{Z},\mathcal{Z}^{(\varrho )},%
\mathcal{Z}^{(\tau )}}^{(\omega ,s\mathcal{E})}\right\vert  &\leq &D_{\ref%
{def bilineqr copy(1)}}^{2}\left( \int_{\mathbb{R}}\left\Vert \mathcal{E}%
\left( \alpha \right) \right\Vert _{\mathbb{R}^{d}}\mathrm{e}^{2\left\vert
\alpha \eta \right\vert }\mathrm{d}\alpha \right) ^{2}\left( 1+\eta \right)
^{2}\underset{z\in \mathbb{Z}^{d}}{\sum }\mathrm{e}^{-\mu _{\eta }|z|}
\notag \\
&&\times \sup_{y\in \mathbb{Z}^{d}}\sum\limits_{u\in \mathbb{Z}%
^{d}}\left\vert \varpi _{s}\left( a\left( \mathfrak{e}_{y}\right) a\left(
\mathfrak{e}_{u}\right) ^{\ast }\right) \right\vert \sup_{x\in \mathbb{Z}%
^{d}}\sum\limits_{v\in \mathbb{Z}^{d}}\left\vert \varpi _{s}\left( a\left(
\mathfrak{e}_{x}\right) ^{\ast }a\left( \mathfrak{e}_{v}\right) \right)
\right\vert .  \label{punchinle}
\end{eqnarray}%
Again by Lemma \ref{def bilineqr copy(1)} together with (\ref%
{Combes-ThomasCombes-Thomas})-(\ref{Combes-ThomasCombes-Thomasbis}), for any
$\mu >\mu _{\eta }$,
\begin{equation*}
\sup_{\substack{ \beta \in \left( 0,\beta _{1}\right] ,\ \vartheta \in \left[
0,\vartheta _{1}\right] ,\ \lambda \in \left[ 0,\lambda _{1}\right]  \\ %
\omega \in \Omega ,\ s\in \left[ -s_{1},s_{1}\right] ,\ \mathcal{Z},\mathcal{%
Z}^{(\varrho )},\mathcal{Z}^{(\tau )}\in \mathfrak{Z}_{\text{f}}}}\left\{
\mathbf{S}_{0}(sK_{\mathcal{Z},\mathcal{Z}^{(\tau )}}^{(\omega ,\mathcal{E}%
)},\mu )+\mathbf{S}_{0}(\beta h_{\mathcal{Z}^{(\varrho )}}^{(\omega )},\mu
)\right\} <\infty .
\end{equation*}%
See (\ref{S0}). We thus infer from (\ref{cool2}) and Corollary \ref{Lemma
fermi1 copy(1)} that there is a constant $\mu _{1}\in \mathbb{R}^{+}$ such
that, for any $x,y\in \mathbb{Z}^{d}$,%
\begin{equation*}
\sup_{\substack{ \beta \in \left( 0,\beta _{1}\right] ,\ \vartheta \in \left[
0,\vartheta _{1}\right] ,\ \lambda \in \left[ 0,\lambda _{1}\right]  \\ %
\omega \in \Omega ,\ s\in \left[ -s_{1},s_{1}\right] ,\ \mathcal{Z},\mathcal{%
Z}^{(\varrho )},\mathcal{Z}^{(\tau )}\in \mathfrak{Z}_{\text{f}}}}\left\vert
\varpi _{s}\left( a\left( \mathfrak{e}_{x}\right) ^{\ast }a\left( \mathfrak{e%
}_{y}\right) \right) \right\vert \leq 2\mathrm{e}^{-\mu _{1}|x-y|}.
\end{equation*}%
Combining this estimate with (\ref{punchinle00})-(\ref{punchinle}), one gets
the assertion.
\end{proof}

The local generating functionals (\ref{generating functions0}) can be
approximately decomposed into boxes of fixed volume: By using the boxes (\ref%
{eq:boxesl1}), for any subset $\Lambda \subset \mathbb{Z}^{d}$ and $l\in
\mathbb{N}$, we define the $l$-th box decomposition $\mathcal{Z}^{(\Lambda
,l)}$ of $\Lambda $ by
\begin{equation*}
\mathcal{Z}^{(\Lambda ,l)}\doteq \left\{ \Lambda _{l}+\left( 2l+1\right)
x:x\in \mathbb{Z}^{d}\text{ with }(\Lambda _{l}+\left( 2l+1\right) x)\subset
\Lambda \right\} \in \mathfrak{Z}.
\end{equation*}%
Then, we get the following assertion:

\begin{proposition}[Box decomposition of generating functions]
\label{Main proposition}\mbox{
}\newline
Fix $n\in \mathbb{N}$ and $\beta _{1},\lambda _{1},\vartheta _{1}\in \mathbb{%
R}^{+}$. Then,
\begin{equation*}
\lim_{l\to \infty }\limsup_{L_{\tau }\geq L_{\varrho }\geq L\to \infty
}\left\vert \mathrm{J}_{\mathcal{\{}\Lambda _{L}\mathcal{\}},\mathcal{\{}%
\Lambda _{L_{\varrho }}\mathcal{\}},\mathcal{\{}\Lambda _{L_{\tau }}\mathcal{%
\}}}^{(\omega ,\mathcal{E})}-\frac{1}{\left\vert \mathcal{Z}^{(\Lambda
_{L},l)}\right\vert }\sum_{Z\in \mathcal{Z}^{(\Lambda _{L},l)}}\mathrm{J}_{%
\mathcal{\{}Z\mathcal{\}},\mathcal{\{}Z\mathcal{\}},\mathcal{\{}Z\mathcal{\}}%
}^{(\omega ,\mathcal{E})}\right\vert =0,
\end{equation*}%
uniformly w.r.t. $\beta \in \left[ 0,\beta _{1}\right] $, $\vartheta \in %
\left[ 0,\vartheta _{1}\right] $, $\lambda \in \left[ 0,\lambda _{1}\right] $%
, $\omega \in \Omega $ and $\mathcal{E}$ in any bounded set of $%
C_{0}^{0}([-n,n];\mathbb{R}^{d})$.
\end{proposition}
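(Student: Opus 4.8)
The plan is to interpolate between the decorations on the left-hand side, changing the dynamics collection $\mathcal{Z}^{(\tau)}$, the state collection $\mathcal{Z}^{(\varrho)}$ and the summation collection one at a time until all three equal the box decomposition $\mathcal{Z}\doteq\mathcal{Z}^{(\Lambda_{L},l)}$. The starting observation, which identifies the target, is that the right-hand side is exactly the fully decomposed generating function $\mathrm{J}_{\mathcal{Z},\mathcal{Z},\mathcal{Z}}^{(\omega,\mathcal{E})}$. Indeed, since the boxes $Z\in\mathcal{Z}$ are pairwise disjoint and $h_{\mathcal{Z}}^{(\omega)}=\sum_{Z}P_{Z}h^{(\omega)}P_{Z}$ is block-diagonal, the time evolution $\tau^{(\omega,\mathcal{Z})}$ keeps each local current inside its own box, so both $\langle\mathrm{A},h_{\mathcal{Z}}^{(\omega)}\mathrm{A}\rangle$ and $\mathfrak{K}_{\mathcal{Z},\mathcal{Z}}^{(\omega,\mathcal{E})}=\sum_{Z\in\mathcal{Z}}\mathfrak{K}_{\{Z\},\{Z\}}^{(\omega,\mathcal{E})}$ split into observables supported in the mutually commuting subalgebras $\mathcal{U}_{P_{Z}\mathfrak{h}}$. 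The tracial state factorizes over these subalgebras, so (\ref{generating functions}) gives $g_{\mathcal{Z},\mathcal{Z},\mathcal{Z}}^{(\omega,\mathcal{E})}=|\mathcal{Z}|^{-1}\sum_{Z}g_{\{Z\},\{Z\},\{Z\}}^{(\omega,\mathcal{E})}$ (using $|\cup\mathcal{Z}|=|\mathcal{Z}|\,|\Lambda_{l}|$), and subtracting the $\mathcal{E}=0$ term yields $\mathrm{J}_{\mathcal{Z},\mathcal{Z},\mathcal{Z}}^{(\omega,\mathcal{E})}=|\mathcal{Z}|^{-1}\sum_{Z}\mathrm{J}_{\{Z\},\{Z\},\{Z\}}^{(\omega,\mathcal{E})}$. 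It therefore suffices to prove that $|\mathrm{J}_{\{\Lambda_{L}\},\{\Lambda_{L_{\varrho}}\},\{\Lambda_{L_{\tau}}\}}^{(\omega,\mathcal{E})}-\mathrm{J}_{\mathcal{Z},\mathcal{Z},\mathcal{Z}}^{(\omega,\mathcal{E})}|$ vanishes in the iterated limit, uniformly in the stated parameters.

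For \emph{Step 1}, changing $\mathcal{Z}^{(\tau)}$ from $\{\Lambda_{L_{\tau}}\}$ to $\mathcal{Z}$, I note that $g^{(\omega,0)}$ is independent of $\mathcal{Z}^{(\tau)}$ (as $\mathfrak{K}^{(\omega,0)}=0$), so only $g^{(\omega,\mathcal{E})}$ changes. I apply the Bogoliubov inequality Lemma \ref{lemma:suppor_non_int1}(ii) with positive weight $C=\mathrm{e}^{-\beta\langle\mathrm{A},h_{\{\Lambda_{L_{\varrho}}\}}^{(\omega)}\mathrm{A}\rangle}$; conjugating the bilinear difference $\langle\mathrm{A},(K_{\{\Lambda_{L}\},\mathcal{Z}}^{(\omega,\mathcal{E})}-K_{\{\Lambda_{L}\},\{\Lambda_{L_{\tau}}\}}^{(\omega,\mathcal{E})})\mathrm{A}\rangle$ by the Bogoliubov automorphism (\ref{dynamic bilinear}), then using Lemma \ref{lemma:norm_estimate} and the uniform operator-norm control from Lemma \ref{def bilineqr copy(1)}, reduces everything to $|\Lambda_{L}|^{-1}\sum_{x,y}|\langle\mathfrak{e}_{x},(K_{\{\Lambda_{L}\},\mathcal{Z}}^{(\omega,\mathcal{E})}-K_{\{\Lambda_{L}\},\{\Lambda_{L_{\tau}}\}}^{(\omega,\mathcal{E})})\mathfrak{e}_{y}\rangle_{\mathfrak{h}}|$. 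Since $\cup\mathcal{Z}\subset\Lambda_{L}\subset\Lambda_{L_{\tau}}$, Lemma \ref{def bilineqr copy(2)} bounds this by a surface-to-volume quantity of order $O(l/L)+O(1/l)$: the first contribution comes from the slab terms and from $\Lambda_{L}\setminus\cup\mathcal{Z}$ (of cardinality $O(L^{d-1}l)$), the second from $\cup\partial_{\Lambda_{L_{\tau}}}(\mathcal{Z})$ (of cardinality $O(L^{d}/l)$). \emph{Step 3}, the simultaneous change of the summation collection $\{\Lambda_{L}\}\to\mathcal{Z}$ and of the prefactor $|\Lambda_{L}|^{-1}\to|\cup\mathcal{Z}|^{-1}$, is handled the same way but using Lemma \ref{def bilineqr copy(3)}; the prefactor discrepancy, multiplied by the state-region-matched (and hence at most $O(|\cup\mathcal{Z}|)$) quantity $\ln\mathrm{tr}(\mathrm{e}^{-\beta\langle\mathrm{A},h_{\mathcal{Z}}^{(\omega)}\mathrm{A}\rangle})$, contributes $|\,|\Lambda_{L}|^{-1}-|\cup\mathcal{Z}|^{-1}|\,O(|\cup\mathcal{Z}|)=O(l/L)$ because $|\Lambda_{L}\setminus\cup\mathcal{Z}|=O(L^{d-1}l)$.

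The main obstacle is \emph{Step 2}, the state-region change $\mathcal{Z}^{(\varrho)}:\{\Lambda_{L_{\varrho}}\}\to\mathcal{Z}$, since the naive Bogoliubov bound is here extensive in $L_{\varrho}$ and diverges after division by $|\Lambda_{L}|$. Instead I would use the derivative representation established in the proof of Proposition \ref{prop:equic copy(5)}, namely $\mathrm{J}_{\{\Lambda_{L}\},\mathcal{Z}^{(\varrho)},\mathcal{Z}}^{(\omega,\mathcal{E})}=|\Lambda_{L}|^{-1}\int_{0}^{1}\varpi_{s}(\mathfrak{K}_{\{\Lambda_{L}\},\mathcal{Z}}^{(\omega,\mathcal{E})})\,\mathrm{d}s$ with $\varpi_{s}$ the quasi-free state of two-point function (\ref{cool2}); this already incorporates the $\mathcal{E}=0$ subtraction, so the entire dependence on $\mathcal{Z}^{(\varrho)}$ sits in the two-point function $\varpi_{s}(a(\mathfrak{e}_{x})^{\ast}a(\mathfrak{e}_{y}))$, whose only $\mathcal{Z}^{(\varrho)}$-dependent ingredient is $h_{\mathcal{Z}^{(\varrho)}}^{(\omega)}$ inside the Fermi-Dirac operator. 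The perturbation $h_{\{\Lambda_{L_{\varrho}}\}}^{(\omega)}-h_{\mathcal{Z}}^{(\omega)}$ is supported on the defect set $\mathcal{D}=(\Lambda_{L_{\varrho}}\setminus\cup\mathcal{Z})\cup(\cup\partial_{\Lambda_{L_{\varrho}}}(\mathcal{Z}))$. The key estimate is a Combes-Thomas bound on the \emph{difference} of the two Fermi-Dirac operators, via Appendix \ref{Section Combes-Thomas Estimate} and Corollary \ref{Lemma fermi1 copy(1)}, giving $|(\varpi_{s}^{\{\Lambda_{L_{\varrho}}\}}-\varpi_{s}^{\mathcal{Z}})(a(\mathfrak{e}_{x})^{\ast}a(\mathfrak{e}_{y}))|\lesssim\sum_{a\in\mathcal{D}}\mathrm{e}^{-c(|x-a|+|a-y|)}$ uniformly in $s,\omega,\beta,\vartheta,\lambda$. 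Inserting the decay of $K_{\{\Lambda_{L}\},\mathcal{Z}}^{(\omega,\mathcal{E})}$ from Lemma \ref{def bilineqr copy(1)} and summing over $x,y$ localizes the sources near $\mathcal{D}$: the near part $\mathcal{D}\cap\Lambda_{L}$ gives $O(l/L)+O(1/l)$ as in Steps 1 and 3, while the far part $\mathcal{D}\setminus\Lambda_{L}\subset\Lambda_{L_{\varrho}}\setminus\Lambda_{L}$ is reached only through $\mathrm{e}^{-c\,\mathrm{dist}(\cdot,\Lambda_{L})}$ and thus contributes at most $O(1/L)$ uniformly in $L_{\varrho}\ge L$.

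Collecting the three steps, the total difference is $O(1/L)+O(l/L)+O(1/l)$, uniformly for $\beta\le\beta_{1}$, $\vartheta\le\vartheta_{1}$, $\lambda\le\lambda_{1}$, $\omega\in\Omega$ and $\mathcal{E}$ in any bounded subset of $C_{0}^{0}([-n,n];\mathbb{R}^{d})$ (the $\mathcal{E}$-dependence entering only through the uniformly bounded integrals $\int_{\mathbb{R}}\|\mathcal{E}(\alpha)\|_{\mathbb{R}^{d}}\mathrm{e}^{2|\alpha\eta|}\,\mathrm{d}\alpha$). Taking first $\limsup_{L_{\tau}\ge L_{\varrho}\ge L\to\infty}$ annihilates the $O(1/L)$ and $O(l/L)$ terms, and then $\lim_{l\to\infty}$ removes the residual $O(1/l)$, which proves the proposition. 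The delicate point that must be executed carefully is Step 2: establishing the Combes-Thomas bound on the Fermi-Dirac difference \emph{uniformly in the interpolation parameter $s\in[0,1]$ and in all model parameters}, so that the exponentially small far-region contribution is genuinely independent of $L_{\varrho}$.
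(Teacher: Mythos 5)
Your proposal is correct, and two of its three steps essentially coincide with the paper's own argument: your Step 1 and Step 3 are the paper's Lemmata \ref{prop:equic copy(1)} and \ref{prop:equic copy(3)} (Bogoliubov inequality, Lemma \ref{lemma:suppor_non_int1}(ii), combined with the kernel bounds of Lemmata \ref{lemma:norm_estimate}, \ref{def bilineqr copy(1)}, \ref{def bilineqr copy(2)}, \ref{def bilineqr copy(3)}), and your identification of the right-hand side with $\mathrm{J}^{(\omega ,\mathcal{E})}_{\mathcal{Z},\mathcal{Z},\mathcal{Z}}$ via the product structure of the tracial state is exactly the paper's Equation (\ref{punchline0}). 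Where you genuinely diverge is on the crux you correctly isolate: the state-region change $\mathcal{Z}^{(\varrho )}\colon \{\Lambda _{L_{\varrho }}\}\rightarrow \mathcal{Z}^{(\Lambda _{L},l)}$, for which the naive Bogoliubov bound is extensive in $L_{\varrho }$. The paper never performs this change in one stroke: it splits it into two surface-sized perturbations, first $\{\Lambda _{L_{\varrho }}\}\rightarrow \{\Lambda _{L_{\varrho }}\backslash \Lambda _{L},\Lambda _{L}\}$ (Lemma \ref{prop:equic copy(4)}, a boundary term of size $\mathcal{O}(L^{d-1})$), then $\{\Lambda _{L_{\varrho }}\backslash \Lambda _{L},\Lambda _{L}\}\rightarrow \{\Lambda _{L_{\varrho }}\backslash \Lambda _{L}\}\cup \mathcal{Z}^{(\Lambda _{L},l)}$ (Lemma \ref{prop:equic copy(2)}, of size $\mathcal{O}(L^{d-1}l)+\mathcal{O}(L^{d}/l)$), and it keeps the outer block $\Lambda _{L_{\varrho }}\backslash \Lambda _{L}$ in the state collection until the very end, where it is discarded at no analytic cost through the exact identity $\mathrm{J}_{\mathcal{Z},\{\Lambda _{L_{\varrho }}\backslash \Lambda _{L}\}\cup \mathcal{Z},\mathcal{Z}}^{(\omega ,\mathcal{E})}=\mathrm{J}_{\mathcal{Z},\mathcal{Z},\mathcal{Z}}^{(\omega ,\mathcal{E})}$ (first equality in (\ref{punchline0})), valid because the tracial state is a product state. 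Thus every estimate stays inside the Bogoliubov framework. Your alternative Step 2 -- writing $\mathrm{J}=|\Lambda _{L}|^{-1}\int_{0}^{1}\varpi _{s}(\mathfrak{K})\,\mathrm{d}s$ as in the proof of Proposition \ref{prop:equic copy(5)} and bounding the difference of the two quasi-free states by a localization estimate on the difference of Fermi--Dirac operators -- is sound: the needed bound follows from the resolvent identity $(1+A_{1})^{-1}-(1+A_{2})^{-1}=(1+A_{1})^{-1}(A_{2}-A_{1})(1+A_{2})^{-1}$, Duhamel's formula (which localizes $A_{2}-A_{1}$ at the defect set and even yields a harmless factor $\mathcal{O}(\beta )$), Corollary \ref{Lemma fermi1 copy(1)} and (\ref{produit2}), with constants uniform in $s$, $\omega$, $\beta \leq \beta _{1}$, $\vartheta \leq \vartheta _{1}$, $\lambda \leq \lambda _{1}$; and your accounting of the near part ($\mathcal{O}(l/L)+\mathcal{O}(1/l)$) and of the far part ($\mathcal{O}(1/L)$, uniformly in $L_{\varrho }\geq L$, since it is a boundary-layer sum) is right. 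The trade-off: your route settles the state change in one genuine estimate but obliges you to prove a new, quantitative perturbation lemma for Fermi--Dirac two-point functions that the paper does not contain; the paper's route is longer (four lemmata plus an identity) but never leaves its elementary toolkit, the extensive region being eliminated by algebra rather than analysis.
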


\noindent The proof of this statement is divided in a series of Lemmata:

\begin{lemma}[Box decomposition of generating functions - I]
\label{prop:equic copy(4)}\mbox{
}\newline
Fix $\beta _{1},\lambda _{1},\vartheta _{1}\in \mathbb{R}^{+}$. Then,
\begin{equation*}
\limsup_{L_{\tau }\geq L_{\varrho }\geq L\to \infty }\left\vert g_{\mathcal{%
\{}\Lambda _{L}\mathcal{\}},\mathcal{\{}\Lambda _{L_{\varrho }}\mathcal{\}},%
\mathcal{\{}\Lambda _{L_{\tau }}\mathcal{\}}}^{(\omega ,\mathcal{E})}-g_{%
\mathcal{\{}\Lambda _{L}\mathcal{\}},\mathcal{\{}\Lambda _{L_{\varrho
}}\backslash \Lambda _{L},\Lambda _{L}\mathcal{\}},\mathcal{\{}\Lambda
_{L_{\tau }}\mathcal{\}}}^{(\omega ,\mathcal{E})}\right\vert =0,
\end{equation*}%
uniformly w.r.t. $\beta \in \left[ 0,\beta _{1}\right] $, $\vartheta \in %
\left[ 0,\vartheta _{1}\right] $, $\lambda \in \left[ 0,\lambda _{1}\right] $%
, $\omega \in \Omega $ and $\mathcal{E}\in C_{0}^{0}(\mathbb{R};\mathbb{R}%
^{d})$.
\end{lemma}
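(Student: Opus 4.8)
Fix $\beta_1,\lambda_1,\vartheta_1\in\mathbb{R}^+$. The plan is to realize both generating functions as $\tfrac{1}{|\Lambda_L|}\ln\mathrm{tr}(C\,\mathrm{e}^{H_i})$ with a \emph{common} strictly positive operator $C$, so that the difference is controlled by the Bogoliubov-type inequality of Lemma \ref{lemma:suppor_non_int1} (ii), whose right-hand side does not depend on $C$ at all. Using cyclicity of the trace in (\ref{generating functions}), I would set $C\doteq\mathrm{e}^{\mathfrak{K}_{\{\Lambda_L\},\{\Lambda_{L_\tau}\}}^{(\omega,\mathcal{E})}}$, which is strictly positive because $\mathfrak{K}$ is self-adjoint, and write $g_{\{\Lambda_L\},\mathcal{Z}^{(\varrho)},\{\Lambda_{L_\tau}\}}^{(\omega,\mathcal{E})}=\tfrac{1}{|\Lambda_L|}\ln\mathrm{tr}(C\,\mathrm{e}^{H(\mathcal{Z}^{(\varrho)})})$ with $H(\mathcal{Z}^{(\varrho)})\doteq-\beta\left\langle\mathrm{A},h_{\mathcal{Z}^{(\varrho)}}^{(\omega)}\mathrm{A}\right\rangle$. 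Since $C$, and hence the entire dependence on $L_\tau$ and on $\mathcal{E}$, is confined to the prefactor, the resulting estimate is automatically uniform in $L_\tau$ and in \emph{all} $\mathcal{E}\in C_0^0(\mathbb{R};\mathbb{R}^d)$, exactly as claimed (and the case $\beta=0$ is trivial since then $H_1=H_0=0$).

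Applying Lemma \ref{lemma:suppor_non_int1} (ii) to $\mathcal{Z}_1^{(\varrho)}=\{\Lambda_{L_\varrho}\}$ and $\mathcal{Z}_0^{(\varrho)}=\{\Lambda_{L_\varrho}\setminus\Lambda_L,\Lambda_L\}$, the difference $H_1-H_0=-\beta\left\langle\mathrm{A},\delta h\,\mathrm{A}\right\rangle$ is governed by $\delta h\doteq h_{\{\Lambda_{L_\varrho}\}}^{(\omega)}-h_{\{\Lambda_{L_\varrho}\setminus\Lambda_L,\Lambda_L\}}^{(\omega)}$. From (\ref{h finite f}) and $P_{\Lambda_{L_\varrho}}=P_{\Lambda_{L_\varrho}\setminus\Lambda_L}+P_{\Lambda_L}$ (valid since $\Lambda_L\subset\Lambda_{L_\varrho}$ for $L_\varrho\geq L$), one has $\delta h=P_{\Lambda_{L_\varrho}\setminus\Lambda_L}h^{(\omega)}P_{\Lambda_L}+P_{\Lambda_L}h^{(\omega)}P_{\Lambda_{L_\varrho}\setminus\Lambda_L}$. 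Because the potential $\lambda\omega_1$ is diagonal and $\Delta_{\omega,\vartheta}$ is a nearest-neighbor hopping, the only nonzero matrix elements of $\delta h$ sit on boundary edges of $\Lambda_L$ inside $\Lambda_{L_\varrho}$, that is on $\partial_{\Lambda_{L_\varrho}}(\Lambda_L)$ in the notation (\ref{notation1}); each is bounded by $1+\vartheta$ through (\ref{equation sup}), and for $L_\varrho\geq L$ their number is the surface area $\mathcal{O}(L^{d-1})$, independently of $L_\varrho$.

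For the conjugation in Lemma \ref{lemma:suppor_non_int1} (ii) I would use (\ref{dynamic bilinear}): writing $H_\alpha=-\beta\left\langle\mathrm{A},h_\alpha\mathrm{A}\right\rangle$ with $h_\alpha\doteq\alpha h_{\{\Lambda_{L_\varrho}\}}^{(\omega)}+(1-\alpha)h_{\{\Lambda_{L_\varrho}\setminus\Lambda_L,\Lambda_L\}}^{(\omega)}$ self-adjoint, the conjugation of a bilinear element is again bilinear,
\[
\mathrm{e}^{uH_\alpha}\left\langle\mathrm{A},\delta h\,\mathrm{A}\right\rangle\mathrm{e}^{-uH_\alpha}=\left\langle\mathrm{A},\mathrm{e}^{-u\beta h_\alpha}\delta h\,\mathrm{e}^{u\beta h_\alpha}\mathrm{A}\right\rangle .
\]
Lemma \ref{lemma:norm_estimate} and the CAR bound (\ref{eq:bound_CAR}) give $\left\Vert\left\langle\mathrm{A},M\mathrm{A}\right\rangle\right\Vert_{\mathcal{U}}\leq\sum_{x,y}\left\vert\left\langle\mathfrak{e}_x,M\mathfrak{e}_y\right\rangle\right\vert$ for finite-range $M$, so it remains to bound the right-hand side for $M=\mathrm{e}^{-u\beta h_\alpha}\delta h\,\mathrm{e}^{u\beta h_\alpha}$. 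Inserting the boundary decomposition of $\delta h$ and factorizing yields the estimate
\[
\sum_{x,y}\left\vert\left\langle\mathfrak{e}_x,M\mathfrak{e}_y\right\rangle\right\vert\leq(1+\vartheta)\sum_{\{a,b\}\in\partial_{\Lambda_{L_\varrho}}(\Lambda_L)}\Big(\sum_x\left\vert\left\langle\mathfrak{e}_x,\mathrm{e}^{-u\beta h_\alpha}\mathfrak{e}_a\right\rangle\right\vert\Big)\Big(\sum_y\left\vert\left\langle\mathfrak{e}_b,\mathrm{e}^{u\beta h_\alpha}\mathfrak{e}_y\right\rangle\right\vert\Big).
\]

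The crux — and the step requiring care — is that the inner column sums are finite and bounded uniformly in the site, in $L_\varrho$, in $\omega$, and in $\beta\in[0,\beta_1]$, $\vartheta\in[0,\vartheta_1]$, $\lambda\in[0,\lambda_1]$, $\alpha\in[0,1]$, $u\in[-1/2,1/2]$. This is a Combes-Thomas estimate for the real-exponent propagator $\mathrm{e}^{\pm u\beta h_\alpha}$: the operator $h_\alpha$ is a convex combination of nearest-neighbor hopping operators built from $h^{(\omega)}$, so its range and amplitudes are inherited from $h^{(\omega)}$ and, by convexity, its $\mathbf{S}$-parameter stays bounded by $2d(1+\vartheta)\mathrm{e}^{\mu}$ uniformly in the volume. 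Hence the general result of Appendix \ref{Section Combes-Thomas Estimate} (Corollary \ref{Lemma fermi1}, applied to the bounded real exponent $u\beta\in[-\beta_1/2,\beta_1/2]$) gives off-diagonal decay $\left\vert\left\langle\mathfrak{e}_x,\mathrm{e}^{\pm u\beta h_\alpha}\mathfrak{e}_a\right\rangle\right\vert\leq\mathrm{const}\cdot\mathrm{e}^{-c|x-a|}$ with $c>0$ and the constant uniform over the stated ranges, whence each column sum is at most $\mathrm{const}\cdot\sum_z\mathrm{e}^{-c|z|}<\infty$. Combining these bounds, $\left\vert g_1-g_0\right\vert\leq\beta\cdot\mathrm{const}\cdot\left\vert\partial_{\Lambda_{L_\varrho}}(\Lambda_L)\right\vert/\left\vert\Lambda_L\right\vert=\mathcal{O}(L^{-1})$, uniformly in $L_\tau\geq L_\varrho\geq L$ and in all parameters, so the $\limsup$ as $L\to\infty$ vanishes. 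The only delicate verification is precisely that the convex combination $h_\alpha$ — which is not literally one of the $h_{\mathcal{Z}}^{(\omega)}$ appearing in (\ref{Combes-ThomasCombes-Thomas}) — still satisfies the Combes-Thomas hypotheses with volume-independent constants; this follows from its short range and bounded hopping amplitudes.
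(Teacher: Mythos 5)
Your proposal has the same skeleton as the paper's proof: the paper also applies Lemma \ref{lemma:suppor_non_int1} (ii) with the common strictly positive element $C=\exp (\mathfrak{K}_{\{\Lambda _{L}\},\{\Lambda _{L_{\tau }}\}}^{(\omega ,\mathcal{E})})$ and the interpolating family $h_{\alpha }\doteq \alpha h_{\{\Lambda _{L_{\varrho }}\}}^{(\omega )}+(1-\alpha )h_{\{\Lambda _{L_{\varrho }}\backslash \Lambda _{L},\Lambda _{L}\}}^{(\omega )}$, identifies $h_{1}-h_{0}$ as supported on the boundary edges $\mathcal{\partial }_{\Lambda _{L_{\varrho }}}(\Lambda _{L})$ with entries bounded by $1+\vartheta $ (cf. (\ref{X2})), and concludes from the surface-to-volume ratio $\mathcal{O}(L^{d-1})/|\Lambda _{L}|\rightarrow 0$; the uniformity in $L_{\tau }$ and in all of $\mathcal{E}\in C_{0}^{0}(\mathbb{R};\mathbb{R}^{d})$ comes, exactly as you say, from the fact that $C$ drops out of the Bogoliubov bound.

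The one place you diverge is the conjugation step, and there your justification as written does not hold: Corollary \ref{Lemma fermi1} is a statement about the unitary propagator $\mathrm{e}^{ith}$ with \emph{real} $t$. Its proof rests on the analytic function $G(z)=\mathrm{e}^{itz}$ being uniformly bounded on a horizontal strip, which fails for a real exponent, since $\left\vert \mathrm{e}^{sz}\right\vert =\mathrm{e}^{s\Re \mathrm{e}(z)}$ is unbounded there. So you cannot invoke it for $\mathrm{e}^{\pm u\beta h_{\alpha }}$. The bound you need is nevertheless true and elementary: by (\ref{S0})--(\ref{produit2}), $\left\vert \left\langle \mathfrak{e}_{x},\mathrm{e}^{zh}\mathfrak{e}_{y}\right\rangle _{\mathfrak{h}}\right\vert \leq \mathrm{e}^{|z|\mathbf{S}_{0}(h,\mu )}\mathrm{e}^{-\mu |x-y|}$, and $\mathbf{S}_{0}(h_{\alpha },\mu )$ (note that here you need $\mathbf{S}_{0}$, not $\mathbf{S}$, since the diagonal part $2d+\lambda \omega _{1}$ now contributes) is bounded uniformly in $\alpha \in \lbrack 0,1]$, $\omega \in \Omega $, the volumes, and the parameters on compacta, because $h_{\alpha }$ is a convex combination of finite-range operators with uniformly bounded amplitudes. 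Alternatively --- and this is what the paper actually does, see (\ref{idiot5}) --- you can avoid decay estimates altogether: conjugating monomial by monomial via (\ref{dynamic bilinear}) and using the CAR bound (\ref{eq:bound_CAR}) gives
\begin{equation*}
\left\Vert a\left( \mathrm{e}^{-u\beta h_{\alpha }}\mathfrak{e}_{x}\right)^{\ast }a\left( \mathrm{e}^{u\beta h_{\alpha }}\mathfrak{e}_{y}\right) \right\Vert _{\mathcal{U}}\leq \mathrm{e}^{2\left\vert u\right\vert \beta \left\Vert h_{\alpha }\right\Vert _{\mathcal{B}(\mathfrak{h})}}\leq \mathrm{e}^{\beta \left\Vert h_{\alpha }\right\Vert _{\mathcal{B}(\mathfrak{h})}},
\end{equation*}
so that only $\sum_{x,y}\left\vert \left\langle \mathfrak{e}_{x},(h_{1}-h_{0})\mathfrak{e}_{y}\right\rangle _{\mathfrak{h}}\right\vert =\mathcal{O}(L^{d-1})$ is needed, with no off-diagonal decay of the propagator at all. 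Your factorized column-sum argument is correct once its decay input is sourced from (\ref{produit2}) rather than from Corollary \ref{Lemma fermi1}.
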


\begin{proof}
Fix all parameters of the lemma. By Lemma \ref{lemma:suppor_non_int1} (ii),%
\begin{multline*}
\left\vert g_{\mathcal{\{}\Lambda _{L}\mathcal{\}},\mathcal{\{}\Lambda
_{L_{\varrho }}\mathcal{\}},\mathcal{\{}\Lambda _{L_{\tau }}\mathcal{\}}%
}^{(\omega ,\mathcal{E})}-g_{\mathcal{\{}\Lambda _{L}\mathcal{\}},\mathcal{\{%
}\Lambda _{L_{\varrho }}\backslash \Lambda _{L},\Lambda _{L}\mathcal{\}},%
\mathcal{\{}\Lambda _{L_{\tau }}\mathcal{\}}}^{(\omega ,\mathcal{E}%
)}\right\vert \\
\leq \frac{\beta }{|\Lambda _{L}|}\sup_{\alpha \in \left[ 0,1\right]
}\sup_{u\in \left[ -1/2,1/2\right] }\left\Vert \mathrm{e}^{u\beta
\left\langle \mathrm{A},h_{\alpha }\mathrm{A}\right\rangle }\left\langle
\mathrm{A},\left( h_{1}-h_{0}\right) \mathrm{A}\right\rangle \mathrm{e}%
^{-u\beta \left\langle \mathrm{A},h_{\alpha }\mathrm{A}\right\rangle
}\right\Vert _{\mathcal{U}},
\end{multline*}%
where
\begin{equation*}
h_{\alpha }\doteq \alpha h_{\mathcal{\{}\Lambda _{L_{\varrho }}\mathcal{\}}%
}^{(\omega )}+\left( 1-\alpha \right) h_{\{\Lambda _{L_{\varrho }}\backslash
\Lambda _{L},\Lambda _{L}\}}^{(\omega )},\qquad \alpha \in \left[ 0,1\right]
.
\end{equation*}%
By using estimates similar to (\ref{idiot4}), we get%
\begin{align}
\left\vert g_{\mathcal{\{}\Lambda _{L}\mathcal{\}},\mathcal{\{}\Lambda
_{L_{\varrho }}\mathcal{\}},\mathcal{\{}\Lambda _{L_{\tau }}\mathcal{\}}%
}^{(\omega ,\mathcal{E})}-g_{\mathcal{\{}\Lambda _{L}\mathcal{\}},\mathcal{\{%
}\Lambda _{L_{\varrho }}\backslash \Lambda _{L},\Lambda _{L}\mathcal{\}},%
\mathcal{\{}\Lambda _{L_{\tau }}\mathcal{\}}}^{(\omega ,\mathcal{E}%
)}\right\vert & \leq \frac{\beta \mathrm{e}^{\beta \left( \lambda +2d\right)
\left( 1+\vartheta \right) }}{|\Lambda _{L}|}\underset{x,y\in \mathbb{Z}^{d}}%
{\sum }\left\vert \left\langle \mathfrak{e}_{x},\left( h_{1}-h_{0}\right)
\mathfrak{e}_{y}\right\rangle _{\mathfrak{h}}\right\vert  \notag \\
& \leq 4d\left( 1+\vartheta \right) \beta \mathrm{e}^{\beta \left( \lambda
+2d\right) \left( 1+\vartheta \right) }\frac{1}{|\Lambda _{L}|}\sum_{z\in
\cup \mathcal{\partial }_{\Lambda _{L_{\varrho }}}(\Lambda _{L})}1.
\label{idiot5}
\end{align}%
See (\ref{X2}). Since
\begin{equation*}
\limsup_{L_{\varrho }\geq L\to \infty }\frac{1}{|\Lambda _{L}|}\sum_{z\in
\cup \mathcal{\partial }_{\Lambda _{L_{\varrho }}}(\Lambda _{L})}1=0,
\end{equation*}%
the assertion follows.
\end{proof}

\begin{lemma}[Box decomposition of generating functions - II]
\label{prop:equic copy(1)}\mbox{
}\newline
Fix $n\in \mathbb{N}$ and $\vartheta _{1},\lambda _{1}\in \mathbb{R}^{+}$.
Then,
\begin{equation*}
\lim_{l\to \infty }\limsup_{L_{\tau }\geq L_{\rho }\geq L\to \infty
}\left\vert g_{\mathcal{\{}\Lambda _{L}\mathcal{\}},\mathcal{\{}\Lambda
_{L_{\varrho }}\backslash \Lambda _{L},\Lambda _{L}\mathcal{\}},\mathcal{\{}%
\Lambda _{L_{\tau }}\mathcal{\}}}^{(\omega ,\mathcal{E})}-g_{\mathcal{\{}%
\Lambda _{L}\mathcal{\}},\mathcal{\{}\Lambda _{L_{\varrho }}\backslash
\Lambda _{L},\Lambda _{L}\mathcal{\}},\mathcal{Z}^{(\Lambda
_{L},l)}}^{(\omega ,\mathcal{E})}\right\vert =0,
\end{equation*}%
uniformly w.r.t. $\vartheta \in \left[ 0,\vartheta _{1}\right] $, $\lambda
\in \left[ 0,\lambda _{1}\right] $, $\omega \in \Omega $ and $\mathcal{E}$
in any bounded set of $C_{0}^{0}([-n,n];\mathbb{R}^{d})$.
\end{lemma}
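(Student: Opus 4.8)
The plan is to compare the two generating functions via the Bogoliubov-type inequality of Lemma \ref{lemma:suppor_non_int1}(ii), exploiting that they differ \emph{only} in the collection $\mathcal{Z}^{(\tau )}$ governing the dynamics inside $\mathfrak{K}$. Setting $C\doteq \exp (-\beta \langle \mathrm{A},h_{\{\Lambda _{L_{\varrho }}\backslash \Lambda _{L},\Lambda _{L}\}}^{(\omega )}\mathrm{A}\rangle )$, which is a strictly positive element of $\mathcal{U}$, both functions have the form $|\Lambda _{L}|^{-1}\ln \mathrm{tr}(C\,\mathrm{e}^{H})$ with the \emph{same} $C$; see \eqref{generating functions}. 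Applying Lemma \ref{lemma:suppor_non_int1}(ii) with the self-adjoint elements $H_{1}\doteq \mathfrak{K}_{\{\Lambda _{L}\},\{\Lambda _{L_{\tau }}\}}^{(\omega ,\mathcal{E})}$ and $H_{0}\doteq \mathfrak{K}_{\{\Lambda _{L}\},\mathcal{Z}^{(\Lambda _{L},l)}}^{(\omega ,\mathcal{E})}$, I would bound the quantity in the statement by $|\Lambda _{L}|^{-1}\sup_{\alpha ,u}\Vert \mathrm{e}^{uH_{\alpha }}(H_{1}-H_{0})\mathrm{e}^{-uH_{\alpha }}\Vert _{\mathcal{U}}$, where $H_{\alpha }=\alpha H_{1}+(1-\alpha )H_{0}$. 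Since $C$ cancels, this bound carries \emph{no} dependence on $\beta $, matching the uniformity claimed in the lemma.

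Next I would reduce the operator norm to a sum of one-particle matrix elements. By \eqref{bilinear idiot}, $H_{1}-H_{0}=\langle \mathrm{A},\Delta K\,\mathrm{A}\rangle $ is the bilinear element associated with $\Delta K\doteq K_{\{\Lambda _{L}\},\{\Lambda _{L_{\tau }}\}}^{(\omega ,\mathcal{E})}-K_{\{\Lambda _{L}\},\mathcal{Z}^{(\Lambda _{L},l)}}^{(\omega ,\mathcal{E})}$, while $H_{\alpha }=\langle \mathrm{A},K_{\alpha }\mathrm{A}\rangle $ with $K_{\alpha }\doteq \alpha K_{\{\Lambda _{L}\},\{\Lambda _{L_{\tau }}\}}^{(\omega ,\mathcal{E})}+(1-\alpha )K_{\{\Lambda _{L}\},\mathcal{Z}^{(\Lambda _{L},l)}}^{(\omega ,\mathcal{E})}$. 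Expanding $\langle \mathrm{A},\Delta K\,\mathrm{A}\rangle $ in the canonical basis (Definition \ref{def bilineqr}) and conjugating term by term through \eqref{dynamic bilinear}, exactly as in \eqref{idiot4}, yields $\Vert \mathrm{e}^{uH_{\alpha }}(H_{1}-H_{0})\mathrm{e}^{-uH_{\alpha }}\Vert _{\mathcal{U}}\leq \mathrm{e}^{\Vert K_{\alpha }\Vert _{\mathcal{B}(\mathfrak{h})}}\sum_{x,y\in \mathbb{Z}^{d}}|\langle \mathfrak{e}_{x},\Delta K\,\mathfrak{e}_{y}\rangle _{\mathfrak{h}}|$. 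The prefactor is uniformly controlled: by Lemma \ref{lemma:norm_estimate} and the first bound of Lemma \ref{def bilineqr copy(1)}, $\Vert K_{\alpha }\Vert _{\mathcal{B}(\mathfrak{h})}$ is bounded by a constant depending only on $\int_{\mathbb{R}}\Vert \mathcal{E}(\alpha )\Vert _{\mathbb{R}^{d}}\mathrm{e}^{2|\alpha \eta |}\mathrm{d}\alpha $, on $\eta $, and on $\vartheta _{1},d$, hence uniform over all stated parameters and over bounded $\mathcal{E}\subset C_{0}^{0}([-n,n];\mathbb{R}^{d})$.

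The core estimate is then Lemma \ref{def bilineqr copy(2)}, applied with $\Lambda =\Lambda _{L}$, $\tilde{\Lambda}=\Lambda _{L_{\tau }}$ and $\mathcal{Z}=\mathcal{Z}^{(\Lambda _{L},l)}$; this is legitimate since $\cup \mathcal{Z}^{(\Lambda _{L},l)}\subset \Lambda _{L}\subset \Lambda _{L_{\tau }}$ whenever $L_{\tau }\geq L$. It bounds $\sum_{x,y}|\langle \mathfrak{e}_{x},\Delta K\,\mathfrak{e}_{y}\rangle _{\mathfrak{h}}|$ by $D_{\ref{def bilineqr copy(2)}}(\int_{\mathbb{R}}\Vert \mathcal{E}(\alpha )\Vert _{\mathbb{R}^{d}}\alpha ^{2}\mathrm{e}^{2|\alpha \eta |}\mathrm{d}\alpha )$ times the geometric kernel $\sum_{x\in \Lambda _{L}}\sum_{z\in \Lambda _{L_{\tau }}\backslash \cup \mathcal{Z}^{(\Lambda _{L},l)}}\mathrm{e}^{-\mu _{\eta }|x-z|}+(\sum_{z}\mathrm{e}^{-\mu _{\eta }|z|})\,|\cup \mathcal{\partial }_{\Lambda _{L_{\tau }}}(\mathcal{Z}^{(\Lambda _{L},l)})|$. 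Combining the three steps, the quantity in the statement is at most a uniform constant times $|\Lambda _{L}|^{-1}$ times this kernel, the relevant integrals of $\mathcal{E}$ being uniformly finite on bounded sets of $C_{0}^{0}([-n,n];\mathbb{R}^{d})$ and $\mu _{\eta }$ being bounded below by a positive constant for $\vartheta \in [0,\vartheta _{1}]$.

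The remaining, and \textbf{main}, step is the geometric claim that $|\Lambda _{L}|^{-1}$ times the kernel vanishes in the iterated limit. For the first term I would split $\Lambda _{L}$ into the box interiors and the leftover set $\Lambda _{L}\backslash \cup \mathcal{Z}^{(\Lambda _{L},l)}$: for $x$ in a box $Z$, every $z$ in the complement satisfies $|x-z|\geq \mathrm{dist}(x,\mathbb{Z}^{d}\backslash Z)$, so the inner sum is $\leq C\,\mathrm{e}^{-\mu _{\eta }\mathrm{dist}(x,\mathbb{Z}^{d}\backslash Z)/2}$, and summing this surface-to-volume contribution over all boxes gives $\mathcal{O}(|\Lambda _{L}|/l)$; the leftover set has cardinality $\mathcal{O}(l\,L^{d-1})=o(|\Lambda _{L}|)$ as $L\to \infty $, each of its points contributing at most $\sum_{z}\mathrm{e}^{-\mu _{\eta }|z|}<\infty $. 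For the second term, the number of boundary edges $|\cup \mathcal{\partial }_{\Lambda _{L_{\tau }}}(\mathcal{Z}^{(\Lambda _{L},l)})|$ is $\mathcal{O}(|\Lambda _{L}|/l)$ by the same surface-to-volume count (and independent of $L_{\tau }$ once $L_{\tau }>L$). Hence taking first $L_{\tau }\geq L_{\varrho }\geq L\to \infty $ annihilates the leftover contribution, and then $l\to \infty $ annihilates the residual $\mathcal{O}(1/l)$ surface terms, yielding the claimed limit uniformly in $\vartheta \in [0,\vartheta _{1}]$, $\lambda \in [0,\lambda _{1}]$, $\omega \in \Omega $, and bounded $\mathcal{E}$.
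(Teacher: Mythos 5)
Your proposal is correct and follows essentially the same route as the paper's proof: the Bogoliubov-type inequality of Lemma \ref{lemma:suppor_non_int1}(ii) with the common positive factor $C$, the reduction to $\sum_{x,y}\left\vert \left\langle \mathfrak{e}_{x},\Delta K\,\mathfrak{e}_{y}\right\rangle _{\mathfrak{h}}\right\vert$ via (\ref{bilinear idiot}) and an estimate of type (\ref{idiot4}), the key decay estimate of Lemma \ref{def bilineqr copy(2)} applied with $\Lambda =\Lambda _{L}$, $\tilde{\Lambda}=\Lambda _{L_{\tau }}$, $\mathcal{Z}=\mathcal{Z}^{(\Lambda _{L},l)}$, uniform control of $\Vert K_{\alpha }\Vert _{\mathcal{B}(\mathfrak{h})}$ by Lemmata \ref{lemma:norm_estimate} and \ref{def bilineqr copy(1)}, and finally the surface-to-volume limits. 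The only (harmless) difference is in the first geometric term: since the tiles $\Lambda _{l}+(2l+1)x$ leave no gaps in the bulk, the set $\Lambda _{L_{\tau }}\backslash \cup \mathcal{Z}^{(\Lambda _{L},l)}$ lies within distance $\mathcal{O}(l)$ of the complement of $\Lambda _{L}$, so that term is in fact $\mathcal{O}(l/L)$ and vanishes already as $L\to \infty $ at fixed $l$ (as the paper asserts), whereas your cruder per-box bound yields $\mathcal{O}(1/l)+\mathcal{O}(l/L)$ --- still sufficient for the iterated limit $\lim_{l\to \infty }\limsup_{L\to \infty }$.
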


\begin{proof}
Fix all parameters of the lemma, in particular $L_{\tau }\geq L_{\rho }\geq
L\geq l$, $\omega \in \Omega $ and $\lambda \in \left[ 0,\lambda _{1}\right]
$. By Lemma \ref{lemma:suppor_non_int1} (ii) and (\ref{bilinear idiot}),%
\begin{eqnarray*}
&&\left\vert g_{\mathcal{\{}\Lambda _{L}\mathcal{\}},\mathcal{\{}\Lambda
_{L_{\varrho }}\backslash \Lambda _{L}\mathcal{\}},\mathcal{\{}\Lambda
_{L_{\tau }}\mathcal{\}}}^{(\omega ,\mathcal{E})}-g_{\mathcal{\{}\Lambda _{L}%
\mathcal{\}},\mathcal{\{}\Lambda _{L_{\varrho }}\backslash ,\Lambda _{L}%
\mathcal{\}},\mathcal{Z}^{(\Lambda _{L},l)}}^{(\omega ,\mathcal{E}%
)}\right\vert \\
&\leq &\frac{1}{|\Lambda _{L}|}\sup_{\alpha \in \left[ 0,1\right]
}\sup_{u\in \left[ -1/2,1/2\right] }\left\Vert \mathrm{e}^{u\left\langle
\mathrm{A},K_{\alpha }\mathrm{A}\right\rangle }\left\langle \mathrm{A}%
,\left( K_{1}-K_{0}\right) \mathrm{A}\right\rangle \mathrm{e}%
^{-u\left\langle \mathrm{A},K_{\alpha }\mathrm{A}\right\rangle }\right\Vert
_{\mathcal{U}},
\end{eqnarray*}%
where%
\begin{equation*}
K_{\alpha }\doteq \alpha K_{\mathcal{\{}\Lambda _{L}\mathcal{\}},\mathcal{\{}%
\Lambda _{L_{\tau }}\mathcal{\}}}^{(\omega ,\mathcal{E})}+\left( 1-\alpha
\right) K_{\mathcal{\{}\Lambda _{L}\mathcal{\}},\mathcal{Z}^{(\Lambda
_{L},l)}}^{(\omega ,\mathcal{E})},\qquad \alpha \in \left[ 0,1\right] .
\end{equation*}%
Like in the proof of Lemma \ref{prop:equic copy(4)}, by (\ref{bilinear idiot}%
) and Lemma \ref{def bilineqr copy(2)},%
\begin{eqnarray}
&&\left\vert g_{\mathcal{\{}\Lambda _{L}\mathcal{\}},\mathcal{\{}\Lambda
_{L_{\varrho }}\backslash \Lambda _{L}\mathcal{\}},\mathcal{\{}\Lambda
_{L_{\tau }}\mathcal{\}}}^{(\omega ,\mathcal{E})}-g_{\mathcal{\{}\Lambda _{L}%
\mathcal{\}},\mathcal{\{}\Lambda _{L_{\varrho }}\backslash ,\Lambda _{L}%
\mathcal{\}},\mathcal{Z}^{(\Lambda _{L},l)}}^{(\omega ,\mathcal{E}%
)}\right\vert  \label{totottotot} \\
&\leq &D_{\ref{def bilineqr copy(2)}}\left( \int_{\mathbb{R}}\left\Vert
\mathcal{E}\left( \alpha \right) \right\Vert _{\mathbb{R}^{d}}\alpha ^{2}%
\mathrm{e}^{2\left\vert \alpha \eta \right\vert }\mathrm{d}\alpha \right)
\mathrm{e}^{\sup_{\alpha \in \left[ 0,1\right] }\left\Vert K_{\alpha
}\right\Vert _{\mathcal{B}(\mathfrak{h})}}  \notag \\
&&\times \frac{1}{|\Lambda _{L}|}\left( \sum\limits_{x\in \Lambda
_{L}}\sum_{z\in \Lambda _{L_{\tau }}\backslash \cup \mathcal{Z}^{(\Lambda
_{L},l)}}\mathrm{e}^{-\frac{\mu _{\eta }}{2}|x-z|}+\left( \sum_{z\in \mathbb{%
Z}^{d}}\mathrm{e}^{-\frac{\mu _{\eta }}{2}|z|}\right) \sum_{x\in \cup
\mathcal{\partial }_{\Lambda _{L_{\tau }}}(\mathcal{Z}^{(\Lambda
_{L},l)})}1\right) .  \notag
\end{eqnarray}%
By Lemmata \ref{lemma:norm_estimate} and \ref{def bilineqr copy(1)}, for any
$n\in \mathbb{N}$, observe that the operator norms of $K_{\alpha }$ is
uniformly bounded for $\alpha \in \left[ 0,1\right] $, $\vartheta \in
\lbrack 0,\vartheta _{1}]$, $\lambda \in \mathbb{R}_{0}^{+}$, $\omega \in
\Omega $, $L,L_{\tau },l\in \mathbb{N}$ and $\mathcal{E}$ in any bounded set
of $C_{0}^{0}([-n,n];\mathbb{R}^{d})$. Note additionally that
\begin{equation*}
\limsup_{L_{\tau }\geq L\to \infty }\frac{1}{|\Lambda _{L}|}%
\sum\limits_{x\in \Lambda _{L}}\sum_{z\in \Lambda _{L_{\tau }}\backslash
\cup \mathcal{Z}^{(\Lambda _{L},l)}}\mathrm{e}^{-\frac{\mu _{\eta }}{2}%
|x-z|}=0,
\end{equation*}%
whereas
\begin{equation*}
\limsup_{L_{\tau }\geq L\to \infty }\frac{1}{|\Lambda _{L}|}\sum_{x\in \cup
\mathcal{\partial }_{\Lambda _{L_{\tau }}}(\mathcal{Z}^{(\Lambda _{L},l)})}1=%
\mathcal{O}\left( l^{-1}\right) .
\end{equation*}%
From these last observations combined with (\ref{totottotot}), the assertion
follows.
\end{proof}

\begin{lemma}[Box decomposition of generating functions - III]
\label{prop:equic copy(2)}\mbox{
}\newline
Fix $\beta _{1},\vartheta _{1},\lambda _{1}\in \mathbb{R}^{+}$. Then,%
\begin{equation*}
\lim_{l\to \infty }\limsup_{L_{\tau }\geq L_{\varrho }\geq L\to \infty
}\left\vert g_{\mathcal{\{}\Lambda _{L}\mathcal{\}},\mathcal{\{}\Lambda
_{L_{\varrho }}\backslash \Lambda _{L},\Lambda _{L}\mathcal{\}},\mathcal{Z}%
^{(\Lambda _{L},l)}}^{(\omega ,\mathcal{E})}-g_{\mathcal{\{}\Lambda _{L}%
\mathcal{\}},\{\Lambda _{L_{\varrho }}\backslash \Lambda _{L}\}\cup \mathcal{%
Z}^{(\Lambda _{L},l)},\mathcal{Z}^{(\Lambda _{L},l)}}^{(\omega ,\mathcal{E}%
)}\right\vert =0,
\end{equation*}%
uniformly w.r.t. $\beta \in \left[ 0,\beta _{1}\right] $, $\vartheta \in %
\left[ 0,\vartheta _{1}\right] $, $\lambda \in \left[ 0,\lambda _{1}\right] $%
, $\omega \in \Omega $ and $\mathcal{E}\in C_{0}^{0}(\mathbb{R};\mathbb{R}%
^{d})$.
\end{lemma}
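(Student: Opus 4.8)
The plan is to transcribe, almost verbatim, the argument used for Lemma \ref{prop:equic copy(4)}, exploiting the fact that the two generating functions being compared carry the \emph{same} collections $\mathcal{Z}=\{\Lambda_L\}$ and $\mathcal{Z}^{(\tau)}=\mathcal{Z}^{(\Lambda_L,l)}$, hence the same current factor $\mathfrak{K}_{\{\Lambda_L\},\mathcal{Z}^{(\Lambda_L,l)}}^{(\omega,\mathcal{E})}$, and differ only through the collection $\mathcal{Z}^{(\varrho)}$ entering the thermal weight. First I would use cyclicity of the normalized trace to write both functions — call them $g_0$ and $g_1$, respectively — as $g_i=|\Lambda_L|^{-1}\ln\mathrm{tr}(C\mathrm{e}^{H_i})$ with the common, strictly positive element $C\doteq\mathrm{e}^{\mathfrak{K}_{\{\Lambda_L\},\mathcal{Z}^{(\Lambda_L,l)}}^{(\omega,\mathcal{E})}}$ (here $\mathfrak{K}$ is self-adjoint, being a sum of terms $i[\cdot,\cdot]$ and $\Re\mathrm{e}\{\cdot\}$, cf.\ (\ref{selfadjoint bilinear})) and $H_i\doteq-\beta\langle\mathrm{A},h_i\mathrm{A}\rangle$, where $h_0\doteq h_{\{\Lambda_{L_\varrho}\backslash\Lambda_L,\Lambda_L\}}^{(\omega)}$ and $h_1\doteq h_{\{\Lambda_{L_\varrho}\backslash\Lambda_L\}\cup\mathcal{Z}^{(\Lambda_L,l)}}^{(\omega)}$.

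Next I would apply the Bogoliubov-type inequality of Lemma \ref{lemma:suppor_non_int1} (ii), which yields
\begin{equation*}
\left\vert g_0-g_1\right\vert\leq\frac{\beta}{\left\vert\Lambda_L\right\vert}\sup_{\alpha\in[0,1]}\sup_{u\in[-1/2,1/2]}\left\Vert\mathrm{e}^{u\beta\langle\mathrm{A},h_\alpha\mathrm{A}\rangle}\langle\mathrm{A},(h_1-h_0)\mathrm{A}\rangle\mathrm{e}^{-u\beta\langle\mathrm{A},h_\alpha\mathrm{A}\rangle}\right\Vert_{\mathcal{U}},\qquad h_\alpha\doteq\alpha h_1+(1-\alpha)h_0.
\end{equation*}
The decisive point is that, since the shell $\Lambda_{L_\varrho}\backslash\Lambda_L$ is treated identically in $h_0$ and $h_1$, the difference collapses, exactly as in (\ref{X2}), to
\begin{equation*}
h_1-h_0=\sum_{Z\in\mathcal{Z}^{(\Lambda_L,l)}}P_Zh^{(\omega)}P_Z-P_{\Lambda_L}h^{(\omega)}P_{\Lambda_L},
\end{equation*}
which is supported only on the hopping edges across the intra-box boundary $\cup\mathcal{\partial}_{\Lambda_L}(\mathcal{Z}^{(\Lambda_L,l)})$ and on the (hopping and potential) sites of the uncovered region $\Lambda_L\backslash\cup\mathcal{Z}^{(\Lambda_L,l)}$; in particular it depends on neither $\mathcal{E}$, nor $L_\varrho$, nor $L_\tau$. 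Expanding the bilinear element over the canonical basis (Definition \ref{def bilineqr}), bounding the conjugation by $\mathrm{e}^{u\beta\langle\mathrm{A},h_\alpha\mathrm{A}\rangle}$ through (\ref{dynamic bilinear}) together with the uniform control on $\Vert h_\alpha\Vert_{\mathcal{B}(\mathfrak{h})}$ precisely as for (\ref{idiot4}), and invoking Lemma \ref{lemma:norm_estimate}, I arrive at the exact analogue of (\ref{idiot5}):
\begin{equation*}
\left\vert g_0-g_1\right\vert\leq\frac{\mathbf{c}}{\left\vert\Lambda_L\right\vert}\underset{z\in(\Lambda_L\backslash\cup\mathcal{Z}^{(\Lambda_L,l)})\cup(\cup\mathcal{\partial}_{\Lambda_L}(\mathcal{Z}^{(\Lambda_L,l)}))}{\sum}1,
\end{equation*}
where $\mathbf{c}\doteq(4d(1+\vartheta)+\lambda)\beta\mathrm{e}^{\beta(\lambda+2d)(1+\vartheta)}$ is uniformly bounded for $\beta\in[0,\beta_1]$, $\vartheta\in[0,\vartheta_1]$, $\lambda\in[0,\lambda_1]$ and all $\omega\in\Omega$. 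The complete absence of $\mathcal{E}$, $L_\varrho$ and $L_\tau$ on the right-hand side is exactly what delivers the claimed uniformity over all $\mathcal{E}\in C_0^0(\mathbb{R};\mathbb{R}^d)$.

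It then remains to estimate this purely geometric sum. Because the boxes $\{\Lambda_l+(2l+1)x\}_{x\in\mathbb{Z}^d}$ tile $\mathbb{Z}^d$, the uncovered region $\Lambda_L\backslash\cup\mathcal{Z}^{(\Lambda_L,l)}$ is a boundary shell of thickness $\mathcal{O}(l)$ and cardinality $\mathcal{O}(lL^{d-1})$, so its contribution is $\mathcal{O}(l/L)$ and vanishes after $\limsup_{L\to\infty}$; the intra-box boundary $\cup\mathcal{\partial}_{\Lambda_L}(\mathcal{Z}^{(\Lambda_L,l)})$ has cardinality $\mathcal{O}(L^d/l)$ and contributes $\mathcal{O}(1/l)$, which is the very estimate $\limsup_{L_\tau\geq L\to\infty}|\Lambda_L|^{-1}\sum_{x\in\cup\mathcal{\partial}_{\Lambda_{L_\tau}}(\mathcal{Z}^{(\Lambda_L,l)})}1=\mathcal{O}(l^{-1})$ already used in the proof of Lemma \ref{prop:equic copy(1)}. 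Consequently $\limsup_{L_\tau\geq L_\varrho\geq L\to\infty}|g_0-g_1|=\mathcal{O}(1/l)$, and the outer $\lim_{l\to\infty}$ gives the claim. The only substantive step is the box-boundary count, but it is identical to one carried out in the preceding lemma, so no genuinely new difficulty arises; the remainder is a direct transcription of the proof of Lemma \ref{prop:equic copy(4)}, with the roles of the two thermal-weight collections adapted to the present splitting of $\Lambda_L$.
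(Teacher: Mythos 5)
Your proposal is correct and takes essentially the same approach as the paper's proof, which likewise applies the Bogoliubov-type inequality of Lemma \ref{lemma:suppor_non_int1} (ii) with the common factor $\exp (\mathfrak{K}_{\{\Lambda _{L}\},\mathcal{Z}^{(\Lambda _{L},l)}}^{(\omega ,\mathcal{E})})$, observes that the difference of the two thermal one-particle Hamiltonians is supported on $\Lambda _{L}\backslash \cup \mathcal{Z}^{(\Lambda _{L},l)}$ and on the intra-box boundary edges $\cup \mathcal{\partial }_{\Lambda _{L}}(\mathcal{Z}^{(\Lambda _{L},l)})$ (cf. (\ref{X2})), and concludes via an $\mathcal{E}$-independent bound analogous to (\ref{idiot5}) together with the same $\mathcal{O}(l^{-1})$ counting estimate.
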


\begin{proof}
This lemma is proven exactly in the same way as Lemmata \ref{prop:equic
copy(4)} and \ref{prop:equic copy(1)}: Fix all parameters of the lemma and
observe that
\begin{eqnarray*}
&&\left\vert \left\langle \mathfrak{e}_{x},\left( h_{\mathcal{\{}\Lambda
_{L_{\varrho }}\backslash \Lambda _{L},\Lambda _{L}\mathcal{\}}}^{(\omega
)}-h_{\{\Lambda _{L_{\varrho }}\backslash \Lambda _{L}\}\cup \mathcal{Z}%
^{(\Lambda _{L},l)}}^{(\omega )}\right) \mathfrak{e}_{y}\right\rangle _{%
\mathfrak{h}}\right\vert \\
&\leq &\left( 1+\vartheta \right) \sum_{z_{3},z_{4}\in \Lambda
_{L}\backslash \cup \mathcal{Z}^{(\Lambda _{L},l)}\ \colon \ \left\vert
z_{3}-z_{4}\right\vert =1}\delta _{z_{3},y}\delta _{z_{4},x}+\lambda
\sum_{z_{3}\in \Lambda _{L}\backslash \cup \mathcal{Z}^{(\Lambda
_{L},l)}}\delta _{z_{3},x}\delta _{z_{3},y} \\
&&+\left( 1+\vartheta \right) \sum_{Z\in \mathcal{Z}^{(\Lambda
_{L},l)}}\sum_{\left\{ z_{3},z_{4}\right\} \in \mathcal{\partial }_{\Lambda
_{L}}(Z)}\left( \delta _{z_{3},y}\delta _{z_{4},x}+\delta _{z_{4},y}\delta
_{z_{3},x}\right) .
\end{eqnarray*}%
See (\ref{X2}). Then, similar to (\ref{idiot5}), we get the bound
\begin{multline*}
\left\vert g_{\mathcal{\{}\Lambda _{L}\mathcal{\}},\mathcal{\{}\Lambda
_{L_{\varrho }}\mathcal{\}},\mathcal{\{}\Lambda _{L_{\tau }}\mathcal{\}}%
}^{(\omega ,\mathcal{E})}-g_{\mathcal{\{}\Lambda _{L}\mathcal{\}},\mathcal{\{%
}\Lambda _{L_{\varrho }}\backslash \Lambda _{L},\Lambda _{L}\mathcal{\}},%
\mathcal{\{}\Lambda _{L_{\tau }}\mathcal{\}}}^{(\omega ,\mathcal{E}%
)}\right\vert \\
\leq \left( 4d+\lambda \right) \left( 1+\vartheta \right) \beta \mathrm{e}%
^{\beta \left( \lambda +2d\right) \left( 1+\vartheta \right) }\frac{1}{%
|\Lambda _{L}|}\left( \sum_{z\in \Lambda _{L}\backslash \cup \mathcal{Z}%
^{(\Lambda _{L},l)}}1+\sum_{Z\in \mathcal{Z}^{(\Lambda _{L},l)}}\sum_{z\in
\cup \mathcal{\partial }_{\Lambda _{L}}(Z)}1\right) ,
\end{multline*}%
where
\begin{equation*}
\limsup_{L\to \infty }\frac{1}{|\Lambda _{L}|}\left( \sum_{z\in \Lambda
_{L}\backslash \cup \mathcal{Z}^{(\Lambda _{L},l)}}1+\sum_{Z\in \mathcal{Z}%
^{(\Lambda _{L},l)}}\sum_{z\in \cup \mathcal{\partial }_{\Lambda
_{L}}(Z)}1\right) =\mathcal{O}\left( l^{-1}\right) .
\end{equation*}
\end{proof}

\begin{lemma}[Box decomposition of generating functions - IV]
\label{prop:equic copy(3)}\mbox{
}\newline
Fix $n\in \mathbb{N}$ and $\vartheta _{1}\in \mathbb{R}^{+}$. Then,%
\begin{equation*}
\lim_{l\to \infty }\limsup_{L_{\tau }\geq L_{\varrho }\geq L\to \infty
}\left\vert g_{\mathcal{\{}\Lambda _{L}\mathcal{\}},\{\Lambda _{L_{\varrho
}}\backslash \Lambda _{L}\}\cup \mathcal{Z}^{(\Lambda _{L},l)},\mathcal{Z}%
^{(\Lambda _{L},l)}}^{(\omega ,\mathcal{E})}-g_{\mathcal{Z}^{(\Lambda
_{L},l)},\{\Lambda _{L_{\varrho }}\backslash \Lambda _{L}\}\cup \mathcal{Z}%
^{(\Lambda _{L},l)},\mathcal{Z}^{(\Lambda _{L},l)}}^{(\omega ,\mathcal{E}%
)}\right\vert =0,
\end{equation*}%
uniformly w.r.t. $\beta \in \mathbb{R}^{+}$, $\vartheta \in \left[
0,\vartheta _{1}\right] $, $\lambda \in \mathbb{R}_{0}^{+}$, $\omega \in
\Omega $ and $\mathcal{E}$ in any bounded set of $C_{0}^{0}([-n,n];\mathbb{R}%
^{d})$.
\end{lemma}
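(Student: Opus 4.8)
The essential new feature compared with Lemmata \ref{prop:equic copy(4)}--\ref{prop:equic copy(2)} is that here the \emph{first} subscript changes, so that both the current operator $\mathfrak{K}$ and the normalization $|\cup\mathcal{Z}|^{-1}$ are altered at once. Writing $\mathcal{Z}^{(\varrho)}\doteq\{\Lambda_{L_\varrho}\backslash\Lambda_L\}\cup\mathcal{Z}^{(\Lambda_L,l)}$ for the common second subscript, the plan is to interpolate through
\begin{equation*}
\tilde{g}\doteq\frac{1}{|\Lambda_L|}\ln\mathrm{tr}\big(\mathrm{e}^{-\beta\langle\mathrm{A},h_{\mathcal{Z}^{(\varrho)}}^{(\omega)}\mathrm{A}\rangle}\,\mathrm{e}^{\mathfrak{K}_{\mathcal{Z}^{(\Lambda_L,l)},\mathcal{Z}^{(\Lambda_L,l)}}^{(\omega,\mathcal{E})}}\big),
\end{equation*}
which already carries the box-restricted operator $\mathfrak{K}_{\mathcal{Z}^{(\Lambda_L,l)},\mathcal{Z}^{(\Lambda_L,l)}}^{(\omega,\mathcal{E})}$ but retains the old normalization $|\Lambda_L|^{-1}$, and to bound separately the change of operator $g_{\{\Lambda_L\},\mathcal{Z}^{(\varrho)},\mathcal{Z}^{(\Lambda_L,l)}}^{(\omega,\mathcal{E})}-\tilde{g}$ and the change of normalization $\tilde{g}-g_{\mathcal{Z}^{(\Lambda_L,l)},\mathcal{Z}^{(\varrho)},\mathcal{Z}^{(\Lambda_L,l)}}^{(\omega,\mathcal{E})}$.

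For the first difference I would repeat the argument of Lemma \ref{prop:equic copy(1)} essentially verbatim: apply the Bogoliubov-type inequality Lemma \ref{lemma:suppor_non_int1}(ii) with positive weight $C=\mathrm{e}^{-\beta\langle\mathrm{A},h_{\mathcal{Z}^{(\varrho)}}^{(\omega)}\mathrm{A}\rangle}$ and the two bilinear elements associated with $K_{\{\Lambda_L\},\mathcal{Z}^{(\Lambda_L,l)}}^{(\omega,\mathcal{E})}$ and $K_{\mathcal{Z}^{(\Lambda_L,l)},\mathcal{Z}^{(\Lambda_L,l)}}^{(\omega,\mathcal{E})}$, then use (\ref{dynamic bilinear})--(\ref{idiot4}) together with Lemma \ref{lemma:norm_estimate} and the first estimate of Lemma \ref{def bilineqr copy(1)} (which makes $\sup_{\alpha}\Vert K_\alpha\Vert_{\mathcal{B}(\mathfrak{h})}$ finite, uniformly in the relevant parameters) to reduce matters to
\begin{equation*}
\frac{1}{|\Lambda_L|}\sum_{x,y\in\mathbb{Z}^d}\big|\big\langle\mathfrak{e}_x,\big(K_{\{\Lambda_L\},\mathcal{Z}^{(\Lambda_L,l)}}^{(\omega,\mathcal{E})}-K_{\mathcal{Z}^{(\Lambda_L,l)},\mathcal{Z}^{(\Lambda_L,l)}}^{(\omega,\mathcal{E})}\big)\mathfrak{e}_y\big\rangle_{\mathfrak{h}}\big|.
\end{equation*}
Lemma \ref{def bilineqr copy(3)}, applied with $\Lambda=\Lambda_L$ and $\mathcal{Z}=\mathcal{Z}_\tau=\mathcal{Z}^{(\Lambda_L,l)}$, bounds this by a constant multiple of $|\Lambda_L|^{-1}\sum_{z\in(\Lambda_L\backslash\cup\mathcal{Z}^{(\Lambda_L,l)})\cup(\cup\mathcal{\partial}_{\Lambda_L}(\mathcal{Z}^{(\Lambda_L,l)}))}1$, which, exactly as in Lemmata \ref{prop:equic copy(1)}--\ref{prop:equic copy(2)}, is $\mathcal{O}(l^{-1})$ as $L\to\infty$ and so vanishes as $l\to\infty$. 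Since the weight $C$ never enters the right-hand side of the Bogoliubov inequality, this bound is uniform in $\beta\in\mathbb{R}^+$.

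For the change of normalization I would exploit that $h_{\mathcal{Z}^{(\varrho)}}^{(\omega)}$ and the dynamics generator $h_{\mathcal{Z}^{(\Lambda_L,l)}}^{(\omega)}$ are both block-diagonal for the partition of $\mathbb{Z}^d$ into the annulus $\Lambda_{L_\varrho}\backslash\Lambda_L$, the boxes $Z\in\mathcal{Z}^{(\Lambda_L,l)}$, and the leftover sites, while $\mathfrak{K}_{\mathcal{Z}^{(\Lambda_L,l)},\mathcal{Z}^{(\Lambda_L,l)}}^{(\omega,\mathcal{E})}=\sum_{Z\in\mathcal{Z}^{(\Lambda_L,l)}}\mathfrak{K}_{\{Z\},\{Z\}}^{(\omega,\mathcal{E})}$ is a sum of mutually commuting terms, each supported on a single box. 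The tracial state then factorizes and $\ln\mathrm{tr}(\cdots)$ splits into an annular free energy plus a sum over boxes. The annular term is independent of $\mathcal{E}$ (the field acts only inside $\Lambda_L$) and coincides with the one at $\mathcal{E}=0$; hence it drops out of the only quantity for which this estimate is ultimately invoked, the generating function $\mathrm{J}=g^{(\mathcal{E})}-g^{(0)}$ of (\ref{generating functions0}). The surviving box part is extensive, of size $\mathcal{O}(|\cup\mathcal{Z}^{(\Lambda_L,l)}|)=\mathcal{O}(|\Lambda_L|)$ uniformly in $\beta$ (bound each box contribution by $\Vert\mathfrak{K}_{\{Z\},\{Z\}}^{(\omega,\mathcal{E})}\Vert_{\mathcal{U}}$ via Lemma \ref{lemma:suppor_non_int1}(ii), using that a bilinear element commutes with its own exponential, and then by $\mathcal{O}(|Z|)$ via Lemma \ref{def bilineqr copy(1)}); multiplied by $|\Lambda_L|^{-1}-|\cup\mathcal{Z}^{(\Lambda_L,l)}|^{-1}=\mathcal{O}(l^{-1}|\Lambda_L|^{-1})$ this yields an $\mathcal{O}(l^{-1})$ contribution.

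The main obstacle is precisely this last step. In Lemmata \ref{prop:equic copy(4)}--\ref{prop:equic copy(2)} the normalization was held fixed, so the (volume-proportional) bulk free energy cancelled automatically inside each single difference; here it does not, and one must instead use its independence of $\mathcal{E}$, so that it cancels in the combination $\mathrm{J}=g^{(\mathcal{E})}-g^{(0)}$, while keeping every remaining estimate uniform in $\beta\in\mathbb{R}^+$ through the Combes-Thomas decay (\ref{Combes-ThomasCombes-Thomas}).
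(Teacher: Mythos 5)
Your first difference is, essentially word for word, the paper's entire proof of this lemma: the paper applies Lemma \ref{lemma:suppor_non_int1} (ii) and (\ref{bilinear idiot}) to the interpolation $K_{\alpha }\doteq \alpha K_{\{\Lambda _{L}\},\mathcal{Z}^{(\Lambda _{L},l)}}^{(\omega ,\mathcal{E})}+(1-\alpha )K_{\mathcal{Z}^{(\Lambda _{L},l)},\mathcal{Z}^{(\Lambda _{L},l)}}^{(\omega ,\mathcal{E})}$, pulls out the prefactor $\left\vert \Lambda _{L}\right\vert ^{-1}$, and concludes by Lemmata \ref{lemma:norm_estimate}, \ref{def bilineqr copy(1)} and \ref{def bilineqr copy(3)}; the uniformity in $\beta \in \mathbb{R}^{+}$ comes, exactly as you observe, from the fact that the weight $C$ never appears on the right-hand side of the Bogoliubov bound.

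What you add has no counterpart in the paper, and it addresses a real point. The paper's displayed estimate tacitly gives both generating functions the common normalization $\left\vert \Lambda _{L}\right\vert ^{-1}$, whereas by (\ref{generating functions}) the second one carries $\left\vert \cup \mathcal{Z}^{(\Lambda _{L},l)}\right\vert ^{-1}$; the proof never returns to this discrepancy. Your diagnosis of why it is delicate is accurate: the mismatch $\left\vert \Lambda _{L}\right\vert ^{-1}-\left\vert \cup \mathcal{Z}^{(\Lambda _{L},l)}\right\vert ^{-1}$ multiplies the full $\ln \mathrm{tr}\left( \cdot \right) $, whose $\mathcal{E}$-independent part contains the free energy of the region $\Lambda _{L_{\varrho }}\backslash \Lambda _{L}$, of order $\left\vert \Lambda _{L_{\varrho }}\backslash \Lambda _{L}\right\vert $, and since the limsup allows $L_{\varrho }/L\rightarrow \infty $ this product is not small. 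Hence the statement literally at the level of the $g$'s is not reachable by the interpolation alone; it is only after all $\mathcal{E}$-independent free energies cancel in $\mathrm{J}=g^{(\mathcal{E})}-g^{(0)}$ of (\ref{generating functions0}) that the surviving normalization-mismatch term is controlled by $\sum_{Z}\Vert \mathfrak{K}_{\{Z\},\{Z\}}^{(\omega ,\mathcal{E})}\Vert _{\mathcal{U}}=\mathcal{O}(\left\vert \Lambda _{L}\right\vert )$, uniformly in $\beta $, via your commuting-exponential argument. (Your bookkeeping of the mismatch as $\mathcal{O}(l^{-1}\left\vert \Lambda _{L}\right\vert ^{-1})$ is a slight overestimate; it is $\mathcal{O}(lL^{-1}\left\vert \Lambda _{L}\right\vert ^{-1})$, which is even better.) So your proof establishes the lemma in the form in which it is actually invoked downstream, namely through (\ref{idiot encore}) for the functions $\mathrm{J}$, and this is also the only reading under which the paper's own one-line argument is complete. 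In short: the key mechanism (Bogoliubov inequality plus Lemma \ref{def bilineqr copy(3)}) is the same, but your explicit treatment of the change of normalization, via factorization of the tracial state and cancellation of $\mathcal{E}$-independent terms, repairs a step the paper passes over in silence.
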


\begin{proof}
Fix all parameters of the lemma. Then, like for previous lemmata, we use
again Lemma \ref{lemma:suppor_non_int1} (ii) and (\ref{bilinear idiot}) to
obtain the bound%
\begin{multline*}
\left\vert g_{\mathcal{\{}\Lambda _{L}\mathcal{\}},\{\Lambda _{L_{\varrho
}}\backslash \Lambda _{L}\}\cup \mathcal{Z}^{(\Lambda _{L},l)},\mathcal{Z}%
^{(\Lambda _{L},l)}}^{(\omega ,\mathcal{E})}-g_{\mathcal{Z}^{(\Lambda
_{L},l)},\{\Lambda _{L_{\varrho }}\backslash \Lambda _{L}\}\cup \mathcal{Z}%
^{(\Lambda _{L},l)},\mathcal{Z}^{(\Lambda _{L},l)}}^{(\omega ,\mathcal{E}%
)}\right\vert \\
\leq \frac{1}{|\Lambda _{L}|}\sup_{\alpha \in \left[ 0,1\right] }\sup_{u\in %
\left[ -1/2,1/2\right] }\left\Vert \mathrm{e}^{u\left\langle \mathrm{A}%
,K_{\alpha }\mathrm{A}\right\rangle }\left\langle \mathrm{A},\left(
K_{1}-K_{0}\right) \mathrm{A}\right\rangle \mathrm{e}^{-u\left\langle
\mathrm{A},K_{\alpha }\mathrm{A}\right\rangle }\right\Vert _{\mathcal{U}},
\end{multline*}%
where
\begin{equation*}
K_{\alpha }\doteq \alpha K_{\{\Lambda _{L}\mathcal{\}},\mathcal{Z}^{(\Lambda
_{L},l)}}^{(\omega ,\mathcal{E})}+\left( 1-\alpha \right) K_{\mathcal{Z}%
^{(\Lambda _{L},l)},\mathcal{Z}^{(\Lambda _{L},l)}}^{(\omega ,\mathcal{E}%
)},\qquad \alpha \in \left[ 0,1\right] .
\end{equation*}%
Therefore, by Lemmata \ref{lemma:norm_estimate}, \ref{def bilineqr copy(1)}
and \ref{def bilineqr copy(3)}, the assertion follows.
\end{proof}

\noindent We are now in a position to prove Proposition \ref{Main
proposition}:\bigskip

\begin{proof}
Fix all parameters of Proposition \ref{Main proposition}. By\ Lemmata \ref%
{prop:equic copy(4)}-\ref{prop:equic copy(3)},%
\begin{equation}
\limsup_{L_{\tau }\geq L_{\varrho }\geq L\rightarrow \infty }\left\vert
\mathrm{J}_{\mathcal{\{}\Lambda _{L}\mathcal{\}},\mathcal{\{}\Lambda
_{L_{\varrho }}\mathcal{\}},\mathcal{\{}\Lambda _{L_{\tau }}\mathcal{\}}%
}^{(\omega ,\mathcal{E})}-\mathrm{J}_{\mathcal{Z}^{(\Lambda
_{L},l)},\{\Lambda _{L_{\varrho }}\backslash \Lambda _{L}\}\cup \mathcal{Z}%
^{(\Lambda _{L},l)},\mathcal{Z}^{(\Lambda _{L},l)}}^{(\omega ,\mathcal{E}%
)}\right\vert =0,  \label{idiot encore}
\end{equation}%
uniformly w.r.t. $\beta \in \left[ 0,\beta _{1}\right] $, $\vartheta \in %
\left[ 0,\vartheta _{1}\right] $, $\lambda \in \left[ 0,\lambda _{1}\right] $%
, $\omega \in \Omega $ and $\mathcal{E}$ in any bounded set of $%
C_{0}^{0}([-n,n];\mathbb{R}^{d})$. To conclude the proof, observe that%
\begin{equation}
\mathrm{J}_{\mathcal{Z}^{(\Lambda _{L},l)},\{\Lambda _{L_{\varrho
}}\backslash \Lambda _{L}\}\cup \mathcal{Z}^{(\Lambda _{L},l)},\mathcal{Z}%
^{(\Lambda _{L},l)}}^{(\omega ,\mathcal{E})}=\mathrm{J}_{\mathcal{Z}%
^{(\Lambda _{L},l)},\mathcal{Z}^{(\Lambda _{L},l)},\mathcal{Z}^{(\Lambda
_{L},l)}}^{(\omega ,\mathcal{E})}=\frac{1}{\left\vert \mathcal{Z}^{(\Lambda
_{L},l)}\right\vert }\sum_{Z\in \mathcal{Z}^{(\Lambda _{L},l)}}\mathrm{J}_{%
\mathcal{\{}Z\mathcal{\}},\mathcal{\{}Z\mathcal{\}},\mathcal{\{}Z\mathcal{\}}%
}^{(\omega ,\mathcal{E})}.  \label{punchline0}
\end{equation}%
This follows from the fact that the tracial state $\mathrm{tr}\in \mathcal{U}%
^{\ast }$ is a product of single-site states. See, e.g., \cite{Araki-Moriya}.
\end{proof}

\subsection{Akcoglu-Krengel Ergodic Theorem and Existence of Generating
Functions\label{Sectino tech5}}

For convenience, we shortly recall the Akcoglu-Krengel ergodic theorem. We
restrict ourselves to \emph{additive} processes associated with the
probability space $(\Omega ,\mathfrak{A}_{\Omega },\mathfrak{a}_{\Omega })$
defined in Section \ref{Section impurities}, even if the Akcoglu-Krengel
ergodic theorem holds for superadditive or subadditive ones (cf. \cite[%
Definition VI.1.6]{birkoff}).

\begin{definition}[Additive processes associated with random variables]
\label{Additive process}\mbox{ }\newline
$\{\mathfrak{F}^{(\omega )}\left( \Lambda \right) \}_{\Lambda \in \mathcal{P}%
_{\text{f}}(\mathbb{Z}^{d})}$ is an additive process associated with the
probability space $(\Omega ,\mathfrak{A}_{\Omega },\mathfrak{a}_{\Omega })$
if:\newline
\noindent (i) the map $\omega \mapsto \mathfrak{F}^{(\omega )}\left( \Lambda
\right) $ is bounded and measurable w.r.t. the $\sigma $-algebra $\mathfrak{A%
}_{\Omega }$ for any $\Lambda \in \mathcal{P}_{\text{f}}(\mathbb{Z}^{d})$.%
\newline
\noindent (ii) For all disjoint $\Lambda _{1},\Lambda _{2}\in \mathcal{P}_{%
\text{f}}(\mathbb{Z}^{d})$,
\begin{equation*}
\mathfrak{F}^{(\omega )}\left( \Lambda _{1}\cup \Lambda _{2}\right) =%
\mathfrak{F}^{(\omega )}\left( \Lambda _{1}\right) +\mathfrak{F}^{(\omega
)}\left( \Lambda _{2}\right) \ ,\qquad \omega \in \Omega \ .
\end{equation*}%
\noindent (iii) For all $\Lambda \in \mathcal{P}_{\text{f}}(\mathbb{Z}^{d})$
and any space shift $x\in \mathbb{Z}^{d}$,
\begin{equation}
\mathbb{E}\left[ \mathfrak{F}^{(\cdot )}\left( \Lambda \right) \right] =%
\mathbb{E}\left[ \mathfrak{F}^{(\cdot )}\left( x+\Lambda \right) \right] \ .
\label{equality a la con}
\end{equation}%
Recall that $\mathbb{E}[\ \cdot \ ]$ is the expectation value associated
with the distribution $\mathfrak{a}_{\Omega }$.
\end{definition}

\noindent We now define \emph{regular }sequences (cf. \cite[Remark VI.1.8]%
{birkoff}) as follows:

\begin{definition}[Regular sequences]
\label{regular sequences}\mbox{ }\newline
The non-decreasing sequence $(\Lambda ^{(L)})_{L\in \mathbb{N}}\subset
\mathcal{P}_{\text{f}}(\mathbb{Z}^{d})$ of (possibly non-cubic) boxes in $%
\mathbb{Z}^{d}$ is a regular sequence if there is a finite constant $D\in
(0,1]$ and a diverging sequence $(\ell _{L})_{L\in \mathbb{N}}\subset
\mathbb{N}$ such that $\Lambda ^{(L)}\subset \Lambda _{\ell _{L}}$ and $%
0<|\Lambda _{\ell _{L}}|\leq D|\Lambda ^{(L)}|$ for all $L\in \mathbb{N}$.
Here, $\Lambda _{\ell }$, $\ell \in \mathbb{R}^{+}$, is the family of boxes
defined by (\ref{eq:boxesl1}).
\end{definition}

Then, the form of Akcoglu-Krengel ergodic theorem we use in the sequel is
the lattice version of \cite[Theorem VI.1.7, Remark VI.1.8]{birkoff} for
additive processes associated with the probability space $(\Omega ,\mathfrak{%
A}_{\Omega },\mathfrak{a}_{\Omega })$:

\begin{theorem}[Akcoglu-Krengel ergodic theorem]
\label{Ackoglu-Krengel ergodic theorem II copy(1)}\mbox{
}\newline
Let $\{\mathfrak{F}^{(\omega )}\left( \Lambda \right) \}_{\Lambda \in
\mathcal{P}_{f}(\mathbb{Z}^{d})}$ be an additive process. Then, for any
regular sequence $(\Lambda ^{(L)})_{L\in \mathbb{N}}\subset \mathcal{P}_{%
\text{f}}(\mathbb{Z}^{d})$, there is a measurable subset $\tilde{\Omega}%
\subset \Omega $ of full measure such that, for all $\omega \in \tilde{\Omega%
}$,%
\begin{equation*}
\underset{L\rightarrow \infty }{\lim }\left\{ \left\vert \Lambda
^{(L)}\right\vert ^{-1}\mathfrak{F}^{(\omega )}\left( \Lambda ^{(L)}\right)
\right\} =\mathbb{E}\left[ \mathfrak{F}^{(\cdot )}\left( \left\{ 0\right\}
\right) \right] \ .
\end{equation*}
\end{theorem}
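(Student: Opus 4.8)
The plan is to deduce the statement from the superadditive ergodic theorem of Akcoglu and Krengel \cite[Theorem VI.1.7]{birkoff} together with its version for regular sequences \cite[Remark VI.1.8]{birkoff}, by viewing $\{\mathfrak{F}^{(\omega)}(\Lambda)\}_{\Lambda\in\mathcal{P}_{\text{f}}(\mathbb{Z}^d)}$ as a (super)additive spatial process for the $\mathbb{Z}^d$-action $\{\chi_x^{(\Omega)}\}_{x\in\mathbb{Z}^d}$ on $(\Omega,\mathfrak{A}_{\Omega},\mathfrak{a}_{\Omega})$. This action is measure-preserving and ergodic by the defining properties of $\mathfrak{a}_{\Omega}$ stated in Section \ref{Section impurities} (see (\ref{translation omega})). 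First I would record two elementary consequences of Definition \ref{Additive process}: by (i) each map $\omega\mapsto\mathfrak{F}^{(\omega)}(\Lambda)$ is bounded and measurable, and by the additivity (ii) both $\{\mathfrak{F}^{(\omega)}(\Lambda)\}$ and $\{-\mathfrak{F}^{(\omega)}(\Lambda)\}$ are superadditive processes in the sense of \cite[Definition VI.1.6]{birkoff}, each with a finite time constant thanks to the uniform bound in (i).

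Next I would apply \cite[Theorem VI.1.7, Remark VI.1.8]{birkoff} to the superadditive process $\{\mathfrak{F}^{(\omega)}(\Lambda)\}$ along the sequence $(\Lambda^{(L)})_{L\in\mathbb{N}}$, noting that Definition \ref{regular sequences} is precisely the regularity assumption required there. This yields a full-measure set on which $\liminf_{L\to\infty}|\Lambda^{(L)}|^{-1}\mathfrak{F}^{(\omega)}(\Lambda^{(L)})$ coincides with the time constant; applying the same result to $\{-\mathfrak{F}^{(\omega)}(\Lambda)\}$ controls the $\limsup$ from the other side, so that on the intersection of the two full-measure sets the limit exists. Since $\chi^{(\Omega)}$ is ergodic, this limit is $\mathfrak{a}_{\Omega}$-almost surely equal to the deterministic constant
\begin{equation*}
\gamma\doteq\lim_{L\to\infty}\frac{1}{|\Lambda^{(L)}|}\mathbb{E}\big[\mathfrak{F}^{(\cdot)}(\Lambda^{(L)})\big].
\end{equation*}
It then remains to identify $\gamma$. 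Expanding $\mathfrak{F}^{(\omega)}(\Lambda^{(L)})=\sum_{x\in\Lambda^{(L)}}\mathfrak{F}^{(\omega)}(\{x\})$ by additivity (ii) and using the translation invariance of the mean (iii), which gives $\mathbb{E}[\mathfrak{F}^{(\cdot)}(\{x\})]=\mathbb{E}[\mathfrak{F}^{(\cdot)}(\{0\})]$ for every $x\in\mathbb{Z}^d$, one obtains $\mathbb{E}[\mathfrak{F}^{(\cdot)}(\Lambda^{(L)})]=|\Lambda^{(L)}|\,\mathbb{E}[\mathfrak{F}^{(\cdot)}(\{0\})]$ for all $L$, whence $\gamma=\mathbb{E}[\mathfrak{F}^{(\cdot)}(\{0\})]$, as claimed.

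The crux of the argument is to ensure that the hypotheses of Akcoglu--Krengel are genuinely in force. The essential structural ingredient is the covariance of the process under the action, $\mathfrak{F}^{(\chi_x^{(\Omega)}\omega)}(\Lambda)=\mathfrak{F}^{(\omega)}(\Lambda+x)$: this is what ties the spatial averages to the dynamics, and I would stress that invariance of the mean (iii) \emph{alone} cannot give almost-sure convergence (take $\mathfrak{F}^{(\omega)}(\{x\})$ independent of $x$ but non-constant in $\omega$, which satisfies (i)--(iii) yet makes the normalized limit a non-constant function of $\omega$). Covariance, combined with additivity, in fact realizes $|\Lambda^{(L)}|^{-1}\mathfrak{F}^{(\omega)}(\Lambda^{(L)})$ as a Birkhoff average $|\Lambda^{(L)}|^{-1}\sum_{x\in\Lambda^{(L)}}f\circ\chi_x^{(\Omega)}$ of the single bounded function $f(\omega)\doteq\mathfrak{F}^{(\omega)}(\{0\})$. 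The second point to verify is that the regular sequences of Definition \ref{regular sequences} satisfy the F{\o}lner-type boundary-to-volume control needed for almost-sure convergence along possibly non-cubic boxes, which is the content of \cite[Remark VI.1.8]{birkoff}. I expect establishing covariance (hence the actual applicability of the theorem) to be the main subtlety; once it is available, one may equivalently conclude via the multidimensional (Wiener) Birkhoff ergodic theorem for the ergodic $\mathbb{Z}^d$-action, bypassing the full superadditive machinery.
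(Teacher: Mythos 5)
Your proposal takes essentially the same route as the paper: the paper offers no proof of this theorem at all, presenting it as the lattice version of \cite[Theorem VI.1.7, Remark VI.1.8]{birkoff}, and your reduction (applying the superadditive theorem to both $\mathfrak{F}^{(\omega )}$ and $-\mathfrak{F}^{(\omega )}$, then identifying the time constant as $\mathbb{E}[\mathfrak{F}^{(\cdot )}(\{0\})]$ via additivity and translation invariance of the mean) is exactly how that citation is meant to be unpacked. Your remark on covariance is, moreover, correct and identifies a genuine imprecision in the paper rather than a gap in your own argument: Definition \ref{Additive process}~(iii) only fixes expectations, and your counterexample $\mathfrak{F}^{(\omega )}(\Lambda )=\left\vert \Lambda \right\vert f(\omega )$, with $f$ bounded, measurable and non-constant, satisfies (i)--(iii) while $\left\vert \Lambda ^{(L)}\right\vert ^{-1}\mathfrak{F}^{(\omega )}(\Lambda ^{(L)})=f(\omega )$ fails to converge almost surely to $\mathbb{E}[f]=\mathbb{E}[\mathfrak{F}^{(\cdot )}(\{0\})]$. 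The covariance property $\mathfrak{F}^{(\chi _{x}^{(\Omega )}\omega )}(\Lambda )=\mathfrak{F}^{(\omega )}(\Lambda +x)$, which is part of the definition of a (super)additive process in \cite[Definition VI.1.6]{birkoff} but was weakened to (iii) in the paper, must therefore be added to the hypotheses. It does hold for the process to which the theorem is actually applied in Corollary \ref{Ackoglu-Krengel ergodic theorem II}, since $h^{(\chi _{x}^{(\Omega )}\omega )}=s_{x}h^{(\omega )}s_{x}^{\ast }$ (compare (\ref{translation omegabis}), (\ref{equation sup}) and (\ref{shift})), so that the single-box generating functions transform covariantly under simultaneous translation of the box and of the disorder; with that hypothesis restored, your argument --- or the shortcut you mention through the multiparameter Birkhoff ergodic theorem applied to $f(\omega )\doteq \mathfrak{F}^{(\omega )}(\{0\})$ along regular sequences --- is complete.
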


\noindent See also \cite{bikfoffbis}.

The Ackoglu-Krengel (superadditive) ergodic theorem, cornerstone of ergodic
theory, generalizes the celebrated Birkhoff additive ergodic theorem. It is
used to deduce, via Proposition \ref{prop:equic}, the following Corollary:

\begin{corollary}[Akcoglu-Krengel ergodic theorem for generating functions]
\label{Ackoglu-Krengel ergodic theorem II}\mbox{
}\newline
There is a measurable subset $\tilde{\Omega}\subset \Omega $ of full measure
such that, for all $\beta \in \mathbb{R}^{+}$, $\vartheta ,\lambda \in
\mathbb{R}_{0}^{+}$, $\omega \in \tilde{\Omega}$, $l\in \mathbb{N}$, $%
\mathcal{E}\in C_{0}^{0}(\mathbb{R};\mathbb{R}^{d})$ and $\vec{w}\in {%
\mathbb{R}}^{d}$ with $\left\Vert \vec{w}\right\Vert _{\mathbb{R}^{d}}=1$,%
\begin{equation*}
\lim_{L\rightarrow \infty }\frac{1}{\left\vert \mathcal{Z}^{(\Lambda
_{L},l)}\right\vert }\sum_{Z\in \mathcal{Z}^{(\Lambda _{L},l)}}\mathrm{J}_{%
\mathcal{\{}Z\mathcal{\}},\mathcal{\{}Z\mathcal{\}},\mathcal{\{}Z\mathcal{\}}%
}^{(\omega ,\mathcal{E})}=\mathbb{E}\left[ \mathrm{J}_{\mathcal{\{}\Lambda
_{l}\mathcal{\}},\mathcal{\{}\Lambda _{l}\mathcal{\}},\mathcal{\{}\Lambda
_{l}\mathcal{\}}}^{(\cdot ,\mathcal{E})}\right] .
\end{equation*}
\end{corollary}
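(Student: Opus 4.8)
The plan is to identify the box average in the statement as the ergodic mean of a bounded, measurable, translation‑covariant additive process and then to feed that process into the Akcoglu--Krengel ergodic theorem (Theorem \ref{Ackoglu-Krengel ergodic theorem II copy(1)}), whose right‑hand side is exactly the desired expectation $\mathbb{E}[\mathrm{J}_{\{\Lambda_l\},\{\Lambda_l\},\{\Lambda_l\}}^{(\cdot,\mathcal{E})}]$.

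The first and central ingredient is the translation covariance of the single‑box generating function. Using the shift operators $s_x$ of (\ref{shift}) together with the definition (\ref{eq:Ham_lap_pot}), translating the lattice by $y\in\mathbb{Z}^d$ merely relabels the potential $\omega_1$ and the hopping amplitudes $\omega_2$ according to (\ref{translation omegabis}), so that $s_{-y}\,h_{\{\Lambda_l+y\}}^{(\omega)}\,s_{y}=h_{\{\Lambda_l\}}^{(\chi_y^{(\Omega)}\omega)}$. The same unitary conjugation carries the single‑hopping operators $S_{x,x'}^{(\omega)}$, and hence through (\ref{definition K}) and (\ref{eq:par-curr-sum-gen}) the operators $K_{\{\Lambda_l+y\},\{\Lambda_l+y\}}^{(\omega,\mathcal{E})}$, onto their counterparts at $\chi_y^{(\Omega)}\omega$; since $s_y$ induces a Bogoliubov $\ast$‑automorphism leaving the tracial state invariant, the trace in (\ref{generating functions}) is unchanged, giving
\begin{equation*}
\mathrm{J}_{\{\Lambda_l+y\},\{\Lambda_l+y\},\{\Lambda_l+y\}}^{(\omega,\mathcal{E})}=\mathrm{J}_{\{\Lambda_l\},\{\Lambda_l\},\{\Lambda_l\}}^{(\chi_y^{(\Omega)}\omega,\mathcal{E})},\qquad y\in\mathbb{Z}^d.
\end{equation*}

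I would then verify the remaining hypotheses of Definition \ref{Additive process}. Boundedness and measurability of $\omega\mapsto\mathrm{J}_{\{\Lambda_l\},\{\Lambda_l\},\{\Lambda_l\}}^{(\omega,\mathcal{E})}$ follow because it is a finite‑dimensional trace expression depending continuously on the measurable entries $\langle\mathfrak{e}_x,h_{\{\Lambda_l\}}^{(\omega)}\mathfrak{e}_y\rangle$, together with the uniform derivative bound of Proposition \ref{prop:equic copy(5)} and the normalization $\mathrm{J}^{(\omega,0)}=0$ coming from (\ref{generating functions0}). Indexing boxes by the sublattice $(2l+1)\mathbb{Z}^d$, I would set
\begin{equation*}
\mathfrak{F}^{(\omega)}(S)\doteq\sum_{x\in S}\mathrm{J}_{\{\Lambda_l+(2l+1)x\},\{\Lambda_l+(2l+1)x\},\{\Lambda_l+(2l+1)x\}}^{(\omega,\mathcal{E})},
\end{equation*}
which is additive over disjoint $S$ and, by the covariance identity and invariance of $\mathfrak{a}_\Omega$, satisfies (\ref{equality a la con}). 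Since $\cup\mathcal{Z}^{(\Lambda_L,l)}$ corresponds to a regular sequence of coarse indices of cardinality $|\mathcal{Z}^{(\Lambda_L,l)}|$, the box average equals the normalized process value, and Theorem \ref{Ackoglu-Krengel ergodic theorem II copy(1)} delivers the almost sure limit $\mathbb{E}[\mathfrak{F}^{(\cdot)}(\{0\})]=\mathbb{E}[\mathrm{J}_{\{\Lambda_l\},\{\Lambda_l\},\{\Lambda_l\}}^{(\cdot,\mathcal{E})}]$. A single full‑measure $\tilde{\Omega}$ serving all of $C_0^0(\mathbb{R};\mathbb{R}^d)$ (and all $\beta,\vartheta,\lambda,l$) is then obtained by intersecting over a countable dense family of fields and extending by the equicontinuity of Proposition \ref{prop:equic}.

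The step I expect to be the main obstacle is ensuring that the ergodic average identifies with the expectation rather than with a conditional expectation. The decomposition $\mathcal{Z}^{(\Lambda_L,l)}$ is anchored on the coarse sublattice $(2l+1)\mathbb{Z}^d$, so $\mathfrak{F}$ is only covariant under the coarse translations, whereas $\mathfrak{a}_\Omega$ is assumed ergodic under the full group $\mathbb{Z}^d$; one must therefore either recast $\mathfrak{F}$ as a genuinely $\mathbb{Z}^d$‑covariant additive process and exploit the regular‑sequence freedom in Theorem \ref{Ackoglu-Krengel ergodic theorem II copy(1)}, or control the mismatch between the sublattice sum and the full‑lattice average directly. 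The boundary bookkeeping needed for the latter is exactly the content of the finite‑dimensional decay estimates of Lemma \ref{def bilineqr copy(1)}, so once the covariance is correctly promoted to the full translation group, everything else reduces to routine application of the cited ergodic theorem.
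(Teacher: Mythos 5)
Your proposal follows, step for step, the route of the paper's own proof: the same additive process $\mathfrak{F}_{l}^{(\omega ,\mathcal{E})}(\Gamma )=\sum_{x\in \Gamma }\mathrm{J}_{\{\Lambda _{l}+(2l+1)x\},\{\Lambda _{l}+(2l+1)x\},\{\Lambda _{l}+(2l+1)x\}}^{(\omega ,\mathcal{E})}$ indexed by the coarse lattice, the same identification of the box average with $\vert \Lambda ^{(L)}\vert ^{-1}\mathfrak{F}_{l}^{(\omega ,\mathcal{E})}(\Lambda ^{(L)})$ for the regular sequence $\Lambda ^{(L)}=\{x\in \mathbb{Z}^{d}\colon \Lambda _{l}+(2l+1)x\subset \Lambda _{L}\}$, the same appeal to Theorem \ref{Ackoglu-Krengel ergodic theorem II copy(1)}, and the same construction of a single full-measure set $\tilde{\Omega}$ by intersecting over countably many rational parameters and countable dense families of fields and directions, then extending by the equicontinuity of Proposition \ref{prop:equic}. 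Your explicit covariance identity $\mathrm{J}_{\{\Lambda _{l}+y\},\{\Lambda _{l}+y\},\{\Lambda _{l}+y\}}^{(\omega ,\mathcal{E})}=\mathrm{J}_{\{\Lambda _{l}\},\{\Lambda _{l}\},\{\Lambda _{l}\}}^{(\chi _{y}^{(\Omega )}\omega ,\mathcal{E})}$ is an improvement in exposition: the paper uses it tacitly (it is what makes Definition \ref{Additive process} (iii) hold) without ever writing it down.

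The ``main obstacle'' you flag at the end is, however, a genuine gap, and you should know that the paper's proof does \emph{not} resolve it either: it simply invokes Theorem \ref{Ackoglu-Krengel ergodic theorem II copy(1)}, whose hypothesis (iii) in Definition \ref{Additive process} is mere stationarity of expectations, and under that hypothesis alone the theorem is false (the almost-sure limit is in general a conditional expectation with respect to the invariant $\sigma $-algebra of whatever action the process is actually covariant under). Here $\mathfrak{F}_{l}$ is covariant only under the sublattice action $y\mapsto \chi _{(2l+1)y}^{(\Omega )}$, and $\mathbb{Z}^{d}$-ergodicity of $\mathfrak{a}_{\Omega }$ does not imply ergodicity of this coarse action: in $d=1$, the uniform measure on the $\mathbb{Z}$-orbit of a potential configuration of period $2l+1$ is ergodic for the full group, yet every configuration in its support is a fixed point of $\chi _{(2l+1)y}^{(\Omega )}$; for such a measure the coarse average equals $\mathrm{J}_{\{\Lambda _{l}\},\{\Lambda _{l}\},\{\Lambda _{l}\}}^{(\omega ,\mathcal{E})}$ identically in $L$, which is in general a nonconstant random variable, so no full-measure $\tilde{\Omega}$ can realize the asserted limit. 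Closing the gap requires the kind of repair you hint at but do not carry out: either assume ergodicity with respect to every sublattice $(2l+1)\mathbb{Z}^{d}$ (automatic, e.g., for i.i.d.\ disorder such as the Anderson model), or replace the fixed decomposition by its average over the $(2l+1)^{d}$ coset anchors $c\in \{0,\ldots ,2l\}^{d}$; since the estimates behind Proposition \ref{Main proposition} are insensitive to the anchoring, the coset-averaged coarse sum becomes a full-lattice Birkhoff average of $\omega \mapsto \mathrm{J}_{\{\Lambda _{l}\},\{\Lambda _{l}\},\{\Lambda _{l}\}}^{(\omega ,\mathcal{E})}$, to which $\mathbb{Z}^{d}$-ergodicity applies and yields $\mathbb{E}\left[ \mathrm{J}_{\{\Lambda _{l}\},\{\Lambda _{l}\},\{\Lambda _{l}\}}^{(\cdot ,\mathcal{E})}\right] $, and this modified statement feeds into Corollary \ref{Ackoglu-Krengel ergodic theorem II copy(2)} just as well. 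As written, though, neither your proposal nor the paper completes this step; your diagnosis of where the argument is fragile is correct and is precisely what needs to be fixed.
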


\begin{proof}
Fix $\beta \in \mathbb{R}^{+}$, $\vartheta ,\lambda \in \mathbb{R}_{0}^{+}$,
$\omega \in \Omega $, $l\in \mathbb{N}$, $\mathcal{E}\in C_{0}^{0}(\mathbb{R}%
;\mathbb{R}^{d})$ and $\vec{w}\in {\mathbb{R}}^{d}$ with $\left\Vert \vec{w}%
\right\Vert _{\mathbb{R}^{d}}=1$. For any $\Gamma \in \mathcal{P}_{\text{f}}(%
\mathbb{Z}^{d})$, let%
\begin{equation*}
\mathfrak{F}_{l}^{(\omega ,\mathcal{E})}\left( \Gamma \right) \doteq
\sum_{x\in \Gamma }\mathrm{J}_{\mathcal{\{}\Lambda _{l}+\left( 2l+1\right) x%
\mathcal{\}},\mathcal{\{}\Lambda _{l}+\left( 2l+1\right) x\mathcal{\}},%
\mathcal{\{}\Lambda _{l}+\left( 2l+1\right) x\mathcal{\}}}^{(\omega ,%
\mathcal{E})}.
\end{equation*}%
Then, if
\begin{equation*}
\Lambda ^{(L)}\equiv \Lambda ^{(L,l)}\doteq \left\{ x\in \mathbb{Z}%
^{d}\colon (\Lambda _{l}+\left( 2l+1\right) x)\subset \Lambda _{L}\right\}
\subset \Lambda _{L},
\end{equation*}%
observe that%
\begin{equation*}
\left\vert \Lambda ^{(L)}\right\vert ^{-1}\mathfrak{F}_{l}^{(\omega ,%
\mathcal{E})}\left( \Lambda ^{(L)}\right) =\frac{1}{\left\vert \mathcal{Z}%
^{(\Lambda _{L},l)}\right\vert }\sum_{Z\in \mathcal{Z}^{(\Lambda _{L},l)}}%
\mathrm{J}_{\mathcal{\{}Z\mathcal{\}},\mathcal{\{}Z\mathcal{\}},\mathcal{\{}Z%
\mathcal{\}}}^{(\omega ,\mathcal{E})}.
\end{equation*}%
Therefore, since $(\Lambda ^{(L)})_{L\in \mathbb{N}}$ is clearly a regular
sequence, by Theorem \ref{Ackoglu-Krengel ergodic theorem II copy(1)}, for
any $\beta \in \mathbb{R}^{+}$, $\vartheta ,\lambda \in \mathbb{R}_{0}^{+}$,
$l\in \mathbb{N}$, $\mathcal{E}\in C_{0}^{0}(\mathbb{R};\mathbb{R}^{d})$ and
$\vec{w}\in {\mathbb{R}}^{d}$ with $\left\Vert \vec{w}\right\Vert _{\mathbb{R%
}^{d}}=1$, there is a measurable subset $\hat{\Omega}\equiv \hat{\Omega}%
^{(\beta ,\vartheta ,\lambda ,l,\mathcal{E},\vec{w})}\subset \Omega $ of
full measure such that, for all $\omega \in \hat{\Omega}$,%
\begin{equation*}
\lim_{L\rightarrow \infty }\frac{1}{\left\vert \mathcal{Z}^{(\Lambda
_{L},l)}\right\vert }\sum_{Z\in \mathcal{Z}^{(\Lambda _{L},l)}}\mathrm{J}_{%
\mathcal{\{}Z\mathcal{\}},\mathcal{\{}Z\mathcal{\}},\mathcal{\{}Z\mathcal{\}}%
}^{(\omega ,\mathcal{E})}=\mathbb{E}\left[ \mathrm{J}_{\mathcal{\{}\Lambda
_{l}\mathcal{\}},\mathcal{\{}\Lambda _{l}\mathcal{\}},\mathcal{\{}\Lambda
_{l}\mathcal{\}}}^{(\cdot ,\mathcal{E})}\right] .
\end{equation*}%
Observe that, for any $n\in \mathbb{N}$, there is a countable dense set $%
\mathcal{D}_{n}\subset C_{0}^{0}(\mathbb{R};\mathbb{R}^{d})$. Let $\mathbb{S}%
^{d-1}$ be a dense countable subset of the $(d-1)$-dimensional sphere.
Hence, by Proposition \ref{prop:equic}, we arrive at the assertion for any
realization $\omega \in \tilde{\Omega}\subset \Omega $, where%
\begin{equation*}
\tilde{\Omega}\doteq \underset{\vartheta ,\lambda \in \mathbb{Q}\cap \mathbb{%
R}_{0}^{+}}{\bigcap }\underset{\beta \in \mathbb{Q}\cap \mathbb{R}^{+}}{%
\bigcap }\underset{\vec{w}\in \mathbb{S}^{d-1}}{\bigcap }\underset{n\in
\mathbb{N}}{\bigcap }\underset{\mathcal{E}\in \mathcal{D}_{n}}{\bigcap }%
\underset{l\in \mathbb{N}}{\bigcap }\hat{\Omega}^{(\beta ,\vartheta ,\lambda
,l,\mathcal{E},\vec{w})}\ .
\end{equation*}%
[Recall that any countable intersection of measurable sets of full measure
has full measure].
\end{proof}

\begin{corollary}[Almost surely existence of generating functions]
\label{Ackoglu-Krengel ergodic theorem II copy(2)}\mbox{
}\newline
Let $\tilde{\Omega}\subset \Omega $ be the measurable subset of Corollary %
\ref{Ackoglu-Krengel ergodic theorem II}. Then, for all $\beta \in \mathbb{R}%
^{+}$, $\vartheta ,\lambda \in \mathbb{R}_{0}^{+}$, $\omega \in \tilde{\Omega%
}$, $l\in \mathbb{N}$, $\mathcal{E}\in C_{0}^{0}(\mathbb{R};\mathbb{R}^{d})$
and $\vec{w}\in {\mathbb{R}}^{d}$ with $\left\Vert \vec{w}\right\Vert _{%
\mathbb{R}^{d}}=1$,
\begin{equation*}
\lim_{L\to \infty }\frac{1}{\left\vert \Lambda _{L}\right\vert }\mathbb{E}%
\left[ \ln \varrho ^{(\cdot )}\left( \mathrm{e}^{\left\vert \Lambda
_{L}\right\vert \mathbb{I}_{\Lambda _{L}}^{(\cdot ,\mathcal{E})}}\right) %
\right] =\lim_{L_{\tau }\geq L_{\varrho }\geq L\to \infty }\mathrm{J}_{%
\mathcal{\{}\Lambda _{L}\mathcal{\}},\mathcal{\{}\Lambda _{L_{\varrho }}%
\mathcal{\}},\mathcal{\{}\Lambda _{L_{\tau }}\mathcal{\}}}^{(\omega ,%
\mathcal{E})}\doteq \mathrm{J}^{(\mathcal{E})}.
\end{equation*}%
For all $n\in \mathbb{N}$, the convergence is uniform w.r.t. $\beta
,\vartheta ,\lambda $ in compact sets, $\omega \in \tilde{\Omega}$, $\vec{w}%
\in {\mathbb{R}}^{d}$ with $\left\Vert \vec{w}\right\Vert _{\mathbb{R}%
^{d}}=1 $ and $\mathcal{E}$ in any bounded set of $C_{0}^{0}([-n,n];\mathbb{R%
}^{d})$.
\end{corollary}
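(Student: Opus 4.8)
The plan is to assemble the statement from three facts already established in the excerpt: the box decomposition of the finite-volume generating functions (Proposition \ref{Main proposition}), the ergodic convergence of their box averages (Corollary \ref{Ackoglu-Krengel ergodic theorem II}), and the uniform first- and second-derivative bounds (Proposition \ref{prop:equic copy(5)}), which guarantee that every generating function in play is uniformly bounded. Throughout I would write $A_{L,L_\varrho,L_\tau}(\omega)\doteq\mathrm{J}_{\{\Lambda_L\},\{\Lambda_{L_\varrho}\},\{\Lambda_{L_\tau}\}}^{(\omega,\mathcal{E})}$, $B_{L,l}(\omega)\doteq|\mathcal{Z}^{(\Lambda_L,l)}|^{-1}\sum_{Z\in\mathcal{Z}^{(\Lambda_L,l)}}\mathrm{J}_{\{Z\},\{Z\},\{Z\}}^{(\omega,\mathcal{E})}$, $g(l)\doteq\mathbb{E}[\mathrm{J}_{\{\Lambda_l\},\{\Lambda_l\},\{\Lambda_l\}}^{(\cdot,\mathcal{E})}]$, and $a_L\doteq|\Lambda_L|^{-1}\mathbb{E}[\ln\varrho^{(\cdot)}(\mathrm{e}^{|\Lambda_L|\mathbb{I}_{\Lambda_L}^{(\cdot,\mathcal{E})}})]$ for the quantity on the left-hand side.

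First I would prove that the triple limit exists and is deterministic. For $\omega\in\tilde\Omega$, Corollary \ref{Ackoglu-Krengel ergodic theorem II} gives $\lim_{L\to\infty}B_{L,l}(\omega)=g(l)$ for each fixed $l$, while Proposition \ref{Main proposition} gives $\lim_{l\to\infty}\limsup_{L_\tau\geq L_\varrho\geq L\to\infty}|A_{L,L_\varrho,L_\tau}(\omega)-B_{L,l}(\omega)|=0$. Since $B_{L,l}$ does not depend on $L_\varrho,L_\tau$, a standard $\varepsilon/3$ argument shows that for each $\varepsilon>0$ there is $l_0$ with $\limsup_{L_\tau\geq L_\varrho\geq L\to\infty}|A_{L,L_\varrho,L_\tau}(\omega)-g(l)|<\varepsilon$ for all $l\geq l_0$. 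Applying this to two indices $l,l'\geq l_0$ shows $(g(l))_{l\in\mathbb{N}}$ is Cauchy, hence convergent to some $\mathrm{J}^{(\mathcal{E})}$, and the same inequality then yields $\lim_{L_\tau\geq L_\varrho\geq L\to\infty}A_{L,L_\varrho,L_\tau}(\omega)=\mathrm{J}^{(\mathcal{E})}$ independently of $\omega\in\tilde\Omega$. This settles the right-hand side of the asserted identity.

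For the averaged left-hand side I would exploit stationarity: since each $Z\in\mathcal{Z}^{(\Lambda_L,l)}$ is a lattice translate of $\Lambda_l$, invariance of $\mathfrak{a}_\Omega$ under (\ref{translation omega}) (equivalently, the additivity used in Corollary \ref{Ackoglu-Krengel ergodic theorem II} together with (\ref{equality a la con})) gives $\mathbb{E}[B_{L,l}(\cdot)]=g(l)$ exactly. Recalling from (\ref{limit cool}) that the finite-volume object is $\lim_{L_\varrho\to\infty}\lim_{L_\tau\to\infty}A_{L,L_\varrho,L_\tau}(\omega)$, I would bound $|a_L-g(l)|\leq\mathbb{E}[|\lim_{L_\varrho}\lim_{L_\tau}A_{L,L_\varrho,L_\tau}(\cdot)-B_{L,l}(\cdot)|]$. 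The key point is that the estimates entering Proposition \ref{Main proposition} (Lemmata \ref{prop:equic copy(4)}--\ref{prop:equic copy(3)}) are pointwise in $\omega$ and, once $L_\varrho,L_\tau$ exceed $L$, their right-hand sides stabilize and are governed by surface-to-volume ratios of order $L^{-1}$ or $l^{-1}$; hence they survive the iterated limit in $L_\varrho,L_\tau$ and give $\limsup_{L\to\infty}\sup_{\omega}|\lim_{L_\varrho}\lim_{L_\tau}A_{L,L_\varrho,L_\tau}(\omega)-B_{L,l}(\omega)|\leq C/l$. Combined with $\mathbb{E}[B_{L,l}]=g(l)$ and $g(l)\to\mathrm{J}^{(\mathcal{E})}$, this yields $\limsup_{L}|a_L-\mathrm{J}^{(\mathcal{E})}|\leq C/l+|g(l)-\mathrm{J}^{(\mathcal{E})}|$, and letting $l\to\infty$ gives $\lim_L a_L=\mathrm{J}^{(\mathcal{E})}$. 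The stated uniformity follows because each bound above is uniform in $\beta,\vartheta,\lambda$ on compacts, in $\omega\in\tilde\Omega$, in $\vec{w}$, and in $\mathcal{E}$ on bounded sets of $C_0^0([-n,n];\mathbb{R}^d)$, by the corresponding uniformity in Propositions \ref{Main proposition} and \ref{prop:equic copy(5)}.

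The main obstacle I anticipate is the bookkeeping of the several nested limits: the finite-volume object in (\ref{limit cool}) is an \emph{iterated} limit $\lim_{L_\varrho}\lim_{L_\tau}$ at fixed $L$, whereas Proposition \ref{Main proposition} is phrased as a \emph{joint} limit $L_\tau\geq L_\varrho\geq L\to\infty$ together with $l\to\infty$. Reconciling these requires checking that the error bounds in the constituent lemmata are genuinely uniform in the auxiliary volumes $L_\varrho,L_\tau$ (so that they pass unchanged to the iterated limit) and uniform in $\omega$ (so that the exchange of $\mathbb{E}$ with the $L$-limit is justified, the domination being the uniform boundedness from Proposition \ref{prop:equic copy(5)}). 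Once this uniformity is in hand, the remaining assembly is routine.
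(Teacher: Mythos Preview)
Your proposal is correct and follows essentially the same route as the paper: combine Proposition \ref{Main proposition} with Corollary \ref{Ackoglu-Krengel ergodic theorem II} and translation invariance ($\mathbb{E}[B_{L,l}]=g(l)$) to show that $(g(l))_l$ is Cauchy and that both the triple limit and the averaged quantity $a_L$ converge to its limit, invoking (\ref{limit cool}) to link the iterated $L_\varrho,L_\tau$ limit to the physical finite-volume object. Your explicit discussion of the iterated-versus-joint limit issue is a point the paper handles only implicitly (it just writes ``By (\ref{limit cool}), the assertion then follows''), and your resolution via the $\omega$-uniformity of the bounds in Lemmata \ref{prop:equic copy(4)}--\ref{prop:equic copy(3)} is exactly right; one small remark is that Proposition \ref{Main proposition} as stated only gives that the error tends to $0$ as $l\to\infty$, so it is cleaner to phrase your final estimate as $\limsup_L|a_L-\mathrm{J}^{(\mathcal{E})}|\leq \varepsilon(l)+|g(l)-\mathrm{J}^{(\mathcal{E})}|$ with $\varepsilon(l)\to 0$, rather than committing to the explicit rate $C/l$.
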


\begin{proof}
By translation invariance of the distribution $\mathfrak{a}_{\Omega }$,
\begin{equation*}
\mathbb{E}\left[ \mathrm{J}_{\mathcal{\{}\Lambda _{l}\mathcal{\}},\mathcal{\{%
}\Lambda _{l}\mathcal{\}},\mathcal{\{}\Lambda _{l}\mathcal{\}}}^{(\cdot ,%
\mathcal{E})}\right] =\mathbb{E}\left[ \frac{1}{\left\vert \mathcal{Z}%
^{(\Lambda _{L},l)}\right\vert }\sum_{Z\in \mathcal{Z}^{(\Lambda _{L},l)}}%
\mathrm{J}_{\mathcal{\{}Z\mathcal{\}},\mathcal{\{}Z\mathcal{\}},\mathcal{\{}Z%
\mathcal{\}}}^{(\cdot ,\mathcal{E})}\right] .
\end{equation*}%
Hence,
\begin{equation*}
\left\{ \mathbb{E}\left[ \mathrm{J}_{\mathcal{\{}\Lambda _{l}\mathcal{\}},%
\mathcal{\{}\Lambda _{l}\mathcal{\}},\mathcal{\{}\Lambda _{l}\mathcal{\}}%
}^{(\cdot ,\mathcal{E})}\right] \right\} _{l\in \mathbb{N}}
\end{equation*}%
is a Cauchy sequence, by (\ref{idiot encore}) and (\ref{punchline0}). By
Proposition \ref{Main proposition} and Corollary \ref{Ackoglu-Krengel
ergodic theorem II}, there is a measurable subset $\tilde{\Omega}\subset
\Omega $ of full measure such that, for all $\beta \in \mathbb{R}^{+}$, $%
\vartheta ,\lambda \in \mathbb{R}_{0}^{+}$, $\omega \in \tilde{\Omega}$, $%
l\in \mathbb{N}$, $\mathcal{E}\in C_{0}^{0}(\mathbb{R};\mathbb{R}^{d})$ and $%
\vec{w}\in {\mathbb{R}}^{d}$ with $\left\Vert \vec{w}\right\Vert _{\mathbb{R}%
^{d}}=1$,%
\begin{equation*}
\lim_{L_{\tau }\geq L_{\varrho }\geq L\to \infty }\mathrm{J}_{\mathcal{\{}%
\Lambda _{L}\mathcal{\}},\mathcal{\{}\Lambda _{L_{\varrho }}\mathcal{\}},%
\mathcal{\{}\Lambda _{L_{\tau }}\mathcal{\}}}^{(\omega ,\mathcal{E}%
)}=\lim_{l\to \infty }\mathbb{E}\left[ \mathrm{J}_{\mathcal{\{}\Lambda _{l}%
\mathcal{\}},\mathcal{\{}\Lambda _{l}\mathcal{\}},\mathcal{\{}\Lambda _{l}%
\mathcal{\}}}^{(\cdot ,\mathcal{E})}\right] .
\end{equation*}%
For all $n\in \mathbb{N}$, the convergence is uniform w.r.t. $\beta
,\vartheta ,\lambda $ in compact sets, $\omega \in \tilde{\Omega}$, $\vec{w}%
\in {\mathbb{R}}^{d}$ with $\left\Vert \vec{w}\right\Vert _{\mathbb{R}%
^{d}}=1 $ and $\mathcal{E}$ in any bounded set of $C_{0}^{0}([-n,n];\mathbb{R%
}^{d})$. By (\ref{limit cool}), the assertion then follows.
\end{proof}

\begin{corollary}[Differentiability of generating functions]
\label{final corrollary}\mbox{
}\newline
Fix $\beta ,\lambda ,\vartheta \in \mathbb{R}^{+}$ and $\vec{w}\in {\mathbb{R%
}}^{d}$ with $\left\Vert \vec{w}\right\Vert _{\mathbb{R}^{d}}=1$. For any $%
\mathcal{E}\in C_{0}^{0}(\mathbb{R};\mathbb{R}^{d})$, the map $s\mapsto
\mathrm{J}^{(s\mathcal{E})}$ from ${\mathbb{R}}$ to itself is continuously
differentiable, so that%
\begin{equation}
\partial _{s}\mathrm{J}^{(s\mathcal{E})}=\lim_{L\rightarrow \infty }\frac{%
\varrho ^{(\omega )}\left( \mathbb{I}_{\Lambda _{L}}^{(\omega ,\mathcal{E})}%
\mathrm{e}^{s\left\vert \Lambda _{L}\right\vert \mathbb{I}_{\Lambda
_{L}}^{(\omega ,\mathcal{E})}}\right) }{\varrho ^{(\omega )}\left( \mathrm{e}%
^{s\left\vert \Lambda _{L}\right\vert \mathbb{I}_{\Lambda _{L}}^{(\omega ,%
\mathcal{E})}}\right) }.  \label{current utiles}
\end{equation}
\end{corollary}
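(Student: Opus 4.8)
The plan is to realize $s\mapsto\mathrm{J}^{(s\mathcal{E})}$ as a limit of smooth finite-volume functions whose first and second $s$-derivatives are uniformly controlled, and then to upgrade pointwise convergence to $C^{1}$-convergence via the Arzelà-Ascoli theorem. First I would record that the current observable is linear in the field, i.e. $\mathbb{I}_{\Lambda}^{(\omega,s\mathcal{E})}=s\,\mathbb{I}_{\Lambda}^{(\omega,\mathcal{E})}$ by (\ref{current}), so that, with the abbreviation
\begin{equation*}
G_{L}(s)\doteq\frac{1}{\left\vert\Lambda_{L}\right\vert}\ln\varrho^{(\omega)}\left(\mathrm{e}^{s\left\vert\Lambda_{L}\right\vert\mathbb{I}_{\Lambda_{L}}^{(\omega,\mathcal{E})}}\right),
\end{equation*}
Theorem \ref{Theorem main result} applied to the field $s\mathcal{E}$ gives $\mathrm{J}^{(s\mathcal{E})}=\lim_{L\to\infty}G_{L}(s)$ for every $s\in\mathbb{R}$ and $\omega\in\tilde{\Omega}$. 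Since $\left\vert\Lambda_{L}\right\vert\mathbb{I}_{\Lambda_{L}}^{(\omega,\mathcal{E})}$ is a bounded self-adjoint element of $\mathcal{U}$, the map $s\mapsto G_{L}(s)$ is real-analytic and commuting exponentials yield
\begin{equation*}
\partial_{s}G_{L}(s)=\frac{\varrho^{(\omega)}\left(\mathbb{I}_{\Lambda_{L}}^{(\omega,\mathcal{E})}\mathrm{e}^{s\left\vert\Lambda_{L}\right\vert\mathbb{I}_{\Lambda_{L}}^{(\omega,\mathcal{E})}}\right)}{\varrho^{(\omega)}\left(\mathrm{e}^{s\left\vert\Lambda_{L}\right\vert\mathbb{I}_{\Lambda_{L}}^{(\omega,\mathcal{E})}}\right)},
\end{equation*}
which is exactly the right-hand side of (\ref{current utiles}). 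Hence it suffices to prove that the limit $\mathrm{J}^{(s\mathcal{E})}=\lim_{L}G_{L}(s)$ is $C^{1}$ with $\partial_{s}\mathrm{J}^{(s\mathcal{E})}=\lim_{L}\partial_{s}G_{L}(s)$.

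Next I would transfer the uniform derivative bounds of Proposition \ref{prop:equic copy(5)} from the finite-volume generating functions to the $G_{L}$. By (\ref{limit cool}), for each fixed $L$ one has $G_{L}(s)=\lim_{L_{\varrho}\to\infty}\lim_{L_{\tau}\to\infty}\mathrm{J}_{\{\Lambda_{L}\},\{\Lambda_{L_{\varrho}}\},\{\Lambda_{L_{\tau}}\}}^{(\omega,s\mathcal{E})}$, and Proposition \ref{prop:equic copy(5)} bounds $\left\vert\partial_{s}\mathrm{J}_{\cdots}^{(\omega,s\mathcal{E})}\right\vert+\left\vert\partial_{s}^{2}\mathrm{J}_{\cdots}^{(\omega,s\mathcal{E})}\right\vert$ uniformly in all three collections, in $\omega$, and in $s\in[-s_{1},s_{1}]$. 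In particular the finite-volume functions are uniformly Lipschitz in $s$, so their pointwise convergence on the compact interval $[-s_{1},s_{1}]$ is in fact uniform; combining this with the uniform bound on the second derivatives and the fundamental theorem of calculus, the $L_{\tau}$- and then $L_{\varrho}$-limits preserve differentiability and yield that each $G_{L}$ is continuously differentiable with $\left\vert\partial_{s}G_{L}(s)\right\vert\le C$ and $\partial_{s}G_{L}$ Lipschitz with constant $C$ on $[-s_{1},s_{1}]$, the constant $C$ being independent of $L$.

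Finally I would run the same differentiation-under-the-limit argument one level up, now in $L$. The family $\{\partial_{s}G_{L}\}_{L\in\mathbb{N}}$ is uniformly bounded and equicontinuous on $[-s_{1},s_{1}]$, hence relatively compact in the uniform topology by the Arzelà-Ascoli theorem \cite[Theorem A5]{Rudin}. If $\partial_{s}G_{L_{k}}\to g$ uniformly along a subsequence, then for any $s$,
\begin{equation*}
\mathrm{J}^{(s\mathcal{E})}-\mathrm{J}^{(0)}=\lim_{k\to\infty}\left(G_{L_{k}}(s)-G_{L_{k}}(0)\right)=\lim_{k\to\infty}\int_{0}^{s}\partial_{s}G_{L_{k}}(t)\,\mathrm{d}t=\int_{0}^{s}g(t)\,\mathrm{d}t,
\end{equation*}
so $s\mapsto\mathrm{J}^{(s\mathcal{E})}$ is continuously differentiable with $\partial_{s}\mathrm{J}^{(s\mathcal{E})}=g$. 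Since $g=\partial_{s}\mathrm{J}^{(s\mathcal{E})}$ does not depend on the chosen subsequence, the whole sequence converges, $\partial_{s}G_{L}\to\partial_{s}\mathrm{J}^{(s\mathcal{E})}$ uniformly on compacts, which together with the explicit form of $\partial_{s}G_{L}$ computed above establishes (\ref{current utiles}). I expect the main obstacle to be precisely the interchange of the $s$-differentiation with the three successive thermodynamic limits ($L_{\tau},L_{\varrho},L\to\infty$): pointwise convergence alone is far too weak, and the argument hinges on the \emph{uniform} second-derivative estimate of Proposition \ref{prop:equic copy(5)}, which supplies the equicontinuity of the first derivatives needed to invoke Arzelà-Ascoli at each stage.
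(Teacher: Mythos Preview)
Your proposal is correct and follows essentially the same strategy as the paper: use the uniform first- and second-derivative bounds of Proposition \ref{prop:equic copy(5)} to obtain equicontinuity of the derivatives, invoke Arzel\`a--Ascoli, and pass to the limit via the fundamental theorem of calculus. The only organizational difference is that the paper applies Arzel\`a--Ascoli once directly to the triple-indexed family $\mathrm{J}_{\{\Lambda_{L}\},\{\Lambda_{L_{\varrho}}\},\{\Lambda_{L_{\tau}}\}}^{(\omega,s\mathcal{E})}$ (extracting subsequences $(L^{(n)},L_{\varrho}^{(n)},L_{\tau}^{(n)})$ simultaneously), whereas you first pass to $G_{L}$ via the iterated limits $L_{\tau},L_{\varrho}\to\infty$ and then run Arzel\`a--Ascoli a second time on $\{\partial_{s}G_{L}\}_{L}$; both routes rest on exactly the same uniform estimates and yield the same conclusion.
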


\begin{proof}
Take any $\mathcal{E}\in C_{0}^{0}(\mathbb{R};\mathbb{R}^{d})$ and $\omega
\in \tilde{\Omega}$. See Corollary \ref{Ackoglu-Krengel ergodic theorem II
copy(2)}. Then, for any $s\in {\mathbb{R}}$,
\begin{equation*}
\mathrm{J}^{(s\mathcal{E})}=\lim_{L_{\tau }\geq L_{\varrho }\geq
L\rightarrow \infty }\mathrm{J}_{\mathcal{\{}\Lambda _{L}\mathcal{\}},%
\mathcal{\{}\Lambda _{L_{\varrho }}\mathcal{\}},\mathcal{\{}\Lambda
_{L_{\tau }}\mathcal{\}}}^{(\omega ,s\mathcal{E})}.
\end{equation*}%
By Proposition \ref{prop:equic copy(5)} combined with the mean value theorem
and the (Arzel\`{a}-) Ascoli theorem \cite[Theorem A5]{Rudin}, there are
three sequences $(L_{\tau }^{(n)})_{n\in \mathbb{N}}$, $(L_{\varrho
}^{(n)})_{n\in \mathbb{N}}$, $(L^{(n)})_{n\in \mathbb{N}}\subset \mathbb{R}%
_{0}^{+}$, with $L_{\tau }^{(n)}\geq L_{\varrho }^{(n)}\geq L^{(n)}$, such
that the maps
\begin{equation*}
s\mapsto \mathrm{J}_{\mathcal{\{}\Lambda _{L^{(n)}}\mathcal{\}},\mathcal{\{}%
\Lambda _{L_{\varrho }^{(n)}}\mathcal{\}},\mathcal{\{}\Lambda _{L_{\tau
}^{(n)}}\mathcal{\}}}^{(\omega ,s\mathcal{E})}\qquad \text{and}\qquad
s\mapsto \partial _{s}\mathrm{J}_{\mathcal{\{}\Lambda _{L^{(n)}}\mathcal{\}},%
\mathcal{\{}\Lambda _{L_{\varrho }^{(n)}}\mathcal{\}},\mathcal{\{}\Lambda
_{L_{\tau }^{(n)}}\mathcal{\}}}^{(\omega ,s\mathcal{E})}
\end{equation*}%
converge uniformly for $s$ in any compact set of $\mathbb{R}$. In
particular, the map $s\mapsto \mathrm{J}^{(s\mathcal{E})}$ from ${\mathbb{R}}
$ to itself is continuously differentiable with
\begin{equation*}
\partial _{s}\mathrm{J}^{(s\mathcal{E})}=\lim_{L_{\tau }\geq L_{\varrho
}\geq L\rightarrow \infty }\partial _{s}\mathrm{J}_{\mathcal{\{}\Lambda
_{L^{(n)}}\mathcal{\}},\mathcal{\{}\Lambda _{L_{\varrho }^{(n)}}\mathcal{\}},%
\mathcal{\{}\Lambda _{L_{\tau }^{(n)}}\mathcal{\}}}^{(\omega ,s\mathcal{E}%
)}=\lim_{L\rightarrow \infty }\frac{\varrho ^{(\omega )}\left( \mathbb{I}%
_{\Lambda _{L}}^{(\omega ,\mathcal{E})}\mathrm{e}^{s\left\vert \Lambda
_{L}\right\vert \mathbb{I}_{\Lambda _{L}}^{(\omega ,\mathcal{E})}}\right) }{%
\varrho ^{(\omega )}\left( \mathrm{e}^{s\left\vert \Lambda _{L}\right\vert
\mathbb{I}_{\Lambda _{L}}^{(\omega ,\mathcal{E})}}\right) }.
\end{equation*}
\end{proof}

\appendix

\section{Combes-Thomas Estimates\label{Section Combes-Thomas Estimate}}

For any operator $h\in \mathcal{B}(\mathfrak{h})$ and $\mu \in \mathbb{R}%
_{0}^{+}$, let
\begin{equation}
\mathbf{S}_{0}(h,\mu )\doteq \sup_{x\in \mathbb{Z}^{d}}\sum_{y\in \mathbb{Z}%
^{d}}\mathrm{e}^{\mu |x-y|}\left\vert \left\langle \mathfrak{e}_{x},h%
\mathfrak{e}_{y}\right\rangle _{\mathfrak{h}}\right\vert \in \mathbb{R}%
_{0}^{+}\cup \left\{ \infty \right\} .  \label{S0}
\end{equation}%
Note that%
\begin{equation}
\mathbf{S}_{0}(h_{1}h_{2},\mu )\leq \mathbf{S}_{0}(h_{1},\mu )\mathbf{S}%
_{0}(h_{2},\mu ),  \label{produit}
\end{equation}%
for any $h_{1},h_{2}\in \mathcal{B}(\mathfrak{h})$ and $\mu \in \mathbb{R}%
_{0}^{+}$. In particular, for any $z\in \mathbb{C}$, $h\in \mathcal{B}(%
\mathfrak{h})$ and $\mu \in \mathbb{R}_{0}^{+}$,
\begin{equation}
\mathbf{S}_{0}(\mathrm{e}^{zh},\mu )\leq \mathrm{e}^{\mathbf{S}_{0}(zh,\mu
)}=\mathrm{e}^{|z|\mathbf{S}_{0}(h,\mu )}  \label{produit2}
\end{equation}%
and hence,%
\begin{equation*}
\left\vert \left\langle \mathfrak{e}_{x},\mathrm{e}^{zh}\mathfrak{e}%
_{y}\right\rangle _{\mathfrak{h}}\right\vert \leq \mathrm{e}^{|z|\mathbf{S}%
_{0}(h,\mu )}\mathrm{e}^{-\mu |x-y|}.
\end{equation*}

The above bound can be sharpened if $z=it$ is imaginary by using the
Combes-Thomas estimate, first proven in \cite{CT73}. We give a version of
this estimate that is adapted to the present setting: Given a self-adjoint
operator $h=h^{\ast }\in \mathcal{B}(\mathfrak{h})$ whose spectrum is
denoted by $\mathrm{spec}(h)$, we define the constants
\begin{equation}
\mathbf{S}(h,\mu )\doteq \sup_{x\in \mathbb{Z}^{d}}\sum_{y\in \mathbb{Z}%
^{d}}\left( \mathrm{e}^{\mu |x-y|}-1\right) \left\vert \left\langle
\mathfrak{e}_{x},h\mathfrak{e}_{y}\right\rangle _{\mathfrak{h}}\right\vert
\in \mathbb{R}_{0}^{+}\cup \left\{ \infty \right\} ,  \label{S}
\end{equation}%
for $\mu \in \mathbb{R}_{0}^{+}$, and
\begin{equation*}
\Delta (h,z)\doteq \inf \left\{ \left\vert z-\lambda \right\vert \colon
\lambda \in \mathrm{spec}(h)\right\} ,\qquad z\in \mathbb{C},
\end{equation*}%
as being the distance from the point $z$ to the spectrum of $h$. Since the
function $x\mapsto (e^{xr}-1)/x$ is increasing on $\mathbb{R}^{+}$ for any
fixed $r\geq 0$, it follows that
\begin{equation}
\mathbf{S}(h,\mu _{1})\leq \frac{\mu _{1}}{\mu _{2}}\mathbf{S}(h,\mu _{2})\
,\qquad \mu _{2}\geq \mu _{1}\geq 0.  \label{inequality combes easy}
\end{equation}%
The version of the Combes-Thomas estimate that is most convenient for the
current study is the following:

\begin{theorem}[Combes-Thomas]
\label{Combes-Thomas}\mbox{ }\newline
Let $h=h^{\ast }\in \mathcal{B}(\mathfrak{h})$, $\mu \in \mathbb{R}_{0}^{+}$
and $z\in \mathbb{C}$. If $\Delta (h,z)>\mathbf{S}(h,\mu )$ then, for all $%
x,y\in \mathbb{Z}^{d}$,
\begin{equation*}
\left\vert \left\langle \mathfrak{e}_{x},(z-h)^{-1}\mathfrak{e}%
_{y}\right\rangle \right\vert \leq \frac{\mathrm{e}^{-\mu |x-y|}}{\Delta
(h,z)-\mathbf{S}(h,\mu )}.
\end{equation*}
\end{theorem}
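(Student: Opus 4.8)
The plan is to prove the estimate by the standard conjugation (``gauge transformation'') trick, choosing the exponential weight so that it is \emph{bounded} --- so that the conjugation stays within $\mathcal{B}(\mathfrak{h})$ and remains invertible --- yet \emph{sharp} for the one fixed pair of sites. First I would fix $x,y\in \mathbb{Z}^{d}$ and introduce the weight $F(w)\doteq \mu \min \{|w-y|,|x-y|\}$. Since $w\mapsto \min \{|w-y|,|x-y|\}$ is $1$-Lipschitz (a minimum of a $1$-Lipschitz function and a constant), $F$ is bounded by $\mu |x-y|$ and satisfies $|F(w)-F(w^{\prime })|\leq \mu |w-w^{\prime }|$ for all $w,w^{\prime }\in \mathbb{Z}^{d}$, while $F(y)=0$ and $F(x)=\mu |x-y|$. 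Let $M_{F}\in \mathcal{B}(\mathfrak{h})$ be the multiplication operator by $\mathrm{e}^{F}$; as $F$ is bounded, $M_{F}$ is invertible with $M_{F}^{-1}$ equal to multiplication by $\mathrm{e}^{-F}$, both bounded and self-adjoint. Setting $h_{F}\doteq M_{F}hM_{F}^{-1}\in \mathcal{B}(\mathfrak{h})$, its matrix elements are $\langle \mathfrak{e}_{x},h_{F}\mathfrak{e}_{y}\rangle =\mathrm{e}^{F(x)-F(y)}\langle \mathfrak{e}_{x},h\mathfrak{e}_{y}\rangle $.

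Next I would estimate the perturbation $W\doteq h_{F}-h$, whose entries are $(\mathrm{e}^{F(x)-F(y)}-1)\langle \mathfrak{e}_{x},h\mathfrak{e}_{y}\rangle $. Using the elementary bound $|\mathrm{e}^{t}-1|\leq \mathrm{e}^{|t|}-1$ together with the Lipschitz estimate $|F(x)-F(y)|\leq \mu |x-y|$, and applying the Schur-type operator-norm estimate of Lemma \ref{lemma:norm_estimate}, I obtain $\Vert W\Vert _{\mathcal{B}(\mathfrak{h})}\leq \sup_{x}\sum_{y}(\mathrm{e}^{\mu |x-y|}-1)|\langle \mathfrak{e}_{x},h\mathfrak{e}_{y}\rangle |=\mathbf{S}(h,\mu )$, which is exactly the constant defined in (\ref{S}).

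Then, since $h=h^{\ast }$, the spectral theorem gives $\Vert (z-h)^{-1}\Vert _{\mathcal{B}(\mathfrak{h})}=\Delta (h,z)^{-1}$. Writing $z-h_{F}=(z-h)-W=(z-h)(\mathfrak{1}-(z-h)^{-1}W)$ and using $\Vert (z-h)^{-1}W\Vert \leq \mathbf{S}(h,\mu )/\Delta (h,z)<1$ --- this is precisely where the hypothesis $\Delta (h,z)>\mathbf{S}(h,\mu )$ is used --- the Neumann series yields $\Vert (z-h_{F})^{-1}\Vert \leq (\Delta (h,z)-\mathbf{S}(h,\mu ))^{-1}$. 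Finally, from $z-h_{F}=M_{F}(z-h)M_{F}^{-1}$ one has $(z-h_{F})^{-1}=M_{F}(z-h)^{-1}M_{F}^{-1}$, whence $\langle \mathfrak{e}_{x},(z-h)^{-1}\mathfrak{e}_{y}\rangle =\mathrm{e}^{F(y)-F(x)}\langle \mathfrak{e}_{x},(z-h_{F})^{-1}\mathfrak{e}_{y}\rangle $; combining $F(y)-F(x)=-\mu |x-y|$ with the operator-norm bound just obtained gives the claimed inequality.

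The one genuinely delicate point --- and the reason for the capped weight --- is reconciling two competing requirements: to keep $M_{F}$ bounded and boundedly invertible the weight $F$ must be bounded, yet to extract the full factor $\mathrm{e}^{-\mu |x-y|}$ the weight must grow at the maximal Lipschitz rate $\mu $ all the way from $y$ to $x$. An unbounded linear weight would achieve the decay but destroy invertibility of the conjugation. The choice $F(w)=\mu \min \{|w-y|,|x-y|\}$ resolves this by saturating at distance $|x-y|$, so that it is simultaneously bounded by $\mu |x-y|$, globally $\mu $-Lipschitz (which is what controls $\Vert W\Vert $ by $\mathbf{S}(h,\mu )$), and exact on the pair $(x,y)$. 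Everything else is a routine Neumann-series argument, and no uniformity in $x,y$ is required, since $F$ is permitted to depend on the fixed pair.
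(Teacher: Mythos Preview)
Your proof is correct and is precisely the standard Combes--Thomas conjugation argument that the paper invokes by citing \cite[Theorem 10.5]{AizenmanWarzel}; the paper itself gives no details beyond that reference, so your write-up is in fact more complete. The capped weight $F(w)=\mu\min\{|w-y|,|x-y|\}$ is a clean device to keep $M_{F}$ bounded while retaining the full decay, and everything else (Schur bound on $W$ via the symmetry of $h$, Neumann series) is as in the standard proof.
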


\begin{proof}
This theorem is an instance of the first part of \cite[Theorem 10.5]%
{AizenmanWarzel} and is proven in the same way.
\end{proof}

\noindent The Combes-Thomas estimate yields the following bound \cite[Lemma 3%
]{AG98}:

\begin{proposition}[Bound on differences of resolvents]
\label{Lemma AG98}\mbox{ }\newline
Let $h=h^{\ast }\in \mathcal{B}(\mathfrak{h})$, $\mu \in \mathbb{R}_{0}^{+}$
and $\eta \in \mathbb{R}^{+}$ such that $\mathbf{S}(h,\mu )\leq \eta /2$.
Then, for all $x,y\in \mathbb{Z}^{d}$ and $u\in \mathbb{R}$,
\begin{eqnarray*}
&&\left\vert \left\langle \mathfrak{e}_{x},((h-u)^{2}+\eta ^{2})^{-1}%
\mathfrak{e}_{y}\right\rangle _{\mathfrak{h}}\right\vert \\
&\leq &12\mathrm{e}^{-\mu |x-y|}\left\langle \mathfrak{e}_{x},((h-u)^{2}+%
\eta ^{2})^{-1}\mathfrak{e}_{x}\right\rangle _{\mathfrak{h}%
}^{1/2}\left\langle \mathfrak{e}_{y},((h-u)^{2}+\eta ^{2})^{-1}\mathfrak{e}%
_{y}\right\rangle _{\mathfrak{h}}^{1/2}.
\end{eqnarray*}
\end{proposition}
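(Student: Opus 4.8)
The plan is to re-derive the bound in the spirit of Combes--Thomas, conjugating by an exponential weight, but organized so that the \emph{diagonal} resolvent elements survive rather than being crushed to the operator norm $\sim\eta^{-2}$. Writing $z\doteq u+\mathrm{i}\eta$ and $R\doteq(z-h)^{-1}$, the starting point is the factorization
\begin{equation*}
((h-u)^2+\eta^2)^{-1}=(z-h)^{-1}(\bar{z}-h)^{-1}=RR^{\ast},
\end{equation*}
valid because $R$ and $R^{\ast}=(\bar{z}-h)^{-1}$ are commuting functions of the self-adjoint $h$. Hence the right-hand side diagonal terms are exactly $\langle\mathfrak{e}_a,((h-u)^2+\eta^2)^{-1}\mathfrak{e}_a\rangle_{\mathfrak{h}}=\|R^{\ast}\mathfrak{e}_a\|_{\mathfrak{h}}^2$, while the object to be estimated is the scalar product $\langle R^{\ast}\mathfrak{e}_x,R^{\ast}\mathfrak{e}_y\rangle_{\mathfrak{h}}$.

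First I would fix the $1$-Lipschitz weight $\phi(w)\doteq-|w-x|$, so that $\phi(x)=0$ and $\phi(y)=-|x-y|$, insert it, and apply Cauchy--Schwarz:
\begin{equation*}
\left|\langle R^{\ast}\mathfrak{e}_x,R^{\ast}\mathfrak{e}_y\rangle_{\mathfrak{h}}\right|\leq\left\|\mathrm{e}^{-\mu\phi}R^{\ast}\mathfrak{e}_x\right\|_{\mathfrak{h}}\left\|\mathrm{e}^{\mu\phi}R^{\ast}\mathfrak{e}_y\right\|_{\mathfrak{h}}.
\end{equation*}
Conjugation turns $R^{\ast}$ into the resolvent of a tilted operator, $\mathrm{e}^{\pm\mu\phi}R^{\ast}\mathrm{e}^{\mp\mu\phi}=(\bar{z}-h_{\pm\phi})^{-1}$ with $h_{\pm\phi}\doteq\mathrm{e}^{\pm\mu\phi}h\,\mathrm{e}^{\mp\mu\phi}$, whence $\mathrm{e}^{\pm\mu\phi}R^{\ast}\mathfrak{e}_a=\mathrm{e}^{\pm\mu\phi(a)}(\bar{z}-h_{\pm\phi})^{-1}\mathfrak{e}_a$. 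The matrix elements of $h_{\pm\phi}-h$ are those of $h$ scaled by $\mathrm{e}^{\pm\mu(\phi(x')-\phi(y'))}-1$; since $|\mathrm{e}^{s}-1|\leq\mathrm{e}^{|s|}-1$ and $\phi$ is $1$-Lipschitz, Lemma \ref{lemma:norm_estimate} gives $\|h_{\pm\phi}-h\|_{\mathcal{B}(\mathfrak{h})}\leq\mathbf{S}(h,\mu)\leq\eta/2$, the very quantity (\ref{S}) that governs Theorem \ref{Combes-Thomas}.

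The crux is then a \emph{weight-agnostic} diagonal bound. Since $\mathrm{spec}(h)\subset\mathbb{R}$ we have $\|R^{\ast}\|_{\mathcal{B}(\mathfrak{h})}=\Delta(h,\bar{z})^{-1}\leq\eta^{-1}$, and $\|h_{\pm\phi}-h\|_{\mathcal{B}(\mathfrak{h})}\leq\eta/2$ ensures $(\bar{z}-h_{\pm\phi})^{-1}$ exists. Applying the second resolvent identity to the vector $\mathfrak{e}_a$,
\begin{equation*}
(\bar{z}-h_{\pm\phi})^{-1}\mathfrak{e}_a=R^{\ast}\mathfrak{e}_a+R^{\ast}(h_{\pm\phi}-h)(\bar{z}-h_{\pm\phi})^{-1}\mathfrak{e}_a,
\end{equation*}
and bounding the last term by $\|R^{\ast}\|\,\|h_{\pm\phi}-h\|\leq\tfrac12$ times $\|(\bar{z}-h_{\pm\phi})^{-1}\mathfrak{e}_a\|_{\mathfrak{h}}$, I can absorb it to obtain $\|(\bar{z}-h_{\pm\phi})^{-1}\mathfrak{e}_a\|_{\mathfrak{h}}\leq2\|R^{\ast}\mathfrak{e}_a\|_{\mathfrak{h}}$ for \emph{every} $1$-Lipschitz $\phi$. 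Feeding this into the two factors (with prefactors $\mathrm{e}^{-\mu\phi(x)}=1$ and $\mathrm{e}^{\mu\phi(y)}=\mathrm{e}^{-\mu|x-y|}$) yields
\begin{equation*}
\left|\langle\mathfrak{e}_x,((h-u)^2+\eta^2)^{-1}\mathfrak{e}_y\rangle_{\mathfrak{h}}\right|\leq4\,\mathrm{e}^{-\mu|x-y|}\|R^{\ast}\mathfrak{e}_x\|_{\mathfrak{h}}\|R^{\ast}\mathfrak{e}_y\|_{\mathfrak{h}},
\end{equation*}
which is the assertion with the harmless universal constant $4$; the constant $12$ of \cite[Lemma 3]{AG98} is recovered by the same scheme with cruder bookkeeping.

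The step I expect to be the genuine obstacle is exactly this last diagonal estimate. A direct application of Theorem \ref{Combes-Thomas} to each kernel, or a weighted Cauchy--Schwarz with the weight pushed onto $R_{xw}$, only controls the operator norm $\sim\eta^{-2}$, not the diagonal elements, because it would force $\mathbf{S}(h,\cdot)$ to stay below $\eta/2$ at a strictly larger decay rate than $\mu$, which is not among the hypotheses. Routing the argument through the second resolvent identity applied to $\mathfrak{e}_a$ is precisely what makes the control of $\|(\bar z-h_{\pm\phi})^{-1}\mathfrak{e}_a\|_{\mathfrak{h}}$ insensitive to the choice of weight, and thereby retains the factors $\langle\mathfrak{e}_a,((h-u)^2+\eta^2)^{-1}\mathfrak{e}_a\rangle_{\mathfrak{h}}^{1/2}$ on the right-hand side.
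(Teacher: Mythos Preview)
Your argument is correct. The factorization $((h-u)^2+\eta^2)^{-1}=RR^{\ast}$, the weighted Cauchy--Schwarz, the bound $\|h_{\pm\phi}-h\|_{\mathcal{B}(\mathfrak{h})}\leq\mathbf{S}(h,\mu)\leq\eta/2$ via Lemma~\ref{lemma:norm_estimate}, and the absorption step through the second resolvent identity all go through as written, and you end up with the constant $4$.

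The paper does not supply its own proof here; it simply cites \cite[Lemma~3]{AG98}. Your route is precisely the Combes--Thomas mechanism underlying that lemma: conjugate by $\mathrm{e}^{\pm\mu\phi}$, control the perturbation $h_{\pm\phi}-h$ by $\mathbf{S}(h,\mu)$, and use a resolvent identity. The one point worth flagging is that your key idea --- applying the second resolvent identity to the \emph{vector} $\mathfrak{e}_a$ rather than bounding everything by operator norms --- is exactly what Aizenman and Graf do to keep the diagonal elements $\langle\mathfrak{e}_a,((h-u)^2+\eta^2)^{-1}\mathfrak{e}_a\rangle_{\mathfrak{h}}^{1/2}$ on the right-hand side; their constant $12$ comes from slightly looser bookkeeping (they carry an extra factor from treating the two weight signs less symmetrically). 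So your proof is not a genuinely different route but a cleaner execution of the same one, with a sharper constant.
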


We are now in a position to prove the space decay of propagators:

\begin{corollary}[Space decay of propagators - I]
\label{Lemma fermi1}\mbox{ }\newline
For any self-adjoint operator $h=h^{\ast }\in \mathcal{B}(\mathfrak{h})$, $%
\eta ,\mu \in \mathbb{R}^{+}$, all $x,y\in \mathbb{Z}^{d}$ and $t\in \mathbb{%
R}$,
\begin{equation*}
\left\vert \left\langle \mathfrak{e}_{x},\mathrm{e}^{ith}\mathfrak{e}%
_{y}\right\rangle _{\mathfrak{h}}\right\vert \leq 36\exp \left( \left\vert
t\eta \right\vert -\mu \min \left\{ 1,\frac{\eta }{2\mathbf{S}(h,\mu )}%
\right\} |x-y|\right) .
\end{equation*}
\end{corollary}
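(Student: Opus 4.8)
The plan is to reduce the propagator $\mathrm{e}^{ith}$ to the resolvent kernels controlled by Proposition \ref{Lemma AG98} through a Poisson-kernel integral representation. First I would record the scalar identity, valid for every $\lambda\in\mathbb{R}$ and $\eta>0$,
\[
\mathrm{e}^{it\lambda}\mathrm{e}^{-\eta|t|}=\frac{\eta}{\pi}\int_{\mathbb{R}}\mathrm{e}^{itu}\frac{1}{(\lambda-u)^{2}+\eta^{2}}\,\mathrm{d}u,
\]
which is merely the statement that the Fourier transform of the Cauchy density $\tfrac{\eta}{\pi}(v^{2}+\eta^{2})^{-1}$ is $\mathrm{e}^{-\eta|t|}$. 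Integrating this against the spectral measure of the bounded self-adjoint operator $h$ (Fubini applies because the integrand is dominated by an $L^{1}(\mathrm{d}u)$ kernel of total mass one) yields the operator identity
\[
\mathrm{e}^{ith}=\mathrm{e}^{\eta|t|}\,\frac{\eta}{\pi}\int_{\mathbb{R}}\mathrm{e}^{itu}\,\big((h-u)^{2}+\eta^{2}\big)^{-1}\,\mathrm{d}u .
\]
This representation is the key step, since the kernel $\big((h-u)^{2}+\eta^{2}\big)^{-1}$ is exactly the object bounded by Proposition \ref{Lemma AG98}.

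Next I would treat the regime $\mathbf{S}(h,\mu)\le\eta/2$, in which Proposition \ref{Lemma AG98} applies directly. Writing $R_{u}\doteq((h-u)^{2}+\eta^{2})^{-1}$, taking the $\langle\mathfrak{e}_{x},\cdot\,\mathfrak{e}_{y}\rangle_{\mathfrak{h}}$ matrix element of the identity, bounding the $u$-integrand by Proposition \ref{Lemma AG98}, and applying the Cauchy--Schwarz inequality in $u$ gives
\[
\big|\langle\mathfrak{e}_{x},\mathrm{e}^{ith}\mathfrak{e}_{y}\rangle_{\mathfrak{h}}\big|\le \frac{\eta}{\pi}\,\mathrm{e}^{\eta|t|}\,12\,\mathrm{e}^{-\mu|x-y|}\Big(\int_{\mathbb{R}}\langle\mathfrak{e}_{x},R_{u}\mathfrak{e}_{x}\rangle_{\mathfrak{h}}\,\mathrm{d}u\Big)^{1/2}\Big(\int_{\mathbb{R}}\langle\mathfrak{e}_{y},R_{u}\mathfrak{e}_{y}\rangle_{\mathfrak{h}}\,\mathrm{d}u\Big)^{1/2}.
\]
The diagonal integrals are evaluated by the spectral theorem: since $\langle\mathfrak{e}_{x},R_{u}\mathfrak{e}_{x}\rangle_{\mathfrak{h}}=\int_{\mathbb{R}}((\lambda-u)^{2}+\eta^{2})^{-1}\,\mathrm{d}\nu_{x}(\lambda)$ for the spectral measure $\nu_{x}$, which is a probability measure because $\Vert\mathfrak{e}_{x}\Vert_{\mathfrak{h}}=1$, Tonelli together with $\int_{\mathbb{R}}((\lambda-u)^{2}+\eta^{2})^{-1}\,\mathrm{d}u=\pi/\eta$ gives $\int_{\mathbb{R}}\langle\mathfrak{e}_{x},R_{u}\mathfrak{e}_{x}\rangle_{\mathfrak{h}}\,\mathrm{d}u=\pi/\eta$, and likewise for $y$. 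The prefactors $\eta/\pi$ and $\pi/\eta$ cancel, yielding $\big|\langle\mathfrak{e}_{x},\mathrm{e}^{ith}\mathfrak{e}_{y}\rangle_{\mathfrak{h}}\big|\le 12\,\mathrm{e}^{\eta|t|-\mu|x-y|}$. As $\min\{1,\eta/(2\mathbf{S}(h,\mu))\}=1$ in this regime and $12\le 36$, this is the claimed bound with room to spare.

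Finally I would remove the restriction $\mathbf{S}(h,\mu)\le\eta/2$ by rescaling the decay rate. In the complementary regime $\mathbf{S}(h,\mu)>\eta/2$, set $\mu'\doteq\mu\,\eta/(2\mathbf{S}(h,\mu))<\mu$; inequality (\ref{inequality combes easy}) then gives $\mathbf{S}(h,\mu')\le(\mu'/\mu)\mathbf{S}(h,\mu)=\eta/2$, so the previous paragraph applies with $\mu'$ in place of $\mu$ and yields $\big|\langle\mathfrak{e}_{x},\mathrm{e}^{ith}\mathfrak{e}_{y}\rangle_{\mathfrak{h}}\big|\le 12\,\mathrm{e}^{\eta|t|-\mu'|x-y|}$. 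Since $\mu'=\mu\min\{1,\eta/(2\mathbf{S}(h,\mu))\}$ in this regime while $\mu$ itself realizes the minimum in the first regime, the two cases combine into the stated estimate. The only genuinely delicate point is the justification of the integral representation, i.e.\ the interchange of the $u$-integral with the spectral integral, which is routine once one notes the dominating kernel is integrable with mass one; the remainder is bookkeeping with Cauchy--Schwarz and the scaling inequality (\ref{inequality combes easy}).
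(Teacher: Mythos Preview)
Your proof is correct and follows the same overall strategy as the paper: represent the propagator as an integral over $u$ of the squared-resolvent kernel $((h-u)^{2}+\eta^{2})^{-1}$, apply Proposition~\ref{Lemma AG98}, use Cauchy--Schwarz in $u$ together with $\int_{\mathbb{R}}\langle\mathfrak{e}_{x},R_{u}\mathfrak{e}_{x}\rangle_{\mathfrak{h}}\,\mathrm{d}u=\pi/\eta$, and finally invoke the scaling inequality~(\ref{inequality combes easy}) to cover the regime $\mathbf{S}(h,\mu)>\eta/2$.

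The one genuine difference is in the integral representation itself. The paper, following Aizenman--Graf, writes $G(E)=\mathrm{e}^{itE}$ via Cauchy's integral formula on the stripe $\mathbb{R}+i\eta[-1,1]$ and, after shifting contours, obtains the three-term formula~(\ref{FE}) involving kernels at both scales $\eta$ and $2\eta$; bounding each of the three terms separately yields the constant $36=3\times 12$. You instead use the single Poisson-kernel identity $\mathrm{e}^{it\lambda}\mathrm{e}^{-\eta|t|}=\tfrac{\eta}{\pi}\int_{\mathbb{R}}\mathrm{e}^{itu}((\lambda-u)^{2}+\eta^{2})^{-1}\,\mathrm{d}u$, which is more elementary (it is just the Fourier transform of the Cauchy density), involves only one term, and consequently delivers the sharper constant $12$ in place of $36$. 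Everything else, including the case split and the use of~(\ref{inequality combes easy}), is handled the same way in both arguments.
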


\begin{proof}
The proof is a simple adaptation of the one from \cite[Theorem 3]{AG98}: Fix
all parameters of the lemma and observe that Proposition \ref{Lemma AG98}
combined with Inequality (\ref{inequality combes easy}) yields%
\begin{eqnarray}
&&\left\vert \left\langle \mathfrak{e}_{x},((h-u)^{2}+\eta ^{2})^{-1}%
\mathfrak{e}_{y}\right\rangle _{\mathfrak{h}}\right\vert
\label{Lemma AG982} \\
&\leq &12\mathrm{e}^{-\frac{\mu \eta }{2\mathbf{S}(h,\mu )}%
|x-y|}\left\langle \mathfrak{e}_{x},((h-u)^{2}+\eta ^{2})^{-1}\mathfrak{e}%
_{x}\right\rangle _{\mathfrak{h}}^{1/2}\left\langle \mathfrak{e}%
_{y},((h-u)^{2}+\eta ^{2})^{-1}\mathfrak{e}_{y}\right\rangle _{\mathfrak{h}%
}^{1/2}  \notag
\end{eqnarray}%
for $x,y\in \mathbb{Z}^{d}$, $u\in \mathbb{R}$ and $\eta \in \mathbb{R}^{+}$%
. On the other hand, at fixed $\eta \in \mathbb{R}^{+}$, the function
defined by $G\left( z\right) \doteq \mathrm{e}^{itz}$ on the stripe
\begin{equation*}
\mathbb{R}+i\eta \left[ -1,1\right] \subset \mathbb{C}
\end{equation*}%
is analytic and uniformly bounded by $\mathrm{e}^{\left\vert t\eta
\right\vert }$. Using Cauchy's integral formula and translations by $\pm
i\eta $ of the integration variable, $u$, we write the function $G$ as
\begin{eqnarray}
G\left( E\right)  &=&\frac{1}{2\pi i}\int_{\mathbb{R}}\left( \frac{G\left(
u-i\eta \right) }{u-i\eta -E}-\frac{G\left( u+i\eta \right) }{u+i\eta -E}%
\right) \mathrm{d}u  \notag \\
&=&\frac{\eta }{\pi }\int_{\mathbb{R}}\frac{G\left( u-i\eta \right) +G\left(
u+i\eta \right) }{\left( E-u\right) ^{2}+\eta ^{2}}\mathrm{d}u-\frac{2\eta }{%
\pi }\int_{\mathbb{R}}\frac{G\left( u\right) }{\left( E-u\right) ^{2}+4\eta
^{2}}\mathrm{d}u  \label{FE}
\end{eqnarray}%
for all $E\in \mathbb{R}$ and $\eta \in \mathbb{R}^{+}$. By spectral
calculus, together with (\ref{Lemma AG982})-(\ref{FE}) and the
Cauchy-Schwarz inequality, the assertion follows.
\end{proof}

\begin{corollary}[Space decay of propagators - II]
\label{Lemma fermi1 copy(1)}\mbox{ }\newline
For any self-adjoint operators $h_{1},h_{2}\in \mathcal{B}(\mathfrak{h})$
and all $x,y\in \mathbb{Z}^{d}$,%
\begin{equation*}
\left\vert \left\langle \mathfrak{e}_{x},\frac{1}{1+\mathrm{e}^{h_{2}}%
\mathrm{e}^{h_{1}}\mathrm{e}^{h_{2}}}\mathfrak{e}_{y}\right\rangle _{%
\mathfrak{h}}\right\vert \leq 2\inf_{\mu \in \mathbb{R}_{0}^{+}}\exp \left( -%
\frac{\mu }{2}\mathrm{e}^{-\mathbf{S}_{0}(h_{1},\mu )-2\mathbf{S}%
_{0}(h_{2},\mu )}|x-y|\right) .
\end{equation*}
\end{corollary}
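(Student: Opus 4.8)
The plan is to recognize $A\doteq \mathrm{e}^{h_{2}}\mathrm{e}^{h_{1}}\mathrm{e}^{h_{2}}$ as a positive self-adjoint operator and to read the quantity to be estimated as (minus) the resolvent of $A$ evaluated at the point $-1$, so that the Combes--Thomas estimate (Theorem \ref{Combes-Thomas}) applies essentially verbatim. First I would note that $A=(\mathrm{e}^{h_{1}/2}\mathrm{e}^{h_{2}})^{\ast}(\mathrm{e}^{h_{1}/2}\mathrm{e}^{h_{2}})$ is self-adjoint with $\mathrm{spec}(A)\subset[0,\infty)$, whence $\Delta(A,-1)=\inf\{|-1-\lambda|\colon\lambda\in\mathrm{spec}(A)\}\geq 1$ and $(-1-A)^{-1}=-(1+A)^{-1}$. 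Consequently $|\langle \mathfrak{e}_{x},(1+A)^{-1}\mathfrak{e}_{y}\rangle_{\mathfrak{h}}|=|\langle \mathfrak{e}_{x},(-1-A)^{-1}\mathfrak{e}_{y}\rangle_{\mathfrak{h}}|$, and Theorem \ref{Combes-Thomas} (with $z=-1$) yields, for every $\nu\in\mathbb{R}_{0}^{+}$ with $\mathbf{S}(A,\nu)<1$, the bound $|\langle \mathfrak{e}_{x},(1+A)^{-1}\mathfrak{e}_{y}\rangle_{\mathfrak{h}}|\leq \mathrm{e}^{-\nu|x-y|}(\Delta(A,-1)-\mathbf{S}(A,\nu))^{-1}\leq \mathrm{e}^{-\nu|x-y|}(1-\mathbf{S}(A,\nu))^{-1}$, using $\Delta(A,-1)\geq 1$.

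The next step is to bound $\mathbf{S}(A,\nu)$ in terms of the data $\mathbf{S}_{0}(h_{1},\cdot)$ and $\mathbf{S}_{0}(h_{2},\cdot)$. Since $\mathbf{S}(A,\mu)\leq \mathbf{S}_{0}(A,\mu)$, submultiplicativity (\ref{produit}) together with (\ref{produit2}) gives, for every $\mu\in\mathbb{R}_{0}^{+}$,
$\mathbf{S}(A,\mu)\leq \mathbf{S}_{0}(\mathrm{e}^{h_{2}},\mu)\,\mathbf{S}_{0}(\mathrm{e}^{h_{1}},\mu)\,\mathbf{S}_{0}(\mathrm{e}^{h_{2}},\mu)\leq \mathrm{e}^{\mathbf{S}_{0}(h_{1},\mu)+2\mathbf{S}_{0}(h_{2},\mu)}$.
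The scaling inequality (\ref{inequality combes easy}) then yields $\mathbf{S}(A,\nu)\leq (\nu/\mu)\,\mathbf{S}(A,\mu)\leq (\nu/\mu)\,\mathrm{e}^{\mathbf{S}_{0}(h_{1},\mu)+2\mathbf{S}_{0}(h_{2},\mu)}$ for all $0\leq\nu\leq\mu$.

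Finally I would balance the decay rate against the prefactor: fixing $\mu\in\mathbb{R}_{0}^{+}$ and choosing $\nu\doteq \tfrac{\mu}{2}\,\mathrm{e}^{-\mathbf{S}_{0}(h_{1},\mu)-2\mathbf{S}_{0}(h_{2},\mu)}$, which satisfies $\nu\leq\mu$, the preceding estimate forces $\mathbf{S}(A,\nu)\leq 1/2$, so that $(1-\mathbf{S}(A,\nu))^{-1}\leq 2$ and hence $|\langle \mathfrak{e}_{x},(1+A)^{-1}\mathfrak{e}_{y}\rangle_{\mathfrak{h}}|\leq 2\exp(-\tfrac{\mu}{2}\mathrm{e}^{-\mathbf{S}_{0}(h_{1},\mu)-2\mathbf{S}_{0}(h_{2},\mu)}|x-y|)$; taking the infimum over $\mu\in\mathbb{R}_{0}^{+}$ gives the claim. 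The degenerate case $x=y$ is trivial, since $A\geq 0$ implies $\Vert (1+A)^{-1}\Vert_{\mathcal{B}(\mathfrak{h})}\leq 1\leq 2$, and the regime in which some $\mathbf{S}_{0}(h_{i},\mu)$ is infinite is vacuous (the bound then reading $2$). The proof is thus a direct application of Combes--Thomas, and the only genuinely delicate point is this last balancing: one must select $\nu$ so that the Neumann/resolvent denominator stays bounded away from zero (giving the universal constant $2$) while simultaneously keeping the decay rate as large as the scaling inequality (\ref{inequality combes easy}) permits, which is exactly what the choice $\nu=\tfrac{\mu}{2}\mathrm{e}^{-\mathbf{S}_{0}(h_{1},\mu)-2\mathbf{S}_{0}(h_{2},\mu)}$ achieves.
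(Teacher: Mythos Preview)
Your proof is correct and follows essentially the same approach as the paper: both recognize $A=\mathrm{e}^{h_{2}}\mathrm{e}^{h_{1}}\mathrm{e}^{h_{2}}\geq 0$ so that $\Delta(A,-1)\geq 1$, bound $\mathbf{S}(A,\mu)\leq \mathbf{S}_{0}(A,\mu)\leq \mathrm{e}^{\mathbf{S}_{0}(h_{1},\mu)+2\mathbf{S}_{0}(h_{2},\mu)}$ via (\ref{produit})--(\ref{produit2}), use the scaling inequality (\ref{inequality combes easy}) to pass to $\mu_{1}=\tfrac{\mu}{2}\mathrm{e}^{-\mathbf{S}_{0}(h_{1},\mu)-2\mathbf{S}_{0}(h_{2},\mu)}$ with $\mathbf{S}(A,\mu_{1})\leq 1/2$, and then apply Theorem~\ref{Combes-Thomas} at $z=-1$. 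Your write-up is in fact slightly more explicit than the paper's (the factorization showing $A\geq 0$, the edge cases), but the argument is the same.
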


\begin{proof}
By (\ref{S0})-(\ref{S}), note that, for any $\mu \in \mathbb{R}_{0}^{+}$,
\begin{equation*}
\mathbf{S}(\mathrm{e}^{h_{2}}\mathrm{e}^{h_{1}}\mathrm{e}^{h_{2}},\mu )\leq
\mathbf{S}_{0}(\mathrm{e}^{h_{2}}\mathrm{e}^{h_{1}}\mathrm{e}^{h_{2}},\mu
)\leq \mathrm{e}^{\mathbf{S}_{0}(h_{1},\mu )+2\mathbf{S}_{0}(h_{2},\mu )}.
\end{equation*}%
Fix $\mu \in \mathbb{R}_{0}^{+}$ and define
\begin{equation*}
\mu _{1}\doteq \frac{\mu }{2}\mathrm{e}^{-\mathbf{S}_{0}(h_{1},\mu )-2%
\mathbf{S}_{0}(h_{2},\mu )}.
\end{equation*}%
By (\ref{inequality combes easy}), $\mathbf{S}(\mathrm{e}^{h_{2}}\mathrm{e}%
^{h_{1}}\mathrm{e}^{h_{2}},\mu _{1})<1/2$. Meanwhile, by using Theorem \ref%
{Combes-Thomas} with $h=\mathrm{e}^{h_{2}}\mathrm{e}^{h_{1}}\mathrm{e}%
^{h_{2}}\geq 0$,
\begin{equation*}
\left\vert \left\langle \mathfrak{e}_{x},\frac{1}{1+\mathrm{e}^{h_{2}}%
\mathrm{e}^{h_{1}}\mathrm{e}^{h_{2}}}\mathfrak{e}_{y}\right\rangle _{%
\mathfrak{h}}\right\vert \leq 2\mathrm{e}^{-\mu _{1}|x-y|}.
\end{equation*}
\end{proof}

\section{Large Deviation Formalism\label{sec:LDP}}

In probability theory, the large deviation (LD) formalism quantitatively
describes, for large $n\gg 1$, the probability of finding an empirical mean
that differs from the expected value, by more than some fixed amount. That's
the reason is why we apply it in\ Section \ref{sec:main} to prove the
exponentially fast convergence of microscopic current densities towards
their (classical) macroscopic values. For completeness, in this appendix, we
present the main result from LD theory used in the current study, namely,
the G\"{a}rtner-Ellis theorem (Theorem \ref{prop Gartner--Ellis} below). For
more details, see \cite{DS89,dembo1998large}. For a historical review of LD
in quantum statistical mechanics, see \cite[Section 7.1]{ABPM1}.

Let $\mathcal{X}$ denote a topological vector space. A lower semi-continuous
function $\mathrm{I}:\mathcal{X}\rightarrow \lbrack 0,\infty ]$ is called a
\textit{good rate function} if $\mathrm{I}$ is not identically $\infty $ and
has compact level sets, i.e., $\mathrm{I}^{-1}([0,m])=\{x\in \mathcal{X}:%
\mathrm{I}(x)\leq m\}$ is compact for any $m\geq 0$. A sequence $%
(X_{L})_{L\in \mathbb{N}}$ of $\mathcal{X}$-valued random variables
satisfies the \textit{LD upper bound} with \textit{speed }$(\mathfrak{n}%
_{L})_{L\in \mathbb{N}}\subset \mathbb{R}^{+}$ (a positive, increasing and
divergent sequence) and rate function $\mathrm{I}$ if, for any closed subset
$F$ of ${\mathcal{X}}$,%
\begin{equation}
\limsup_{L\rightarrow \infty }\frac{1}{\mathfrak{n}_{L}}\ln \mathbb{P}%
(X_{l}\in F)\leq -\inf_{x\in F}\mathrm{I}(x),  \label{LDP upp}
\end{equation}%
and it satisfies the \textit{LD lower bound} if, for any open subset $G$ of $%
{\mathcal{X}}$,
\begin{equation}
\liminf_{L\rightarrow \infty }\frac{1}{\mathfrak{n}_{L}}\ln \mathbb{P}%
(X_{l}\in G)\geq -\inf_{x\in G}\mathrm{I}\left( x\right) .  \label{LDP lower}
\end{equation}%
If both, upper and lower bound, are satisfied, one says that $(X_{L})_{L\in
\mathbb{N}}$ satisfies an \emph{LD}\textit{\ principle} (LDP). The principle
is called \textit{weak} if the upper bound in (\ref{LDP upp}) holds only for
\textit{compact }sets $F$.

A weak LDP can be strengthened to a full one by showing that the sequence $%
(X_{L})_{L\in \mathbb{N}}$ of distributions is \textit{exponentially tight},
i.e., if for any $\alpha \in \mathbb{R}$, there is a compact subset $%
\mathcal{G}_{\alpha }$ of ${\mathcal{X}}$ such that%
\begin{equation}
\limsup_{L\rightarrow \infty }\frac{1}{\mathfrak{n}_{L}}\ln \mathbb{P}%
(X_{L}\in {\mathcal{X}}\backslash \mathcal{G}_{\alpha })<-\alpha \mathfrak{.}
\label{eq:exp_tight}
\end{equation}%
If $\mathcal{X}$ is a locally compact topological space, i.e., every point
possesses a compact neighborhood, then the existence of an LDP with a good
rate function $\mathrm{I}$ for the sequence $(X_{L})_{L\in \mathbb{N}}$
implies its exponential tightness \cite[Exercise 1.2.19]{dembo1998large}.

A sufficient condition to ensure that a sequence $(X_{L})_{L\in \mathbb{N}}$
of $\mathcal{X}$-valued random variables satisfies an LDP is given by the G%
\"{a}rtner-Ellis theorem. It says \cite[Corollary 4.5.27]{dembo1998large}
that an exponentially tight sequence $(X_{L})_{L\in \mathbb{N}}$ of $%
\mathcal{X}$-valued random variables on a Banach space $\mathcal{X}$
satisfies an LDP with the good rate function
\begin{equation}
\mathrm{I}\left( x\right) =\sup\limits_{s\in \mathcal{X}^{\ast }}\left\{
s\left( x\right) -J(s)\right\} \ ,\qquad x\in \mathcal{X},
\label{rate function}
\end{equation}%
whenever the so-called limiting logarithmic moment generating function
\begin{equation}
J(s)\doteq \lim_{L\rightarrow \infty }\frac{1}{\mathfrak{n}_{L}}\ln \mathbb{E%
}\left[ \mathrm{e}^{\mathfrak{n}_{L}s\left( X_{L}\right) }\right] \ ,\qquad
s\in \mathcal{X}^{\ast },  \label{eq:state_LDP}
\end{equation}%
exists as a Gateaux differentiable and weak$^{\ast }$ lower semi-continuous
(finite-valued) function on the dual space $\mathcal{X}^{\ast }$. See also
\cite[Theorem 2.2.4]{DS89}.

The random variables we study in this paper result from bounded sequences $%
(A_{L})_{L\in \mathbb{N}}\subset \mathcal{U}$ of self-adjoint elements of
the CAR $C^{\ast }$-algebra $\mathcal{U}$ along with some fixed state $\rho
\in \mathcal{U}^{\ast }$. In Section \ref{sec:main}, we explain how such a
sequence and state naturally define an exponentially tight sequence of
random variables on the real line $\mathcal{X}=\mathbb{R}$, via the
Riesz-Markov theorem and functional calculus (cf. (\ref{fluctuation measure}%
)). The following simple version of the celebrated G\"{a}rtner-Ellis theorem
of LD theory is sufficient for our purposes:

\begin{theorem}[G\"{a}rtner-Ellis]
\label{prop Gartner--Ellis}\mbox{ }\newline
Take any exponentially tight sequence $(X_{L})_{L\in \mathbb{N}}$ of
real-valued random variables (i.e., $\mathcal{X=X}^{\ast }=\mathbb{R}$) and
assume that the limiting logarithmic moment generating function $J$ defined
by (\ref{eq:state_LDP}) exists for all $s\in \mathbb{R}$. Then: \newline
\emph{(LD1)} $(X_{L})_{L\in \mathbb{N}}$ satisfies the LD upper bound (\ref%
{LDP upp}) with rate function $\mathrm{I}$ given by (\ref{rate function}).
\newline
\emph{(LD2)} If, additionally, $\mathrm{J}$ is differentiable for all $s\in
\mathbb{R}$ then $(X_{L})_{L\in \mathbb{N}}$ satisfies the LD lower bound (%
\ref{LDP lower}) with good rate function $\mathrm{I}$ given again by (\ref%
{rate function}).
\end{theorem}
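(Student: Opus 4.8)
The plan is to prove the two assertions separately, following the classical route for the one-dimensional G\"{a}rtner-Ellis theorem; since the $X_{L}$ here are bounded real-valued random variables, all moment generating functions are finite and no integrability issues arise. For the upper bound (LD1), I would start from the exponential Markov (Chebyshev) inequality: for any $a\in \mathbb{R}$ and $s\geq 0$, $\mathbb{P}(X_{L}\geq a)\leq \mathrm{e}^{-\mathfrak{n}_{L}sa}\mathbb{E}[\mathrm{e}^{\mathfrak{n}_{L}sX_{L}}]$, with the mirror inequality using $s\leq 0$ for the left tail. Applying $\mathfrak{n}_{L}^{-1}\ln (\cdot )$, passing to the $\limsup$, using the existence of $J$ from (\ref{eq:state_LDP}), and optimizing over the sign-constrained $s$ yields $\limsup_{L}\mathfrak{n}_{L}^{-1}\ln \mathbb{P}(X_{L}\geq a)\leq -\mathrm{I}(a)$ for $a$ on the increasing branch of $\mathrm{I}$ (where $\sup_{s\geq 0}=\sup_{s\in \mathbb{R}}$ in the Legendre transform), and symmetrically on the left. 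A finite covering of a compact set $K$ by such half-lines and small intervals then produces the upper bound (\ref{LDP upp}) restricted to compact sets, with $\inf_{x\in K}\mathrm{I}(x)$ on the right-hand side.

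To upgrade the compact-set bound to arbitrary closed $F$, I would invoke exponential tightness (\ref{eq:exp_tight}): given $\alpha $, choose a compact $\mathcal{G}_{\alpha }$ with $\limsup_{L}\mathfrak{n}_{L}^{-1}\ln \mathbb{P}(X_{L}\notin \mathcal{G}_{\alpha })<-\alpha $, split $F=(F\cap \mathcal{G}_{\alpha })\cup (F\setminus \mathcal{G}_{\alpha })$, apply the compact-set estimate to $F\cap \mathcal{G}_{\alpha }$, and use $\limsup_{L}\mathfrak{n}_{L}^{-1}\ln (p_{L}+q_{L})\leq \max \{\limsup_{L}\mathfrak{n}_{L}^{-1}\ln p_{L},\limsup_{L}\mathfrak{n}_{L}^{-1}\ln q_{L}\}$. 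Letting $\alpha \to \infty $ then gives (\ref{LDP upp}) for all closed $F$, completing (LD1).

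For the lower bound (LD2), the additional hypothesis is differentiability of $J$ on all of $\mathbb{R}$. Each finite-volume cumulant generating function $s\mapsto \mathfrak{n}_{L}^{-1}\ln \mathbb{E}[\mathrm{e}^{\mathfrak{n}_{L}sX_{L}}]$ is convex by H\"{o}lder's inequality, so its pointwise limit $J$ is convex; together with differentiability this makes the Legendre duality between $J$ and $\mathrm{I}$ sharp. It suffices to prove the local estimate $\liminf_{L}\mathfrak{n}_{L}^{-1}\ln \mathbb{P}(|X_{L}-x|<\delta )\geq -\mathrm{I}(x)$ for every $x$ and every $\delta >0$; I only need it at points $x=J^{\prime }(s_{0})$ in the range of the derivative, where $\mathrm{I}(x)=s_{0}x-J(s_{0})$ is finite, the complementary points being trivial since $\mathrm{I}=\infty $ there. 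The core is a change of measure: tilt the law of $X_{L}$ by the density proportional to $\mathrm{e}^{\mathfrak{n}_{L}s_{0}X_{L}}$, obtaining probability measures $\tilde{\mathbb{P}}_{L}$ whose cumulant generating function is $s\mapsto J(s_{0}+s)-J(s_{0})$, differentiable at $0$ with derivative $J^{\prime }(s_{0})=x$. Restricting the tilt to $\{|X_{L}-x|<\delta \}$ gives $\mathbb{P}(|X_{L}-x|<\delta )\geq \mathrm{e}^{-\mathfrak{n}_{L}(s_{0}x+|s_{0}|\delta )}\mathbb{E}[\mathrm{e}^{\mathfrak{n}_{L}s_{0}X_{L}}]\,\tilde{\mathbb{P}}_{L}(|X_{L}-x|<\delta )$; re-applying the upper-bound machinery of (LD1) to $\tilde{\mathbb{P}}_{L}$ shows $\tilde{\mathbb{P}}_{L}(|X_{L}-x|\geq \delta )\to 0$ exponentially (a weak law under the tilt, since the tilted rate function has its unique zero at $x$). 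Hence $\liminf_{L}\mathfrak{n}_{L}^{-1}\ln \mathbb{P}(|X_{L}-x|<\delta )\geq -\mathrm{I}(x)-|s_{0}|\delta $, and $\delta \to 0$ yields (\ref{LDP lower}).

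Finally I would verify that $\mathrm{I}$ is a \emph{good} rate function: convexity and lower semicontinuity are automatic from the Legendre-transform definition, while compactness of level sets follows because differentiability of the finite convex $J$ forces $\mathrm{I}$ to be superlinear, so each set $\{\mathrm{I}\leq m\}$ is closed and bounded. The main obstacle I anticipate is the lower-bound tilting step: one must confirm that $\tilde{\mathbb{P}}_{L}$ is well-defined (immediate here from boundedness of $X_{L}$) and that the weak law under the tilt is genuinely exponential, which is precisely where the compact-set upper bound of (LD1), applied to $J(s_{0}+\cdot )-J(s_{0})$, is reused. Alternatively, the whole statement conforms verbatim to the one-dimensional, exponentially tight hypotheses of \cite[Corollary 4.5.27]{dembo1998large}, which may simply be cited.
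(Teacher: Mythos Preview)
Your proposal is correct. The paper does not actually prove this statement: its entire proof is the single sentence ``(LD1) and (LD2) are special cases of \cite[Theorem V.6.(a) and (c)]{denHoll}, respectively.'' Your detailed sketch (exponential Chebyshev for the upper bound on compacts, exponential tightness to upgrade to closed sets, exponential tilting at exposed points for the lower bound, and superlinearity of $\mathrm{I}$ for goodness) is the standard textbook argument underlying the cited references, and your closing remark that one could simply invoke \cite[Corollary 4.5.27]{dembo1998large} is exactly in the spirit of the paper's citation-only approach.
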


\begin{proof}
(LD1) and (LD2) are special cases of \cite[Theorem V.6.(a) and (c)]{denHoll}%
, respectively.
\end{proof}

\section{Response of Quasi-Free Fermion Systems to\ Electric Fields\label%
{linear_response}}

\subsection{Linear Response Current\label{Linear Response Current}}

Recall that $\left( \Omega ,\mathfrak{A}_{\Omega }\right) $ is the
measurable space defined in Section \ref{Section impurities}, $\mathfrak{h}%
\doteq \ell ^{2}(\mathbb{Z}^{d};\mathbb{C})$ is the one-particle Hilbert
space with scalar product $\langle \cdot ,\cdot \rangle _{\mathfrak{h}}$ and
canonical orthonormal basis denoted by $\left\{ \mathfrak{e}_{x}\right\}
_{x\in \mathbb{Z}^{d}}$, and the one-particle Hamiltonian of the quasi-free
fermion system equals (\ref{eq:Ham_lap_pot}), i.e.,
\begin{equation*}
h^{(\omega )}\doteq \Delta _{\omega ,\vartheta }+\lambda \omega _{1}\ ,\text{%
\qquad }\omega =\left( \omega _{1},\omega _{2}\right) \in \Omega ,\ \lambda
,\vartheta \in \mathbb{R}_{0}^{+},
\end{equation*}%
with $\Delta _{\omega ,\vartheta }$ being (up to a minus sign) the random
discrete Laplacian. See again Section \ref{Section impurities}. The
associate (quasi-) free dynamics is thus defined from the (random) unitary
group $\{\mathrm{e}^{ith^{(\omega )}}\}_{t\in \mathbb{R}}$.

Then, apply on the fermion system an electromagnetic field resulting%
\footnote{%
We use the Weyl gauge, also named temporal gauge.} from a compactly
supported time--depen%
\-%
dent space-rescaled vector potential $\eta \mathbf{A}_{L}$ defined by
\begin{equation}
\eta \mathbf{A}_{L}(t,x)\doteq \eta \mathbf{A}(t,L^{-1}x),\quad t\in \mathbb{%
R},\,x\in \mathbb{R}^{d},\,\eta \in \mathbb{R}_{0}^{+},  \label{potential}
\end{equation}%
where%
\begin{equation*}
\mathbf{A}\in \mathbf{C}_{0}^{\infty }\doteq \underset{l\in \mathbb{R}^{+}}{%
\mathop{\displaystyle \bigcup}}C_{0}^{\infty }(\mathbb{R}\times \left[ -l,l%
\right] ^{d};({\mathbb{R}}^{d})^{\ast }).
\end{equation*}%
Here, $({\mathbb{R}}^{d})^{\ast }$ is the set of one-forms\footnote{%
In a strict sense, one should take the dual space of the tangent spaces $T({%
\mathbb{R}}^{d})_{x}$, $x\in {\mathbb{R}}^{d}$.} on ${\mathbb{R}}^{d}$ that
take values in $\mathbb{R}$. We see any $\mathbf{A}\in C_{0}^{\infty }(%
\mathbb{R}\times \left[ -l,l\right] ^{d};({\mathbb{R}}^{d})^{\ast
})\subseteq \mathbf{C}_{0}^{\infty }$, $l\in \mathbb{R}^{+}$, as a function $%
\mathbb{R}\times {\mathbb{R}}^{d}\rightarrow ({\mathbb{R}}^{d})^{\ast }$ via
the convention $\mathbf{A}(t,x)\equiv 0$ for $x\notin \lbrack -l,l]^{d}$.
The main reason for not using (the standard choice) $C_{0}^{\infty }(\mathbb{%
R}\times {\mathbb{R}}^{d};({\mathbb{R}}^{d})^{\ast })$ instead of $\mathbf{C}%
_{0}^{\infty }$ as a space of vector potentials, is that we need to include
(in general non-smooth) functions that are constant for $x$ inside cubes $%
[-l,l]^{d}$ and vanish outside. The time derivative of this vector potential
is the (time-dependent) electric field. Since we are interested here in the
linear response current to electromagnetic fields, we use in (\ref{potential}%
) a real parameter $\eta \in \mathbb{R}_{0}^{+}$ to also rescale the
strength of the vector potential $\mathbf{A}_{L}$.

To simplify notation, we consider, without loss of generality, spinless
fermions with negative charge. So, such an electromagnetic field leads to a
time-dependent Hamiltonian defined by
\begin{equation*}
\Delta _{\omega ,\vartheta }^{(\eta \mathbf{A}_{L})}+\lambda \omega
_{1},\quad t\in \mathbb{R},
\end{equation*}%
\ where $\Delta _{\omega ,\vartheta }^{(\mathbf{A})}\equiv \Delta _{\omega
,\vartheta }^{(\mathbf{A}(t,\cdot ))}\in \mathcal{B}(\ell ^{2}(\mathfrak{L}%
)) $ is the time-dependent self-adjoint operator defined\footnote{%
Observe that the sign of the coupling between $\mathbf{A}\in \mathbf{C}%
_{0}^{\infty }$ and the laplacian is wrong in \cite[Eq. (2.8)]{OhmI} for
negatively charged fermions.} by%
\begin{equation}
\langle \mathfrak{e}_{x},\Delta _{\omega ,\vartheta }^{(\mathbf{A})}%
\mathfrak{e}_{y}\rangle _{\mathfrak{h}}=\exp \left( i\int\nolimits_{0}^{1}%
\left[ \mathbf{A}(t,\alpha y+(1-\alpha )x)\right] (y-x)\mathrm{d}\alpha
\right) \langle \mathfrak{e}_{x},\Delta _{\omega ,\vartheta }\mathfrak{e}%
_{y}\rangle _{\mathfrak{h}}  \label{eq discrete lapla A}
\end{equation}%
for $\mathbf{A}\in \mathbf{C}_{0}^{\infty }$, $t\in \mathbb{R}$ and $x,y\in
\mathbb{Z}^{d}$. It is (up to a minus sign) the magnetic Laplacian, as
explained in \cite[Section III, in particular Corollary 3.1]{Ne}. This
yields a dynamics, perturbed by the time-dependent vector potential $\eta
\mathbf{A}_{L}$, given by the (well-defined random) two-parameter family $\{%
\mathrm{U}_{t,t_{0}}^{(\omega )}\}_{t_{0},t\in {\mathbb{R}}}$ of unitary
operators on $\mathfrak{h}$ satisfying the non-autonomous evolution equation
\begin{equation}
\forall t_{0},t\in {\mathbb{R}}:\text{\qquad }\partial _{t}\mathrm{U}%
_{t,t_{0}}^{(\omega )}=-i(\Delta _{\omega ,\vartheta }^{(\eta \mathbf{A}%
_{L})}+\lambda \omega _{1})\mathrm{U}_{t,t_{0}}^{(\omega )},\quad \mathrm{U}%
_{t_{0},t_{0}}^{(\omega )}\doteq \mathbf{1}_{\mathfrak{h}}.
\label{rescaledbis300}
\end{equation}%
In the algebraic formulation, it corresponds to the quasi-free dynamics on
the CAR $C^{\ast }$-algebra $\mathcal{U}$, defined by the unique
two-parameter group $\{\xi _{t,t_{0}}^{(\omega )}\}_{t_{0},t\in {\mathbb{R}}%
} $ of (Bogoliubov) $\ast $-automorphisms satisfying
\begin{equation}
\xi _{t,t_{0}}^{(\omega )}(a(\psi ))=a((\mathrm{U}_{t,t_{0}}^{(\omega
)})^{^{\ast }}\psi ),\text{\qquad }t_{0},t\in {\mathbb{R}},\ \psi \in
\mathfrak{h}.  \label{rescaledbis3}
\end{equation}%
The above procedure for coupling charged lattice fermions to a vector
potential is sometimes called \textquotedblleft Peierls
coupling\textquotedblright .

Additionally to the paramagnetic current observable $I_{(x,y)}^{(\omega )}$ (%
\ref{current observable}), the perturbing vector potential $\mathbf{A}\in
\mathbf{C}_{0}^{\infty }$ yields a second type of current observable, defined%
\footnote{%
Observe that the sign in the exponent in \cite[Eq. (50)]{OhmV} and \cite[%
(4.2)]{OhmVI} for negatively charged fermions is wrong, with no consequence
on the corresponding results.} by
\begin{equation}
\tilde{I}_{(x,y)}^{(\omega )}\doteq -2\Im \mathrm{m}\left( \left( \mathrm{e}%
^{i\int_{0}^{1}[\mathbf{A}(t,\alpha y+(1-\alpha )x)](y-x)d\alpha }-1\right)
\langle \mathfrak{e}_{x},\Delta _{\omega ,\vartheta }\mathfrak{e}_{y}\rangle
_{\mathfrak{h}}a(\mathfrak{e}_{x})^{\ast }a(\mathfrak{e}_{y})\right)
\label{diamagnetic current}
\end{equation}%
for any $\omega \in \Omega $, $\vartheta \in \mathbb{R}_{0}^{+}$, $t\in
\mathbb{R}$ and $x,y\in \mathbb{Z}^{d}$, where we recall that $\Im \mathrm{m}%
(A)\in \mathcal{U}$ is the imaginary part of $A\in \mathcal{U}$, see (\ref%
{im and real part}). We name it \emph{diamagnetic} current observable. The
derivation of the paramagnetic and diamagnetic current observables is
explained in detail in\ Appendix \ref{One-particle formulation2}. The
decomposition of the full current observable%
\begin{equation}
\tilde{I}_{(x,y)}^{(\omega )}+I_{(x,y)}^{(\omega )}=-2\Im \mathrm{m}\left(
\langle \mathfrak{e}_{x},\Delta _{\omega ,\vartheta }^{(\eta \mathbf{A}_{L})}%
\mathfrak{e}_{y}\rangle _{\mathfrak{h}}a(\mathfrak{e}_{x})^{\ast }a(%
\mathfrak{e}_{y})\right) \doteq \mathbf{I}_{(x,y)}^{(\omega ,\mathbf{A})}
\label{full current}
\end{equation}%
in so-called paramagnetic and diamagnetic current observables has a physical
relevance. First, it comes from the physics literature, see, e.g., \cite[Eq.
(A2.14)]{dia-current}. Secondly, the paramagnetic current observable is
intrinsic to the system and related to a heat production, whereas the
diamagnetic one is only non--vanishing in presence of vector potentials and
refers to the ballistic accelerations, induced by electromagnetic fields, of
charged particles. For more details, see \cite{OhmII,OhmIII}.

Observe that the time evolution of the KMS state $\varrho ^{(\omega )}\in
\mathcal{U}^{\ast }$ (see (\ref{stationary})-(\ref{2-point correlation
function})) is given by $\varrho ^{(\omega )}\circ \xi _{t,t_{0}}^{(\omega
)} $ for $t,t_{0}\in \mathbb{R}$. In \cite{OhmI,OhmII,OhmIII,OhmIV}\footnote{%
In all our papers we use smooth electric fields, but the extension to the
continuous case is straightforward.} we perform a detailed study the
behavior of current densities when $\eta \rightarrow 0$, \emph{uniformly}
w.r.t. the volume $\mathcal{O}\left( L^{d}\right) $ of the boxes where the
vector potential $\mathbf{A}_{L}$ is non-zero. In \cite%
{OhmV,OhmVI,brupedraLR}, these results are generalized to lattice-fermion
systems in disordered media with \emph{very general interactions}\footnote{%
Sufficiently strong polynomial decays of interactions are necessary. This
includes basically standard models of physics that describes interacting
fermions in crystal.} and on \emph{passive} states (not necessarily KMS).
These mathematically rigorous studies yield an alternative physical picture
of Ohm and Joule's laws (at least in the AC-regime), different from usual
explanations coming from the Drude model or the Landau theory of Fermi
liquids.

To shortly present how the linear response current naturally appears,
without requiring a thorough reading of this series of papers, consider a
space homogeneous electric fields in the box $\Lambda _{L}$ (\ref{eq:boxesl1}%
) for any $L\in \mathbb{R}^{+}$. To be more precise, let $\mathcal{A}\in
C_{0}^{\infty }(\mathbb{R};\mathbb{R}^{d})$ and set $\mathcal{E}(t)\doteq
-\partial _{t}\mathcal{A}(t)$ for all $t\in \mathbb{R}$. Therefore, $\mathbf{%
A}$ is defined to be the vector potential such that the electric field is
given by $\mathcal{E}(t)\in C_{0}^{\infty }(\mathbb{R};\mathbb{R}^{d})$ at
time $t\in \mathbb{R}$, for all $x\in \lbrack -1,1]^{d}$, and $(0,0,\ldots
,0)$ for $t\in \mathbb{R}$ and $x\notin \lbrack -1,1]^{d}$. It yields a
rescaled vector potential $\eta \mathbf{A}_{L}$ for $L\in \mathbb{R}^{+}$
and $\eta \in \mathbb{R}_{0}^{+}$.

Then, by (\ref{current observable}) and (\ref{diamagnetic current}), the
space-averaged response current observable, or response current density
observable, in the box $\Lambda _{L}$ and in the direction $\overrightarrow{w%
}=(w_{1},\ldots ,w_{d})\in \mathbb{R}^{d}$ ($|\overrightarrow{w}|=1$), for
any $\omega \in \Omega $, $\lambda ,\vartheta ,\eta \in \mathbb{R}_{0}^{+},$
$L\in \mathbb{R}^{+}$, $\mathbf{A}\in \mathbf{C}_{0}^{\infty }$ and $%
t_{0},t\in \mathbb{R}\in \mathbb{R}$, is, by definition, equal to
\begin{equation}
\mathbb{J}_{L}^{(\omega )}\left( t,\eta \right) \doteq \frac{1}{\left\vert
\Lambda _{L}\right\vert }\underset{k=1}{\sum^{d}}w_{k}\underset{x\in \Lambda
_{L}}{\sum }\left( \xi _{t,t_{0}}^{(\omega )}\left( I_{\left(
x+e_{k},x\right) }^{(\omega )}+\tilde{I}_{(x+e_{k},x)}^{(\omega ,\eta
\mathbf{A}_{L})}\right) -I_{\left( x+e_{k},x\right) }^{(\omega )}\right)
\label{linear response current0}
\end{equation}%
with $\{e_{k}\}_{k=1}^{d}$ being the canonical orthonormal basis of the
Euclidian space $\mathbb{R}^{d}$.

By using the generalization done in \cite{brupedraLR} of the celebrated
Lieb-Robin%
\-%
son bounds (for commutators) to multi-com%
\-%
mutators, the full current density observable in the direction $%
\overrightarrow{w}\in \mathbb{R}^{d}$ ($|\overrightarrow{w}|=1$) satisfies
\begin{equation}
\mathbb{J}_{L}^{(\omega )}\left( t,\eta \right) =\eta \mathbf{J}%
_{L}^{(\omega )}(t)+\mathcal{O}\left( \eta ^{2}\right)
\label{linear response current1}
\end{equation}%
in the CAR $C^{\ast }$-algebra $\mathcal{U}$. The correction terms of order $%
\mathcal{O}(\eta ^{2})$ are \emph{uniformly bounded} in $L\in \mathbb{R}^{+}$%
, $\omega \in \Omega $, $\lambda ,t\in \mathbb{R}_{0}^{+}$ and $\vartheta $
on compacta. By explicit computations, one checks that the linear part is%
\begin{equation}
\mathbf{J}_{L}^{(\omega )}(t)=\underset{k,q=1}{\sum^{d}}w_{k}\int_{-\infty
}^{t}\left\{ \mathcal{E}\left( \alpha \right) \right\} _{q}\left\{ \mathcal{C%
}_{\Lambda _{L}}^{(\omega )}\left( t-\alpha \right) \right\} _{k,q}\mathrm{d}%
\alpha ,  \label{linear response current1bis}
\end{equation}%
which is equal to $\mathbb{I}_{\Lambda _{L}}^{(\omega ,\mathcal{E}_{t})}$\ (%
\ref{current}) for the electric field defined by (\ref{new field}). See also
(\ref{defininion para coeff observable}) for the definition of $\mathcal{C}%
_{\Lambda }^{(\omega )}\in C^{1}(\mathbb{R};\mathcal{B}(\mathbb{R}^{d};%
\mathcal{U}^{d}))$. This current density observable is therefore the \emph{%
space-averaged linear response current observable} (or linear response
current density observable) in the direction $\overrightarrow{w}\in \mathbb{R%
}^{d}$ we study in all the paper. Because of (\ref{linear response
current1bis}), $\mathcal{C}_{\Lambda _{L}}^{(\omega )}$ is called the
conductivity observable matrix associated with $\Lambda _{L}$. For more
details, see also \cite[Theorem 3.7]{OhmV}.

In \cite{OhmIII,OhmIV,OhmVI,brupedraLR}, for any time $t\in \mathbb{R}$, we
prove the existence of the limit $L\rightarrow \infty $ of the random linear
response current density
\begin{equation*}
\varrho ^{(\omega )}\left( \mathbf{J}_{L}^{(\omega )}(t)\right) ,\qquad L\in
\mathbb{R}^{+},
\end{equation*}%
to a deterministic value, with probability one. At time $t=0$ this refers to
the following assertion:
\begin{equation}
x^{(\mathcal{E})}=\lim_{L\rightarrow \infty }\varrho ^{(\omega )}\left(
\mathbf{J}_{L}^{(\omega )}(0)\right) ,  \label{limit current}
\end{equation}%
which is directly related with (\ref{def currents}) and (\ref{current utiles}%
) at $s=0$.

\subsection{Discrete Continuity Equation in the CAR\ Algebra\label%
{One-particle formulation2}}

As is usual, the self-adjoint element%
\begin{equation*}
a(\mathfrak{e}_{x})^{\ast }a(\mathfrak{e}_{x})\in \mathcal{U}
\end{equation*}%
represents the particle number observable at the lattice site $x\in \mathbb{Z%
}^{d}$. Fixing once for all $\omega \in \Omega $, $\lambda ,\vartheta ,\eta
\in \mathbb{R}_{0}^{+},$ $L\in \mathbb{R}^{+}$, $\mathbf{A}\in \mathbf{C}%
_{0}^{\infty }$, its time-evolution by the two-parameter group $\{\xi
_{t,t_{0}}^{(\omega )}\}_{t_{0},t\in {\mathbb{R}}}$ of (Bogoliubov) $\ast $%
-automorphisms defined by (\ref{rescaledbis3}) equals
\begin{equation}
\xi _{t,t_{0}}^{(\omega )}\left( a\left( \mathfrak{e}_{x}\right) ^{\ast
}a\left( \mathfrak{e}_{x}\right) \right) =a((\mathrm{U}_{t,t_{0}}^{(\omega
)})^{^{\ast }}\mathfrak{e}_{x})^{\ast }a((\mathrm{U}_{t,t_{0}}^{(\omega
)})^{^{\ast }}\mathfrak{e}_{x})  \label{rescaledbis3rescaledbis3}
\end{equation}%
for any $t_{0},t\in \mathbb{R}$ and $x\in \mathbb{Z}^{d}$. Observe that $(%
\mathrm{U}_{t,t_{0}}^{(\omega )})^{^{\ast }}=\mathrm{U}_{t_{0},t}^{(\omega
)} $ for any $t_{0},t\in \mathbb{R}$ while
\begin{equation}
\forall t_{0},t\in {\mathbb{R}}:\text{\qquad }\partial _{t_{0}}\mathrm{U}%
_{t,t_{0}}^{(\omega )}=i\mathrm{U}_{t,t_{0}}^{(\omega )}(\Delta _{\omega
,\vartheta }^{(\eta \mathbf{A}_{L})}+\lambda \omega _{1}),\quad \mathrm{U}%
_{t_{0},t_{0}}^{(\omega )}\doteq \mathbf{1}_{\mathfrak{h}}\ .
\label{rererzfsdfsdfg}
\end{equation}%
From standard properties of the so-called fermionic creation/annihilation
operators, the time derivative of (\ref{rescaledbis3rescaledbis3}) equals%
\begin{equation*}
\partial _{t}\left( \xi _{t,t_{0}}^{(\omega )}\left( a\left( \mathfrak{e}%
_{x}\right) ^{\ast }a\left( \mathfrak{e}_{x}\right) \right) \right) =\xi
_{t,t_{0}}^{(\omega )}\left( \left( a(i(\Delta _{\omega ,\vartheta }^{(\eta
\mathbf{A}_{L})}+\lambda \omega _{1})\mathfrak{e}_{x})^{\ast }a(\mathfrak{e}%
_{x})+a(\mathfrak{e}_{x})^{\ast }a(i(\Delta _{\omega ,\vartheta }^{(\eta
\mathbf{A}_{L})}+\lambda \omega _{1})\mathfrak{e}_{x})\right) \right) .
\end{equation*}%
Recall now that the map $\psi \mapsto a(\psi )^{\ast }$ from $\mathfrak{h}$
to $\mathcal{U}$ is linear and, by (\ref{equation sup}) and (\ref{eq
discrete lapla A}), for any $x\in \mathbb{Z}^{d}$,%
\begin{equation*}
(\Delta _{\omega ,\vartheta }^{(\eta \mathbf{A}_{L})}+\lambda \omega _{1})%
\mathfrak{e}_{x}=\lambda \omega _{1}\left( x\right) \mathfrak{e}%
_{x}+\sum\limits_{z\in \mathbb{Z}^{d},|z|=1}\langle \mathfrak{e}%
_{x+z},\Delta _{\omega ,\vartheta }^{(\eta \mathbf{A}_{L})}\mathfrak{e}%
_{x}\rangle _{\mathfrak{h}}\mathfrak{e}_{x+z}.
\end{equation*}%
It follows that%
\begin{equation}
\partial _{t}\left( \xi _{t,t_{0}}^{(\omega )}\left( a\left( \mathfrak{e}%
_{x}\right) ^{\ast }a\left( \mathfrak{e}_{x}\right) \right) \right)
=\sum\limits_{z\in \mathbb{Z}^{d},|z|=1}\xi _{t,t_{0}}^{(\omega )}\left(
-2\Im \mathrm{m}\left( \langle \mathfrak{e}_{x},\Delta _{\omega ,\vartheta
}^{(\eta \mathbf{A}_{L})}\mathfrak{e}_{x+z}\rangle _{\mathfrak{h}}a(%
\mathfrak{e}_{x})^{\ast }a(\mathfrak{e}_{x+z})\right) \right)
\label{discrete}
\end{equation}%
for any $t_{0},t\in \mathbb{R}$ and $x\in \mathbb{Z}^{d}$. Another way to
prove this equation is to use \cite[Theorem 2.1 (ii) with $\Psi ^{\mathrm{IP}%
}=0$]{OhmV} together with straightforward computations using the CAR (\ref%
{eq:CAR}). Proceeding in this manner, observe that the quasi-free property
of the dynamics is not needed at all. In particular this derivation easily
extends to the interacting case. It is not so for the one-particle picture
discussed in the next section, which is much more restrictive than the
algebraic approach.

Equation (\ref{discrete}) is interpreted as a \emph{discrete continuity
equation}
\begin{equation*}
\partial _{t}\left( \xi _{t,t_{0}}^{(\omega )}(a(\mathfrak{e}_{x})^{\ast }a(%
\mathfrak{e}_{x})\right) =\sum\limits_{z\in \mathbb{Z}^{d},|z|=1}\xi
_{t,t_{0}}^{(\omega )}\left( \mathbf{I}_{(x,x+z)}^{(\omega ,\eta \mathbf{A}%
_{L})}\right)
\end{equation*}%
in the CAR $C^{\ast }$-algebra $\mathcal{U}$. The observable $\mathbf{I}%
_{(x,y)}^{(\omega ,\mathbf{A})}$ defined by (\ref{full current}) is the
observable related to the flow of particles from the lattice site $x$ to the
lattice site $y$ or the current from $y$ to $x$ for negatively charged
particles. [Positively charged particles can of course be treated in the
same way.] In the non-interacting case, this definition of current
observable is mathematically equivalent to the usual one in the one-particle
picture, like in \cite{[SBB98],jfa,klm}. See Equation (\ref{equation
refereee}).

\subsection{The One-particle Picture\label{One-particle formulation}}

When dealing with non-interacting fermions, most of the time, the
one-particle picture of such a physical system is employed, as for instance
in \cite{klm}. This is frequently technically convenient. Indeed, note that
various important estimates in the current study were obtained in this
picture and even all the analysis performed here could have been done in the
one-particle Hilbert space $\mathfrak{h}$. However, in many cases, this
preference is only subjective and motivated by the fact that, by some
reason, people feels more comfortable in dealing with Hilbert spaces than
with $C^{\ast }$-algebras. We stress that the algebraic formulation is, from
a conceptual point of view, the natural one, as the underlying physical
system is many-body. Moreover, it has some advantageous technical aspects,
both specific (like the possibility of using Bogoliubov-type inequalities in
important estimates) and general ones (like the very powerful theory of KMS
states). For convenience of those preferring the one-particle picture of
free fermion systems, we establish in the following the precise relation of
the \textquotedblleft second quantized\textquotedblright\ objects we used
here with this picture.

As in the previous subsection, fix once for all $\omega \in \Omega $, $%
\lambda ,\vartheta ,\eta \in \mathbb{R}_{0}^{+},$ $L\in \mathbb{R}^{+}$, $%
\mathbf{A}\in \mathbf{C}_{0}^{\infty }$. Recall that the corresponding KMS
state $\varrho ^{(\omega )}$ is the gauge-invariant quasi-free state
satisfying (\ref{2-point correlation function}), i.e.,
\begin{equation}
\varrho ^{(\omega )}(a^{\ast }\left( \varphi \right) a\left( \psi \right)
)=\left\langle \psi ,\mathbf{d}^{(\omega )}\varphi \right\rangle _{\mathfrak{%
h}},\qquad \varphi ,\psi \in \mathfrak{h},  \label{def1}
\end{equation}%
where%
\begin{equation*}
\mathbf{d}^{(\omega )}\doteq (1+\mathrm{e}^{\beta h^{(\omega )}})^{-1}\in
\mathcal{B}\left( \mathfrak{h}\right)
\end{equation*}%
and the one-particle Hamiltonian $h^{(\omega )}=(h^{(\omega )})^{\ast }\in
\mathcal{B}\left( \mathfrak{h}\right) $ is defined by (\ref{eq:Ham_lap_pot}%
). The positive bounded operator $\mathbf{d}^{(\omega )}$ satisfies $0\leq
\mathbf{d}^{(\omega )}\leq \mathbf{1}_{\mathfrak{h}}$ and is called the
symbol, or one-particle density matrix, of the quasi-free state $\varrho
^{(\omega )}$. See (\ref{ass O0-00})-(\ref{ass O0-00bis}) for the definition
of gauge-invariant quasi-free states.

The time-evolution $\varrho ^{(\omega )}$ by the two-parameter group $\{\xi
_{t,t_{0}}^{(\omega )}\}_{t_{0},t\in {\mathbb{R}}}$ of (Bogoliubov) $\ast $%
-automorphisms defined by (\ref{rescaledbis3}) is $\varrho ^{(\omega )}\circ
\xi _{t,t_{0}}^{(\omega )}$ for any $t_{0},t\in {\mathbb{R}}$. It is again a
gauge-invariant quasi-free state and satisfies
\begin{equation}
\varrho ^{(\omega )}\circ \xi _{t,t_{0}}^{(\omega )}\left( a^{\ast }\left(
\varphi \right) a\left( \psi \right) \right) =\left\langle \psi ,\mathrm{U}%
_{t,t_{0}}^{(\omega )}\mathbf{d}^{(\omega )}(\mathrm{U}_{t,t_{0}}^{(\omega
)})^{^{\ast }}\varphi \right\rangle _{\mathfrak{h}},\qquad \varphi ,\psi \in
\mathfrak{h},  \label{def1bis}
\end{equation}%
for any $t_{0},t\in {\mathbb{R}}$, by (\ref{rescaledbis3}) and (\ref{def1}).
Again,
\begin{equation}
\mathbf{d}_{t,t_{0}}^{(\omega )}\doteq \mathrm{U}_{t,t_{0}}^{(\omega )}(1+%
\mathrm{e}^{\beta h^{(\omega )}})^{-1}(\mathrm{U}_{t,t_{0}}^{(\omega
)})^{^{\ast }}\in \mathcal{B}\left( \mathfrak{h}\right)  \label{def2}
\end{equation}%
is a positive bounded operator $\mathbf{d}_{t,t_{0}}^{(\omega )}$ satisfying
$0\leq \mathbf{d}_{t,t_{0}}^{(\omega )}\leq \mathbf{1}_{\mathfrak{h}}$. It
is the symbol, or one-particle density matrix, of the quasi-free state $%
\varrho ^{(\omega )}\circ \xi _{t,t_{0}}^{(\omega )}$. Recall that the
unitary operators $\mathrm{U}_{t,t_{0}}^{(\omega )}\in \mathcal{B}\left(
\mathfrak{h}\right) $, $t_{0},t\in {\mathbb{R}}$, are uniquely defined by (%
\ref{rererzfsdfsdfg}).

By (\ref{rescaledbis300}), (\ref{rererzfsdfsdfg}) and (\ref{def2}) together
with $(\mathrm{U}_{t,t_{0}}^{(\omega )})^{^{\ast }}=\mathrm{U}%
_{t_{0},t}^{(\omega )}$, the symbol $\mathbf{d}_{t,t_{0}}^{(\omega )}$ is
the solution of the \emph{Liouville equation}:
\begin{equation}
\forall t_{0},t\in {\mathbb{R}}:\text{\qquad }i\partial _{t}\mathbf{d}%
_{t,t_{0}}^{(\omega )}=\left[ \left( \Delta _{\omega ,\vartheta }^{(\eta
\mathbf{A}_{L})}+\lambda \omega _{1}\right) ,\mathbf{d}_{t,t_{0}}^{(\omega )}%
\right] ,\text{\qquad }\mathbf{d}_{t_{0},t_{0}}^{(\omega )}\doteq \mathbf{d}%
^{(\omega )},  \label{liouville}
\end{equation}%
as for instance in \cite[Eq. (2.5)]{klm}. Then, all the study performed in
the current paper for second quantized currents of non-interacting fermions
can be translated into the one-particle picture by using the Liouville
equation and the fact that the corresponding quasi-free states are
completely determined by the one-particle density matrices $\{\mathbf{d}%
_{t,t_{0}}^{(\omega )}\}_{t_{0},t\in {\mathbb{R}}}$, solving the above
initial value problem.

In this framework, the current observable discussed in Section \ref{Linear
Response Current}, and studied along the paper, can be represented by
self-adjoint operators on the one-particle Hilbert space $\mathfrak{h}$.
See, e.g., (\ref{dekldjfskldjf}). In this perspective, note that the full
current density observable in a box $\Lambda _{L}$ in a fixed direction $%
e_{k}$, $k\in \{1,\ldots ,d\}$, in $\mathbb{R}^{d}$ is the so-called second
quantization of the operator defined by%
\begin{equation}
\mathfrak{I}_{L}^{(\omega )}\doteq -\frac{2}{\left\vert \Lambda
_{L}\right\vert }\underset{x\in \Lambda _{L}}{\sum }\Im \mathrm{m}\{\langle
\mathfrak{e}_{x+e_{k}},\Delta _{\omega ,\vartheta }^{(\eta \mathbf{A}_{L})}%
\mathfrak{e}_{x}\rangle _{\mathfrak{h}}P_{\left\{ x+e_{k}\right\}
}s_{e_{k}}P_{\left\{ x\right\} }\},\qquad L\in \mathbb{R}^{+},
\label{definition}
\end{equation}%
using the notation (\ref{shift}) for shift operators. See also (\ref{shift2}%
). In other words, by Definition \ref{def bilineqr},%
\begin{equation*}
\frac{1}{\left\vert \Lambda _{L}\right\vert }\underset{x\in \Lambda _{L}}{%
\sum }\mathbf{I}_{(x+e_{k},x)}^{(\omega ,\eta \mathbf{A}_{L})}=\langle
\mathrm{A},\mathfrak{I}_{L}^{(\omega )}\mathrm{A}\rangle \ .
\end{equation*}%
The one-particle operator $\mathfrak{I}_{L}^{(\omega )}$ is directly related
with the commonly used current observable in the one-particle Hilbert space,
like in \cite{[SBB98],jfa,klm}. To see this, for $k\in \{1,\ldots ,d\}$,
define the (unbounded) multiplication operator on $\mathfrak{h}$ with the $%
k^{th}$ component by
\begin{equation*}
X_{k}(\psi )(x_{1},\ldots ,x_{d})\doteq x_{k}\psi (x_{1},\ldots ,x_{d}),
\end{equation*}%
for $\psi $ within the domain of $X_{k}$. For any $x\in \mathbb{Z}^{d}$,
remark that%
\begin{equation*}
\Delta _{\omega ,\vartheta }^{(\eta \mathbf{A}_{L})}\mathfrak{e}%
_{x}=\sum\limits_{z\in \mathbb{Z}^{d},|z|=1}\langle \mathfrak{e}%
_{x+z},\Delta _{\omega ,\vartheta }^{(\eta \mathbf{A}_{L})}\mathfrak{e}%
_{x}\rangle _{\mathfrak{h}}\mathfrak{e}_{x+z}
\end{equation*}%
and
\begin{equation*}
-i\left[ \Delta _{\omega ,\vartheta }^{(\eta \mathbf{A}_{L})},X_{k}\right]
\mathfrak{e}_{x}=i\left( \langle \mathfrak{e}_{x+e_{k}},\Delta _{\omega
,\vartheta }^{(\eta \mathbf{A}_{L})}\mathfrak{e}_{x}\rangle _{\mathfrak{h}}%
\mathfrak{e}_{x+e_{k}}-\langle \mathfrak{e}_{x-e_{k}},\Delta _{\omega
,\vartheta }^{(\eta \mathbf{A}_{L})}\mathfrak{e}_{x}\rangle _{\mathfrak{h}}%
\mathfrak{e}_{x-e_{k}}\right) .
\end{equation*}%
Combining this with (\ref{definition}), one checks that%
\begin{equation}
\mathfrak{I}_{L}^{(\omega )}=\frac{1}{\left\vert \Lambda _{L}\right\vert }%
\mathrm{P}_{L}\left( -i\left[ \Delta _{\omega ,\vartheta }^{(\eta \mathbf{A}%
_{L})}+\lambda \omega _{1},X_{k}\right] \right) \mathrm{P}_{L}+\mathcal{O}%
(L^{-1})\ ,\qquad L\in \mathbb{R}^{+},  \label{equation refereee}
\end{equation}%
uniformly in $\mathcal{U}$ w.r.t. all parameters, where $\mathrm{P}_{L}$ is
the orthogonal projection with range $\mathrm{lin}\left\{ \mathfrak{e}%
_{x}\colon x\in \Lambda _{L}\right\} $, that is, the multiplication operator
with the characteristic function of the box $\Lambda _{L}$. The term of
order $\mathcal{O}(L^{-1})$ results from the existence of $\mathcal{O}%
(L^{d-1})$ points $x\in \Lambda _{L}$ such that $x+e_{k}\notin \Lambda _{L}$.

We recover from (\ref{equation refereee}) the usual description for the
current observable as a self-adjoint operator on the one-particle Hilbert
space $\mathfrak{h}$, in our case the velocity operator $-i[\Delta _{\omega
,\vartheta }^{(\eta \mathbf{A}_{L})}+\lambda \omega _{1},X_{k}]$. See, e.g.,
\cite{[SBB98],jfa,klm}. Observe additionally that the quantity obtained by
applying the state $\varrho ^{(\omega )}\circ \xi _{t,t_{0}}^{(\omega )}$ on
the full current density observable gives, in the large volume limit (i.e., $%
L\rightarrow \infty $), the density of trace of the product of symbol $%
\mathbf{d}_{t,t_{0}}^{(\omega )}$ with the velocity operator on the
one-particle Hilbert space $\mathfrak{h}$, similar to \cite[Equation (2.6)]%
{klm}.\bigskip

\noindent \textbf{Acknowledgements: }This research is supported by CNPq
(308337/2017-4), FAPESP (2016/02503-8, 2017/22340-9), as well as by the
Basque Government through the grant IT641-13 and the BERC 2018-2022 program,
and by the Spanish Ministry of Economy and Competitiveness MINECO: BCAM
Severo Ochoa accreditation SEV-2017-0718, MTM2017-82160-C2-2-P. We are very
grateful to the BCAM and its management, which supported this project via
the visiting researcher program. Finally, we thank very much the reviewers
for the very careful work, contributing to improve the readability of the
paper.

\end{document}

%
%